\newtheorem{theorem}{Theorem}
\newtheorem{proposition}[theorem]{Proposition}
\newtheorem{claim}[theorem]{Claim}
\newtheorem{lemma}[theorem]{Lemma}
\newtheorem{corollary}[theorem]{Corollary}
\newtheorem{definition}[theorem]{Definition}
\newcommand{\B}{\ensuremath{\mathbb{B}}}
\newcommand{\N}{\ensuremath{\mathbb{N}}}
\newcommand{\R}{\ensuremath{\mathbb{R}}}
 \newcommand{\eps}{\varepsilon}
\renewcommand{\epsilon}{\varepsilon}
\newcommand{\eqdef}{\mathbin{\stackrel{\rm def}{=}}}
\renewcommand{\vec}[1]{\ensuremath{\mathbf{#1}}}
\newcommand{\poly}{\mathrm{poly}}
\DeclareMathOperator*{\expect}{\mathbb{E}}
\newcommand{\sign}{\text{sign}}
\newcommand{\set}[1]{\left\{#1\right\}}
\newcommand{\E}{\expect}
\newcommand{\T}{\mathsf{T}}
\newcommand{\pr}[2]{\langle{#1, #2}\rangle}
\def\imod#1{\allowbreak\mkern4mu({\operator@font mod}\,\,#1)}
\newcommand{\transpose}{\intercal}
\newcommand{\cut}[1]{}
\title{Towards a Constructive Version of Banaszczyk's Vector Balancing Theorem}
\author[1]{Daniel Dadush\thanks{Supported by the NWO Veni grant 639.071.510.}}
\author[2]{Shashwat Garg\thanks{Supported by the Netherlands Organisation for
Scientific Research (NWO) under project no. 022.005.025.}}
\author[3]{Shachar Lovett\thanks{Supported by an NSF CAREER award 1350481 and a Sloan fellowship}}
\author[4]{Aleksandar Nikolov\thanks{Supported by an NSERC Discovery Grant}}
\affil[1]{{\small Centrum Wiskunde \& Informatica, Amsterdam, \texttt{dadush@cwi.nl}}}
\affil[2]{\small Department of Mathematics and Computer Science, Eindhoven University of Technology, \texttt{s.garg@tue.nl}}
\affil[3]{\small Department of Computer Science and Engineering, University of California, San Diego, \texttt{slovett@cs.ucsd.edu}}
\affil[4]{\small Department of Computer Science, University of Toronto, \texttt{anikolov@cs.toronto.edu}}
\date{}
\begin{document}

\maketitle

\begin{abstract}
An important theorem of Banaszczyk (Random Structures \& Algorithms `98) states that for any sequence of vectors of $\ell_2$ norm at most $1/5$ and any convex body $K$ of Gaussian measure $1/2$ in $\R^n$, there exists a signed combination of these vectors which lands inside $K$. A major open problem is to devise a constructive version of Banaszczyk's vector balancing theorem, i.e.~to find an efficient algorithm which constructs the signed combination.

We make progress towards this goal along several fronts. As our first contribution, we show an equivalence between Banaszczyk's theorem and the existence of $O(1)$-subgaussian distributions over signed combinations. For the case of symmetric convex bodies, our equivalence implies the existence of a \emph{universal} signing algorithm (i.e.~independent of the body), which simply samples from the subgaussian sign distribution and checks to see if the associated combination lands inside the body. For asymmetric convex bodies, we provide a novel \emph{recentering procedure}, which allows us to reduce to the case where the body is symmetric. 

As our second main contribution, we show that the above framework can be efficiently implemented when the vectors have length $O(1/\sqrt{\log n})$, recovering Banaszczyk's results under this stronger assumption. More precisely, we use random walk techniques to produce the required $O(1)$-subgaussian signing distributions when the vectors have length $O(1/\sqrt{\log n})$, and use a stochastic gradient ascent method to implement the recentering procedure for asymmetric bodies.
\end{abstract}

\section{Introduction}

Given a family of sets $S_1,\dots,S_m$ over a universe $U = [n]$, the goal of
combinatorial discrepancy minimization is to find a bi-coloring $\chi: U
\rightarrow \{-1,1\}$ such that the discrepancy, i.e.~the maximum imbalance
$\max_{j \in [m]} |\sum_{i \in S_j} \chi(i)|$, is made as small as possible. 
Discrepancy theory, where discrepancy minimization plays a major role, has a
rich history of applications in computer science as well as mathematics, and we
refer the reader to~\cite{Matousek99, Chazelle00,Panorama} for a general
exposition.

A beautiful question regards the discrepancy of sparse set systems,
i.e.~set systems in which
each element appears in at most $t$ sets. A classical theorem of Beck and
Fiala~\cite{BeckFiala81} gives an upper bound of $2t-1$ in this setting. They
also conjectured an $O(\sqrt{t})$ bound, which if true would be tight. An
improved Beck-Fiala bound of $2t-\log^*t$ was given by
Bukh~\cite{Bukh13}, where $\log^*t$ is the iterated logarithm function
in base $2$.
Recently, it was shown by Ezra and Lovett~\cite{EzraLovett15} that a bound of
$O(\sqrt{t \log t})$ holds with high probability when $m \geq n$ and each
element is assigned to $t$ sets uniformly at random. The best general bounds
having sublinear dependence in $t$ currently depend on $n$ or $m$.
Srinivasan~\cite{Srinivasan97} used Beck's \emph{partial coloring
method}~\cite{Beck81} to give a bound of $O(\sqrt{t} \log \min \set{n,m})$.
Using techniques from convex geometry,  Banaszczyk~\cite{Bana98} proved a
general result on vector balancing (stated below) which implies an
$O(\sqrt{t\log \min \set{n,m}})$ bound.

The proofs of both Srinivasan's and Banaszczyk's bounds were
non-constructive, that is, they provided no efficient algorithm to construct the guaranteed colorings, short of
exhaustive enumeration. In the last 6 years, tremendous progress has been made
on the question of matching classical discrepancy bounds algorithmically.
Currently, essentially all discrepancy bounds proved using the partial coloring
method, including Srinivasan's, have been made
constructive~\cite{Bansal10,LovettMeka12,HSS14,Rothvoss14,EldanSingh14}. Constructive
versions of Banaszczyk's result have, however, proven elusive until very recently.
In recent work~\cite{BDG16}, the first and second named authors jointly with
Bansal gave a constructive algorithm for recovering Banaszczyk's bound in the
Beck-Fiala setting as well as the more general Koml{\'o}s setting. An alternate algorithm via multiplicative weight updates was also given recently in \cite{LRT16}. However,
finding a constructive version of Banaszczyk's more general vector balancing
theorem, which has further applications in approximating hereditary discrepancy, remains an open problem. This theorem is stated as follows:   

\begin{theorem}[Banaszczyk~\cite{Bana98}] 
\label{thm:bana-disc}
Let $v_1,\dots,v_n \in \R^m$ satisfy $\|v_i\|_2 \leq 1/5$.  Then for any
convex body $K \subseteq \R^m$ of Gaussian measure at least $1/2$, there
exists $\chi \in \set{-1,1}^n$ such that $\sum_{i=1}^n \chi_i v_i \in K$.
\end{theorem}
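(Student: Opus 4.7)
The plan is to prove Theorem~\ref{thm:bana-disc} by reducing it to the existence of an $O(1)$-subgaussian distribution over sign vectors, and then constructing such a distribution. Concretely, I claim it suffices to exhibit a probability distribution $\mu$ on $\set{-1,+1}^n$ such that the random vector $X := \sum_{i=1}^n \chi_i v_i$, drawn via $\chi \sim \mu$, is $c$-subgaussian in $\R^m$ for some absolute constant $c$, i.e.~$\E_{\chi \sim \mu}\bigl[e^{\langle \theta, X\rangle}\bigr] \le e^{c \|\theta\|_2^2}$ for every $\theta \in \R^m$. Banaszczyk's classical convex-body lemma then asserts that any such subgaussian measure assigns positive mass to every convex body $K \subseteq \R^m$ with $\gamma_m(K) \ge 1/2$, immediately yielding a signing $\chi$ with $\sum_i \chi_i v_i \in K$. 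That lemma in turn is proved by approximating $K$ from outside by an intersection of halfspaces and combining the Gaussian half-space inequality with a Chernoff-style application of the subgaussian bound.

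The main work is constructing the subgaussian signing distribution. The uniform Rademacher distribution yields only
\[
  \E\bigl[e^{\langle \theta, X\rangle}\bigr] \;=\; \prod_i \cosh\langle \theta, v_i\rangle \;\le\; \exp\Bigl(\tfrac{1}{2} \textstyle\sum_i \langle \theta, v_i\rangle^2\Bigr) \;\le\; \exp\Bigl(\tfrac{n}{50}\|\theta\|_2^2\Bigr),
\]
which is only $O(\sqrt{n})$-subgaussian and far too weak. To produce a dimension-free bound I would construct $\mu$ as the endpoint distribution of a correlated random walk on $[-1,1]^n$: start at $x_0 = \mtx{0}$, iteratively take small Gaussian increments with covariance chosen so that a potential such as $\Phi_t(x) = \sup_{\theta \in \R^m}\bigl(\bigl\langle \theta, \sum_i x_i v_i\bigr\rangle - c\|\theta\|_2^2\bigr)$, or a smoothed surrogate, is a supermartingale, and freeze any coordinate that reaches $\pm 1$. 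The hypothesis $\|v_i\|_2 \le 1/5$ is used precisely to afford the growth of this potential; almost-sure termination at a vertex of $\set{-1,+1}^n$ then induces the desired $\mu$.

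For asymmetric $K$ the strategy above must be adjusted, since the Gaussian mass of $K$ may be concentrated far from the origin while the distribution of $X$ is, after averaging, origin-symmetric in a suitable sense. Here the plan is to apply the \emph{recentering procedure} advertised in the abstract: identify a translation $c^* \in \R^m$ such that $K - c^*$ behaves like a symmetric body of comparable Gaussian measure, and absorb $c^*$ into the sum by treating it as a constant-norm extra offset. This reduces the asymmetric case to the symmetric one at the cost of only a constant factor in the subgaussian parameter.

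The main obstacle is the second step. Uniform signs, greedy strategies, and unstructured random walks all lose polynomial factors in $n$, so the walk must be engineered so that its covariance cap simultaneously controls the moment generating function in \emph{every} direction $\theta$. Verifying the supermartingale property for such a potential, and proving almost-sure termination at the cube vertices within the $1/5$ budget, is the delicate technical heart of the proof and leaves essentially no slack in the analysis.
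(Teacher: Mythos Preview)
Your proposal has a fundamental gap: the construction of an $O(1)$-subgaussian signing distribution via a random walk is precisely the main \emph{open problem} that this paper identifies, not a known technique. The paper's random walk (Theorem~\ref{thm:rand-walk}) achieves only $O(\sqrt{\log n})$-subgaussianity, and the authors explicitly leave the $O(1)$ case as the central open question in their conclusion. Your supermartingale-potential sketch is essentially the framework of Algorithm~\ref{alg:randwalk}, and the $\sqrt{\log n}$ loss there arises for structural reasons (the walk needs $\Theta(\log n)$ ``phases'' worth of variance budget to freeze all coordinates), not from slack in the analysis. So the step you flag as ``the delicate technical heart'' is in fact not known to go through at all.

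Moreover, the only proof in the paper that an $O(1)$-subgaussian distribution on $\sum_i\{-v_i,v_i\}$ \emph{exists} (Corollary~\ref{cor:subg-exist}) is derived \emph{from} Banaszczyk's theorem via the minimax argument of Theorem~\ref{thm:bal-equiv}. Thus your reduction runs backward: subgaussianity is a consequence of Theorem~\ref{thm:bana-disc}, not currently an independent route to it. Banaszczyk's own proof, which the paper summarizes but does not reproduce, is entirely different: it inducts on $n$ by constructing, for each vector $v_n$, a convex subset $K*v_n \subseteq (K-v_n)\cup(K+v_n)$ and proving directly that $\gamma_m(K*v_n)\ge\gamma_m(K)$ whenever $\gamma_m(K)\ge 1/2$ and $\|v_n\|_2\le 1/5$. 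No subgaussian distribution appears.

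A secondary issue: your sketch of the ``convex-body lemma'' (subgaussian $\Rightarrow$ positive mass on $K$) via outer halfspace approximation plus Chernoff does not work as stated---a union bound over the exponentially many halfspaces needed to approximate a general $K$ is vacuous. The paper's Lemma~\ref{lem:subg-in-K} instead goes through Talagrand's majorizing measures theorem (Theorem~\ref{thm:talagrand}) to control $\E\|Y\|_K$, which is considerably deeper than a Chernoff argument.
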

\cut{To apply this theorem to derive the $O(\sqrt{t \log n})$ bound for Beck-Fiala,
first take the incidence matrix $A$ of the system,~i.e. $A_{ji} = 1$ if element
$i \in [n]$ is in set $j \in [m]$ and $0$ otherwise. Noting that the $O(\sqrt{t
\log n})$ bound is trivial for $t \geq n^2$ (discrepancy can't exceed $n$), we
can assume that the number of rows of $m$ of $A$ satisfies $m \leq t n \leq
n^3$. From here, we simply take $v_1,\dots,v_n$ to be the columns of
$A/(5\sqrt{t})$ (note their $\ell_2$ norms are at most $1/5$) and take $K =
\set{x \in \R^m: \|x\|_\infty \leq O(\sqrt{\log n})}$ to be the $m$-dimensional
cube of Gaussian measure $1/2$. The desired discrepancy bound for Banaszczyk's
coloring follows by multiplying through by $5\sqrt{t}$.\dnote{Should shorten the
above. Was mostly to convince myself actually.}}

The lower bound $1/2$ on the Gaussian measure of $K$ is easily seen to be tight.
In particular, if all the vectors are equal to $0$, we must have that $0
\in K$. If we allow Gaussian measure $<1/2$, then $K = \set{x \in \R^n: x_1 \geq
\eps}$, for $\eps > 0$ small enough, is a clear counterexample. On the other
hand, it is not hard to see that if $K$ has Gaussian measure $1/2$ then $0 \in
K$. Otherwise, there exists a halfspace $H$ containing $K$ but not $0$, where $H$
clearly has Gaussian measure less than $1/2$.

Banaszczyk's theorem gives the best known bound for the notorious K{o}ml\'os
conjecture~\cite{Spencer87ten}, a generalization of the Beck-Fiala conjecture,
which states that for any sequence of vectors $v_1,\dots,v_n \in \R^m$ of
$\ell_2$ norm at most $1$, there exists $\chi \in \set{-1,1}^n$ such that
$\|\sum_{i=1}^n \chi_i v_i\|_\infty$ is a constant independent of $m$ and $n$.
In this context, Banaszczyk's theorem gives a bound of $O(\sqrt{\log m})$,
because an $O(\sqrt{\log m})$ scaling of the unit ball of $\ell_\infty^m$ has
Gaussian measure $1/2$. Banaszczyk's theorem together with estimates on the
Gaussian measure of slices of the $\ell^m_\infty$ ball due to Barthe, Guedon,
Mendelson, and Naor~\cite{BartheGMN05} give a bound of $O(\sqrt{\log d})$, where
$d\le \min\{m,n\}$ is the dimension of the span of $v_1, \ldots, v_n$.  A
well-known reduction (see e.g.~Lecture 9 in~\cite{Spencer87ten}), shows that
this bound for the Koml\'os problem implies an $O(\sqrt{t \log \min\{m,n\}})$ bound in the
Beck-Fiala setting. 

While the above results only deal with the case of $K$ being a cube,
Banaszczyk's theorem has also been applied to other cases. It was used
in~\cite{Bana12} to give the best known bound on the Steinitz
conjecture. In this problem, the input is a set of vectors
$v_1,\dots,v_n$ in $\mathbb{R}^m$ of norm at most one and summing to
0. The aim is to find a permutation $\pi:[n]\rightarrow [n]$ to
minimise the maximum sum prefix of the vectors rearranged according to
$\pi$ i.e. to minimize $\max_{k\in[n]}\|\sum_{i=1}^k
v_{\pi(i)}\|$. The Steinitz conjecture is that this bound should
always be $O(\sqrt{m})$, irrespective of the number of vectors, and
using the vector balancing theorem Banaszczyk proved a bound of $O(\sqrt{m}+\sqrt{\log n})$ for the $\ell_2$ norm.

More recently, Banaszczyk's theorem was applied to more general symmetric polytopes
in Nikolov and Talwar's approximation algorithm~\cite{NT15} for a hereditary notion of
discrepancy. Hereditary
discrepancy is defined as the maximum discrepancy of any restriction
of the set system to a subset of the universe. In~\cite{NT15} it was shown that
an effan efficiently computable quantity, denoted $\gamma_2$,
bounds hereditary discrepancy from above and from below for any given set
system, up to polylogarithmic factors. For the upper bound they used
Banaszczyk's theorem for a natural polytope associated with the set
system. However, since there is no known algorithmic version of
Banaszczyk's theorem for a general body, it is not known how to
efficiently compute colorings that achieve the discrepancy upper bounds in terms
of $\gamma_2$. The recent work on algorithmic bounds in the Koml\'os
setting does not address this more general problem.



Banaszczyk's proof of Theorem~\ref{thm:bana-disc} follows an ingenious
induction argument, which folds the effect of choosing the sign of
$v_n$ into the body $K$. The first observation is that finding a point
of the set $\sum_{i = 1}^n{\{-v_i, v_i\}}$ inside $K$ is equivalent to
finding a point of $\sum_{i = 1}^{n-1}{\{-v_i, v_i\}}$ in $K-v_n \cup
K + v_n$. Inducting on this set is not immediately possible because it
may no longer be convex. Instead, Banaszczyk shows that a convex
subset $K * v_n$ of $(K-v_n)\cup (K+v_n)$ has Gaussian measure at
least that of $K$, as long as $K$ has measure at least $1/2$, which allows him to induct on $K * v_n$. In the
base case, he needs to show that a convex body of Gaussian measure at
least $1/2$ must contain the origin, but this fact follows easily from
the hyperplane separation theorem, as indicated above. While extremely elegant,
Banaszczyk's proof can be seen as relatively mysterious, as it does not
seem to provide any tangible insights as to what the colorings look
like. 
\cut{Moreover, even if we start with a very simple set $K$, the sets
$K * v_n$, $K * v_n * v_{n-1}$, etc., quickly get more complicated and
it is not clear how to maniputale them algorithmically.}

\subsection{Our Results} 
\label{sec:results}

As our main contribution, we help demystify Banaszczyk's theorem, by showing
that it is equivalent, up to a constant factor in the length of the vectors, to
the existence of certain subgaussian coloring distributions. Using this
equivalence, as our second main contribution, we give an efficient algorithm
that recovers Banaszczyk's theorem up to a $O(\sqrt{\log \min \set{m,n}})$
factor for all convex bodies. This improves upon the best previous algorithms of
Rothvoss~\cite{Rothvoss14}, Eldan and Singh~\cite{EldanSingh14}, which only
recover the theorem for symmetric convex bodies up to a $O(\log \min
\set{m,n})$ factor.

As a major consequence of our equivalence, we show that for any sequence
$v_1,\dots,v_n \in \R^m$ of short enough vectors there exists a probability
distribution $\chi \in \set{-1,1}^n$ over colorings such that, for
\emph{any symmetric convex body} $K \subseteq \R^m$ of
Gaussian measure at least $1/2$, the random variable
$\sum_{i=1}^n \chi_i v_i$ lands inside $K$ with probability at least $1/2$.
Importantly, if such a distribution can be efficiently sampled, we immediately
get a \emph{universal sampler} for constructing Banaszczyk colorings for all
symmetric convex bodies (we remark that the recent work of~\cite{BDG16}
constructs a more restricted form of such distributions). Using random walk
techniques, we show how to implement an approximate version of this sampler
efficiently, which guarantees the same conclusion when the vectors are
of length $O(1/\sqrt{\log \min \set{m,n}})$. We provide more details on these
results in Sections~\ref{sec:intro-subg}~and~\ref{sec:intro-rand-walk}.

To extend our results to asymmetric convex bodies, we develop a novel
\emph{recentering procedure} and a corresponding efficient implementation which
allows us to reduce the asymmetric setting to the symmetric one.  After this
reduction, a slight extension of the aforementioned sampler again yields the
desired colorings.  We note that our recentering procedure in fact depends on
the target convex body, and hence our algorithms are no longer universal in this
setting. We provide more details on these results in
Sections~\ref{sec:intro-asymmetric}~and~\ref{sec:intro-recentering}. 

Interestingly, we additionally show that this procedure can be
extended to yield a completely different coloring algorithm, i.e.~not using the
sampler, achieving the same $O(\sqrt{\log \min \set{m,n}})$ approximation
factor.  Surprisingly, the coloring outputted by this procedure is essentially
deterministic and has a natural analytic description,
which may be of independent interest.


Before we continue with a more detailed description on our results, we begin
with some terminology and a well-known reduction. Given a set of vectors
$v_1,\dots,v_n \in \R^m$, we shall call a property \emph{hereditary} if it holds
for all subsets of the vectors. We note that Banaszczyk's vector balancing bounds
restricted to a set of vectors are hereditary, since a bound on the
maximum $\ell_2$ norm
of the vectors is hereditary. We shall say that a property of colorings holds in
the \emph{linear setting}, if when given any shift $t \in \sum_{i=1}^n
[-v_i,v_i] \eqdef \set{\sum_{i=1}^n \lambda_i v_i: \lambda \in [-1,1]^n}$, one
can find a coloring (or distribution on colorings) $\chi \in \set{-1,1}^n$ such
that $\sum_{i=1}^n \chi_i v_i - t$ satisfies the property. It is well-known that
Banaszczyk's theorem also extends by standard arguments to the linear setting
after reducing the $\ell_2$ norm bound from $1/5$ to $1/10$ (a factor
$2$ drop). This follows, for example, from the general inequality
between hereditary and linear discrepancy proved by Lovasz, Spencer,
and Vesztergombi~\cite{LSV86}.

All the results in this work will in fact hold in the linear
setting. When treating the linear setting, it is well known that one
can always reduce to the case where the vectors $v_1,\dots,v_n$ are
linearly independent, and in our setting, when $m=n$. In particular,
assume we are given some shift $t \in \sum_{i = 1}^n[-v_i,v_i]$ and
that $v_1, \dots, v_n$ are \emph{not} linearly independent. Then,
using a standard linear algebraic technique, we can find a
``fractional coloring'' $x \in [-1, 1]^n$ such that $\sum_{i =
  1}^n{x_i v_i} = t$, and the vectors $(v_i: i \in A_x)$ are linearly
independent, where $A_x \eqdef \{i: x_i \in (-1,1)\}$ is the set of
fractional coordinates (see Lecture 5 in~\cite{Spencer87ten}, or
Chapter 4 in~\cite{Matousek99}\cut{, or Section~\ref{sec:prelims} in this
paper for details}). We can think of this as a reduction to coloring
the linearly independent vectors indexed by $A_x$. Specifically, given
$x$ as above, define the lifting function $L_x: [-1,1]^{A_x}
\rightarrow [-1,1]^n$ by
\begin{equation}\label{eq:lifting}
L_x(z)_i = \begin{cases} z_i &: i \in A_x \\ 
                          x_i &: i \in [n] \setminus A_x \end{cases},~\forall i \in [n] \text{ .}
\end{equation}
This map takes any coloring $\chi \in \set{-1,1}^{A_x}$ and ``lifts''
it to a full coloring $L_x(\chi) \in \{-1,1\}^n$. It also satisfies
the property that $L_x(\chi) - t = \sum_{i \in A_x}{\chi_i v_i} -
\sum_{i \in A_x}{x_i v_i}$. So, if we can find a coloring $\chi \in
\{-1, 1\}^{A_x}$ such that $\sum_{i \in A_x}{\chi_i v_i} - \sum_{i \in
  A_x}{x_i v_i} \in K$, then we would have $L_x(\chi) - t \in K$ as
well. Moreover, if we define $W$ as the span of $(v_i: i \in A_x)$,
then $\sum_{i \in A_x}{\chi_i v_i} - \sum_{i \in A_x}{x_i v_i} \in K$
if and only if $\sum_{i \in A_x}{\chi_i v_i} - \sum_{i \in A_x}{x_i
  v_i} \in K \cap W$, so we can replace $K$ with $K \cap W$, and work
entirely inside $W$. For convex bodies $K$ with Gaussian measure at
least $1/2$, the central section $K \cap W$ has Gaussian measure that
is at least as large, so we have reduced the problem to the case of
$|A_x|$ linearly independent vectors in an $|A_x|$-dimensional
space. (See Section~\ref{sec:prelims} for the full
details.)  We shall thus, for simplicity, state all our results in the
setting where the vectors $v_1,\dots,v_n$ are in $\R^n$ and are
linearly independent.


\subsubsection{Symmetric Convex Bodies and Subgaussian Distributions}
\label{sec:intro-subg}

In this section, we detail the equivalence of Banaszczyk's theorem restricted
to symmetric convex bodies with the existence of certain subgaussian
distributions. We begin with the main theorem of this section, which we note
holds in a more general setting than Banaszczyk's result.
 
\begin{theorem}[Main Equivalence]
\label{thm:main-equiv}
Let $T \subseteq \R^n$ be a finite set. Then, the following parameters are
equivalent up to a universal constant factor independent of $T$ and $n$: 
\begin{enumerate}
\item\label{thm:main-equiv-bal} The minimum $s_b > 0$ such that for any
symmetric convex body $K \subseteq \R^n$ of Gaussian measure at least $1/2$,
we have that $T \cap s_b K \neq \emptyset$. 
\item\label{thm:main-equiv-subg} The minimum $s_g > 0$ such that there exists
an $s_g$-subgaussian random variable $Y$ supported on $T$.
\end{enumerate}
\end{theorem}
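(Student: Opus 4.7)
The plan is to prove the equivalence in two directions: subgaussian distributions imply balancing ($s_b \lesssim s_g$), and balancing implies the existence of a subgaussian distribution ($s_g \lesssim s_b$). The first direction is the softer one and rests on a Gaussian comparison for subgaussian processes; the second direction is the heart of the equivalence and would be proved via convex duality.

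For $s_b \lesssim s_g$, let $Y$ be an $s_g$-subgaussian random variable supported on $T$, and let $K \subseteq \R^n$ be a symmetric convex body with Gaussian measure at least $1/2$. Writing $\|\cdot\|_K$ for the Minkowski gauge of $K$, the plan is to show $\E_Y[\|Y\|_K] = O(s_g)$, from which Markov's inequality produces some $y$ in the support of $Y$, hence in $T$, with $\|y\|_K = O(s_g)$, i.e., $y \in O(s_g)\,K$. The bound combines two ingredients: since $\Pr[G \in K] \geq 1/2$ for $G \sim N(0, I_n)$, the median of $\|G\|_K$ is at most $1$, so by Gaussian concentration $\E_G[\|G\|_K] = O(1)$; and writing $\|y\|_K = \sup_{\theta \in K^\circ}\langle \theta, y\rangle$, the process $\theta \mapsto \langle\theta, Y\rangle$ on $K^\circ$ has subgaussian increments with parameter $s_g\|\theta-\theta'\|_2$, so Talagrand's comparison inequality (generic chaining and majorizing measures) yields $\E_Y[\|Y\|_K] \lesssim s_g\,\E_G[\|G\|_K] = O(s_g)$.

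For $s_g \lesssim s_b$, I would argue by contradiction. Assume that no $\mu \in \Delta(T)$ is $(Cs_b)$-subgaussian for a sufficiently large universal constant $C$, aiming to exhibit a symmetric convex body $K$ with $\Pr[G \in K] \geq 1/2$ but $T \cap s_b K = \emptyset$, contradicting the definition of $s_b$. Since the subgaussian constraint $\E_\mu[\exp(\langle\theta, Y\rangle)] \leq \exp(C^2 s_b^2\|\theta\|^2/2)$ is linear in $\mu$ for each $\theta$, Sion/von~Neumann minimax applied to
\[
F(\mu, \nu) = \E_\nu\bigl[\E_\mu[\exp(\langle\theta, Y\rangle)] - \exp(C^2 s_b^2\|\theta\|^2/2)\bigr]
\]
yields a probability distribution $\nu$ on $\R^n$ with $\E_\nu[\exp(\langle\theta, y\rangle)] > \E_\nu[\exp(C^2 s_b^2\|\theta\|^2/2)]$ for every $y \in T$. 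Averaging $\nu$ against its reflection under $\theta \mapsto -\theta$ and using $\cosh(a) \geq e^a/2$ gives the symmetrized inequality $\E_\nu[\cosh(\langle\theta, y\rangle)] > \tfrac{1}{2}\E_\nu[\exp(C^2 s_b^2\|\theta\|^2/2)] =: c$. From this dual witness I would build the symmetric convex body
\[
K = \set{x \in \R^n : \E_\nu[\cosh(\langle s_b\theta, x\rangle)] \leq c},
\]
for which $T \cap s_b K = \emptyset$ holds directly: $y/s_b \in K$ would mean $\E_\nu[\cosh(\langle\theta, y\rangle)] \leq c$, contradicting the dual inequality.

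The crux, and the step I expect to be the main obstacle, is certifying $\Pr[G \in K] \geq 1/2$. Using $\E_G[\cosh(\langle s_b\theta, G\rangle)] = \exp(s_b^2\|\theta\|^2/2)$ together with Markov, the measure bound reduces to showing
\[
\frac{\E_\nu[\exp(s_b^2\|\theta\|^2/2)]}{\E_\nu[\exp(C^2 s_b^2\|\theta\|^2/2)]} \leq \frac{1}{4},
\]
which is exponentially small when $\nu$ is supported on $\|\theta\|$ bounded away from zero but degenerates when $\nu$ concentrates near $\theta = 0$ (where both expectations are close to $1$ and the dual inequality is essentially vacuous). To handle this regime I would dyadically decompose $\nu$ into annular shells $\|\theta\|_2 \in [2^k, 2^{k+1}]$: the dual inequality splits additively over shells, so by pigeonhole each $y \in T$ is witnessed by some shell, and after discarding the small-$\|\theta\|$ shells (where the inequality holds only trivially via mass) one can pass to a bounded family of relevant shells and intersect (or restrict to) the corresponding bodies $K_k$. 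A union bound on $\gamma(\R^n \setminus K_k)$, which is exponentially small on each nontrivial shell once $C$ is large, then delivers $\Pr[G \in \bigcap_k K_k] \geq 1/2$ at the cost of absorbing a constant factor into $C$, yielding the contradiction.
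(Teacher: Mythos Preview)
Your first direction ($s_b \lesssim s_g$) is correct and is exactly the paper's argument: Talagrand's majorizing measures comparison together with the bound $\E_G[\|G\|_K] = O(1)$ for symmetric $K$ of Gaussian measure $\geq 1/2$.

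For the second direction ($s_g \lesssim s_b$) you have the right skeleton---minimax followed by a $\cosh$-sublevel-set body---but you are missing the one idea that makes the argument close cleanly, and your proposed workaround does not obviously work. The paper does not apply minimax to the raw payoff $\exp(\langle\theta,y\rangle) - \exp(C^2 s_b^2\|\theta\|^2/2)$. Instead it first \emph{normalizes} each test by the Gaussian moment: it works with the functions
\[
g_w(x) \;=\; \frac{\cosh(\langle w,x\rangle)}{e^{\|w\|_2^2/2}},
\]
which satisfy $\E_G[g_w(G)] = 1$ for \emph{every} $w$. After normalizing so $s_b = 1$, the minimax game is $\inf_{\mu}\sup_{w}\E_\mu[g_w(Y)]$, and the dual player's convex combinations $f = \sum_i \lambda_i g_{w_i}$ automatically inherit $\E_G[f(G)] = 1$. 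Hence $K_f = \{x : f(x) \le 2\}$ has $\gamma_n(K_f) \ge 1/2$ by a one-line Markov bound, and by the definition of $s_b$ there is a point of $T$ in $K_f$, i.e.\ $\min_{y\in T} f(y) \le 2$. This bounds the game value by $2$ and gives $\sqrt{2}$-subgaussianity. The entire ``small $\|\theta\|$'' regime that worries you simply disappears, because the normalization by $e^{\|w\|^2/2}$ puts all directions on equal footing.

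By contrast, your formulation leaves the threshold $c = \tfrac{1}{2}\E_\nu[\exp(C^2 s_b^2\|\theta\|^2/2)]$ depending on $\nu$, and the ratio you need for the Markov step is \emph{not} uniformly bounded: it genuinely tends to $1$ when $\nu$ concentrates near $\theta = 0$. Your dyadic fix has a real gap. After pigeonholing, different $y\in T$ may be witnessed by different shells, so you need $T \cap s_b\bigcap_k K_k = \emptyset$ together with $\gamma_n(\bigcap_k K_k)\ge 1/2$, which requires the number of ``relevant'' shells to be bounded by an absolute constant. You have no control on this: minimax only gives you that the dual inequality is strictly positive, not that it is bounded away from zero, so the total violation could be carried by an unbounded number of small contributions across shells (and the ``trivial'' small-$\|\theta\|$ shells can themselves account for essentially all of it). The fix is not to decompose~$\nu$ but to change the payoff so every direction is already normalized, exactly as the paper does.
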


We recall that a random vector $Y \in \R^n$ is $s$-subgaussian, or subgaussian
with parameter $s$, if for any unit vector $\theta \in S^{n-1}$ and $t \geq 0$,
$\Pr[|\pr{Y}{\theta}| \geq t] \leq 2e^{-(t/s)^2/2}$. In words, $Y$ is
subgaussian if all its $1$-dimensional marginals satisfy the same tail 
bound as
the $1$-dimensional Gaussian of mean $0$ and standard deviation $s$. 

To apply the above to discrepancy, we set $T = \sum_{i=1}^n \set{-v_i,v_i}$,
i.e.~all signed combinations of the vectors $v_1,\dots,v_n \in \R^n$. In this
context, Banaszczyk's theorem directly implies that $s_b \leq 5 \max_{i \in [n]}
\|v_i\|_2$, and hence by our equivalence that $s_g = O(1) \max_{i \in [n]}
\|v_i\|_2$. Furthermore, the above extends to the linear setting
letting $T = \sum_{i=1}^n \set{-v_i,v_i} - t$, for $t \in \sum_{i=1}^n
[-v_i,v_i]$, because, as mentioned above, Banaszczyk's theorem extends
to this setting as well.

The existence of the \emph{universal sampler} claimed in the previous section is
in fact the proof that $s_b = O(s_g)$ in the above Theorem. In particular, it
follows directly from the following lemma.  

\begin{lemma} 
\label{lem:subg-in-K}
Let $Y \in \R^n$ be an $s$-subgaussian random variable. There exists an absolute
constant $c > 0$, such for any symmetric convex body $K \subseteq \R^n$ of
Gaussian measure at least $1/2$, $\Pr[Y \in s \cdot cK] \geq 1/2$.
\end{lemma}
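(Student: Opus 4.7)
The plan is to establish $\E\|Y\|_K = O(s)$ and conclude via Markov's inequality. Recall that for a symmetric convex body $K$, the Minkowski functional $\|x\|_K \eqdef \inf\{t > 0 : x \in tK\}$ is a norm, $x \in cK$ iff $\|x\|_K \leq c$, and by polar duality $\|x\|_K = \sup_{u \in K^\circ} \pr{x}{u}$, where $K^\circ$ is the (symmetric) polar body of $K$.

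First, I would show that $\E\|G\|_K = O(1)$ where $G \sim N(0, I_n)$. The assumption that $K$ has Gaussian measure at least $1/2$ is equivalent to saying that the median of $\|G\|_K$ is at most $1$. Invoking the $S$-inequality of Latała and Oleszkiewicz, which compares the Gaussian dilation profile of a symmetric convex body to that of the extremal symmetric slab of the same measure, we get for all $t \geq 1$
\[
\Pr[\|G\|_K \geq t] \leq \Pr\!\left[|Z| \geq \Phi^{-1}(3/4)\,t\right],
\]
with $Z \sim N(0,1)$. Integrating this tail bound from $t = 0$ to $\infty$ then yields a universal upper bound on $\E\|G\|_K$.

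Second, I would transfer this Gaussian bound to the subgaussian vector $Y$ via chaining. The stochastic process $\{\pr{Y}{u}\}_{u \in K^\circ}$ is subgaussian with respect to the metric $d(u,u') = s\|u - u'\|_2$, while the canonical Gaussian process $\{\pr{G}{u}\}_{u \in K^\circ}$ has metric $\|u - u'\|_2$. Applying Talagrand's majorizing measure theorem as an upper bound on the expected supremum of the subgaussian process and as a matching lower bound on the expected supremum of the Gaussian process, we obtain
\[
\E\|Y\|_K = \E \sup_{u \in K^\circ} \pr{Y}{u} \leq C_1\, s\, \E\sup_{u \in K^\circ} \pr{G}{u} = C_1\, s\, \E\|G\|_K = O(s),
\]
where the symmetry of $K^\circ$ is used to drop absolute values inside the suprema.

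Finally, Markov's inequality gives $\Pr[\|Y\|_K \geq 2\,\E\|Y\|_K] \leq 1/2$, so the lemma holds with $c$ equal to twice the absolute constant implicit in the $O(s)$ bound above. The main obstacle is the second step: upper bounding $\E\|G\|_K$ via the $S$-inequality is essentially a one-line calculation, but comparing $\E\|Y\|_K$ to its Gaussian analog for a generic subgaussian $Y$ genuinely requires the full strength of generic chaining. A naive $\eps$-net and union bound over $K^\circ$ would introduce extra $\sqrt{\log |\text{net}|}$ factors that would destroy the universal constant $c$ promised by the lemma.
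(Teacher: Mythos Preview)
Your proposal is correct and follows essentially the same route as the paper: bound $\E\|G\|_K$ via the Lata{\l}a--Oleszkiewicz $S$-inequality (the paper's Lemma~\ref{lem:med-to-exp-sym}, yielding $\E\|G\|_K \le 1.5$), transfer to $\E\|Y\|_K$ using Talagrand's majorizing measure theorem (the paper's Theorem~\ref{thm:talagrand}), and conclude by Markov. The only difference is cosmetic---you phrase the Talagrand step via the dual process $\{\pr{Y}{u}\}_{u\in K^\circ}$ whereas the paper states the comparison directly for gauge functions---and your closing remark about why naive $\eps$-nets fail is exactly the reason the paper invokes majorizing measures.
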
 

Here, if $Y$ is the $s_g$-subgaussian distribution supported on $\sum_{i=1}^n
\set{-v_i,v_i}-t$ as above, we simply let $\chi$ denote the random variable such
that $Y = \sum_{i=1}^n \chi_i v_i - t$. That $\chi$ now yields the desired
universal distribution on colorings is exactly the statement of the lemma. 

As a consequence of the above, we see that to recover Banaszczyk's theorem for
symmetric convex bodies, it suffices to be able to efficiently sample from an
$O(1)$-subgaussian distribution over  sets of the type $\sum_{i=1}^n \set{-v_i,v_i}-t$,
for $t \in \sum_{i=1}^n [-v_i,v_i]$, when $v_1,\dots,v_n \in \R^n$ are linearly
independent and have $\ell_2$ norm at most $1$. Here we rely on homogeneity,
that is, if $Y$ is an $s$-subgaussian random variable supported on $\sum_{i=1}^n
\set{-v_i,v_i}-t$  then $\alpha Y$ is $\alpha s$-subgaussian on $\sum_{i=1}^n
\set{-\alpha v_i,\alpha v_i}-\alpha t$, for $\alpha > 0$.    

The proof of Lemma~\ref{lem:subg-in-K} (see section~\ref{sec:subgaussian} for
more details) follows relatively directly from well-known convex geometric
estimates combined with Talagrand's majorizing measures theorem, which gives a
powerful characterization of the supremum of any Gaussian process.

Unfortunately, Lemma~\ref{lem:subg-in-K} does not hold for asymmetric convex
bodies. In particular, if $Y = -e_1$, the negated first standard basis vector, and $K
= \set{x \in \R^n: x_1 \geq 0}$, the conclusion is clearly false no matter how
much we scale $K$, even though $Y$ is $O(1)$-subgaussian and $K$ has Gaussian
measure $1/2$. One may perhaps hope that the conclusion still holds if we ask
for either $Y$ or $-Y$ to be in $s \cdot c K$ in the asymmetric setting, though
we do not know how to prove this. We note however that this only makes sense
when the support of $Y$ is symmetric, which does not necessarily hold in the
linear discrepancy setting.

We now describe the high level idea of the proof for the reverse
direction, namely, that $s_g = O(s_b)$. For this purpose, we show that
the existence of a $O(s_b)$-subgaussian distribution on $T$ can be
expressed as a two player zero-sum game, i.e.~the first player chooses
a distribution on $T$ and the second player tries to find a
non-subgaussian direction. Here the value of the game will be small if
and only if the $O(s_b)$-subgaussian distribution exists.  To bound
the value of the game, we show that an appropriate ``convexification''
of the space of subgaussianity tests for the second player can be
associated with symmetric convex bodies of Gaussian measure at least
$1/2$. From here, we use von Neumann's minimax principle to switch the
first and second player, and deduce that the value of the game is
bounded using the definition of $s_b$. 

\subsubsection{The Random Walk Sampler}
\label{sec:intro-rand-walk}

From the algorithmic perspective, it turns out that subgaussianity is a very
natural property in the context of random walk approaches to discrepancy
minimization.  Our results can thus be seen as a good justification for
the random walk approaches to making Banaszczyk's theorem constructive.

At a high level, in such approaches one runs a random walk over the coordinates
of a ``fractional coloring'' $\chi \in [-1,1]^n$ until  all the coordinates
hit either $1$ or $-1$. The steps of such a walk usually come from Gaussian
increments (though not necessarily spherical), which try to balance the
competing goals of keeping discrepancy low and moving the fractional coloring
$\chi$ closer to $\set{-1,1}^n$. Since a sum of small centered Gaussian increments is
subgaussian with the appropriate parameter, it is natural to hope that the output of a correctly
implemented random walk is subgaussian. Our main result in this setting
is that this is indeed possible to a limited extent, with the main caveat
being that the walk's output will not be ``subgaussian enough'' to fully recover
Banaszczyk's theorem.

\begin{theorem}
\label{thm:rand-walk}
Let $v_1,\dots,v_n \in \R^n$ be vectors of $\ell_2$ norm at most $1$
 and let $t \in \sum_{i=1}^n [-v_i,v_i]$. Then, there is an expected
polynomial time algorithm which outputs a random coloring $\chi \in
\set{-1,1}^n$ such that the random variable $\sum_{i=1}^n \chi_i v_i - t$ is
$O(\sqrt{\log n})$-subgaussian.
\end{theorem}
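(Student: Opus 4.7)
The plan is to build a discrete-time Gaussian random walk $x_0, x_1, \dots, x_T$ on fractional colorings in $[-1,1]^n$. First, I fix a starting point $x_0 \in [-1,1]^n$ with $\sum_i x_{0,i} v_i = t$, which exists by the hypothesis $t \in \sum_i [-v_i,v_i]$. At each step I set $x_{k+1} = x_k + \gamma u_k$, where $u_k$ is a centered Gaussian vector with covariance $P_k$, and $P_k$ is an orthogonal projector onto a carefully chosen subspace of the alive coordinates $A_k = \set{i : |x_{k,i}| < 1}$ (extended by zero elsewhere). Coordinates that hit $\pm 1$ are immediately frozen, so the walk stays inside $[-1,1]^n$ and eventually terminates at some vertex $\chi \in \set{-1,1}^n$. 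The displacement process $Y_k = \sum_i (x_{k,i} - x_{0,i}) v_i$ is then a mean-zero martingale with $Y_0 = 0$ and terminal value $Y_T = \sum_i \chi_i v_i - t$.

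To establish subgaussianity of $Y_T$, I fix a unit vector $\theta \in S^{n-1}$ and bound the moment generating function of $\pr{\theta}{Y_T}$. Conditional on the history up to step $k$, the increment $\pr{\theta}{V u_k}$ is a centered Gaussian with variance $\gamma^2 \|P_k V^\transpose \theta\|^2$. Iterating the Gaussian MGF identity under conditional expectations yields
\[
\E\bigl[\exp(\lambda \pr{\theta}{Y_T})\bigr] \;\le\; \E\Bigl[\exp\bigl(\tfrac{\lambda^2}{2} \sum_k \gamma^2 \|P_k V^\transpose \theta\|^2\bigr)\Bigr],
\]
so it will suffice to guarantee an almost-sure upper bound $\sum_k \gamma^2 \|P_k V^\transpose \theta\|^2 \le O(\log n)$, uniformly over unit $\theta$; a standard Chernoff argument then delivers the desired $O(\sqrt{\log n})$-subgaussianity of $Y_T$.

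The main obstacle is designing the projectors $P_k$ so that the accumulated variance $\sum_k \gamma^2 \|P_k V^\transpose \theta\|^2$ stays $O(\log n)$ \emph{simultaneously} in every direction $\theta$, without invoking a union bound. A naive ``freeze-only'' walk, where $P_k$ is the full projector onto $\R^{A_k}$, gives only the trivial per-direction bound of order $n$, which is far too weak. The remedy, in the spirit of the Gram--Schmidt walk of Bansal--Dadush--Garg~\cite{BDG16}, is to additionally restrict $u_k$ to the orthogonal complement of a small, dynamically maintained collection of directions in which variance has already accumulated, so that each $\theta$ collects only a logarithmic variance budget over the entire run. For termination, a potential-function argument on $\sum_i (1 - x_{k,i}^2)$ shows that each alive coordinate freezes within $O(\gamma^{-2})$ expected steps, yielding expected total running time $\poly(n)$. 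The $\sqrt{\log n}$ factor in the statement is precisely the loss inherent in this random-walk approach relative to Banaszczyk's $O(1)$-subgaussian bound, and closing it would, via Theorem~\ref{thm:main-equiv}, yield a fully constructive version of Theorem~\ref{thm:bana-disc}.
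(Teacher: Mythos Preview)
Your high-level architecture is right and matches the paper: a martingale random walk on fractional colorings, freezing coordinates as they hit $\pm 1$, and bounding the MGF of $\pr{\theta}{Y_T}$ via the accumulated quadratic variation $\sum_k \gamma^2\,\theta^\transpose V \Sigma_k V^\transpose \theta$. The gap is in the actual choice of the step covariance.

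You take $\Sigma_k = P_k$ to be an orthogonal projector and propose to control $\sum_k \|P_k V^\transpose \theta\|_2^2$ by ``restricting $u_k$ to the orthogonal complement of a small, dynamically maintained collection of directions in which variance has already accumulated.'' This is not a workable mechanism as stated: the set of directions $\theta$ is uncountable, and you give no rule for which finitely many directions to block, why blocking them controls all the others, or why the walk still moves on every alive coordinate after the restriction. In particular, if $P_k$ projects out directions in coloring space, some alive coordinate may have $(P_k)_{ii}=0$, and your potential argument on $\sum_i(1-x_{k,i}^2)$ no longer forces termination. (The attribution is also off: \cite{BDG16} is not the Gram--Schmidt walk, and in any case that walk gives $O(1)$-subgaussianity, not the $O(\sqrt{\log n})$ you are aiming for here.)

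The paper's actual device is different and does not use a projector at all. At each step it takes $\Sigma(t)\succeq 0$ to be a feasible solution of the \emph{vector Koml\'os SDP}:
\[
\Sigma(t)_{ii}=1\ \text{for } i\in A(t),\qquad \Sigma(t)_{ii}=0\ \text{for } i\notin A(t),\qquad V\Sigma(t)V^\transpose \preceq I_m.
\]
Feasibility of this program under the hypothesis $\|v_i\|_2\le 1$ is the nontrivial input (Theorem~\ref{thm:SDP-feas} / \cite{Nikolov13}). The constraint $V\Sigma(t)V^\transpose\preceq I$ gives, for every unit $\theta$ simultaneously, a per-step variance bound $\theta^\transpose V\Sigma(t)V^\transpose\theta\le 1$, hence $\gamma^2$ after scaling. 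The diagonal constraint $\Sigma(t)_{ii}=1$ on alive coordinates guarantees each alive coordinate has unit step variance, so the stopping-time argument (Lemma~\ref{lm:stopping-time}) makes all coordinates freeze within $T=O(\gamma^{-2}\log n)$ steps with constant probability. The total quadratic variation is then at most $\gamma^2 T = O(\log n)$, and Freedman's inequality finishes. In short, the missing idea in your proposal is precisely the vector Koml\'os SDP; without it, neither the uniform-in-$\theta$ variance control nor the termination argument goes through.
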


To achieve the above sampler, we guide our random walk using solutions to the
so-called vector K{\'o}mlos program, whose feasibility was first
given by Nikolov~\cite{Nikolov13}, and show subgaussianity using well-known
martingale concentration bounds. Interestingly, the random walk's analysis does
not rely on phases, and is instead based on a simple relation between the walk's
convergence time and the subgaussian parameter. As an added bonus, we also give
a new and simple constructive proof of the feasibility of the vector K{\'o}mlos
program (see section~\ref{sec:vector-komlos} for details) which avoids the use
of an SDP solver. 

Given the results of the previous section, the above random walk is a universal
sampler for constructing the following colorings.

\begin{corollary}
\label{cor:sym-body-sample}
Let $v_1,\dots,v_n \in \R^n$ be vectors of $\ell_2$ norm at most
$1$,  let $t \in
\sum_{i=1}^n [-v_i,v_i]$, and let $K \subseteq \R^n$ be a symmetric convex
body of Gaussian measure $1/2$ (given by a membership oracle).  Then,
there is an expected polynomial time algorithm which outputs a coloring $\chi
\in \set{-1,1}^n$ such that $\sum_{i=1}^n \chi_i v_i - t \in
O(\sqrt{\log n}) K$.
\end{corollary}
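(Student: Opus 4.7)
The plan is to combine the random walk sampler of Theorem~\ref{thm:rand-walk} with the containment guarantee of Lemma~\ref{lem:subg-in-K} in a straightforward sample-and-check loop, using the membership oracle to verify success.

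First, I would run the algorithm of Theorem~\ref{thm:rand-walk} on the input $v_1,\dots,v_n$ and $t$ to obtain a random coloring $\chi \in \{-1,1\}^n$ such that the random vector $Y := \sum_{i=1}^n \chi_i v_i - t$ is $s$-subgaussian for some $s = O(\sqrt{\log n})$. Since $K$ is a symmetric convex body of Gaussian measure $1/2$, Lemma~\ref{lem:subg-in-K} applied to $Y$ guarantees that $Y \in s \cdot cK$ with probability at least $1/2$, where $c$ is the absolute constant from the lemma. In particular, $\sum_{i=1}^n \chi_i v_i - t \in O(\sqrt{\log n}) \cdot K$ with probability at least $1/2$.

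Next, to make the algorithm output a coloring with the required property deterministically (in the sense of guaranteed correctness), I would use the membership oracle for $K$ to test whether the sampled $\chi$ satisfies $\sum_{i=1}^n \chi_i v_i - t \in s c K$. If so, output $\chi$; otherwise, resample by running the Theorem~\ref{thm:rand-walk} algorithm again and repeat. Because each independent trial succeeds with probability at least $1/2$, the expected number of trials is at most $2$, and each trial runs in expected polynomial time by Theorem~\ref{thm:rand-walk}. The verification step requires a single oracle call. Combining these, the overall algorithm runs in expected polynomial time.

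There is no serious obstacle here; the corollary is essentially a plug-in of the two earlier results, with the only mild point being that Lemma~\ref{lem:subg-in-K} is stated for symmetric convex bodies of Gaussian measure \emph{at least} $1/2$, which exactly matches the hypothesis on $K$. Homogeneity of subgaussianity (scaling by $1/s$) is implicit in the lemma's conclusion, so no additional rescaling of the vectors is needed before invoking the sampler.
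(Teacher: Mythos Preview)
Your proposal is correct and matches the paper's approach: the corollary is presented there as an immediate consequence of combining the $O(\sqrt{\log n})$-subgaussian sampler of Theorem~\ref{thm:rand-walk} with Lemma~\ref{lem:subg-in-K} (equivalently, Lemma~\ref{lem:subg-red} part~\ref{lem:subg-red-sym}). Your added detail of the sample-and-check loop using the membership oracle is exactly what is implicitly needed to turn the $\geq 1/2$ success probability into a guaranteed output in expected polynomial time.
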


As mentioned previously, the best previous algorithms in this setting are due to
Rothvoss~\cite{Rothvoss14}, Eldan and Singh~\cite{EldanSingh14}, which find a
signed combination inside $O(\log n)K$. Furthermore, these algorithms are not
universal, i.e.~they heavily depend on the body $K$. We note that these
algorithms are in fact tailored to find \emph{partial colorings} inside a
symmetric convex body $K$ of Gaussian measure at least $2^{-cn}$, for $c > 0$
small enough, a setting in which our sampler does not provide any guarantees.

We now recall prior work on random walk based discrepancy minimization. The
random walk approach was pioneered by Bansal~\cite{Bansal10}, who used a
semidefinite program to guide the walk and gave the first efficient algorithm
matching the classic $O(\sqrt{n})$ bound of Spencer~\cite{Spencer85} for the
combinatorial discrepancy of set systems satisfying $m=O(n)$. Later, Lovett and
Meka~\cite{LovettMeka12} provided a greatly simplified walk, removing the need
for the semidefinite program, which recovered the full power of Beck's entropy
method for constructing partial colorings. Harvey, Schwartz, and
Singh~\cite{HSS14} defined another random walk based algorithm, which,
unlike previous work and similarly to our algorithm, doesn't
explicitly use phases or produce partial colorings. The random walks
of~\cite{LovettMeka12}~and~\cite{HSS14} both depend on the convex
body $K$; the walk in~\cite{LovettMeka12} is only well-defined in a
polytope, while the one in~\cite{HSS14} remains well-defined in any
convex body, although the analysis still applies only to the polyhedral
setting.
Most directly related to this paper
is the recent work~\cite{BDG16}, which gives a walk that can be viewed as a
randomized variant of the original $2t-1$ Beck-Fiala proof. This walk induces a
distribution $\chi \in \set{-1,1}^n$ on colorings for which \emph{each
coordinate} of the output $\sum_{i=1}^n \chi_i v_i$ is $O(1)$-subgaussian. From
the discrepancy perspective, this gives a sampler which finds colorings inside
any axis parallel box of Gaussian measure at least $1/2$ (and their
rotations, though not in a universal manner), matching Banaszczyk's result for
this class of convex bodies. 

\subsubsection{Asymmetric Convex Bodies}
\label{sec:intro-asymmetric}

In this section, we explain how our techniques extend to the asymmetric setting.
The main difficulty in the asymmetric setting is that one cannot hope to
increase the Gaussian mass of an asymmetric convex body by simply scaling it. In
particular, if we take $K \subseteq \R^n$ to be a halfspace through the origin,
e.g.~$\set{x \in \R^n: x_1 \geq 0}$, then $K$ has Gaussian measure exactly $1/2$
but $sK = K$ for all $s > 0$. At a technical level, the lack of any measure
increase under scaling breaks the proof of Lemma~\ref{lem:subg-in-K}, which is
crucial for showing that subgaussian coloring distributions produce combinations
that land inside $K$. 

The main idea to circumvent this problem will be to reduce to a setting where
the mass of $K$ is ``symmetrically distributed'' about the origin, in
particular, when the barycenter of $K$ under the induced Gaussian measure is at
the origin. For such a body $K$, we show that a constant factor scaling of $K
\cap -K$ also has Gaussian mass at least $1/2$, yielding a direct reduction to
the symmetric setting. 
  
To achieve this reduction, we will use a novel \emph{recentering procedure},
which will both carefully fix certain coordinates of the coloring as well as
shift the body $K$ to make its mass more ``symmetrically distributed''. The
guarantees of this procedure are stated below:

\begin{theorem}[Recentering Procedure]
\label{thm:recenter}
Let $v_1,\dots,v_n \in \R^n$ be linearly independent, $t \in \sum_{i=1}^n
[-v_i,v_i]$, and $K \subseteq \R^n$ be a convex body of Gaussian measure at
least $1/2$. Then, there exists a fractional coloring $x \in [-1,1]^n$,
such that for $p = \sum_{i=1}^n x_i v_i - t$, $A_x = \set{i \in [n]:
x_i \in (-1,1)}$ and $W = {\rm span}(v_i: i \in A_x)$, the following
holds:
\begin{enumerate}
\item $p \in K$. \label{pr1}
\item The Gaussian measure of $(K-p) \cap W$ on $W$ is at least the Gaussian
measure of $K$.  \label{pr2}
\item The barycenter of $(K-p) \cap W$ is at the origin, 
     i.e.~$\int_{(K-p) \cap W} y e^{-\|y\|^2/2} dy = 0$.  \label{pr3}
\end{enumerate}
\end{theorem}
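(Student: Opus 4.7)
The plan is to prove the theorem by a variational argument. Consider the functional
\[F(x) = \gamma_{W(x)}\bigl((K - p(x)) \cap W(x)\bigr)\]
on $x \in [-1,1]^n$, where $W(x) := \mathrm{span}(v_i : x_i \in (-1,1))$, and look for a maximizer satisfying the three properties. Since $F$ is discontinuous across stratum boundaries (when some $x_i$ reaches $\pm 1$, the subspace $W(x)$ jumps), I would use an iterative stratified maximization: start from some $x^{(0)} \in (-1,1)^n$ with $p(x^{(0)}) = 0$ (which exists, possibly after a tiny perturbation, since $t \in \sum_i [-v_i,v_i]$ and the $v_i$ are linearly independent), and at iteration $k$ maximize the log-concave function $x \mapsto \gamma_{W_{A_k}}((K - p(x)) \cap W_{A_k})$ over $x$ with $x_i$ fixed for $i \notin A_k$; log-concavity follows from Pr\'ekopa--Leindler applied to the convex body $K$ and the affine map $x \mapsto p(x)$. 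If any free coordinate hits $\pm 1$ at the maximizer, remove it from $A_k$ and repeat. Since $|A_k|$ strictly decreases at each transition, the procedure terminates in at most $n$ iterations at a state $(x^*, A, W, p^*)$.

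For Property~3, I would use the interior KKT conditions at termination, which give $\partial_{x_i} F(x^*) = 0$ for all $i \in A$. A direct computation via Fubini and differentiation under the integral shows
\[\nabla_{p_W} \gamma_W((K - p) \cap W) = \int_{(K-p) \cap W} y\, \phi_W(y)\, dy,\]
the unnormalized Gaussian barycenter of the slice in $W$. Since $v_i \in W$ for $i \in A$, the KKT equation becomes $\int_{(K - p^*) \cap W} y\, \phi_W(y)\, dy \cdot v_i = 0$ for every $i \in A$, and since $\{v_i\}_{i \in A}$ spans $W$, the barycenter in $W$ vanishes, giving Property~3. Property~1 then follows from hyperplane separation: if $p^* \notin K$, a closed halfspace $H \supseteq K$ with $p^* \notin H$ forces $(K - p^*) \cap W$ into a halfspace of $W$ strictly excluding the origin, whose Gaussian measure is strictly less than $1/2 \le \gamma_n(K)$, contradicting Property~2.

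The crux is Property~2, that $\gamma_W((K - p^*) \cap W) \ge \gamma_n(K)$. The natural invariant to maintain is that the current slice measure is always at least $\gamma_n(K)$; this holds at initialization (where $W = \R^n$ and $p = 0$) and is non-decreasing within a single iteration by monotonicity of the within-stratum maximization. \textbf{The main obstacle} is preserving the invariant across each stratum transition, i.e., showing $\gamma_{W'}((K - p) \cap W') \ge \gamma_W((K - p) \cap W)$ when we pass from $W$ to a subspace $W' \subsetneq W$ obtained by dropping the coordinates that hit $\pm 1$. For symmetric bodies this is a classical central-section inequality, but here $K$ is asymmetric and in general a central section can have strictly smaller Gaussian measure. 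The KKT boundary conditions at the transition supply partial centering: the barycenter of the slice in $W$ vanishes along the remaining free $v_i$ and has a controlled sign along each dropped $v_i$. Combining this partial-centering information with Pr\'ekopa--Leindler applied to the slicing decomposition
\[\gamma_W(C) = \int_{W \ominus W'} \phi_{W \ominus W'}(u)\, \gamma_{W'}((C - u) \cap W')\, du\]
should suffice to conclude the transition inequality; executing this bound rigorously for general asymmetric $K$ is the technically delicate heart of the proof.
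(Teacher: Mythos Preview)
Your overall framework—iterative maximization of the slice measure, with KKT giving the barycenter condition at termination and separation giving $p^*\in K$—is exactly the paper's approach (the paper phrases it as induction on $n$ rather than a loop, but it is the same argument). Where you go astray is in what you flag as ``the main obstacle.''

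The transition inequality $\gamma_{W'}((K-p)\cap W')\geq\gamma_W((K-p)\cap W)$ is \emph{not} hard, and you do not need any partial-centering information or a clever application of Pr\'ekopa--Leindler. The key fact you are missing is that for \emph{any} convex body $C$ (symmetric or not) with $\gamma_W(C)\geq 1/2$, and any linear subspace $W'\subseteq W$, one has $\gamma_{W'}(C\cap W')\geq\gamma_W(C)$. This is a direct consequence of Ehrhard's inequality: the function $t\mapsto\Phi^{-1}(\gamma_{W'}((C-tu)\cap W'))$ is concave in $t$ for $u\in W\ominus W'$, and averaging against the Gaussian on $W\ominus W'$ via Fubini shows the central section (at $t=0$) must already dominate $\Phi^{-1}(\gamma_W(C))\geq 0$. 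Your instinct that asymmetric central sections can drop is correct in general, but the hypothesis $\gamma_W(C)\geq 1/2$ rules this out.

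Once you have this lemma, Property~2 is immediate: the invariant $\gamma_W((K-p)\cap W)\geq\gamma_n(K)\geq 1/2$ is preserved for free at every transition, and within a stratum you only increase the measure. The KKT boundary information you wanted to exploit is unnecessary (and your description of it is also slightly off: at the transition point the barycenter need not yet vanish along the remaining free $v_i$; that only happens after the \emph{next} within-stratum optimization).
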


By convention, if the procedure returns a full coloring $x \in
\set{-1,1}^n$ (in which case, since $p \in K$, we are done), we shall treat
conditions $2$ and $3$ as satisfied, even though $W = \{0\}$. At a high level, the recentering procedure
allows us to reduce the initial vector balancing problem to one in a possibly
lower dimension with respect to ``well-centered'' convex body of no smaller
Gaussian measure, and in particular, of Gaussian measure at least $1/2$.
Interestingly, as mentioned earlier in the introduction, the recentering
procedure can also be extended to yield a full coloring algorithm. We explain
the high level details of its implementation together with this extension in the
next subsection.  

To explain how to use the fractional coloring $x$ from
Theorem~\ref{thm:recenter} to get a useful reduction, recall the
lifting function $L_x: [-1,1]^{A_x} \rightarrow [-1,1]^n$ defined in
\eqref{eq:lifting}. We reduce the initial vector balancing problem to
the problem of finding a coloring $\chi \in \{-1,1\}^{A_x}$ such that
$\sum_{i \in A_x}{\chi_i v_i} - \sum_{i \in A_x}{x_i v_i} \in
(K-p)\cap W$ (note that $\sum_{i \in A_x}{\chi_i v_i} - \sum_{i \in
  A_x}{x_i v_i} \in W$ by construction). Then we can lift this
coloring to $L_x(\chi)$, which satisfies 
\[
\sum_{i \in A_x} \chi_i v_i - \sum_{i \in A_x} x_i v_i \in (K-p) \cap W \Leftrightarrow
\sum_{i=1}^n L_x(\chi)_i v_i - t \in K.
\]

From here, the guarantee that $K' \eqdef (K-p) \cap W$ has Gaussian measure at
least $1/2$ and barycenter at the origin allows a direct reduction to the
symmetric setting. Namely, we can replace $K'$ by the symmetric convex body $K'
\cap -K'$ without losing ``too much'' of the Gaussian measure of $K'$. This is
formalized by the following extension of Lemma~\ref{lem:subg-in-K}, which
directly implies a reduction to subgaussian sampling as in
section~\ref{sec:intro-subg}.   

\begin{lemma} 
\label{lem:subg-in-asym-K}
Let $Y \in \R^n$ be an $s$-subgaussian random variable. There exists an absolute
constant $c > 0$, such for any convex body $K \subseteq \R^n$ of Gaussian
measure at least $1/2$ and barycenter at the origin, $\Pr[Y \in s \cdot c(K \cap
-K)] \geq 1/2$.
\end{lemma}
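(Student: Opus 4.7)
The plan is to reduce the asymmetric case to the symmetric case handled by Lemma~\ref{lem:subg-in-K}. Since $K \cap -K$ is always a symmetric convex body, it suffices to find an absolute constant $c_1 > 0$ for which $\gamma(c_1 (K \cap -K)) \geq 1/2$. Applying Lemma~\ref{lem:subg-in-K} to the symmetric body $L = c_1(K \cap -K)$ then yields
\[
\Pr[Y \in s \cdot c_0 \cdot L] = \Pr[Y \in s \cdot (c_0 c_1) \cdot (K \cap -K)] \geq 1/2,
\]
which gives the desired conclusion with $c = c_0 c_1$, where $c_0$ is the absolute constant from Lemma~\ref{lem:subg-in-K}.

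The central task is thus to establish $\gamma(c_1(K \cap -K)) \geq 1/2$ for some absolute $c_1 > 0$. A convenient simplification goes through inclusion-exclusion: since $\gamma(-c_1 K) = \gamma(c_1 K)$ and $\gamma(c_1 K \cup -c_1 K) \leq 1$, we have $\gamma(c_1 K \cap -c_1 K) \geq 2\gamma(c_1 K) - 1$, so it suffices to prove $\gamma(c_1 K) \geq 3/4$ for an absolute $c_1$. The barycenter condition is critical at this point: a half-space through the origin would satisfy $\gamma(c K) = 1/2$ for all $c > 0$, and no dilation helps. However, the barycenter hypothesis rules out such half-spaces, because $\int_K \langle x, \theta \rangle g(x) dx > 0$ for any half-space $K \subseteq \{x : \langle x, \theta \rangle \geq 0\}$ of positive Gaussian measure, where $g$ denotes the standard Gaussian density.

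To prove $\gamma(c_1 K) \geq 3/4$ for an absolute $c_1$, I would exploit the log-concavity of the function $v \mapsto \gamma(K - v)$, which follows from Pr\'ekopa--Leindler applied to the log-concave function $(v, x) \mapsto \mathbb{1}_K(x+v) g(x)$. Differentiating under the integral, the barycenter condition $\int_K y g(y) dy = 0$ translates precisely to $\nabla_v \gamma(K - v)|_{v = 0} = 0$, so log-concavity forces $v = 0$ to be a global maximizer. This ``optimally centered'' property, combined with $\gamma(K) \geq 1/2$, should yield a universal rate at which $\gamma(cK)$ grows past $3/4$. An alternative route is via Ehrhard's inequality applied to $K$ and $-K$ with $\lambda = 1/2$: since $\gamma(K) = \gamma(-K) \geq 1/2$, Ehrhard gives $\gamma(\tfrac{1}{2}(K - K)) \geq 1/2$ for the symmetric difference body. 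One would then compare $\tfrac{1}{2}(K - K) \supseteq K \cap -K$, using the barycenter condition to show these two symmetric bodies are equivalent up to a constant dilation, and conclude from there.

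The main obstacle is the quantitative step: converting the qualitative fact that $K$ is ``uniformly far'' from a half-space (the essential content of the barycenter condition combined with $\gamma(K) \geq 1/2$) into an \emph{absolute} constant $c_1$, independent of $K$ and of the ambient dimension $n$. Without the barycenter hypothesis, half-space counterexamples show no such absolute $c_1$ can exist, so any successful proof must genuinely exploit the vanishing of the Gaussian first moment of $K$, rather than just its positivity of measure.
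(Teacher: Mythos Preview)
Your reduction strategy is sound and parallels the paper's: both aim to control $K\cap -K$ and then invoke the symmetric case (Talagrand plus Markov). Your inclusion--exclusion step reducing to $\gamma_n(c_1 K)\ge 3/4$ is valid, and the paper in fact proves exactly this statement elsewhere (Lemma~\ref{lemma:volincr}). The gap is that neither of your two proposed mechanisms for this quantitative step actually works.

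The log-concavity argument only tells you that $v=0$ maximizes $v\mapsto\gamma_n(K-v)$; this concerns \emph{translates}, not \emph{dilates}, and by itself yields no rate for $c\mapsto\gamma_n(cK)$. The Ehrhard route gives $\gamma_n\bigl(\tfrac12(K-K)\bigr)\ge\gamma_n(K)\ge 1/2$, but the containment $K\cap -K\subseteq\tfrac12(K-K)$ goes the wrong way, and your proposed fix---that the barycenter condition forces these two symmetric bodies to be equivalent up to an absolute dilation---is not true in general (think of a centered simplex in high dimension, where the ratio is dimension-dependent).

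What is actually needed, and what the paper uses, is a different consequence of the barycenter hypothesis: together with $\gamma_n(K)\ge 1/2$ it forces $rB_2^n\subseteq K$ for an \emph{absolute} $r>0$ (Lemma~\ref{lem:ball-cont}, part~\ref{lem:ball-cont-bar}). This is the step that genuinely converts ``not a half-space'' into a dimension-free quantitative statement. Once you have the ball, the gauge $\|\cdot\|_K$ is $(1/r)$-Lipschitz, and Gaussian concentration (Maurey--Pisier) combined with $\Pr[\|X\|_K\le 1]\ge 1/2$ gives $\E[\|X\|_K]\le C$ for an absolute $C$. From here you can either (i) deduce $\gamma_n(c_1 K)\ge 3/4$ by the same concentration and finish along your inclusion--exclusion route, or (ii) use $\E[\|X\|_{K\cap -K}]\le 2\,\E[\|X\|_K]$ directly (Lemma~\ref{lem:med-to-exp}) and apply Talagrand to $K\cap -K$, which is what the paper does in Lemma~\ref{lem:subg-red}.
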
 

In particular, if there exists a distribution over colorings $\chi \in
\set{-1,1}^{A_x}$ such that $\sum_{i \in A_x} \chi_i v_i - \sum_{i \in A_x} x_i v_i$ as above is
$1/c$-subgaussian, Lemma~\ref{lem:subg-in-asym-K} implies that the random signed
combination lands inside $K'$ with probability at least $1/2$. Thus, the
asymmetric setting can be effectively reduced to the symmetric one, as claimed. 

Crucially, the recentering procedure in Theorem~\ref{thm:recenter} can be
implemented in probabilistic polynomial time if one relaxes the
barycenter condition from being exactly $0$ to having ``small'' norm (see
section~\ref{sec:recenter} for details). Furthermore, the estimate in
Lemma~\ref{lem:subg-in-asym-K} will be robust to such perturbations. Thus, to
constructively recover the colorings in the asymmetric setting,
it will still suffice to be able to generate good subgaussian coloring
distributions.  

Combining the sampler from Theorem~\ref{thm:rand-walk} together with the
recentering procedure, we constructively recover Banaszczyk's theorem for
general convex bodies up to a $O(\sqrt{\log n})$ factor. 

\begin{theorem}[Weak Constructive Banaszczyk]
\label{thm:weak-bana}
There exists a probabilistic polynomial time algorithm which, on input a
linearly independent set of vectors $v_1, \ldots, v_n \in \R^n$ of $\ell_2$ norm
at most $c/\sqrt{\log n}$, $c > 0$ small enough, $t \in \sum_{i=1}^n
[-v_i,v_i]$, and a (not necessarily symmetric) convex body $K \subseteq
\R^n$ of Gaussian measure at least $1/2$ (given by a membership oracle),
computes a coloring $\chi \in \set{-1,1}^n$ such that with high probability
$\sum_{i = 1}^n \chi_i v_i-t \in K$. 
\end{theorem}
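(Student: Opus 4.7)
The plan is to combine the recentering procedure from Theorem~\ref{thm:recenter} with the random walk sampler from Theorem~\ref{thm:rand-walk} and the subgaussian containment result from Lemma~\ref{lem:subg-in-asym-K}, using the lifting reduction~\eqref{eq:lifting} to stitch things together.

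First, I would invoke an efficient (approximate) version of the recentering procedure on input $(v_1,\dots,v_n, t, K)$. This yields a fractional coloring $x \in [-1,1]^n$ together with the point $p = \sum_{i=1}^n x_i v_i - t \in K$, the index set $A_x = \set{i : x_i \in (-1,1)}$, the subspace $W = \text{span}(v_i : i \in A_x)$, and (in relaxed form) the convex body $K' \eqdef (K - p) \cap W$ of Gaussian measure at least $1/2$ on $W$ and with barycenter of small norm. If $A_x = \emptyset$, then $x \in \set{-1,1}^n$ is already a full coloring and $\sum_i x_i v_i - t = p \in K$, so we are done. Otherwise, by the lifting identity $\sum_{i=1}^n L_x(\chi)_i v_i - t = \sum_{i \in A_x} \chi_i v_i - \sum_{i \in A_x} x_i v_i + p$, finding $\chi \in \set{-1,1}^{A_x}$ with $\sum_{i \in A_x} \chi_i v_i - \sum_{i \in A_x} x_i v_i \in K'$ and then outputting $L_x(\chi)$ completes the task.

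Now the vectors $(v_i : i \in A_x)$ are linearly independent, lie in the $|A_x|$-dimensional space $W$, and still have $\ell_2$ norm at most $c/\sqrt{\log n} \leq c/\sqrt{\log |A_x|}$. Apply the random walk sampler from Theorem~\ref{thm:rand-walk} inside $W$ (with coordinates in $W$ obtained from an orthonormal basis of $W$) to these vectors and the shift $\sum_{i \in A_x} x_i v_i \in \sum_{i \in A_x} [-v_i, v_i]$. Rescaling by $\sqrt{\log n}/c$ to meet the unit-norm hypothesis of Theorem~\ref{thm:rand-walk} and then scaling back by homogeneity of subgaussianity, we obtain a distribution over $\chi \in \set{-1,1}^{A_x}$ such that the random variable $Y = \sum_{i \in A_x} \chi_i v_i - \sum_{i \in A_x} x_i v_i$ is $O(c)$-subgaussian on $W$. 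Choosing the absolute constant $c$ small enough relative to the constant in Lemma~\ref{lem:subg-in-asym-K} (in its robust form that tolerates a small barycenter offset), the lemma guarantees $\Pr[Y \in K'] \geq 1/2$. We can then test membership using the oracle for $K$ (checking $L_x(\chi) - t \in K$ after lifting), and repeat the sampling step $O(\log(1/\delta))$ times independently to drive the failure probability below any desired $\delta$.

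The main obstacle, and the part requiring the most work, is producing an efficient variant of Theorem~\ref{thm:recenter} whose barycenter condition is only approximate, and simultaneously verifying that Lemma~\ref{lem:subg-in-asym-K} is robust to such perturbations so that the subgaussian sampler still lands in $K'$ with constant probability. This corresponds to the stochastic gradient ascent implementation of recentering advertised in the introduction and a quantitative strengthening of the symmetrization step $K' \mapsto K' \cap -K'$. Once these pieces are in hand, the remaining ingredients (the random walk of Theorem~\ref{thm:rand-walk}, the lifting identity, and oracle-based boosting) combine in a straightforward manner to yield the claimed probabilistic polynomial time algorithm.
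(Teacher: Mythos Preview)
Your proposal is correct and follows essentially the same approach as the paper. The paper packages the argument slightly differently by first abstracting the random walk sampler as a ``symmetric body coloring algorithm'' $\mathcal{A}$ (Definition~\ref{def:symcolor-alg}) and then plugging it into Algorithm~\ref{alg:asymbana}, which calls the algorithmic recentering procedure (Theorem~\ref{thm:recenter-alg}) with $\delta = \tfrac{1}{32\sqrt{2\pi}}$, explicitly symmetrizes and rescales $K'$ to $\alpha(K'\cap -K')$ with $\alpha = 4(1+\pi\sqrt{8\ln 2})$, and invokes $\mathcal{A}$ on that symmetric body; but this is just your use of Lemma~\ref{lem:subg-in-asym-K} unwound one level, via Lemma~\ref{lem:med-to-exp}. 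The two pieces you flag as obstacles are exactly what the paper supplies: the efficient approximate recentering is Theorem~\ref{thm:recenter-alg}, and the robustness of the symmetrization to a small barycenter offset is Lemma~\ref{lem:subg-red} part~\ref{lem:subg-red-asym} (equivalently Lemma~\ref{lem:med-to-exp} part~\ref{lem:med-to-exp-2}).
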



As far as we are aware, the above theorem gives the first algorithm to recover
Banaszczyk's result for asymmetric convex bodies under any non-trivial
restriction. In this context, we note that the algorithm of Eldan and
Singh~\cite{EldanSingh14} finds ``relaxed'' partial colorings, i.e.~ where the
fractional coordinates of the coloring are allowed to fall outside $[-1,1]$,
lying inside an $n$-dimensional convex body of Gaussian measure at least
$2^{-cn}$. However, it is unclear how one could use such partial colorings to
recover the above result, even with a larger approximation factor.

\subsubsection{The Recentering Procedure}
\label{sec:intro-recentering}


In this section, we describe the details of the recentering procedure. We leave
a thorough description of its algorithmic implementation however to
section~\ref{sec:recenter}, and only provide its abstract instantiation here.   

Before we begin, we give a more geometric view of the vector balancing problem
and the recentering procedure, which help clarify the exposition. Let
$v_1,\dots,v_n \in \R^n$ be linearly independent vectors and $t \in \sum_{i=1}^n
[-v_i,v_i]$. Given the target body $K \subseteq \R^n$ of Gaussian measure at
least $1/2$, we can restate the vector balancing problem geometrically as that
of finding a vertex of the parallelepiped $P = \sum_{i=1}^n [-v_i,v_i] - t$
lying inside $K$. Here, the choice of $t$ ensures that $0 \in P$. Note that this
condition is necessary, since otherwise there exists a halfspace separating $P$
from $0$ having Gaussian measure at least $1/2$. \cut{Moreover, note that any convex
body of Gaussian measure at least $1/2$ must contain the origin by the same
reasoning. }

Recall now that in the linear setting, and using this geometric language, Banaszczyk's theorem implies
that if $P$ contains the origin, and $\max_{i \in [n]} \|v_i\|_2 \leq 1/10$
(which we do not need to assume here), then any convex body of Gaussian measure at
least $1/2$ contains a vertex of $P$. Thus, for our given target body $K$, we
should make our situation better replacing $P$ and $K$ by $P-q$ and $K-q$, if
$q \in P$ is a shift such that $K-q$ has higher Gaussian measure than $K$. In
particular, given the symmetry of Gaussian measure, one would intuitively expect
that if the Gaussian mass of $K$ is not symmetrically distributed around $0$, there should be a shift of $K$ which increases its
Gaussian measure. 

In the current language, fixing a color $\chi_i \in \set{-1,1}$ for vector
$v_i$, corresponds to restricting ourselves to finding a vertex in the facet $F
= \chi_i v_i + \sum_{j \neq i} [-v_j,v_j] - t$ of $P$ lying inside $K$.  Again
intuitively, restricting to a facet of $P$ should improve our situation if the
Gaussian measure of the corresponding slice of $K$ in the lower dimension is
larger than that of $K$. To make this formal, note that when inducting on a
facet $F$ of $P$ (which is an $n-1$ dimensional parallelepiped), we must choose
a center $q \in F$ to serve as the new origin in the lower dimensional space.
Precisely, this can be expressed as inducting on the parallelepiped $F-q$ and
shifted slice $(K-q) \cap {\rm span}(F-q)$ of $K$, using the $n-1$ dimensional
Gaussian measure on ${\rm span}(F-q)$. 

With the above viewpoint, one can restate the goal of the recentering procedure
as that of finding a point $q \in P \cap K$, such that smallest facet $F$ of $P$
containing $q$, satisfies that $(K-q) \cap {\rm span}(F-q)$ has its barycenter
at the origin and Gaussian measure no smaller than that of $K$. Recall that as
long as $(K-q) \cap {\rm span}(F-q)$ has Gaussian measure at least $1/2$, we are
guaranteed that $0 \in K-q \Rightarrow q \in K$. With this geometry in mind, we
implement the recentering procedure as follows:
 
Compute $q \in P$ so that the Gaussian mass of $K-q$ is maximized. If $q$ is on
the boundary of $P$, letting $F$ denote a facet of $P$ containing $q$, induct on
$F-q$ and the slice $(K-q) \cap {\rm span}(F-q)$ as above. If $q$ is in the
interior of $P$, replace $P$ and $K$ by $P-q$ and $K-q$, and terminate. 

We now explain why the above achieves the desired result. Firstly, if the
maximizer $q$ is in a facet $F$ of $P$, then a standard convex geometric
argument reveals that the Gaussian measure of $(K-q) \cap {\rm span}(F-q)$ is
no smaller than that of $K-q$, and in particular, no smaller than that of $K$.
Thus, in this case, the recentering procedure fixes a color for ``free''. In the
second case, if $q$ is in the interior of $P$, then a variational argument gives
that the barycenter of $K-q$ under the induced Gaussian measure must be at the
origin, namely, $\int_{K-q} xe^{-x^2/2} dx = 0$. 

To conclude this section, we explain how to extend the recentering procedure to
directly produce a deterministic coloring satisfying
Theorem~\ref{thm:weak-bana}. For this purpose, we shall assume that
$v_1,\dots,v_n$ have length at most $c/\sqrt{\log n}$, for a small enough
constant $c > 0$. To begin, we run the recentering procedure as above, which
returns $P$ and $K$, with $K$ having its barycenter at the origin. We now
replace $P,K$ by a joint scaling $\alpha P, \alpha K$, for $\alpha > 0$ a large
enough constant, so that $\alpha K$ has Gaussian mass at least $3/4$. At this
point, we run the original recentering procedure again with the following
modification: every time we get to the situation where $K$ has its barycenter at
the origin, induct on the closest facet of $P$ closest to the origin. More precisely,
in this situation, compute a point $p$ on the boundary of $P$ closest to the
origin, and, letting $F$ denote the facet containing $p$, induct on $F-p$ and
$(K-p) \cap {\rm span}(F-p)$. At the end, return the final found vertex. 

Notice that, as claimed, the coloring (i.e.~vertex) returned by the algorithm
is indeed deterministic. The reason the above algorithm works is the following.
While we cannot guarantee, as in the original recentering procedure, that the
Gaussian mass of $(K-p) \cap {\rm span}(F-p)$ does not decrease, we can instead show
that it decreases only \emph{very slowly}. In particular, we use the
bound of $O(1/\sqrt{\log n})$ on the length of the vectors $v_1,\dots,v_n$ to
show that every time we induct, the Gaussian mass drops by at most a $1-c/n$
factor. More generally, if the vectors had length at most $d > 0$,
for $d$ small enough, the drop would be of the order $1-ce^{-1/(cd)^2}$, for some
constant $c > 0$. Since we ``massage'' $K$ to have Gaussian mass at least $3/4$
before applying the modified recentering algorithm, this indeed allows to induct
$n$ times while keeping the Gaussian mass above $1/2$, which guarantees that the
final vertex is in $K$. To derive the bound on the rate of decrease of Gaussian
mass, we prove a new inequality on the Gaussian mass of sections of a convex
body near the barycenter (see Theorem~\ref{thm:estneg}), which may be of
independent interest. 

As a final remark, we note that unlike the subgaussian sampler, the recentering
procedure is not scale invariant. Namely, if we jointly scale $P$ and $K$ by
some factor $\alpha$, the output of the recentering procedure will not be an
$\alpha$-scaling of the output on the original $K$ and $P$, as Gaussian measure
is not homogeneous under scalings. Thus, one must take care to appropriately
normalize $P$ and $K$ before applying the recentering procedure to achieve the
desired results.

We now give the high level overview of our recentering step implementation. The
first crucial observation in this context, is that the task of finding $t \in P$
maximizing the Gaussian measure of $K-t$ is in fact a \emph{convex program}.
More precisely, the objective function (Gaussian measure of $K-t$) is a
logconcave function of $t$ and the feasible region $P$ is convex. Hence, one can
hope to apply standard convex optimization techniques to find the desired
maximizer. 

It turns out however, that one can significantly simplify the required task by
noting that the recentering strategy does not in fact necessarily need an exact
maximizer, or even a maximizer in $P$. To see this, note that if $p$ is a shift
such that $K-p$ has larger Gaussian measure than $K$, then by logconcavity the
shifts $K-\alpha p$, $0 < \alpha \leq 1$, also have larger Gaussian measure.
Thus, if a we find a shift $p \notin P$ with larger Gaussian measure, letting
$\alpha p$ be the intersection point with the boundary $\partial P$, we can
induct on the facet of $P-\alpha p$ containing $0$ and the corresponding slice
of $K-\alpha p$ just as before. Given this, we can essentially ``ignore'' the
constraint $p \in P$ and we treat the optimization problem as unconstrained.

This last observation will allow us to use the following simple gradient ascent
strategy. Precisely, we simply take steps in the direction of the gradient until
either we pass through a facet of $P$ or the gradient becomes ``too small''. As
alluded to previously, the gradient will exactly equal a fixed scaling of the
barycenter of $K-p$, $p$ the current shift, under the induced Gaussian measure.
Thus, once the gradient is small, the barycenter will be very close to the
origin, which will be good enough for our purposes. The last nontrivial
technical detail is how to efficiently estimate the barycenter, where we note
that the barycenter is the expectation of a random point inside $K-p$. For this
purpose, we simply take an average of random samples from $K-p$, where
we generate the samples using rejection sampling, using the fact that
the Gaussian measure of $K$ is large.


\paragraph*{Conclusion and Open Problems}
In conclusion, we have shown a tight connection between the existence of
subgaussian coloring distributions and Banaszczyk's vector balancing theorem.
Furthermore, we make use of this connection to constructively recover a weaker
version of this theorem. The main open problem we leave is thus to fully recover
Banaszczyk's result. As explained above, this reduces to finding a distribution
on colorings such that the output random signed combination is
$O(1)$-subgaussian, when the input vectors have $\ell_2$ norm at most $1$. We
believe this approach is both attractive and feasible, especially given the
recent work~\cite{BDG16}, which builds a distribution on colorings for which
each coordinate of the output random signed combination is $O(1)$-subgaussian. 

\paragraph*{Organization}
In section~\ref{sec:prelims}, we provide necessary preliminary background
material. In section~\ref{sec:subgaussian}, we give the proof of the equivalence
between Banaszczyk's vector balancing theorem and the existence of subgaussian
coloring distributions. In section~\ref{sec:rand-walk}, we give our random walk
based coloring algorithm. In section~\ref{sec:recenter}, we describe the
implementation of the recentering procedure. In
section~\ref{sec:alg-asym-to-sym}, we give the algorithmic reduction from
asymmetric bodies to symmetric bodies, giving the proof of
Theorem~\ref{thm:weak-bana}. In section~\ref{sec:asymmetric}, we show how extend the
recentering procedure to a full coloring algorithm. In
section~\ref{sec:slice-estimate-bar}, we prove the main technical estimate on
the Gaussian measure of slices of a convex body near the barycenter, which is
needed for the algorithm in~\ref{sec:asymmetric}. Lastly, in
section~\ref{sec:vector-komlos}, we give our constructive proof of the
feasibility of the vector K{\'o}mlos program. 

\paragraph*{Acknowledgments}
We would like to thank the American Institute for Mathematics for hosting a
recent workshop on discrepancy theory, where some of this work was done.

\section{Preliminaries}
\label{sec:prelims}

\paragraph*{Basic Concepts}
We write $\log x$ and $\log_2 x$, $x > 0$, for the logarithm base $e$ and base
$2$ respectively.  

For a vector $x \in \R^n$, we define $\|x\|_2 = \sqrt{\sum_{i=1}^n x_i^2}$ to be
its Eucliean norm. Let $B_2^n = \set{x \in \R^n: \|x\|_2 \leq 1}$ denote the
unit Euclidean ball and $S^{n-1} = \set{x \in \R^n: \|x\|_2 = 1} =
\partial \B_2^n$ denote the unit sphere in $\R^n$. For $x,y \in \R^n$, we denote
their inner product $\pr{x}{y} = \sum_{i=1}^n x_i y_i$. 

For subsets $A,B \subseteq \R^n$, we denote their Minkowski sum $A+B = \set{a+b:
a \in A, b \in B}$. Define ${\rm span}(A)$ to be the smallest linear subspace
containing $A$. We denote the boundary of $A$ by $\partial A$. We use the phrase
$\partial A$ relative to ${\rm span}(A)$ to specify that we are computing the
boundary with respect to the subspace topology on ${\rm span}(A)$.

A set $K \subseteq \R^n$ is convex if for all $x,y \in K$,$\lambda \in [0,1]$,
$\lambda x + (1-\lambda)y \in K$. $K$ is symmetric if $K=-K$. We shall say that
$K$ is a convex body if additionally it is closed and has non-empty interior. We
note that the usual terminology, a convex body is also compact (i.e.~bounded),
but we will state this explicitly when it is necessary. If convex body contains
the origin in its interior, we say that $K$ is $0$-centered.

We will need the concept of a gauge function for $0$-centered convex bodies. For
bounded symmetric convex bodies, this functional will define a standard norm. 

\begin{proposition} 
\label{prop:gauge}
Let $K \subseteq \R^n$ be a $0$-centered convex body. Defining the \emph{gauge
function} of the body $K$ by $\|x\|_K =  \inf \set{s \geq 0: x \in sK}$,
the following holds:
\begin{enumerate}
\item {\bf Finiteness:} $\|x\|_K < \infty$, for $x \in \R^n$.
\item {\bf Positive homogeneity:} $\|\lambda x\|_K = \lambda \|x\|_K$, for $x \in \R^n, \lambda \geq 0$.
\item {\bf Triangle inequality:} $\|x+y\|_K \leq \|x\|_K + \|y\|_K$, for $x,y \in \R^n$.
\end{enumerate}
Furthermore, if $K$ is additionally bounded and symmetric, then $\|\cdot\|_K$ is a
norm which we call the norm induced by $K$. In particular, $\|\cdot\|_K$
additionally satisfies that $\|x\|_K = 0$ iff $x = 0$ and $\|x\|_K = \|-x\|_K$~
$\forall x \in \R^n$. 
\end{proposition}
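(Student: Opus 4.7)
The plan is to verify each property directly from the definition of $\|\cdot\|_K$, using in succession the fact that $K$ contains a ball around the origin, the positive homogeneity of the dilation $sK$, the convexity of $K$, and finally the symmetry/boundedness of $K$ for the norm statement.

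First, for finiteness, I would use that since $K$ is $0$-centered, there exists $r > 0$ with $rB_2^n \subseteq K$. Then for any $x \in \R^n$, picking $s \geq \|x\|_2 / r$ gives $x/s \in rB_2^n \subseteq K$, hence $x \in sK$, which shows $\|x\|_K \leq \|x\|_2/r < \infty$. Next, for positive homogeneity, I would observe that for $\lambda > 0$ the substitution $s = \lambda s'$ gives
\[
\|\lambda x\|_K = \inf\{s \geq 0 : \lambda x \in sK\} = \inf\{\lambda s' \geq 0 : x \in s'K\} = \lambda \|x\|_K,
\]
and the case $\lambda = 0$ follows since $0 \in K$ implies $\|0\|_K = 0$.

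For the triangle inequality, I would fix $x,y \in \R^n$ and an arbitrary $\eps > 0$, and pick $s \leq \|x\|_K + \eps$ and $t \leq \|y\|_K + \eps$ with $x \in sK$ and $y \in tK$. The case where $s$ or $t$ equals $0$ is handled separately: if $s = 0$ then the finiteness argument lets me shrink $s$ down to any positive value needed, so I may assume $s,t > 0$ by perturbing. Then
\[
\frac{x+y}{s+t} = \frac{s}{s+t} \cdot \frac{x}{s} + \frac{t}{s+t} \cdot \frac{y}{t}
\]
is a convex combination of two points of $K$, hence in $K$ by convexity, so $x+y \in (s+t)K$ and $\|x+y\|_K \leq s+t \leq \|x\|_K + \|y\|_K + 2\eps$. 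Letting $\eps \to 0$ concludes.

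Finally, for the norm properties when $K$ is bounded and symmetric, I would bound $K \subseteq RB_2^n$ for some $R>0$, which gives $x \in sK \Rightarrow \|x\|_2 \leq sR$, so $\|x\|_2 \leq R\|x\|_K$; in particular $\|x\|_K = 0$ forces $x = 0$. Symmetry $K = -K$ gives $-x \in sK \iff x \in -sK = sK$, so $\|-x\|_K = \|x\|_K$. The main (very mild) obstacle is just the bookkeeping around the infimum not being attained and the degenerate case $s=0$ or $t=0$ in the triangle inequality; everything else is an immediate consequence of the definition together with convexity and the interior-point hypothesis.
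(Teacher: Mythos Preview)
The paper states this proposition without proof, as it is a standard fact from convex analysis. Your argument is correct and is exactly the textbook verification of these properties.

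One small phrasing issue: in the triangle-inequality step, when handling the degenerate case $s=0$, you write that ``the finiteness argument lets me shrink $s$ down to any positive value needed.'' What you actually need (and what is true) is that if $\|x\|_K=0$ then $x\in sK$ for every $s>0$; this follows because $0\in K$ and convexity give $s'K\subseteq sK$ whenever $0\le s'\le s$, so from $x\in s'K$ with $s'<s$ you get $x\in sK$. Thus you may \emph{increase} $s$ from $0$ to any small positive value while keeping $x\in sK$, which is what lets you assume $s,t>0$. The rest of the argument is fine as written.
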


\paragraph*{Gaussian and subgaussian random variables}
We define $n$-dimensional standard Gaussian $X \in \R^n$ to be the random
variable with density $\frac{1}{\sqrt{2\pi}^n} e^{-\|x\|_2^2/2}$ for $x \in \R^n$. 

\begin{definition}[Subgaussian Random Variable]
\label{def:subg}
A random variable $X \in \R$ is $\sigma$-subgaussian, for $\sigma > 0$, if
$\forall t \geq 0$,
\[
\Pr[|X| \geq t] \leq 2 e^{-\frac{1}{2}(t/\sigma)^2} \text{ .}
\]
We note that the canonical example of a $1$-subgaussian distribution is the
$1$-dimensional standard Gaussian itself.

For a vector valued random variable $X \in \R^n$, we say that $X$ is
$\sigma$-subgaussian if all its one dimensional marginals are. Precisely, $X$ is
$\sigma$-subgaussian if $\forall \theta \in S^{n-1}$, the random variable
$\pr{X}{\theta}$ is $\sigma$-subgaussian.
\end{definition}

We remark that from definition~\ref{def:subg}, it follows directly that if $X$
is $\sigma$-subgaussian then $\alpha X$ is $|\alpha|\sigma$-subgaussian for any
$\alpha \in \R$.

The following standard lemma allows us to deduce subgaussianity from upper
bounds on the Laplace transform of a random variable. We include a proof in the
appendix for completeness.

\begin{lemma}
\label{lem:mom-to-subg}
Let $\cosh(x) = \frac{1}{2}(e^{x} + e^{-x})$ for
$x \in \R^n$.  Let $X \in \R^n$ be a random vector. Assume that
\[
\E[\cosh(\pr{w}{X})] \leq \beta e^{\|\sigma w\|_2^2/2}, \quad \forall w
\in \R^n \text{ ,}
\]
for some $\sigma > 0$ and $\beta \geq 1$. Then $X$ is $\sigma \sqrt{\log_2 \beta
+ 1}$-subgaussian. Furthermore, for $X \in \R^n$ standard Gaussian,
$\E[\cosh(\pr{w}{X})] = e^{\|w\|_2^2/2}$ for $w \in \R^n$.
\end{lemma}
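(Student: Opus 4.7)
The plan is to prove the tail bound by a standard Chernoff--Markov argument applied to the hypothesis, then combine the resulting exponential bound with the trivial probability bound of $1$ to absorb the leading factor $\beta$. The stated identity for standard Gaussians will follow from the explicit form of the one-dimensional Gaussian moment generating function.

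First I would fix an arbitrary direction $\theta \in S^{n-1}$ and $t \geq 0$, and apply Markov's inequality to $e^{\lambda |\pr{X}{\theta}|}$ for a parameter $\lambda \geq 0$ to be chosen later. Using the elementary inequality $e^{|y|} \leq e^y + e^{-y} = 2\cosh(y)$ for $y \in \R$ together with the hypothesis applied to $w = \lambda \theta$, this yields $\Pr[|\pr{X}{\theta}| \geq t] \leq 2 e^{-\lambda t} \E[\cosh(\pr{\lambda \theta}{X})] \leq 2\beta \, e^{\sigma^2 \lambda^2 / 2 - \lambda t}$. Optimizing in $\lambda$ by setting $\lambda = t/\sigma^2$ gives the preliminary tail bound $\Pr[|\pr{X}{\theta}| \geq t] \leq 2\beta \, e^{-t^2 / (2\sigma^2)}$.

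The main obstacle is converting this bound into one with leading constant $2$ rather than $2\beta$, as required by Definition~\ref{def:subg}. I would handle this by combining the Chernoff bound above with the trivial bound $\Pr[\cdot] \leq 1$ and splitting into two regimes at the crossover point $t_0^2 = 2\sigma^2 \ln(2\beta)$. For $t \leq t_0$ one uses the trivial bound, and checks that $1 \leq 2 e^{-t^2 / (2\sigma'^2)}$ holds whenever $t^2 \leq 2\sigma'^2 \ln 2$; for $t > t_0$ one uses the Chernoff bound and verifies $2\beta \, e^{-t^2/(2\sigma^2)} \leq 2 e^{-t^2/(2\sigma'^2)}$. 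The specific value $\sigma'^2 = \sigma^2 (\log_2 \beta + 1)$ is calibrated so that the two required inequalities meet exactly at $t = t_0$: indeed, the identity $\ln(2\beta) = (\log_2 \beta + 1)\ln 2$ (which is just the definition of $\log_2$) makes both reduce to the same crossover condition. This is a purely calculational step, with no further ideas needed once the splitting point is identified.

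For the Gaussian identity, if $X \in \R^n$ is standard Gaussian then $\pr{w}{X}$ is a centered one-dimensional Gaussian of variance $\|w\|_2^2$, so $\E[e^{\pm \pr{w}{X}}] = e^{\|w\|_2^2 / 2}$; averaging these two equal quantities yields $\E[\cosh(\pr{w}{X})] = e^{\|w\|_2^2/2}$ at once, which completes the lemma and also shows that the standard Gaussian achieves the hypothesis with the natural parameters $\sigma = 1$, $\beta = 1$.
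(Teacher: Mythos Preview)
Your proposal is correct and follows essentially the same approach as the paper's proof: a Chernoff--Markov bound yielding $2\beta e^{-t^2/(2\sigma^2)}$, followed by splitting at the crossover $t_0^2 = 2\sigma^2\ln(2\beta)$ to absorb the factor $\beta$ into the variance parameter $\sigma'^2 = \sigma^2(\log_2\beta+1)$. The only cosmetic difference is that the paper applies Markov directly to $\cosh(\lambda\pr{X}{\theta})$ rather than to $e^{\lambda|\pr{X}{\theta}|}$, but the two routes coincide after one line.
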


\paragraph*{Gaussian measure}
We define $\gamma_n$ to be the $n$-dimensional Gaussian measure on $\R^n$.
Precisely, for any measurable set $A \subseteq \R^n$, 
\begin{equation}
\gamma_n(A) = \frac{1}{\sqrt{2\pi}^n} \int_A e^{-\|x\|^2_2/2} dx,
\end{equation}
noting that $\gamma_n(\R^n) = 1$. We will also need lower dimensional Gaussian
measures restricted to linear subspaces of $\R^n$. Thus, if $A \subseteq W$, $W
\subseteq \R^n$ a linear subspace of dimension $k$, then $\gamma_k(A)$ should be
understood as the Gaussian measure of $A$ within $W$, where $W$ is treated as
the whole space. When convenient, we will also use the notation $\gamma_W(A)$ to
denote $\gamma_{\dim(W)}(A \cap W)$. When treating one dimensional Gaussian
measure, we will often denote $\gamma_1((a,b))$, where $(a,b)$ is an interval,
simply by $\gamma_1(a,b)$ for notational convenience. By convention, we define
$\gamma_0(A) = 1$ if $0 \in A$ and $0$ otherwise. 

An important concept used throughout the paper is that of the barycenter under
the induced Gaussian measure.
\begin{definition}[Barycenter]
\label{def:barycenter}
For a convex body $K \subseteq \R^n$, we define its barycenter under the induced
Gaussian measure, by
\[
b(K) = \frac{1}{\sqrt{2\pi}^n} \int_K x e^{-\|x\|_2^2/2} \frac{dx}{\gamma_n(K)} \text{ .}
\]
Note that $b(K) = \E[X]$, if $X$ is the random variable supported on $K$ with
probability density $\frac{1}{\sqrt{2\pi}^n} e^{-\|x\|_2^2/2}/\gamma_n(K)$.
Extending the definition to slices of $K$, for any linear subspace $W \subseteq
\R^n$, we refer to the barycenter of $K \cap W$ to denote the one relative to
the $\dim(W)$-dimensional Gaussian measure on $W$ (i.e.~treating $W$ as the
whole space).
\end{definition}

Throughout the paper, we will need many inequalities regarding the Gaussian
measure. The first important inequality is the Pr{\'e}kopa-Leindler inequality,
which states that for $\lambda \in [0,1]$ and $A,B,\lambda A + (1-\lambda)B
\subseteq \R^n$ measurable subsets, that
\begin{equation}
\label{thm:prekopa-leinder}
\gamma_n(\lambda A + (1-\lambda) B) \geq \gamma_n(A)^\lambda
\gamma_n(B)^{1-\lambda} \text{ .}
\end{equation}
We note that the Pr{\'e}kopa-Leindler inequality applies more generally to any
logconcave measure on $\R^n$, i.e.~a measure defined by a density whose
logarithm is concave. Importantly, this inequality directly implies that if $A
\subseteq \R^n$ is convex, then $\log \gamma_n(A + t)$, for $t \in \R^n$, is a
concave function of $t$.  

We will need the following powerful inequality of Ehrhard, which provides a
crucial strengthening of Pr{\'e}kopa-Leindler for Gaussian measure.   

\begin{theorem}[Ehrhard's inequality~\cite{Ehrhard83,Borell03}] 
For Borel sets $A,B\subseteq \mathbb{R}^n $ and $0\le \lambda \le 1$, 
\[\Phi^{-1}(\gamma_n(\lambda A+(1-\lambda)B)) \ge \lambda \Phi^{-1}(\gamma_n(A))+(1-\lambda)\Phi^{-1}(\gamma_n(B))\]
where $\Phi(a)=\gamma_1((-\infty, a])$ for all $a\in\mathbb{R}$.\\
\end{theorem}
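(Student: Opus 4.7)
The plan is to follow the classical two-step strategy: first verify the inequality with equality on half-spaces, then reduce the general case to half-spaces by symmetrization. For parallel half-spaces $A = \{x : \pr{x}{e_1} \le a\}$ and $B = \{x : \pr{x}{e_1} \le b\}$, the Minkowski combination $\lambda A + (1-\lambda)B$ is itself the half-space $\{x : \pr{x}{e_1} \le \lambda a + (1-\lambda)b\}$, and its one-dimensional marginal Gaussian measure is $\Phi(\lambda a + (1-\lambda)b)$. Applying $\Phi^{-1}$ therefore gives exactly $\lambda a + (1-\lambda)b = \lambda \Phi^{-1}(\gamma_n(A)) + (1-\lambda)\Phi^{-1}(\gamma_n(B))$, so the inequality holds with equality on this class.

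The next step is to show that for general convex $A,B$ one can replace them by parallel half-spaces $A^*,B^*$ of the same Gaussian measure without increasing $\gamma_n(\lambda A + (1-\lambda)B)$. I would use Ehrhard's one-dimensional Gaussian symmetrization: fix a direction $\theta \in S^{n-1}$, and on each affine line $\ell \parallel \theta$ replace the slice $A \cap \ell$ with an initial segment of $\ell$ of the same one-dimensional Gaussian mass (oriented so the retained mass sits on one fixed side of $\ell$). This preserves $\gamma_n(A)$, and the key claim is that applying the same symmetrization simultaneously to $A$ and $B$ can only decrease $\gamma_n(\lambda A + (1-\lambda)B)$; this rests on a careful one-dimensional Brunn–Minkowski-type inequality for Gaussian measure along each line, combined with Fubini in the $\theta^{\perp}$ directions. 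Iterating over a sequence of directions (or appealing to the multidimensional version of Ehrhard symmetrization) drives $A,B$ to parallel half-spaces, reducing to the base case above.

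For the extension from convex to general Borel sets, the symmetrization step becomes delicate because non-convex slices need not interact well with Minkowski addition, so I would instead follow Borell's semigroup route: let $P_t$ be the Ornstein–Uhlenbeck semigroup and work with the functional form of the inequality applied to smooth approximants $f_t = P_t \mathbf{1}_A$, $g_t = P_t \mathbf{1}_B$. Define $F(t) = \Phi^{-1}(\E[h_t])$ for $h_t$ the appropriate Minkowski-type convolution, and use a parabolic maximum principle together with the pointwise inequality $\Phi^{-1}(h_t(\lambda x + (1-\lambda)y)) \ge \lambda \Phi^{-1}(f_t(x)) + (1-\lambda)\Phi^{-1}(g_t(y))$ to propagate Ehrhard's inequality forward in $t$. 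Approximation as $t \downarrow 0$ recovers the inequality for indicators of Borel sets.

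The main obstacle is the one-dimensional Gaussian symmetrization lemma at the heart of step two. Unlike Steiner symmetrization for Lebesgue measure, where symmetrization around the center of the line trivially respects Minkowski addition, Gaussian symmetrization must place the retained slice on a specific side of the line because Gaussian density is not translation invariant, and showing this choice is compatible with the Minkowski sum requires the full strength of a one-dimensional Ehrhard inequality (essentially the claim one is trying to prove in $n=1$). Bootstrapping cleanly from dimension one, and then handling non-convex Borel sets via the semigroup regularization without losing the sharp form of the inequality in the limit, is the technically demanding part.
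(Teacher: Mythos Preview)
The paper does not prove this theorem. Ehrhard's inequality is stated in the preliminaries with citations to \cite{Ehrhard83,Borell03} and used as a black box throughout (e.g.\ in Lemma~\ref{lem:origincut}, in the recentering analysis, and in the reduction to two dimensions in Section~\ref{sec:slice-estimate-bar}); no proof is given. So there is nothing in the paper to compare your attempt against.

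For what it is worth, your outline tracks the actual history fairly well: Ehrhard's original argument for convex $A,B$ does go through a Gaussian symmetrization, and you correctly identify that the one-dimensional step is circular unless one first establishes the $n=1$ case by a separate argument; the extension to arbitrary Borel sets is indeed due to Borell via a semigroup/parabolic method rather than symmetrization. Your sketch of the Borell step is vague (the ``Minkowski-type convolution'' $h_t$ is not defined, and the propagation argument is more delicate than a one-line maximum principle), but as a high-level plan it points in the right direction. Since the paper treats the result as a citation, a full reconstruction is not expected here.
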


The power of the Ehrhard inequality is that it allows us to reduce many
non-trivial inequalities about Gaussian measure to two dimensional ones.

One can use it to show the following standard inequality on the Gaussian
measures of slices of a convex body. We include a proof for completeness.

\begin{lemma}
\label{lem:origincut}
Given a convex body $K\subseteq\mathbb{R}^n$ with $\gamma_n(K) \geq 1/2$, and a
linear subspace $H \subseteq \mathbb{R}^n$ of dimension $k$. Then,
$\gamma_k(K\cap H)\geq \gamma_n(K)$. 
\end{lemma}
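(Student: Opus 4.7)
The plan is to reduce to the case where $H$ is a hyperplane by iterating on codimension, and then, for a hyperplane, combine Ehrhard's inequality with a one-dimensional comparison that crucially uses $\gamma_n(K) \geq 1/2$.

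First, by rotational invariance of $\gamma_n$, I may assume $H = \{x \in \R^n : x_n = 0\}$. For each $t \in \R$ the slice $K_t := \{y \in \R^{n-1} : (y,t) \in K\}$ is convex, and the inclusion $\lambda K_{t_1} + (1-\lambda) K_{t_2} \subseteq K_{\lambda t_1 + (1-\lambda) t_2}$ combined with Ehrhard's inequality in $\R^{n-1}$ shows that
\[ f(t) \ := \ \Phi^{-1}(\gamma_{n-1}(K_t)) \]
is concave on its domain. By Fubini, $\gamma_n(K) = \int_\R \Phi(f(t))\, \phi(t)\, dt$, where $\phi$ is the standard Gaussian density on $\R$.

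Now let $s \in \partial f(0)$ be a supergradient of $f$ at $0$. The concavity bound $f(t) \leq f(0) + s t$ and monotonicity of $\Phi$ give
\[ \gamma_n(K) \ \leq \ \int_\R \Phi(f(0) + s t)\, \phi(t)\, dt \ = \ \Phi\!\Bigl(\tfrac{f(0)}{\sqrt{1 + s^2}}\Bigr), \]
where the final equality follows by writing the integral as $\Pr[Z - sT \leq f(0)]$ for independent standard Gaussians $Z, T$ and noting that $Z - sT$ has variance $1 + s^2$. The hypothesis $\gamma_n(K) \geq 1/2$ then forces $f(0)/\sqrt{1+s^2} \geq 0$, hence $f(0) \geq 0$; since $\Phi$ is increasing and $1 + s^2 \geq 1$, I conclude
\[ \gamma_{n-1}(K \cap H) \ = \ \Phi(f(0)) \ \geq \ \Phi\!\Bigl(\tfrac{f(0)}{\sqrt{1+s^2}}\Bigr) \ \geq \ \gamma_n(K). \]
For general codimension, iterate through a descending chain $H = H_k \subsetneq H_{k+1} \subsetneq \cdots \subsetneq H_n = \R^n$ of subspaces, noting that each intermediate section still has Gaussian measure at least $\gamma_n(K) \geq 1/2$, so the hyperplane step applies at every level.

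The main subtlety I anticipate is the boundary case where $0$ lies in $\partial \pi_n(K)$, in which case $f$ may fail to admit a finite supergradient at $0$. In that event $K$ is contained in a closed halfspace bounded by $H$, and $\gamma_n(K) \geq 1/2$ forces $K$ to coincide (up to measure zero) with that halfspace, making $K \cap H = H$ and the inequality trivial. Beyond this edge case, the argument is a direct application of Ehrhard's inequality plus the explicit Gaussian convolution identity $\int \Phi(a + st)\phi(t)\,dt = \Phi(a/\sqrt{1+s^2})$.
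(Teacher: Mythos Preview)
Your proof is correct and follows essentially the same approach as the paper: both reduce to codimension one, use Ehrhard's inequality to obtain the concave function $f(t) = \Phi^{-1}(\gamma_{n-1}(K_t))$, and then compare with a tangent halfspace. The only difference is packaging: the paper builds the two-dimensional hypograph $W = \{(x,y): y \le f(x)\}$ and argues geometrically via hyperplane separation that $(0,\Phi^{-1}(\gamma_n(K))) \in W$, whereas you carry out the equivalent computation analytically via the supergradient bound and the convolution identity $\int \Phi(a+st)\phi(t)\,dt = \Phi(a/\sqrt{1+s^2})$.
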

\begin{proof}
Clearly it suffices to prove the lemma for $k=n-1$. Since Gaussian distribution
is rotation invariant, without loss of generality, $H=\{x\in \mathbb{R}^n :
x_1=0\}$. Let $K_t=\{x\in K: x_1=t\}$ denote a slice of $K$ at $x_1=t$. Then, 
\[
\gamma_n(K) = \int_{-\infty}^\infty
\frac{e^{-t^2/2}}{\sqrt{2\pi}} \gamma_{n-1}(K_t-te_1)dt
\]
where $\gamma_{n-1}(K_t-te_1)=0$ outside support of $K$.

\noindent
Define $W\subseteq \mathbb{R}^2$ as $W=\{(x,y) : y\le f(x)\}$ where $f:\mathbb{R}\rightarrow \mathbb{R}$ is defined as 
\[f(t)=\Phi^{-1}(\gamma_{n-1}(K_t-t e_1)) \]
and $f(t)=-\infty$ outside the support of $K$. It follows that $\gamma_2(W)=\gamma_n(K)\ge1/2$. By Ehrhard's inequality, $f$ is concave on its support. Hence, $W$ is a closed convex body. 

\noindent
Let $g=\Phi^{-1}(\gamma_n(K))\ge0$.
$\gamma_{n-1}(K\cap H)\ge \gamma_n(K)$ is then equivalent to showing $(0,g)\in
W$. If $(0,g)\not\in W$, then there exists a halfspace $L$ such that $W
\subseteq L$ and $(0,g)\not\in L$. Let $d$ be the distance of origin $(0,0)$
from $\partial L$, the boundary of $L$. Since $(0,g)\not\in L$ and $\gamma_2(L)\ge 1/2$, $d<g$. But this implies 
\[\gamma_2(L)=\gamma_1(-\infty, d)<\gamma_1(-\infty, g)=\gamma_n(K)=\gamma_2(W) \]
contradicting $W\subseteq L$.
\end{proof}

\paragraph*{Vector Balancing: Reduction to the Linearly Independent Case}

In this section, we detail the standard vector balancing reduction to the case
where the vectors are linearly independent. We will also cover some useful
related concepts and definitions, which will be used throughout the paper.

\begin{definition}[Lifting Function]
\label{def:lifting}
For a fractional coloring $x \in [-1,1]^n$, denote the set of fractional
coordinates by $A_x = \set{i \in [n]: x_i \in (-1,1)}$. From here, for $z
\in [-1,1]^{A_x}$, we define the lifting function $L_x: [-1,1]^{A_x}
\rightarrow [-1,1]^n$ by
\[
L_x(z)_i = \begin{cases} z_i &: i \in A_x \\ 
                          x_i &: i \in [n] \setminus A_x \end{cases},~\forall i \in [n] \text{ .}
\]
Importantly, for $\chi \in \set{-1,1}^{A_x}$ we have that $L_x(\chi) \in
\set{-1,1}^n$.  Thus, $L_x$ sends full colorings in $\set{-1,1}^{A_x}$ to full
colorings in $\set{-1,1}^n$.
\end{definition}

The lifting function above is useful in that it allows us, given a fractional
coloring $x \in [-1,1]$ with some of its coordinates set to $\set{-1,1}$, to
reduce any linear vector balancing problem to one on a smaller number of
coordinates. We detail this in the following lemma.

\begin{lemma} 
\label{lem:partial-red}
Let $v_1,\dots,v_m \in \R^m$, $t \in \sum_{i=1}^n [-v_i,v_i]$,
and $K \subseteq \R^n$. Then given a fractional coloring $x \in [-1,1]^n$ 
and $p = \sum_{i=1}^n x_i v_i - t$, the following holds:
\begin{enumerate}
\item\label{lem:partial-red-id} For $z \in [-1,1]^{A_x}$, we have that 
\[
\sum_{i \in A_x} z_i v_i - \sum_{i \in A_x} x_i v_i + p 
= \sum_{i=1}^n L_x(z)_iv_i - t \text{ .}
\]
\item\label{lem:partial-red-stat} For $z \in [-1,1]^{A_x}$, we have that
\[
\sum_{i \in A_x} z_i v_i - \sum_{i \in A_x} x_i v_i \in K-p \Leftrightarrow
\sum_{i=1}^n L_x(z)_i v_i - t \in K \text{ .}
\]  
\end{enumerate}
\end{lemma}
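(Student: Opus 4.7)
The plan is to verify the two claims by direct computation, since the lemma is essentially a bookkeeping statement about the lifting function $L_x$. There is no real obstacle here — both parts follow immediately from unpacking the definition of $L_x$ from \eqref{eq:lifting} (equivalently, Definition~\ref{def:lifting}) and rearranging.

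For part \ref{lem:partial-red-id}, I would begin by splitting the full sum $\sum_{i=1}^n L_x(z)_i v_i$ according to whether the index $i$ lies in $A_x$ or in $[n]\setminus A_x$. By the definition of $L_x$, the first piece equals $\sum_{i \in A_x} z_i v_i$ and the second piece equals $\sum_{i \notin A_x} x_i v_i$. Hence
\[
\sum_{i=1}^n L_x(z)_i v_i - t \;=\; \sum_{i \in A_x} z_i v_i + \sum_{i \notin A_x} x_i v_i - t.
\]
Adding and subtracting $\sum_{i \in A_x} x_i v_i$ on the right, this becomes
\[
\sum_{i \in A_x} z_i v_i - \sum_{i \in A_x} x_i v_i + \Bigl(\sum_{i=1}^n x_i v_i - t\Bigr) \;=\; \sum_{i \in A_x} z_i v_i - \sum_{i \in A_x} x_i v_i + p,
\]
which is exactly the claimed identity.

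For part \ref{lem:partial-red-stat}, I would simply apply part \ref{lem:partial-red-id}. By definition of Minkowski translation, we have $w \in K - p$ if and only if $w + p \in K$. Taking $w = \sum_{i \in A_x} z_i v_i - \sum_{i \in A_x} x_i v_i$ and using part \ref{lem:partial-red-id} to rewrite $w + p$ as $\sum_{i=1}^n L_x(z)_i v_i - t$ gives the equivalence.

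The main (and only) thing to be careful about is the indexing bookkeeping — ensuring the split between $A_x$ and $[n]\setminus A_x$ is applied consistently and that the definition of $p$ is used in the right place. There is no nontrivial mathematics beyond that.
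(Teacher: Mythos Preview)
Your proposal is correct and follows essentially the same approach as the paper: split $\sum_i L_x(z)_i v_i$ over $A_x$ and its complement using the definition of $L_x$, then add and subtract $\sum_{i\in A_x} x_i v_i$ to recover $p$; part~\ref{lem:partial-red-stat} then follows from part~\ref{lem:partial-red-id} exactly as you describe.
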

\begin{proof}
The first part follows from the computation
\begin{align*}
\sum_{i \in [n]} L_x(z)_i v_i - t &= 
\sum_{i \in A_x} z_i v_i + \sum_{i \in [n] \setminus A_x} x_i v_i - t \\
                                      &=
\sum_{i \in A_x} z_i v_i - \sum_{i \in A_x} x_i v_i + 
(\sum_{i \in [n]} x_i v_i - t) \\
                                      &=
\sum_{i \in A_x} z_i v_i - \sum_{i \in A_x} x_i v_i + p \text{ .}
\end{align*}
The second part follows since
\[
\sum_{i \in A_x} z_i v_i - \sum_{i \in A_x} x_i v_i \in K-p
\Leftrightarrow
\sum_{i \in A_x} z_i v_i - \sum_{i \in A_x} x_i v_i + p \in K 
\Leftrightarrow
\sum_{i=1}^n L_x(z)_i v_i - t \in K \text{ ,}
\]
where the last equivalence is by part (1).
\end{proof}

In terms of a reduction, the above lemma says in words that the linear vector
balancing problem with respect to the vectors $(v_i: i \in [n])$, shift $t$ and
set $K$, reduces to the linear discrepancy problem on $(v_i: i \in A_x)$, shift
$\sum_{i \in A_x} x_i v_i$ and set $K-p$. 

We now give the reduction to the linearly independent setting.
\begin{lemma}
\label{lem:red-to-ind}
Let $v_1,\dots,v_n \in \R^m$, $t \in \sum_{i=1}^n [-v_i,v_i]$. Then, there
is a polynomial time algorithm computing a fractional coloring $x \in [-1,1]^n$
such that:
\begin{enumerate}
\item\label{lem:red-to-ind-1} $\sum_{i=1}^n x_i v_i = t$.
\item\label{lem:red-to-ind-2} The vectors $(v_i: i \in A_x)$ are linearly independent.
\item\label{lem:red-to-ind-3} For $z \in [-1,1]^{A_x}$, 
$\sum_{i \in A_x} z_i v_i - \sum_{i \in A_x} x_i v_i  
= \sum_{i=1}^n L_x(z)_i v_i - t$. 
\end{enumerate}
\end{lemma}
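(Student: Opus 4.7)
The approach is a standard basic-feasible-solution / iterative rounding argument. The hypothesis $t \in \sum_{i=1}^n [-v_i,v_i]$ gives us some initial $x \in [-1,1]^n$ with $\sum_{i=1}^n x_i v_i = t$; such an $x$ can be found in polynomial time by solving a linear system (for instance, writing $t = \sum_i \lambda_i v_i$ for any $\lambda \in \R^n$ via least squares and then scaling / shifting into the box using any extreme point of the feasible polytope). I will then iteratively reduce the size of the fractional support $A_x$ while preserving $\sum_i x_i v_i = t$, until the vectors indexed by $A_x$ become linearly independent.

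The key step is as follows. Suppose $(v_i : i \in A_x)$ are linearly dependent. Then, using Gaussian elimination in polynomial time, I can compute a nonzero $c \in \R^{A_x}$ with $\sum_{i \in A_x} c_i v_i = 0$. Extending $c$ by zero to a vector in $\R^n$ (supported only on $A_x$), I consider the one-parameter family $x(\tau) = x + \tau c$. For all real $\tau$ we still have $\sum_i x(\tau)_i v_i = t$, and since $c$ is nonzero and supported on $A_x$ (where every coordinate of $x$ is strictly inside $(-1,1)$), the family $x(\tau)$ remains in $[-1,1]^n$ for $\tau$ in some closed interval $[\tau^-, \tau^+]$ with $\tau^- < 0 < \tau^+$. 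Replacing $x$ by either $x(\tau^-)$ or $x(\tau^+)$, at least one coordinate in $A_x$ moves to the boundary $\{-1,+1\}$, so the new set of fractional coordinates has strictly smaller cardinality. Iterating this at most $n$ times gives, in polynomial time, an $x \in [-1,1]^n$ satisfying (\ref{lem:red-to-ind-1}) and (\ref{lem:red-to-ind-2}).

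For property (\ref{lem:red-to-ind-3}), I simply invoke Lemma~\ref{lem:partial-red} (part~\ref{lem:partial-red-id}) with $p = \sum_{i=1}^n x_i v_i - t$. By (\ref{lem:red-to-ind-1}) we have $p = 0$, so the identity in Lemma~\ref{lem:partial-red} specializes to exactly the equality required in (\ref{lem:red-to-ind-3}). There is no real obstacle here; the only mild care needed is to ensure the initial $x$ can be produced efficiently and that each pivot step is genuinely polynomial, both of which are classical (this is essentially the proof that a linear program with a feasible solution has a basic feasible solution, as in Lecture~5 of~\cite{Spencer87ten} or Chapter~4 of~\cite{Matousek99}).
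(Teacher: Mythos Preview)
Your proposal is correct and takes essentially the same approach as the paper: both arguments produce a basic feasible solution of the linear system $\{y\in[-1,1]^n:\sum_i y_i v_i=t\}$. The paper simply invokes the BFS concept directly and reads off properties (1) and (2) from the definition, whereas you spell out the standard pivoting argument that constructs a BFS from an arbitrary feasible point; for property (3) both proofs do the same direct computation (the paper writes it out, you cite Lemma~\ref{lem:partial-red} with $p=0$).
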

\begin{proof}
Let $x$ denote a basic feasible solution to the linear system
\[
\set{y \in \R^n: \sum_{i=1}^n y_i v_i = t, y_i \in [-1,1] ~ \forall 
i \in [n]} \text{ ,}
\]
which clearly can be computed in polynomial time. Note the system is feasible by
construction of $t$. We now show that $x$ satisfies the required
conditions.

Let $r \leq n$ denote the rank of the matrix $(v_1,\dots,v_n)$. Since
$x$ is basic, it must satisfy at least $n$ least of the constraints at
it equality. In particular, at least $n-r$ of the bound constraints must be
tight. Thus, since $A_x$ is the set of fractional coordinates, we must have
$|A_x| \leq r$. Furthermore, the vectors $(v_i: i \in A_x)$ must be linearly
independent, since otherwise $x$ is not basic. Finally, for $z \in
[-1,1]^{A_x}$, we have that
\[
\sum_{i \in A_x} z_i v_i - \sum_{i \in A_x} x_i v_i  = 
\sum_{i \in A_x} z_i v_i + \sum_{i \in [n] \setminus A_x} x_i v_i
- \sum_{i \in [n]} x_i v_i = \sum_{i \in [n]} L_x(z)_i
  v_i - t \text{ ,}
\]
as needed.
\end{proof}

Let us now apply the above lemma to both the vector balancing problem and the
subgaussian sampling problem. First assume that we have a vector balancing
problem with respect to $v_1,\dots,v_n \in \R^m$, shift $t \in \sum_{i=1}^n
[-v_i,v_i]$, and $K \subseteq \R^m$ a convex body of Gaussian measure at least
$1/2$. Then applying the above lemma, we get $x \in [-1,1]^n$, such that our
vector balancing reduces to the one with respect to $(v_i: i \in A_x)$, shift
$\sum_{i \in A_x} x_i v_i \in \sum_{i \in A_x} [-v_i,v_i]$, and $K$. This
follows directly from Lemma~\ref{lem:red-to-ind} part~\ref{lem:red-to-ind-3}
using the lifting function $L_x$. Now let $W = {\rm span}(v_i: i \in A_x)$,
where $\dim(W) = |A_x|$ by linear independence. Clearly, the reduced vector
balancing problem looks for signed combinations in $W$, and hence we may replace
$K$ by $K \cap W$. Here, note that by Lemma~\ref{lem:origincut},
$\gamma_{|A_x|}(K \cap W) \geq \gamma_m(K) \geq 1/2$. Hence, this reduction
reduces to a problem of the same type, where in addition, the vectors form a
basis of the ambient space $W$. For the subgaussian sampling problem, by the
identity~\ref{lem:red-to-ind-3} in Lemma~\ref{lem:red-to-ind}, sampling a random
coloring $\chi \in \set{-1,1}^n$ such that $\sum_{i=1}^n \chi_i v_i - t$ is
subgaussian clearly reduces to sampling a random coloring $\chi \in
\set{-1,1}^{A_x}$ such that $\sum_{i \in A_x} \chi_i v_i - \sum_{i \in A_x} x_i
v_i$ is subgaussian since this equals $\sum_{i=1}^n L_x(\chi)_i v_i-t$.
Furthermore, since the support of such a support distribution lives in $W$, to
test subgaussianity we need only check the marginals $\pr{\theta}{\sum_{i \in
A_x} \chi_i v_i - \sum_{i \in A_x} x_i v_i}$ for $\theta \in W \cap S^{m-1}$. Thus,
we may assume that $W$ is the full space.  This completes the needed reductions.

\paragraph*{ Computational Model}
To formalize how our algorithms interact with convex bodies, we will use the
following computational model. 

To interact algorithmically with a convex body $K \subseteq \R^n$, we will
assume that $K$ is presented by a membership oracle.  Here a membership oracle
$O_K$ on input $x \in \R^n$, outputs $1$ if $x \in K$ and $0$ otherwise.
Interestingly, since we will always assume that our convex bodies have Gaussian
measure at least $1/2$, we will not need any additional centering (known point
inside $K$) or well-boundedness (inner contained and outer containing ball)
guarantees.

The runtimes of our algorithms will be measured by the number of oracle calls
and arithmetic operations they perform. We note that we use a simple model of
real computation here, where we assume that our algorithms can perform standard
operations on real numbers (multiplication, division, addition, etc.) in
constant time.  

%

\section{Banaszczyk's Theorem and Subgaussian Distributions}
\label{sec:subgaussian}

In this section, we give the main equivalences between Banaszczyk's vector
balancing theorem and the existence of subgaussian coloring distributions.

The fundamental theorem which underlies these equivalences is known as
Talagrand's majorizing measure theorem, which provides a nearly tight
characterization of the supremum of any Gaussian process using chaining
techniques. We now state an essential consequence of this theorem, which will be
sufficient for our purposes. For a reference, see~\cite{Talagrand05}.

\begin{theorem}[Talagrand]
\label{thm:talagrand}
Let $K \subseteq \R^n$ be a $0$-centered convex body and $Y \in \R^n$ be an
$s$-subgaussian random vector. Then for $X \in \R^n$ the $n$-dimensional
standard Gaussian, we have that
\[
\E[\|Y\|] \leq s \cdot C_T \cdot \E[\|X\|] \text{ ,}
\]
where $C_T > 0$ is an absolute constant.
\end{theorem}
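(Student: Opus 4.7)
The plan is to realize $\|x\|_K$ as a supremum of linear functionals indexed by the polar body $K^\circ$, and then invoke Talagrand's majorizing measures theorem (in its generic-chaining form) to compare the expected supremum of the $Y$-process to that of the canonical Gaussian process on the same index set.

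First I would set up the polar duality. Because $K$ is $0$-centered, the polar body $K^\circ = \{u \in \R^n : \langle u, x\rangle \leq 1 \text{ for all } x \in K\}$ is a compact convex set that contains the origin. The bipolar theorem then gives the representation $\|x\|_K = \sup_{u \in K^\circ} \langle u, x\rangle$ for every $x \in \R^n$. Consequently both $\E[\|Y\|]$ and $\E[\|X\|]$ are expected suprema of stochastic processes indexed by the same compact set $T := K^\circ$, and the theorem becomes a comparison of expected suprema.

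Next I would identify the matching increment structure of the two processes. For any $u, u' \in T$ with $u \neq u'$, the vector $(u-u')/\|u-u'\|_2$ is a unit vector, so applying $s$-subgaussianity of $Y$ in this direction yields
\[
\Pr\bigl[\,|\langle u, Y\rangle - \langle u', Y\rangle| \geq t\,\bigr] \;\leq\; 2 \exp\!\Bigl(-\tfrac{t^2}{2 s^2 \|u-u'\|_2^2}\Bigr).
\]
That is, the $Y$-process has subgaussian increments in the rescaled Euclidean metric $s\|\cdot\|_2$. The Gaussian process $(\langle u, X\rangle)_{u \in T}$ has canonical metric $\|u-u'\|_2$, so the two processes live on the same underlying metric space up to the scalar factor $s$.

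Finally I would apply Talagrand's theorem. The generic-chaining upper bound states that any process with $\sigma$-subgaussian increments in a metric $d$ satisfies $\E[\sup_{u \in T}(Z_u - Z_{u_0})] \leq C_1\, \sigma\, \gamma_2(T, d)$ for any base point $u_0$, while the deep direction of the majorizing measures theorem gives $\gamma_2(T, d_G) \leq C_2\, \E[\sup_{u \in T}(G_u - G_{u_0})]$ for the canonical Gaussian process. Taking $u_0 = 0 \in T$ makes both pinning random variables identically zero, so composing the two inequalities produces $\E[\|Y\|] \leq C_1 C_2 \cdot s \cdot \E[\|X\|]$, which is the desired bound with $C_T := C_1 C_2$. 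The step I expect to require the most care is a technical one: $Y$ need not be centered, so the process $(\langle u, Y\rangle)_{u \in T}$ is not centred in the classical sense. However, the upper bound in generic chaining uses only the subgaussian control of the \emph{increments} $\langle u - u', Y\rangle$, for which centredness is immaterial, and pinning the process at $u_0 = 0$ contributes nothing since $\langle 0, Y\rangle = 0$ almost surely. With these observations in place, the proof reduces to a clean black-box invocation of Talagrand's theorem.
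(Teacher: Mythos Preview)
The paper does not give its own proof of this theorem: it is stated as a known consequence of Talagrand's majorizing measures theorem with a reference to~\cite{Talagrand05}, and is then used as a black box in the proof of Lemma~\ref{lem:subg-red}. Your sketch is precisely the standard derivation one would give to unpack that citation: represent the gauge of $K$ as $\sup_{u\in K^\circ}\langle u,\cdot\rangle$, verify the $Y$-process has subgaussian increments in the metric $s\|\cdot\|_2$, then chain the generic-chaining upper bound with the majorizing-measures lower bound on $\gamma_2(K^\circ,\|\cdot\|_2)$. The technical points you flag (compactness of $K^\circ$ from $0$-centeredness, pinning at $u_0=0\in K^\circ$, and that only increment tails are needed so centering of $Y$ is irrelevant) are the right ones, and the argument is correct as written.
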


As a consequence of the above theorem together with geometric estimates proved
in subsection~\ref{sec:geom-est}, we derive the following lemma, which will be
crucial to our equivalences and reductions.

\begin{lemma}[Reduction to Subgaussianity]
\label{lem:subg-red}
Let $Y \in \R^n$ be $s$-subgaussian. Then,
\begin{enumerate}
\item\label{lem:subg-red-sym} If $K \subseteq \R^n$ is a symmetric convex body
with $\gamma_n(K) \geq 1/2$, then 
\[
\E[\|Y\|_K] \leq 1.5 \cdot C_T \cdot s \text{.}
\]
In particular, $\Pr[Y \in 3 \cdot C_T \cdot s K] \geq 1/2$. 
\item\label{lem:subg-red-asym} If $K \subseteq \R^n$ is a convex body with
$\gamma_n(K) \geq 1/2$ and $\|b(K)\|_2 \leq \frac{1}{32\sqrt{2\pi}}$, then 
\[
\E[\|Y\|_{K \cap -K}] \leq 2(1+\pi\sqrt{8 \ln 2}) \cdot C_T \cdot s \text{.}
\]
In particular, $\Pr[Y \in 4 (1+\pi\sqrt{8 \ln 2}) \cdot C_T \cdot s (K \cap -K)] \geq 1/2$.
\end{enumerate}
\end{lemma}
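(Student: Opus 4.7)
The plan is to reduce each bound to a purely Gaussian geometric estimate by invoking Talagrand's majorizing measure theorem (Theorem~\ref{thm:talagrand}) to transfer from the arbitrary $s$-subgaussian vector $Y$ to a standard Gaussian $X$. Any convex body $K$ with $\gamma_n(K)\ge 1/2$ contains the origin in its interior (by the halfspace argument from the introduction), so it is $0$-centered with a well-defined gauge $\|\cdot\|_K$; the same applies to $K\cap -K$, since $\gamma_n(-K)=\gamma_n(K)\ge 1/2$ by symmetry of the Gaussian density. Applying Talagrand's theorem yields $\E[\|Y\|_K]\le s\,C_T\,\E[\|X\|_K]$ in the symmetric case and the analogous inequality with $K\cap -K$ replacing $K$ in the asymmetric case. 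Once we establish the geometric estimates $\E[\|X\|_K]\le 1.5$ for Part~1 and $\E[\|X\|_{K\cap -K}] \le 2(1+\pi\sqrt{8\ln 2})$ for Part~2, Markov's inequality applied to the non-negative random variables $\|Y\|_K$ and $\|Y\|_{K\cap -K}$ (picking up a factor of $2$) immediately yields the stated probability bounds with scalings $3C_T s$ and $4(1+\pi\sqrt{8\ln 2})C_T s$ respectively.

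For Part~1, I would bound $\E[\|X\|_K]$ via the layer-cake identity $\E[\|X\|_K] = \int_0^\infty (1 - \gamma_n(tK))\,dt$, splitting the integral at $t=1$. The contribution on $[0,1]$ is at most $1$ trivially. For $t\ge 1$, the symmetry of $K$ together with an Ehrhard/S-inequality-type estimate (from Section~\ref{sec:geom-est}) comparing $tK$ to a symmetric slab of matching Gaussian mass should give a bound of the form $\gamma_n(tK)\ge 2\Phi(bt)-1$ with $b\ge \Phi^{-1}(3/4)$; a direct calculation then shows this tail contributes at most $0.5$, giving $\E[\|X\|_K]\le 1.5$.

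For Part~2, the main work is to control the symmetric body $L := K\cap -K$. Inclusion-exclusion with $\gamma_n(K),\gamma_n(-K)\ge 1/2$ alone yields no useful lower bound on $\gamma_n(L)$; the role of the hypothesis $\|b(K)\|_2\le 1/(32\sqrt{2\pi})$ is precisely to prevent $K$ from being ``one-sided'' in any direction. Since the barycenter is a Gaussian-weighted measure of the asymmetry of $K$, its smallness should translate (via the convex-geometric estimates of Section~\ref{sec:geom-est}) into a constant lower bound on $\gamma_n(L)$. A layer-cake argument analogous to Part~1 applied to $L$ then yields the stated bound on $\E[\|X\|_L]$, with the larger constant reflecting the possibly smaller Gaussian mass of $L$ and the associated worsening of the parameter $b$; combining with Talagrand and Markov completes the asymmetric case.

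The principal obstacle I anticipate is precisely the quantitative geometric step in Part~2: linking the near-zero barycenter of $K$ to a constant lower bound on $\gamma_n(K\cap -K)$. The somewhat unusual constants, and in particular the factor $\pi\sqrt{8\ln 2}$ (the full-width at half-maximum of the standard Gaussian density), suggest that the tight estimate is obtained through an explicit one-dimensional worst-case analysis---plausibly comparing $K$ against a half-space whose boundary is orthogonal to $b(K)$ and slid to the extremal position compatible with the barycenter constraint---rather than from an off-the-shelf inequality. Once this step is in place, Talagrand's theorem and Markov's inequality combine mechanically to deliver both parts of the lemma.
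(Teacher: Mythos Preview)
Your high-level strategy---Talagrand's theorem to pass from $Y$ to a standard Gaussian $X$, then Markov---is exactly what the paper does, and your treatment of Part~1 via a layer-cake/slab comparison matches the paper's Lemma~\ref{lem:med-to-exp-sym} (which uses the Lata{\l}a--Oleszkiewicz $S$-inequality, Theorem~\ref{thm:gaussian-dilation}, for the $t\ge 1$ tail).

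For Part~2, however, your route diverges from the paper's and contains exactly the gap you flag: you propose to lower-bound $\gamma_n(K\cap -K)$ by a universal constant using the barycenter hypothesis, then rerun the Part~1 argument on $L=K\cap -K$. The paper never establishes such a lower bound and instead bypasses it entirely. The key steps in the paper (Lemma~\ref{lem:med-to-exp}) are:
\begin{enumerate}
\item The one-line inequality $\E[\|X\|_{K\cap -K}] = \E[\max\{\|X\|_K,\|{-X}\|_K\}] \le \E[\|X\|_K] + \E[\|{-X}\|_K] = 2\,\E[\|X\|_K]$, using only symmetry of the Gaussian. This reduces Part~2 to bounding $\E[\|X\|_K]$ for the \emph{asymmetric} body $K$ itself.
\item The barycenter hypothesis $\|b(K)\|_2\le 1/(32\sqrt{2\pi})$ together with $\gamma_n(K)\ge 1/2$ forces $(1/4)B_2^n\subseteq K$ (Lemma~\ref{lem:ball-cont}), so $\|\cdot\|_K$ is $4$-Lipschitz.
\item Gaussian concentration for Lipschitz functions (Maurey--Pisier, Theorem~\ref{thm:gaussian-con}) then gives $\E[\|X\|_K]\le 1+\pi\sqrt{8\ln 2}$: if the expectation exceeded this, the event $\{\|X\|_K\le 1\}$ would have probability below $1/2$, contradicting $\gamma_n(K)\ge 1/2$.
\end{enumerate}
The constant $\pi\sqrt{8\ln 2}$ is thus the deviation $t$ solving $e^{-2(t/4)^2/\pi^2}=1/2$ in the Maurey--Pisier bound with Lipschitz constant $4$; it does not arise from any half-space extremal analysis of $\gamma_n(K\cap -K)$. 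Your proposed route might be completable, but the paper's decomposition is both simpler and avoids precisely the step you identified as the principal obstacle.
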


\begin{proof}[Proof of Lemma~\ref{lem:subg-red}]
The proof follows immediately by combining
Lemmas~\ref{lem:med-to-exp-sym},~\ref{lem:med-to-exp} and
Theorem~\ref{thm:talagrand}. We note that the lower bounds on the probabilities
follow directly by Markov's inequality. 
\end{proof}

To state our equivalence, we will need the definitions of the following
geometric parameters.

\begin{definition}[Geometric Parameters]
\label{def:balance}
Let $T \subseteq \R^n$ be a finite set. 
\begin{itemize}
\item Define $s_g(T) > 0$ to be least number $s > 0$ such that there exists
an $s$-subgaussian random vector $Y$ supported on $T$.
\item Define $s_b(T) > 0$ to be the least number $s > 0$ such that for any
symmetric convex body $K \subseteq \R^n$, $\gamma_n(K) \geq 1/2$,  $T \cap s K
\neq \emptyset$. 
\end{itemize}
\end{definition}

We now state our main equivalence, which gives a quantitative version of
Theorem~\ref{thm:main-equiv} in the introduction. 

\begin{theorem}
\label{thm:bal-equiv}
For $T \subseteq \R^n$ be a finite set, the following holds:
\begin{enumerate}
\item $s_b(T) \leq 1.5 C_T \cdot s_g(T)$.
\item $s_g(T) \leq \sqrt{2} \cdot s_b(T)$.
\end{enumerate}
\end{theorem}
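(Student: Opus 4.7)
For the first inequality $s_b(T) \leq 1.5\, C_T \cdot s_g(T)$, the plan is to apply the first part of Lemma~\ref{lem:subg-red} directly. Let $Y$ be an $s_g(T)$-subgaussian random vector supported on $T$, which exists by definition of $s_g(T)$, and let $K \subseteq \R^n$ be any symmetric convex body with $\gamma_n(K) \geq 1/2$. The lemma yields $\E[\|Y\|_K] \leq 1.5\, C_T \cdot s_g(T)$, so some realization $y \in T$ must satisfy $\|y\|_K \leq 1.5\, C_T \cdot s_g(T)$, i.e.\ $y \in (1.5\, C_T s_g(T))\,K$. This witnesses $T \cap (1.5\, C_T s_g(T))\,K \neq \emptyset$, and since $K$ was arbitrary we conclude $s_b(T) \leq 1.5\, C_T s_g(T)$.

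For the harder direction $s_g(T) \leq \sqrt{2}\cdot s_b(T)$, set $s = s_b(T)$. By Lemma~\ref{lem:mom-to-subg} applied with $\beta = 2$ and $\sigma = s$, it suffices to exhibit a distribution $\mu$ on $T$ whose associated random vector $Y$ satisfies
\[
\E_\mu[\cosh(\langle w, Y\rangle)] \leq 2\, e^{s^2\|w\|^2/2} \quad \forall\, w \in \R^n,
\]
since then $Y$ is $s\sqrt{\log_2 2 + 1} = \sqrt{2}\,s$-subgaussian. My plan is to obtain such a $\mu$ via a minimax exchange. Introducing a probability measure $\nu$ on $\R^n$ (so that Dirac $\nu$ recover the pointwise statement), the claim is equivalent to
\[
\inf_{\mu \in \Delta(T)}\ \sup_{\nu}\ \E_\nu\!\left[\E_\mu[\cosh(\langle w, Y\rangle)] - 2\, e^{s^2\|w\|^2/2}\right] \leq 0.
\]
The integrand is linear in both $\mu$ and $\nu$, $\Delta(T)$ is a compact simplex (since $T$ is finite), and the bracketed quantity tends to $-\infty$ uniformly as $\|w\| \to \infty$, because $\cosh(\langle w, y\rangle)$ grows only like $e^{\|w\|\,\max_{y\in T}\|y\|}$ whereas $e^{s^2\|w\|^2/2}$ grows super-exponentially. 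I can therefore invoke Sion's minimax theorem to swap inf and sup, reducing the claim to the following: for every probability measure $\nu$ on $\R^n$ there exists $y \in T$ with $\E_\nu[\cosh(\langle w, y\rangle)] \leq 2\,\E_\nu[e^{s^2\|w\|^2/2}]$.

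The core geometric step handles this inner claim. For fixed $\nu$, define $g(y) := \E_\nu[\cosh(\langle w, y\rangle)]$. Since $\cosh$ is even and convex, $g$ is an even convex function with $g(0) = 1$, and its sublevel set
\[
K := \{y \in \R^n : g(y) \leq 2\,\E_\nu[e^{s^2\|w\|^2/2}]\}
\]
is a symmetric convex body. Letting $X$ denote a standard Gaussian in $\R^n$, Fubini together with the identity $\E_X[\cosh(\langle s w, X\rangle)] = e^{s^2\|w\|^2/2}$ from Lemma~\ref{lem:mom-to-subg} give $\E[g(sX)] = \E_\nu[e^{s^2\|w\|^2/2}]$. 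Markov's inequality then yields $\Pr[sX \in K] = \Pr[g(sX) \leq 2\,\E[g(sX)]] \geq 1/2$, equivalently $\gamma_n(K/s) \geq 1/2$. Applying the definition of $s_b(T) = s$ to the symmetric convex body $K/s$ produces a point $y \in T \cap s(K/s) = T \cap K$, which is exactly the required inequality.

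The step I expect to require the most care is the rigorous justification of the minimax exchange, since the space of probability measures on $\R^n$ is not weak-$*$ compact and $\cosh$ is unbounded. My remedy is to first restrict $\nu$ to probability measures supported on $R \cdot B_2^n$ for some fixed large $R$, where Sion's theorem applies cleanly, and then exploit the uniform decay of the integrand at infinity to argue that letting $R \to \infty$ preserves both the supremum and the inequality. The remaining ingredients, namely the convexity of $g$, the Gaussian expectation identity, and the finiteness of the inner infimum over $T$, are routine.
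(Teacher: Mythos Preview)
Your proof is correct and follows essentially the same strategy as the paper: part~1 is identical, and for part~2 both arguments reduce to a minimax exchange followed by the observation that the relevant sublevel set of a $\cosh$-type functional is a symmetric convex body of Gaussian measure at least $1/2$ (via Markov), to which the definition of $s_b$ applies.

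The one place where the paper's route is a bit cleaner is the compactness needed for the minimax. Rather than work with probability measures $\nu$ on $\R^n$ and truncate to $R\,B_2^n$, the paper normalizes to $s_b(T)=1$, sets $g_w(x)=\cosh(\langle x,w\rangle)/e^{\|w\|_2^2/2}$, and views the family $\{g_w:w\in\R^n\}$ as a subset of $\R^{|T|}$ (functions on the finite set $T$). It then shows directly that $\sup_{w}\max_{x\in T} g_w(x)<\infty$, so the closed convex hull $C\subset\R^{|T|}$ is compact, and von Neumann's minimax applies to $\Delta(T)\times C$ with no further truncation argument. This exploits the finiteness of $T$ more directly and avoids the (admittedly routine) verification that the supremum over $w$ is attained on a bounded set uniformly in $\mu$. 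Your approach works for the same underlying reason---the uniform bound $\cosh(\langle w,y\rangle)\le e^{\|w\|\max_{y\in T}\|y\|}$---but packages it as a domain restriction on $w$ rather than a range bound in $\R^{|T|}$.
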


Using the above language, we can restate Banaszczyk's vector balancing theorem
restricted to \emph{symmetric convex bodies} as follows:

\begin{theorem}[\cite{Bana98}]
\label{thm:banaszczyk}
Let $v_1,\dots,v_m \in \R^n$. Then $s_b(\sum_{i=1}^n \set{-v_i,v_i})
\leq 5 \max_{i \in [m]} \|v_i\|_2$.
\end{theorem}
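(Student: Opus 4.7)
This theorem is simply a reformulation of Banaszczyk's original result (Theorem~\ref{thm:bana-disc}) in the language of the parameter $s_b$, so the proof is essentially a direct invocation of Theorem~\ref{thm:bana-disc} together with a rescaling argument. My plan is as follows.

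Let $M = \max_{i \in [m]} \|v_i\|_2$, and assume without loss of generality that $M > 0$ (otherwise every vector is zero, $T = \{0\}$, and since any symmetric convex body of Gaussian measure at least $1/2$ contains the origin by the hyperplane separation argument described in the introduction, $s_b(T) \leq 0 \leq 5M$ trivially). Define rescaled vectors $u_i = v_i/(5M)$, which satisfy $\|u_i\|_2 \le 1/5$ by construction.

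Now fix any symmetric convex body $K \subseteq \R^n$ with $\gamma_n(K) \geq 1/2$; by the definition of $s_b$ it suffices to exhibit $\chi \in \{-1,1\}^n$ with $\sum_{i} \chi_i v_i \in 5M \cdot K$. Applying Banaszczyk's theorem (Theorem~\ref{thm:bana-disc}) to the vectors $u_1, \dots, u_m$ and the convex body $K$ (note that we do not even need the symmetry of $K$ here, only that $\gamma_n(K) \ge 1/2$), we obtain $\chi \in \{-1,1\}^m$ such that
\[
\sum_{i=1}^m \chi_i u_i = \frac{1}{5M}\sum_{i=1}^m \chi_i v_i \in K.
\]
Multiplying through by $5M$ yields $\sum_{i=1}^m \chi_i v_i \in 5M \cdot K$, i.e., this element of $T = \sum_{i=1}^m \{-v_i, v_i\}$ lies in $5M \cdot K$. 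Since $K$ was an arbitrary symmetric convex body of Gaussian measure at least $1/2$, this establishes $s_b(T) \leq 5M = 5\max_{i \in [m]}\|v_i\|_2$.

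There is no real obstacle here: the only content of the proof is to observe that Banaszczyk's theorem is homogeneous under rescaling the vectors and the body jointly, and to absorb the $1/5$ bound on the $\ell_2$ norms required by Theorem~\ref{thm:bana-disc} into the scale factor. One should simply take care to note that the corresponding restatement works even though $s_b$ is defined only in terms of symmetric bodies, whereas Theorem~\ref{thm:bana-disc} applies to all convex bodies of Gaussian measure $\geq 1/2$, so the symmetric case is an immediate specialization.
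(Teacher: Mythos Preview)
Your proposal is correct and matches the paper's treatment: the paper does not give a separate proof of this theorem but simply states it as a restatement of Theorem~\ref{thm:bana-disc} in the language of the parameter $s_b$, which is exactly the rescaling argument you wrote out.
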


As an immediate corollary of Theorems~\ref{thm:bal-equiv}
and~\ref{thm:banaszczyk} (extended to the linear setting) we deduce: 

\begin{corollary}
\label{cor:subg-exist}
Let $v_1,\dots,v_m \in \R^n$. Then $s_g(\sum_{i=1}^n \set{-v_i,v_i}) \leq
\sqrt{2} \cdot 5 \max_{i \in [m]} \|v_i\|_2$. Furthermore, for $t \in
\sum_{i=1}^m [-v_i,v_i]$, $s_g(\sum_{i=1}^n \set{-v_i,v_i}-t) \leq \sqrt{2}
\cdot 10 \max_{i \in [m]} \|v_i\|_2$.
\end{corollary}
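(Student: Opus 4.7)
The plan is to derive this corollary as a direct consequence of the two ingredients immediately preceding it: the equivalence between vector balancing and subgaussianity (Theorem~\ref{thm:bal-equiv}) and Banaszczyk's vector balancing theorem for symmetric convex bodies (Theorem~\ref{thm:banaszczyk}). In both cases, the strategy is simply to upper bound the balancing parameter $s_b$ of the relevant finite set, then convert this bound into a subgaussian bound via $s_g \leq \sqrt{2}\, s_b$.

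For the first (unshifted) assertion, I would set $T = \sum_{i=1}^n \{-v_i, v_i\}$ and apply Theorem~\ref{thm:banaszczyk} directly to conclude $s_b(T) \leq 5 \max_{i \in [m]} \|v_i\|_2$. Feeding this into part (2) of Theorem~\ref{thm:bal-equiv} yields $s_g(T) \leq \sqrt{2}\, s_b(T) \leq 5\sqrt{2}\, \max_{i \in [m]} \|v_i\|_2$, which is exactly the claimed bound.

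For the second (shifted) assertion in the linear setting, I would invoke the extension of Banaszczyk's theorem to the linear discrepancy regime that is explicitly noted earlier in the paper. As discussed in Section~\ref{sec:results}, this extension follows from the general hereditary-to-linear inequality of Lov\'asz, Spencer, and Vesztergombi~\cite{LSV86}, and costs exactly a factor of $2$ in the norm bound (i.e.\ $1/5 \to 1/10$). Applied to our setting with $t \in \sum_{i=1}^m [-v_i,v_i]$, this gives $s_b\bigl(\sum_{i=1}^n \{-v_i, v_i\} - t\bigr) \leq 10 \max_{i \in [m]} \|v_i\|_2$. Another application of part (2) of Theorem~\ref{thm:bal-equiv} then produces $s_g\bigl(\sum_{i=1}^n \{-v_i, v_i\} - t\bigr) \leq 10\sqrt{2}\, \max_{i \in [m]} \|v_i\|_2$, matching the desired inequality.

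There is essentially no main obstacle here since both inputs are already established and the corollary is purely a composition of the two bounds; the only mildly delicate point is correctly tracking the factor of $2$ loss when passing from the symmetric (hereditary) form of Banaszczyk's theorem to its linear analogue, which is what accounts for the $5$ versus $10$ in the two parts of the statement.
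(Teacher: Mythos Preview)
Your proposal is correct and matches the paper's approach exactly: the paper presents this corollary as an immediate consequence of Theorems~\ref{thm:bal-equiv} and~\ref{thm:banaszczyk} (extended to the linear setting), with no further argument given. Your tracking of the factor-$2$ loss from the hereditary-to-linear passage is precisely what the paper intends.
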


As explained in the introduction, the above equivalence shows the existence of a
\emph{universal sampler} for recovering Banaszczyk's vector balancing theorem for
symmetric convex bodies up to a constant factor in the length of the vectors.
Precisely, this follows directly from Lemma~\ref{lem:subg-red}
part~\ref{lem:subg-red-sym} and Corollary~\ref{cor:subg-exist} (for more details
see the proof of Theorem~\ref{thm:bal-equiv} below).


The following theorem, which we will need, is the classical minimax principle of
Von-Neumann.

\begin{theorem}[Minimax Theorem~\cite{Neumann28}]
\label{thm:minimax}
Let $X \subseteq \R^n$, $Y \subseteq \R^m$ be compact convex sets. Let $f: X
\times Y \rightarrow \R$ be a continuous function such that
\begin{enumerate}
\item $f(\cdot, y): X \rightarrow \R$ is convex for fixed $y \in Y$.
\item $f(x,\cdot): Y \rightarrow \R$ is concave for fixed $x \in X$.
\end{enumerate}
Then,
\[
\min_{x \in X} \max_{y \in Y} f(x,y) = \max_{y \in
Y} \min_{x \in X} f(x,y) \text{ .}
\]
\end{theorem}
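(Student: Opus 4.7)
The plan is to reduce the minimax equality to the existence of a saddle point and then construct the saddle point by a fixed-point argument.

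First I would dispose of the trivial direction $\max_y \min_x f(x,y) \leq \min_x \max_y f(x,y)$, which holds for any real-valued function on $X \times Y$: for any $(x_0,y_0) \in X \times Y$ we have $\min_{x} f(x, y_0) \leq f(x_0, y_0) \leq \max_{y} f(x_0, y)$, and taking $\max_{y_0}$ on the left and $\min_{x_0}$ on the right yields the claim.

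For the reverse inequality, it suffices to exhibit a \emph{saddle point}, i.e.\ a pair $(x^*, y^*)$ satisfying $f(x, y^*) \geq f(x^*, y^*) \geq f(x^*, y)$ for all $x \in X$, $y \in Y$, since then $\min_x \max_y f(x,y) \leq \max_y f(x^*, y) = f(x^*, y^*) = \min_x f(x, y^*) \leq \max_y \min_x f(x,y)$. I would produce such a pair via Kakutani's fixed-point theorem applied to the joint best-response correspondence $\Phi(x,y) := \psi(y) \times \phi(x)$, where $\phi(x) := \arg\max_{y \in Y} f(x,y)$ and $\psi(y) := \arg\min_{x \in X} f(x,y)$. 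Continuity of $f$ together with compactness of $X$ and $Y$ makes $\phi$ and $\psi$ nonempty and compact-valued; concavity of $f(x,\cdot)$ in $y$ forces $\phi(x)$ to be convex, while convexity of $f(\cdot,y)$ in $x$ forces $\psi(y)$ to be convex; Berge's maximum theorem then yields upper hemicontinuity of both. Thus $\Phi$ satisfies the hypotheses of Kakutani's theorem on the nonempty compact convex set $X \times Y$, and any fixed point $(x^*,y^*) \in \Phi(x^*,y^*)$ is by definition a saddle point.

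The main obstacle is verifying upper hemicontinuity of the $\arg\max$/$\arg\min$ correspondences, which relies on Berge's maximum theorem and the continuity hypothesis on $f$. As an alternative route staying closer to the paper's convex-geometric flavor, one could use a Hahn--Banach separation argument: supposing $u^* := \max_y \min_x f(x,y) < v < v^* := \min_x \max_y f(x,y)$ and considering the convex hull $C$ of the family $\{f(\cdot, y) - v : y \in Y\}$ inside the space of continuous functions on $X$, one observes that every element of $C$ must take a strictly positive value somewhere on $X$ (otherwise it would certify $v^* \leq v$). Separating $C$ from the nonpositive cone produces a probability measure $\mu$ on $Y$ with $\int_Y f(x, y)\,d\mu(y) \geq v$ for all $x \in X$; Jensen's inequality via concavity of $f(x, \cdot)$ in $y$ then lets one replace $\mu$ by its barycenter $y^* := \int_Y y\,d\mu(y) \in Y$ (using convexity of $Y$), giving $f(x, y^*) \geq v$ for all $x$, and hence $u^* \geq v$, a contradiction.
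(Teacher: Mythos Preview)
The paper does not prove this theorem; it quotes it as the classical von Neumann minimax principle (with citation) and uses it as a black box in the proof of Theorem~\ref{thm:bal-equiv}. There is therefore no proof in the paper to compare against.

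Your primary argument via Kakutani's fixed-point theorem is a correct and standard modern proof: the best-response correspondences are nonempty and convex-valued by the convexity/concavity hypotheses, compact-valued and upper hemicontinuous by Berge's maximum theorem, so the product correspondence on the compact convex set $X\times Y$ has a fixed point, which is the required saddle point.

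Your alternative separation sketch, however, does not work as written. First, the assertion that a nonpositive element $g=\sum_i \lambda_i\bigl(f(\cdot,y_i)-v\bigr)$ of $C$ ``would certify $v^*\le v$'' is unjustified: from $\sum_i \lambda_i f(x,y_i)\le v$ for all $x$ you cannot extract a single $x_0$ with $\max_y f(x_0,y)\le v$, and concavity in $y$ only gives $f(x,\bar y)\ge \sum_i \lambda_i f(x,y_i)$, which points the wrong way. Second, separating $C\subset C(X)$ from the nonpositive cone produces (via Riesz representation) a nonnegative measure on $X$, not on $Y$; the conclusion you would actually obtain is $\int_X f(x,y)\,d\nu(x)\ge v$ for all $y$, and Jensen for the convex function $f(\cdot,y)$ gives $f(x^*,y)\le \int_X f(x,y)\,d\nu(x)$, again the wrong direction. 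A correct separation-style proof (e.g.\ Sion's level-set/KKM argument, or first passing to probability measures on both sides to make the payoff bilinear) is possible, but it is not the argument you sketched.
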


We now proceed to the proof of Theorem~\ref{thm:bal-equiv}. 

\begin{proof}[Proof of Theorem~\ref{thm:bal-equiv}]~

\paragraph*{{\bf Proof of 1:}} Let $Y \in T$ be the $s_g(T)$-subgaussian random
variable. Let $K \subseteq \R^n$ be a symmetric convex body such that
$\gamma_n(K) \geq 1/2$. By Lemma~\ref{lem:subg-red} part~\ref{lem:subg-red-sym},
we have that
\[
\E[\|Y\|_K] \leq 1.5 C_T \cdot s_g(T) \text{. }
\]
Thus, there exists $x \in T$ such that $x \in 1.5 C_T \cdot s_g(T) K$.  Since
this holds for all such $K$, we have that $s_b(T) \leq 1.5 C_T \cdot s_g(T)$ as
needed.

\paragraph*{{\bf Proof of 2:}}

Recall the definition of $\cosh(x) = \frac{1}{2} (e^{x} + e^{-x})$ for $x \in
\R$. Note that $\cosh$ is convex, symmetric ($\cosh(x) = \cosh(-x)$), and
non-negative. For $w \in \R^n$, define $g_{w}: \R^n \rightarrow
\R_{\geq 0}$ by $g_{w}(x) =
\cosh(\pr{x}{w})/e^{\|w\|_2^2/2}$. By
Lemma~\ref{lem:mom-to-subg}, note that $\E[g_{w}(X)] = 1$ for $X$ an
$n$-dimensional standard Gaussian.

Let $\mathcal{D}$ denote the set of probability distributions on $T$.  Our goal
is to show that there exists $D \in \mathcal{D}$ such that $Y \sim D$ is
$\sqrt{2} \cdot s_b(T)$-subgaussian. By homogeneity, we may replace $T$ by
$T/s_b(T)$, and thus assume that $s_b(T)=1$. To show the existence of the
subgaussian distribution, we will show that
\begin{equation}
\label{eq:game-value}
\inf_{D \in \mathcal{D}} \sup_{w \in \R^n} \E_{Y \sim D}[g_{w}(Y)] \leq 2 \text{ .}
\end{equation}

Before proving the bound (\ref{eq:game-value}), we show that this suffices to
show the existence of the desired $\sqrt{2}$-subgaussian distribution. Let $D^*
\in \mathcal{D}$ denote the minimizing distribution for (\ref{eq:game-value}).
Then by definition of $g_{w}$, we have that
\begin{equation}
\label{eq:bal-equiv-1}
\E_{Y \sim D^*}[\cosh(\pr{w}{Y})] \leq 2 e^{\|w\|_2^2/2}
\quad \forall~w \in \R^n \text{ .}
\end{equation}
With the bounds on the Laplace transform in (\ref{eq:bal-equiv-1}), by
Lemma~\ref{lem:mom-to-subg} with $\beta = 2$ and $\sigma = 1$, we have that
$Y$ is $\sqrt{\log_2 2 + 1} = \sqrt{2}$-subgaussian as needed.

We now prove the estimate in (\ref{eq:game-value}). Let $C$ denote the closed
convex hull of the functions $g_w$. More precisely, $C$ is the closure of the
set of functions
\[
\set{f:T \rightarrow \R_{\geq 0} | f = \sum_{i=1}^k \lambda_i g_{w_i}, k \in \N,
\sum_{i=1}^k \lambda_i = 1, \lambda_i \geq 0 ~\forall~i \in [k], w_i
\in \R^n ~\forall~i \in [k]} \text{.}
\]
By continuity, we clearly have that
\begin{equation}
\label{eq:game-value-2}
\inf_{D \in \mathcal{D}} \sup_{w \in \R^n} \E_{Y \sim
D}[g_{w}(Y)] =
\inf_{D \in \mathcal{D}} \sup_{f \in C} \E_{Y \sim D}[f(Y)]
\text{ .}
\end{equation}
The strategy will now be to apply the minimax theorem~\ref{thm:minimax} to
(\ref{eq:game-value-2}). For this to hold, we first need that both $\mathcal{D}$
and $C$ are both convex and compact. This is clear for $\mathcal{D}$, since
$\mathcal{D}$ can be associated with the standard simplex in $\R^{|T|}$. By
construction $C$ is also convex, hence we need only prove compactness. Since $T$
is finite and $C$ is a closed subset of non-negative functions on $T$, $C$ can be
associated in the natural way with a closed subset of $\R^{|T|}_{\geq 0}$. To
show compactness, it suffices to show that this set is bounded. In particular,
it suffices to show that for $f \in C$, $\max_{x \in T} f(x) \leq M$
for some universal constant $M < \infty$. Since every $f \in C$ is a limit of convex
combinations of the functions $g_{w}$, $w \in \R^n$, it suffices to show
that
\[
\sup_{w \in \R^n} \max_{x \in T} g_{w}(x) \leq \infty
\text{ .}
\]
We prove this with the following computation:
\begin{align*}
\sup_{w \in \R^n} \max_{x \in T} g_{w}(x)
&= \sup_{w \in \R^n} \max_{x \in T}
\frac{\cosh(\pr{w}{x})}{e^{\|w\|_2^2/2}} \\
&\leq \sup_{w \in \R^n} \frac{\cosh(\max_{x \in T} \|x\| \|w\|_2)}
{e^{\|w\|_2^2/2}} \\
&\leq \sup_{w \in \R^n}
e^{\left(\max_{x \in T} \|x\|_2\right) \|w\|_2 -\|w\|_2^2/2} \\
&= e^{\left(\max_{x \in T} \|x\|_2\right)^2/2} < \infty \text{. }
\end{align*}
Thus $C$ is compact as needed. Lastly, note that the function $\E_{Y \sim
D}[f(Y)]$ from $\mathcal{D} \times C$ is bilinear, and hence is both continuous
and satisfies (trivially) the convexity-concavity conditions in
Theorem~\ref{thm:minimax}.

By compactness of $\mathcal{D}$ and $C$ and continuity, we have that
\[
\inf_{D \in \mathcal{D}} \sup_{f \in C} \E_{Y \sim D}[f(Y)] =
\min_{D \in \mathcal{D}} \max_{f \in C} \E_{Y \sim D}[f(Y)] \text{ .}
\]
Next, by the minimax theorem~\ref{thm:minimax}, we have that
\begin{align*}
\min_{D \in \mathcal{D}} \max_{f \in C} \E_{Y \sim D}[f(Y)]
&= \max_{f \in C} \min_{D \in \mathcal{D}} \E_{Y \sim D}[f(Y)]
= \max_{f \in C} \min_{x \in T} f(x) \\
&= \sup_{\substack{f = \sum_{i=1}^k \lambda_i g_{w_i} \\ \sum_{i=1}^k \lambda_i
= 1, ~\lambda_i \geq 0 ~\forall i \in [k] \\ w_i \in \R^n~\forall~i \in [k],~ k \in \N}} \min_{x \in
T} f(x) \text{ .}
\end{align*}
Take $f = \sum_{i=1}^k \lambda_i g_{w_i}$ as above (now as a function on
$\R^n$) and let $K_f = \set{x \in \R^n: f(x) \leq 2}$. Our task now
reduces to showing that $\exists x \in T$ such that $x \in K_f$.
Since $s_b(T) \leq 1$, it suffices to show that $\gamma_n(K_f) =
\Pr[X \in K_f] \geq 1/2$, for $X$ the $n$-dimensional standard Gaussian, and that
$K_f$ is symmetric and convex. Since $f$ is a convex combination of symmetric and
convex functions, it follows that $K_f$ is symmetric and convex. Since $f$ is
non-negative, by Markov's inequality
\[
\Pr[X \notin K_f] = \Pr[f(X) \geq 2] \leq \frac{\E[f(X)]}{2} = \frac{1}{2}
\sum_{i=1}^k \lambda_i \E[g_{w_i}(X)] = \frac{1}{2} \sum_{i=1}^k \lambda_i =
\frac{1}{2} \text{.}
\]
Hence $\Pr[X \in K_f] = 1-\Pr[X \notin K_f] \geq 1/2$, as needed.
\end{proof}

\section{Analysis of the Recentering Procedure}

We now give the crucial tool to reduce the asymmetric setting to the symmetric
setting, namely, the recentering procedure corresponding to
Theorem~\ref{thm:recenter} in the introduction. In the next subsection
(subsection~\ref{sec:red-asym-to-sym-formal}), we detail how to use this
procedure to yield the desired reduction. 

\begin{proof}[Proof of Theorem~\ref{thm:recenter} (Recentering Procedure)]
We first recall the desired guarantees. For linearly independent vectors
$v_1,\dots,v_n \in \R^n$, a shift $t \in \sum_{i=1}^n [-v_i,v_i]$, and a convex
body $K \subseteq \R^n$ of Gaussian measure at least $1/2$, we would like to find
a fractional coloring $x \in [-1,1]^n$, such that for $p = \sum_{i=1}^n x_i v_i
- t$ and the subspace $W = {\rm span}(v_i: i \in A_x)$, the following holds:

\begin{enumerate}
\item $p \in K$.
\item $\gamma_{|A_x|}((K-p) \cap W) \geq \gamma_n(K)$.
\item $b((K-p) \cap W) = 0$.
\end{enumerate}

We shall prove this by induction on $n$. Note that the base case $n=0$, reduces
to the statement that $0 \in K$, which is trivial. 

For a fractional coloring $x \in [-1,1]^n$, we remember first that $A_x$ denotes
the set of fractional coordinates and that $L_x: [-1,1]^{A_x} \rightarrow
[-1,1]^n$ is the lifting function (see Definition~\ref{def:lifting} for
details).

Let $P = \sum_{i = 1}^n [-v_i,v_i] - t$. Define the function $f(p) = \log
\gamma_n(K-p)$ for $p \in P$. Compute the maximizer $p$ of $f$ over $P$. Let $x
\in [-1,1]^n$ satisfy that $p = \sum_{i=1}^n x_i v_i - t$ and let $W = {\rm
span}(v_i: i \in A_x)$. Note first that since $\gamma_n(K-p) \geq \gamma_n(K)
\geq 1/2$, by Lemma~\ref{lem:ball-cont} part~\ref{lem:ball-cont-big} we have
that $0 \in K-p \Rightarrow p \in K$.

Assume first that $p$ is in the interior of $P$. Then, since $p$ is a maximizer
and does not touch the boundary of $P$, by the KKT conditions we must have that
$\nabla f(p) = 0$. From here, direct computation reveals that $\nabla f(p) =
b(K-p)$. Again, since $y$ does not touch again constraints of $P$, we see that
$A_x = [n]$ and hence $W = \R^n$. Thus, as claimed, $x$ satisfies the conditions
of the theorem.

Assume now that $p \in \partial P$. From here, we must have that $|A_x| < n$
and hence $\dim(W) = |A_x| < n$. Next, by Lemma~\ref{lem:origincut}, we see that 
\[
\gamma_{|A_x|}((K-p) \cap W) \geq \gamma_n(K-p) \geq \gamma_n(K) \geq 1/2 \text{ .}
\]
By Lemma~\ref{lem:partial-red} part~\ref{lem:partial-red-stat}, 
for $z \in [-1,1]^{A_x}$, we have that
\[
\sum_{i \in A_x} z_i v_i - \sum_{i \in A_x} x_i v_i \in K-p
\Leftrightarrow \sum_{i=1}^n L_x(z)_i v_i - t \in K \text{ .}
\]
Thus, we may apply induction on the vectors $(v_i: i \in A_x)$, the shift
$\sum_{i \in A_x} x_i v_i$ and convex body $(K-p) \cap W$, and recover $z \in
[-1,1]^{A_x}$, such that for $W_z = {\rm span}(v_i: i \in A_z)$, we get 
\begin{equation}
\label{eq:rec-eq-1}
\sum_{i \in A_x} z_i v_i - \sum_{i \in A_x} x_i v_i \eqdef p_z \in K-p
\text{ ,}
\end{equation}
and 
\begin{equation}
\label{eq:rec-eq-2}
\gamma_{|A_z|}((K-p-p_z) \cap W_z) \geq \gamma_{|A_x|}(K-p) \text{ ,}
\end{equation}
and 
\begin{equation}
\label{eq:rec-eq-3}
b((K-p-p_z) \cap W_z) = 0 \text{ .}
\end{equation}
We now claim that $w = L_x(z)$ satisfies the conditions of the theorem.
To see this, note that by Lemma~\ref{lem:partial-red} part~\ref{lem:partial-red-id},
\[
\sum_{i=1}^n w_i v_i - t = \sum_{i=1}^n L_x(z)_i v_i - t = 
\sum_{i \in A_x} z_i v_i - \sum_{i \in A_x} x_i v_i + p = p_z+p \text{ .}
\]
Furthermore, since $p_z \in K-p$ we have that $p_z+p \in K$. Next, clearly
$A_{w} = A_{z}$ and hence ${\rm span}(v_i: i \in A_w) = W_z$. The claim
thus follows by
combining~\eqref{eq:rec-eq-1},~\eqref{eq:rec-eq-2},~\eqref{eq:rec-eq-3}.

\end{proof}

\subsection{Reduction from Asymmetric to Symmetric Convex Bodies}
\label{sec:red-asym-to-sym-formal}

As explained in the introduction, the recentering procedure allows us to
reduce Banaszczyk's vector balancing theorem for all convex bodies to the
symmetric case, and in particular, to the task of subgaussian sampling. 
We give this reduction in detail below.

Let $v_1,\dots,v_n$, $t$, and $K$ be as in Theorem~\ref{thm:recenter}, and let
$x \in [-1,1]^n$ the fractional coloring guaranteed by the recentering
procedure. As in Theorem~\ref{thm:recenter}, let $p = \sum_{i=1}^n x_i v_i
- t$ and $W = {\rm span}(v_i: i \in A_x)$. We shall now assume that $\max_{i \in
  [n]} \|v_i\|_2 \leq c$, a constant to be chosen later.  From here, by
Lemma~\ref{lem:partial-red} part~\ref{lem:partial-red-stat}, for $\chi \in
\set{-1,1}^{A_x}$,  
\[
\sum_{i \in A_x} \chi_i v_i - \sum_{i \in A_x} x_i v_i \in (K-p) \cap W
\Leftrightarrow \sum_{i \in [n]} L_x(\chi)_i v_i - t \in K \text{ .}
\]
Let $C = (K-p) \cap W$ and $d = |A_x|$. By the guarantees on the recentering
procedure, we know that $\gamma_d(C) \geq 1/2$ and $b(C) = 0$. Then by
Lemma~\ref{lem:med-to-exp} in section~\ref{sec:geom-est}, for $X \in W$ the 
$d$-dimensional standard Gaussian on $W$, we have that
\[
\E[\|X\|_{C \cap -C}] \leq 2\E[\|X\|_C] \leq 2(1+\pi\sqrt{8 \ln 2}) \text{ .}
\]
Hence by Markov's inequality, $\Pr[X \in 4(1+\pi\sqrt{8 \ln 2})(C \cap -C))]
\geq 1/2$. At this point, using Banaszczyk's theorem in the linear setting for
symmetric bodies (which loses a factor of $2$), if the $\ell_2$ norm bound $c$
satisfies 
\[
1/c \geq 10 \cdot 4(1+\pi\sqrt{8 \ln 2}) \text{ ,}
\]
then by homogeneity there exists $\chi \in \set{-1,1}^{A_x}$ such that 
\[
\sum_{i \in A_x} \chi_i v_i - \sum_{i \in A_x} x_i v_i \in C \cap -C 
\Rightarrow \sum_{i=1}^n L_x(\chi)_i v_i - t \in K \text{ .}
\]
Hence, the reduction to the symmetric case follows. 

We can also achieve the same with a subgaussian sampler, though the vectors
should be shorter. In particular, applying corollary~\ref{cor:subg-exist}, if
\[
1/c \geq \sqrt{2} \cdot 10 \cdot C_T \cdot 4(1+\pi\sqrt{8 \ln 2}) \text{,}
\]
then there exists a distribution on colorings $\chi \in \set{-1,1}^{A_x}$ such
that $\sum_{i \in A_x} \chi_i v_i - \sum_{i \in A_x} x_i v_i$ is $(C_T \cdot
4(1+\pi\sqrt{8 \ln 2}))^{-1}$-subgaussian. From here, by
Lemma~\ref{lem:subg-red} part~\ref{lem:subg-red-asym} applied to $C$, 
\[
\Pr\left[(\sum_{i \in A_x} \chi_i v_i - \sum_{i \in A_x} x_i v_i) \in
C \cap -C\right] \geq 1/2 \text{ ,}
\]
as needed.

\subsection{Geometric Estimates}
\label{sec:geom-est}

In this section, we present the required estimates for the proof of
Lemma~\ref{lem:subg-red}.

The following theorem of Lata{\l}a and Oleszkiewicz will allow us to translate
bounds on Gaussian measure to bounds on Gaussian norm expectations.

\begin{theorem}[\cite{LO99}]
\label{thm:gaussian-dilation}
Let $X \in \R^n$ be a standard $n$-dimensional Gaussian. Let $K \subseteq \R^n$ be a
symmetric convex body, and let $\alpha \geq 0$ be chosen such that $\Pr[X \in K]
= \Pr[|X_1| \leq \alpha]$. Then the following holds:
\begin{enumerate}
\item For $t \in [0,1]$, $\Pr[X \in tK] \leq \Pr[|X_1| \leq t\alpha]$.
\item For $t \geq 1$, $\Pr[X \in tK] \geq \Pr[[X_1| \leq t\alpha]$.
\end{enumerate}
\end{theorem}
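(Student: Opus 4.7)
The plan is to prove the Latała--Oleszkiewicz $S$-inequality by reducing to a one-parameter variational comparison and carrying out a derivative analysis. First I would reduce to the case where $K$ is a bounded symmetric convex body, indeed a symmetric polytope, since any symmetric convex body can be approximated from within by such polytopes, and both sides of the claimed inequalities are continuous under Hausdorff convergence. In this setting, the gauge $\|\cdot\|_K$ from Proposition~\ref{prop:gauge} is a norm, and $\phi_K(t) \eqdef \gamma_n(tK) = \Pr[\|X\|_K \leq t]$ is a smooth, strictly increasing function on $[0,\infty)$ with $\phi_K(0) = 0$ and $\phi_K(\infty) = 1$.

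Second, I would reformulate the claim as a statement about $\phi_K$ versus the one-dimensional slab profile $\phi_0(t) \eqdef 2\Phi(t\alpha) - 1$. The choice of $\alpha$ gives $\phi_K(1) = \phi_0(1)$, so the two parts of the theorem together are equivalent to saying that the curves $\phi_K$ and $\phi_0$ cross exactly once, at $t = 1$, with $\phi_K \leq \phi_0$ on $[0,1]$ and $\phi_K \geq \phi_0$ on $[1,\infty)$. A clean sufficient condition is the following derivative comparison: at every $t_\star > 0$ with $\phi_K(t_\star) = \phi_0(t_\star)$ one has $\phi_K'(t_\star) \geq \phi_0'(t_\star)$. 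This forces every crossing of $\phi_K - \phi_0$ to be from below, hence the crossings are unique, and combined with $\phi_K(1) = \phi_0(1)$ and the boundary behavior at $0$ and $\infty$ we get the desired sign pattern.

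Third, the derivative $\phi_K'(t)$ is expressible via the coarea formula as a Gaussian surface integral over $\partial(tK)$, and the core of the argument is an extremality statement: among all symmetric convex bodies with prescribed $\gamma_n(tK)$, the symmetric slab minimizes this Gaussian surface measure. Given this extremality, the derivative inequality of the previous paragraph follows, and the theorem is proved. The log-concavity of Gaussian measure (Pr\'ekopa--Leindler) and Ehrhard's inequality, both recalled in the preliminaries, give some structural control of $\phi_K$, for instance the concavity of $\Phi^{-1}(\phi_K(\cdot))$ along Minkowski combinations.

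The main obstacle, and by far the most delicate ingredient, is precisely this extremality step, which does \emph{not} follow from Ehrhard alone, because Ehrhard compares Minkowski combinations rather than scalar dilations, and symmetric slabs are not obviously Minkowski-extremal. The original argument of Latała and Oleszkiewicz handles this via a semigroup argument: they study $u(s,x) \eqdef (P_s \mathds{1}_K)(x)$ where $P_s$ is the Ornstein--Uhlenbeck semigroup, derive a differential inequality for $u$, and use the central symmetry of $K$ to integrate it into the required surface-integral comparison. An alternative route I would attempt is induction on the number of facets of a polytopal approximation $K = \bigcap_{i=1}^m \set{x : |\pr{a_i}{x}| \leq 1}$: the base case $m = 1$ is the slab itself (tautological), and the inductive step reduces to a two-dimensional Gaussian comparison where Ehrhard's inequality becomes sharp enough to conclude. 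Either route, the approximation, continuity, and single-crossing reductions are routine, and all the essential work is in this variational step.
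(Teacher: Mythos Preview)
The paper does not prove this theorem: it is quoted as a black box from Lata\l a and Oleszkiewicz~\cite{LO99} and then applied in the proof of Lemma~\ref{lem:med-to-exp-sym}. So there is no ``paper's own proof'' to compare against; the authors simply invoke the result.

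As for your sketch itself, the overall architecture is the standard one and is correct: approximate by polytopes, reformulate as a single-crossing statement for $\phi_K$ versus the slab profile, and reduce to a derivative (equivalently, Gaussian-surface-integral) comparison at any level set. Where your write-up becomes thin is exactly where the real content lies. Your description of the original argument as an Ornstein--Uhlenbeck semigroup computation is not accurate; Lata\l a and Oleszkiewicz proceed by a symmetrization adapted to Gaussian measure that reduces the $n$-dimensional inequality to a two-dimensional one, and then carry out a delicate direct analysis in the plane. Your proposed alternative, induction on the number of facets with the inductive step handled by Ehrhard in two dimensions, is not a known route to the $S$-inequality and you have not indicated why the two-dimensional residual problem would be any easier than the original; Ehrhard controls Minkowski sums, not dilations, and the gap between the two is precisely the difficulty here. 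In short, the reductions you outline are routine and correct, but the extremality step remains a genuine gap in your proposal rather than a filled-in argument.
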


Using the above theorem we derive can derive bound goods bounds on Gaussian norm
expectations. We note that much weaker and more elementary estimates than those
given in~\ref{thm:gaussian-dilation} would suffice (e.g.~Borell's inequality),
however we use the stronger theorem to achieve a better constant.

\begin{lemma}
\label{lem:med-to-exp-sym} Let $X \in \R^n$ be a standard $n$-dimensional Gaussian.
Let $K \subseteq \R^n$ be a symmetric convex body such that $\Pr[X \in K] \geq
1/2$. Then $\E[\|X\|_K] \leq 1.5$.
\end{lemma}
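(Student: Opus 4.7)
The plan is to use the layer-cake formula $\E[\|X\|_K] = \int_0^\infty \Pr[\|X\|_K > t]\,dt = \int_0^\infty (1 - \Pr[X \in tK])\,dt$ and then compare the integrand against the corresponding one-dimensional quantity using Theorem~\ref{thm:gaussian-dilation}. The scaling parameter $\alpha$ in that theorem is defined by $\Pr[|X_1| \leq \alpha] = \Pr[X \in K]$, so from $\Pr[X \in K] \geq 1/2$ we immediately get $\alpha \geq \alpha^* := \Phi^{-1}(3/4)$.

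The key observation is that Theorem~\ref{thm:gaussian-dilation} part~(2) gives a useful upper bound on $1 - \Pr[X \in tK]$ only in the regime $t \geq 1$; for $t \in [0,1]$ it gives the wrong direction. I would therefore split the integral at $t = 1$, using the trivial bound $1 - \Pr[X \in tK] \leq 1$ on $[0,1]$ (contributing at most $1$), and invoking Theorem~\ref{thm:gaussian-dilation} part~(2) for $t \geq 1$:
\begin{equation*}
\int_1^\infty (1 - \Pr[X \in tK])\,dt \;\leq\; \int_1^\infty \Pr[|X_1| > t\alpha]\,dt \;=\; \frac{1}{\alpha}\int_\alpha^\infty \Pr[|X_1| > s]\,ds \;=\; \frac{1}{\alpha}\,\E[(|X_1| - \alpha)_+].
\end{equation*}

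The remaining task is to show that $\frac{1}{\alpha}\E[(|X_1| - \alpha)_+] \leq 1/2$ uniformly for $\alpha \geq \alpha^*$. Using the identity $\int_\alpha^\infty s\phi(s)\,ds = \phi(\alpha)$ (where $\phi$ is the standard normal density), this quantity equals $\frac{2\phi(\alpha)}{\alpha} - 2(1 - \Phi(\alpha))$. Differentiating in $\alpha$ yields $-2\phi(\alpha)/\alpha^2 < 0$, so the function is strictly decreasing, and it suffices to check the bound at $\alpha = \alpha^*$. Plugging in $1 - \Phi(\alpha^*) = 1/4$, the desired inequality reduces to the numerical fact $2\phi(\alpha^*)/\alpha^* \leq 1$, i.e.\ $\alpha^* \geq 2\phi(\alpha^*)$, which holds since $\alpha^* \approx 0.6745$ and $2\phi(\alpha^*) \approx 0.6355$.

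The only mild subtlety, and thus the "obstacle" such as it is, is recognizing that Theorem~\ref{thm:gaussian-dilation} cannot be applied across the full range $t \in [0,\infty)$, forcing the split at $t=1$; everything else is a routine one-dimensional computation. Adding the two pieces gives $\E[\|X\|_K] \leq 1 + 0.4425 < 1.5$, as claimed.
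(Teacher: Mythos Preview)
Your proof is correct and follows essentially the same approach as the paper: the layer-cake formula combined with Theorem~\ref{thm:gaussian-dilation}, then a one-dimensional Gaussian computation. The only cosmetic difference is the normalization: the paper first rescales to $mK$ with $\Pr[X\in mK]=1/2$ (so its $\alpha$ is exactly your $\alpha^*$), splits the integral at $t=m$, and then uses $m\le 1$; you instead apply the dilation theorem directly to $K$, split at $t=1$, and handle the variable $\alpha\ge\alpha^*$ via the monotonicity of $\alpha\mapsto 2\phi(\alpha)/\alpha-2(1-\Phi(\alpha))$. Both routes land on the identical numerical bound $1/2+2\phi(\alpha^*)/\alpha^*\approx 1.443<1.5$.
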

\begin{proof}
Let $m > 0$ satisfy $\Pr[X \in mK] = \Pr[\|X\|_K \leq m] = \frac{1}{2}$.
Note that $m \leq 1$ by our assumption on $K$. Let $\alpha > 0$ denote the
number such that $\Pr[|X_1| \leq \alpha] = \gamma_1([-\alpha,\alpha]) =
\frac{1}{2}$. Here a numerical calculation reveals $\alpha \geq .67$.

By Theorem \ref{thm:gaussian-dilation}, we have that $\Pr[\|X\|_K \geq tm] \leq
\Pr[|X_1| \geq t\alpha]$ for $t \geq 1$. Thus,
\begin{align*}
\E[\|X\|_K] &= \int_0^\infty \Pr[\|X\|_K \geq t] dt \leq m + \int_m^\infty \Pr[\|X\|_K \geq t] dt \leq m + \int_m^\infty
\Pr[|X_1| \geq \frac{\alpha t}{m}]dt \\ &= (1 + \frac{1}{\alpha}
\int_{\alpha}^\infty \Pr[|X_1| \geq t] dt)m \leq 1 + \frac{1}{\alpha}
\int_{\alpha}^\infty \Pr[|X_1| \geq t] dt \\
&= 1 + \frac{1}{\alpha}
\int_{\alpha}^\infty \frac{2}{\sqrt{2\pi}}(t-\alpha)e^{-t^2/2} dt
= 1 + \frac{1}{\alpha}(\sqrt{\frac{2}{\pi}} e^{-\alpha^2/2} - \alpha/2)
= 1/2 + \sqrt{\frac{2}{\pi}}\frac{e^{-\alpha^2/2}}{\alpha} \\
&\leq 1/2 + \sqrt{\frac{2}{\pi}} \frac{e^{-(.67)^2/2}}{.67} \leq 1.5
\end{align*}
\end{proof}

The following lemma shows that we can find a large ball in $K$ centered around
the origin, if either its Gaussian mass is large or its barycenter is close to
the origin.

\begin{lemma}
\label{lem:ball-cont}
Let $K \subseteq \R^n$ be a convex body. Then the following holds:
\begin{enumerate}
\item \label{lem:ball-cont-big} If for $r \geq 0$, $\gamma_1((-\infty,r]) \leq \gamma_n(K)$, then
$r B_2^n \subseteq K$. \\
In particular, if $\gamma_n(K) = 1/2+\eps$, for $\eps
\geq 0$, this holds for $r = \sqrt{2\pi}\eps$.
\item \label{lem:ball-cont-bar} If for $0 \leq r \leq 1/2$, $\gamma_1([-2r,2r]) \leq \gamma_n(K)$ and
$\|b(K)\|_2 \gamma_n(K) \leq \frac{1}{\sqrt{2\pi}} \cdot \frac{r^2}{2}$, then $r
B_2^n \subseteq K$. In particular, this holds for $r=1/4$ if $\gamma_n(K) \geq
1/2$ and $\|b(K)\|_2 \leq \frac{1}{32\sqrt{2\pi}}$. 
\end{enumerate}
\end{lemma}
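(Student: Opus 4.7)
The plan is to argue both parts by contradiction using hyperplane separation. If $rB_2^n \not\subseteq K$, we pick $x \in rB_2^n \setminus K$ and strictly separate $x$ from the closed convex body $K$: there exist $\theta \in S^{n-1}$ and $c \in \R$ with $K \subseteq H_c \eqdef \{y \in \R^n : \pr{y}{\theta} \leq c\}$ and $\pr{x}{\theta} > c$. Since $\pr{x}{\theta} \leq \|x\|_2 \leq r$, we get $c < r$. Part~1 then follows at once, since $\gamma_n(K) \leq \gamma_n(H_c) = \gamma_1((-\infty, c]) < \gamma_1((-\infty, r])$, contradicting the hypothesis. The ``in particular'' clause is obtained from the density bound $\gamma_1([0, r]) \leq r/\sqrt{2\pi}$: this gives $\gamma_1((-\infty, r]) \leq 1/2 + r/\sqrt{2\pi}$, so $r = \sqrt{2\pi}\eps$ suffices when $\gamma_n(K) = 1/2 + \eps$.

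For Part~2, the plan is to extract more information from the separating halfspace by using the barycenter hypothesis. Consider the one-dimensional pushforward measure $\nu$ on $\R$ of the Gaussian measure restricted to $K$ under the map $y \mapsto \pr{y}{\theta}$. Then $\nu$ is supported on $(-\infty, c]$, has total mass $M \eqdef \gamma_n(K)$, density at most the standard Gaussian density $\tfrac{1}{\sqrt{2\pi}}e^{-t^2/2}$ (from Fubini along slices perpendicular to $\theta$), and first moment $\int t\, d\nu(t) = M\pr{b(K)}{\theta}$. If $\gamma_1((-\infty, c]) < M$ there is already a contradiction via Gaussian mass comparison, so we may assume $\gamma_1((-\infty, c]) \geq M$. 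The maximum of $\int t\, d\nu(t)$ over admissible $\nu$ is achieved by pushing all mass as far right as possible, giving the extremizer $\nu^* = \gamma_1|_{[b,c]}$ with $\Phi(c) - \Phi(b) = M$. Using $M \geq \gamma_1([-2r, 2r]) = \Phi(2r) - \Phi(-2r)$ and $c < r < 2r$ yields $\Phi(b) = \Phi(c) - M < \Phi(r) - (\Phi(2r) - \Phi(-2r)) \leq \Phi(-2r)$, hence $b < -2r$. The antiderivative identity $\int_{-\infty}^{s} t\, d\gamma_1(t) = -\tfrac{1}{\sqrt{2\pi}}e^{-s^2/2}$ then gives
\[
\int t\, d\nu(t) \;\leq\; \tfrac{1}{\sqrt{2\pi}}\bigl(e^{-b^2/2} - e^{-c^2/2}\bigr) \;<\; \tfrac{1}{\sqrt{2\pi}}\bigl(e^{-2r^2} - e^{-r^2/2}\bigr).
\]

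The final analytic step is the elementary inequality $e^{-r^2/2} - e^{-2r^2} \geq r^2/2$ for $r \in [0, 1/2]$, which will follow from the two-term Taylor bounds $e^{-r^2/2} \geq 1 - r^2/2$ and $e^{-2r^2} \leq 1 - 2r^2 + 2r^4$; subtracting gives $e^{-r^2/2} - e^{-2r^2} \geq r^2(3/2 - 2r^2) \geq r^2$ for $r \leq 1/2$. Combining, $\int t\, d\nu(t) < -\tfrac{r^2}{2\sqrt{2\pi}}$, so $M\|b(K)\|_2 \geq M|\pr{b(K)}{\theta}| > \tfrac{r^2}{2\sqrt{2\pi}}$, contradicting $\|b(K)\|_2 \gamma_n(K) \leq \tfrac{r^2}{2\sqrt{2\pi}}$. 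The ``in particular'' case with $r = 1/4$ then reduces to verifying $\gamma_1([-1/2, 1/2]) = 2\Phi(1/2) - 1 \approx 0.38 < 1/2 \leq \gamma_n(K)$ and $\tfrac{r^2}{2\sqrt{2\pi}} = \tfrac{1}{32\sqrt{2\pi}}$. The hardest part will be Part~2's moment bound: correctly identifying the right-most extremizer $\nu^*$, controlling its left endpoint $b$ via the Gaussian mass hypothesis $M \geq \gamma_1([-2r, 2r])$, and carrying out the accompanying bookkeeping cleanly.
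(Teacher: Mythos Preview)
Your approach is essentially identical to the paper's: both parts argue by hyperplane separation, and for Part~2 both reduce to a one-dimensional extremal problem for the first moment under the density constraint $d\nu/dt \le \tfrac{1}{\sqrt{2\pi}}e^{-t^2/2}$, identify the same right-packed extremizer $\gamma_1|_{[b,c]}$, bound $b<-2r$ via the mass hypothesis, and finish with the same two-term Taylor estimate on $e^{-r^2/2}-e^{-2r^2}$. One bookkeeping point to watch (which the paper also glosses over by asserting $0\le c$ without justification): your step $e^{-c^2/2}>e^{-r^2/2}$ needs $|c|<r$, not just $c<r$; the residual case $c\le -r$ is easily handled separately, since then $b(K)\in K\subseteq H_c$ gives $\|b(K)\|_2\ge r$, and combining with $M\ge\gamma_1([-2r,2r])\ge \tfrac{4r}{\sqrt{2\pi}}e^{-1/2}>\tfrac{r}{2\sqrt{2\pi}}$ already yields $M\|b(K)\|_2>\tfrac{r^2}{2\sqrt{2\pi}}$.
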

\begin{proof}
We begin with part (1). Assume for the sake of contradiction that there exist $x
\in \R^n$, $\|x\|_2 \leq r$, such that $x \notin K$. Then, by the separator
theorem there exists a unit vector $\theta \in S^{n-1}$, such that $\max_{z \in
K} \pr{\theta}{z} < \pr{\theta}{x}$. In particular, $K$ is strictly contained in
the halfspace $H = \set{z \in \R^n: \pr{\theta}{z} \leq g}$ where $g =
\pr{\theta}{x}$. Thus $\gamma_n(K) < \gamma_n(H) = \gamma_1((-\infty,g])$. But
note that by Cauchy-Schwarz $g \leq \|\theta\|_2 \|x\|_2 = r$, a clear
contradiction to the assumption on $r$. 

For the furthermore, we first see that 
\[
\frac{1}{\sqrt{2\pi}} \int_0^{\sqrt{2\pi} \eps} e^{-x^2/2} dx \leq \frac{(\sqrt{2\pi}
\eps)}{\sqrt{2\pi}} = \eps \text{ .}
\]
Hence, $\gamma_1((-\infty,\sqrt{2\pi} \eps]) \leq 1/2 + \eps = \gamma_n(K)$, as
needed.

We now prove part (2). Similarly to the above, if $K$ does not contain a ball of
radius $r$, then there exists a halfspace $H = \set{z \in \R^n:
\pr{\theta}{z} \leq r-\eps}$, for some $0 < \eps \leq r$, such that $K \subseteq H$.
By rotational invariance of the Gaussian measure, we may assume that $\theta =
e_1$. Now let $K_t = \set{x \in K: x_1 = t}$ and let $f(t) =
\gamma_{n-1}(K_t-te_1)$, where clearly $f(t) \in [0,1]$. From here, we see that 
\begin{equation}
\label{eq:ball-cont-1}
\begin{split}
\int_{-\infty}^{r-\eps} t \frac{e^{-t^2/2}}{\sqrt{2\pi}}
\frac{f(t)}{\int_{-\infty}^{r-\eps} \frac{e^{-t^2/2}}{\sqrt{2\pi}} f(t) dt} dt &= 
\int_{-\infty}^{r-\eps} t \frac{e^{-t^2/2}}{\sqrt{2\pi}} 
\frac{\gamma_{n-1}(K_t-te_1)}{\gamma_n(K)} dt \\
&= b(K)_1 \geq -\|b(K)\|_2\text{ .}
\end{split}
\end{equation}
Thus to get a contradiction, it suffices to show that
\begin{equation}
\label{eq:ball-cont-3}
\int_{-\infty}^{r-\eps} t \frac{e^{-t^2/2}}{\sqrt{2\pi}} f(t) dt < - 
\gamma_n(K) \|b(K)\|_2 \text{ ,}
\end{equation}
for any function $f:(-\infty,r-\eps] \rightarrow [0,1]$ satisfying
\begin{equation}
\label{eq:ball-cont-2}
\int_{-\infty}^{r-\eps} \frac{e^{-t^2/2}}{\sqrt{2\pi}} f(t) dt =
\gamma_n(K) \text{.}
\end{equation}
From here, it is easy to see that the function $f$ maximizing the left hand
side of~\eqref{eq:ball-cont-3} satisfying~\eqref{eq:ball-cont-2} must be the
indicator function of an interval with right end point $r-\eps$, i.e.~the
function $f$ which pushes mass ``as far to the right'' as possible. Now let $l
\leq r-\eps$ denote the unique number such that $\gamma_1([l,r-\eps]) =
\gamma_n(K)$, noting that the optimizing $f$ is now the indicator function of
$[l,r-\eps]$. From here, a direct computation reveals that
\[
\int_l^{r-\eps} t \frac{e^{-t^2/2}}{\sqrt{2\pi}} =
\frac{1}{\sqrt{2\pi}}(e^{-l^2/2} - e^{-(r-\eps)^2/2}) <
\frac{1}{\sqrt{2\pi}}(e^{-l^2/2} - e^{-r^2/2}) \text{ .}
\]
Now since $\gamma_1([l,r]) > \gamma_n(K) \geq \gamma_1([-2r,2r])$, we must have
that $l \leq -2r$. Using the inequalities $1+x \leq e^x \leq 1+x+x^2$, for $|x|
\leq 1/2$, we have that
\begin{align*}
\frac{1}{\sqrt{2\pi}}(e^{-l^2/2} - e^{-r^2/2}) 
&\leq \frac{1}{\sqrt{2\pi}}(e^{-(2r)^2/2} - e^{-r^2/2}) \\
&\leq \frac{1}{\sqrt{2\pi}}(1-(2r)^2/2+(2r)^4/4-(1-r^2/2)) \\
&= \frac{1}{\sqrt{2\pi}}((4r^2)r^2-3r^2/2) 
\leq - \frac{1}{\sqrt{2\pi}} \cdot r^2/2 \text{ .} \left( \text{ since $r \leq
1/2$ } \right)
\end{align*}
But by assumption $-\frac{1}{\sqrt{2\pi}} \cdot r^2/2 \leq
-\gamma_n(K)\|b(K)\|_2$, yielding the desired contradiction.

For the furthermore, it follows by a direct numerical computation.
\end{proof}

We will now extend the bound to asymmetric convex bodies having their barycenter
near the origin. To do this, we will need the standard fact that the gauge
function of a body is Lipschitz when it contains a large ball. We recall that a
function $f: \R^n \rightarrow \R$ is $L$-Lipschitz if for $x,y \in \R^n$,
$|f(x)-f(y)| \leq L \|x-y\|_2$.

\begin{lemma}
\label{lem:lipsch}
Let $K\subseteq \mathbb{R}^n$ be a convex body satisfying $r B_2^n \subseteq K$
for some $r>0$.  Then, the gauge function $\|\cdot\|_K: \R^n \rightarrow \R_+$
of $K$ is $(1/r)$-Lipschitz.
\end{lemma}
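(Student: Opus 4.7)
\medskip

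\noindent\textbf{Proof plan.}
The plan is to prove the Lipschitz bound by reducing it to two ingredients that are already available: the triangle inequality for the gauge function (Proposition~\ref{prop:gauge}) and the observation that containing the ball $rB_2^n$ immediately bounds $\|\cdot\|_K$ in terms of $\|\cdot\|_2$. Note that $K$ contains $rB_2^n$, so it contains $0$ in its interior, which means $\|\cdot\|_K$ is a well-defined, finite, positively homogeneous, subadditive functional on $\R^n$ in the sense of Proposition~\ref{prop:gauge}; we will use these properties freely.

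First I would show the pointwise bound $\|v\|_K \leq \|v\|_2/r$ for every $v \in \R^n$. This is the core geometric step and it follows directly from $rB_2^n \subseteq K$: for $v \ne 0$, the vector $(r/\|v\|_2)\,v$ lies in $rB_2^n$ hence in $K$, so by the definition of the gauge function $\|(r/\|v\|_2)\,v\|_K \leq 1$, and by positive homogeneity $\|v\|_K \leq \|v\|_2/r$. The case $v=0$ is trivial.

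Second I would combine this with the triangle inequality in both directions to obtain the two-sided Lipschitz bound. Namely, $\|x\|_K \leq \|y\|_K + \|x-y\|_K$ gives $\|x\|_K - \|y\|_K \leq \|x-y\|_K \leq \|x-y\|_2/r$, and symmetrically $\|y\|_K - \|x\|_K \leq \|y-x\|_K \leq \|y-x\|_2/r = \|x-y\|_2/r$, so $|\|x\|_K - \|y\|_K| \leq \|x-y\|_2/r$. The only subtlety worth flagging is that $K$ is not assumed to be symmetric, so $\|x-y\|_K$ need not equal $\|y-x\|_K$; this is precisely why one has to invoke the upper bound $\|v\|_K \leq \|v\|_2/r$ separately for both $v = x-y$ and $v = y-x$, rather than relying on any symmetry of the gauge. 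I do not anticipate any real obstacle here; the lemma is essentially immediate from Proposition~\ref{prop:gauge} once the pointwise estimate is observed.
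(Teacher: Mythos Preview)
Your proposal is correct and follows essentially the same approach as the paper: both arguments use the triangle inequality for $\|\cdot\|_K$ together with the bound $\|v\|_K \leq \|v\|_{rB_2^n} = \|v\|_2/r$ coming from $rB_2^n \subseteq K$, then handle the other direction by swapping $x$ and $y$. Your explicit remark that asymmetry of $K$ forces one to apply the pointwise bound separately to $x-y$ and $y-x$ is a nice clarification that the paper leaves implicit.
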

\begin{proof}
We need to show
\[ 
\left| \|x\|_K-\|y\|_K \right| \le \frac{1}{r}\|x-y\|_2 \quad \textrm{ for all
} x,y\in\mathbb{R}^n \text{ .}
\]
To see this, note that
\begin{eqnarray*}
\|x\|_K & = & \|(x-y)+y\|_K \\
& \le & \| x-y\|_K+\|y\|_K \qquad\textrm{(by triangle inequality)}\\
& \le & \| x-y\|_{r B_2^n}+\|y\|_K \qquad\textrm{(since } r B_2^n \subseteq K \textrm{ )} \\ 
& = & \frac{1}{r}\| x-y\|_2+\| y\|_K \text{ ,}
\end{eqnarray*}
yielding $\|x\|_K-\|y\|_K \leq \frac{1}{r}\|x-y\|_2$. The other inequality 
follows by switching $x$ and $y$. 
\end{proof}

We will also need the following concentration inequality of Maurey and Pisier.

\begin{theorem}[Maurey-Pisier] 
\label{thm:gaussian-con}
Let $f:\R^n \rightarrow \R$ be an $L$-Lipschitz function. Then for $X \in \R^n$
standard Gaussian and $t \geq 0$, we have the inequalities
\[
\Pr[f(X)-\E[f(X)] \geq tL] \leq e^{-\frac{2t^2}{\pi^2}} \quad \text{ and } \quad
\Pr[f(X)-\E[f(X)] \leq -tL] \leq e^{-\frac{2t^2}{\pi^2}} \text{ .}
\]
\end{theorem}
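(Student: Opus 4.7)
The plan is to prove this classical Gaussian concentration inequality by the standard rotational interpolation argument (often attributed to Maurey, Pisier, and in its sharp form with constant $\pi^2/4$ to Maurey). By a routine approximation I may assume $f$ is smooth with $\|\nabla f(x)\|_2 \le L$ for every $x$, so that the second inequality follows from the first applied to $-f$. Fix $\lambda>0$; the goal is to establish the Laplace-transform bound
\[
\E\!\left[\exp\!\bigl(\lambda(f(X)-\E f(X))\bigr)\right] \le \exp\!\bigl(\pi^2 \lambda^2 L^2 / 8\bigr),
\]
after which Markov's inequality and optimization over $\lambda = 4t/(\pi^2 L)$ yield the stated tail bound $e^{-2t^2/\pi^2}$.

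The core step is a coupling/interpolation. I would introduce an independent copy $Y$ of $X$ and the rotation $X_\theta = X\sin\theta + Y\cos\theta$ for $\theta\in[0,\pi/2]$, together with its derivative $Y_\theta = X\cos\theta - Y\sin\theta$. A key observation is that for each fixed $\theta$, the pair $(X_\theta, Y_\theta)$ is an orthogonal rotation of $(Y,X)$, hence is itself a pair of independent standard Gaussians on $\R^n$. Writing
\[
f(X)-f(Y) = \int_0^{\pi/2} \frac{d}{d\theta} f(X_\theta)\,d\theta
= \int_0^{\pi/2} \langle \nabla f(X_\theta), Y_\theta \rangle\,d\theta,
\]
I rescale the integral by $\pi/2$ and apply Jensen's inequality to the convex function $u\mapsto e^u$ with respect to uniform measure on $[0,\pi/2]$ to obtain
\[
\exp\!\bigl(\lambda(f(X)-f(Y))\bigr) \le \frac{2}{\pi}\int_0^{\pi/2} \exp\!\Bigl(\tfrac{\pi}{2}\lambda\,\langle \nabla f(X_\theta), Y_\theta \rangle\Bigr)\,d\theta.
\]

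Now I take expectations and exploit independence. Conditioning on $X_\theta$ and using that $Y_\theta$ is an independent standard Gaussian, the inner product $\langle \nabla f(X_\theta), Y_\theta\rangle$ is a centered Gaussian with variance $\|\nabla f(X_\theta)\|_2^2 \le L^2$. Hence its conditional moment generating function is bounded by $\exp(\pi^2 \lambda^2 L^2 / 8)$, uniformly in $\theta$ and in the conditioning. Integrating over $\theta$ preserves the bound, so $\E[\exp(\lambda(f(X)-f(Y)))] \le \exp(\pi^2 \lambda^2 L^2 / 8)$. Finally, Jensen's inequality applied to $Y$ gives $\E_Y[\exp(-\lambda f(Y))] \ge \exp(-\lambda \E f(Y))$, which lets me replace $f(Y)$ by its mean and obtain the desired Laplace-transform estimate.

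The argument is essentially mechanical; the one place care is needed is ensuring the joint law of $(X_\theta, Y_\theta)$ is genuinely that of two independent standard Gaussians (so that the conditional variance bound is valid) and that the Jensen step with the $\pi/2$ scaling is set up to produce exactly the sharp constant. These are routine but are where the specific constant $\pi^2$ in the exponent comes from, so I would double-check the numerology before concluding.
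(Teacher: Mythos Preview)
Your proof is correct and is precisely the classical Maurey--Pisier rotational interpolation argument with the sharp constant $\pi^2/4$ in the variance. The paper, however, does not prove this theorem at all: it is stated as a known concentration result (attributed to Maurey--Pisier) and used as a black box in the proof of Lemma~\ref{lem:med-to-exp} and Lemma~\ref{lemma:volincr}. So there is nothing to compare against; you have supplied a proof where the paper simply quotes the result.
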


We now prove the main estimate for asymmetric convex bodies.

\begin{lemma}
\label{lem:med-to-exp} 
Let $K \subseteq \R^n$ be a $0$-centered convex body and $X \in \R^n$ be the
standard $n$-dimensional Gaussian. Then the following holds:
\begin{enumerate}
\item \label{lem:med-to-exp-1} $\E[\|X\|_{K \cap -K}] \leq 2\E[\|X\|_K]$.
\item \label{lem:med-to-exp-2} If $\gamma_n(K) \geq 1/2$ and $\|b(K)\|_2 \leq
\frac{1}{32\sqrt{2\pi}}$, then $\E[\|X\|_K] \leq (1+\pi\sqrt{8 \ln 2})$.
\end{enumerate}
\end{lemma}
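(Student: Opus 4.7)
The plan for part~\ref{lem:med-to-exp-1} rests on the identity
\[
\|x\|_{K \cap -K} = \max\!\bigl(\|x\|_K,\, \|-x\|_K\bigr),
\]
which holds for any $0$-centered convex body $K$, since $x \in s(K \cap -K)$ iff both $x \in sK$ and $-x \in sK$. Bounding $\max(a,b) \le a + b$ for $a,b \ge 0$ and using the distributional equality $X \stackrel{d}{=} -X$ then gives
\[
\E[\|X\|_{K \cap -K}] \le \E[\|X\|_K] + \E[\|-X\|_K] = 2\,\E[\|X\|_K],
\]
which is the claim.

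For part~\ref{lem:med-to-exp-2}, I would first extract a geometric consequence of the hypotheses and then apply Gaussian concentration. The ``furthermore'' clause of Lemma~\ref{lem:ball-cont} part~\ref{lem:ball-cont-bar} is set up exactly to convert $\gamma_n(K) \ge 1/2$ and $\|b(K)\|_2 \le 1/(32\sqrt{2\pi})$ into the containment $\tfrac{1}{4}B_2^n \subseteq K$, and then Lemma~\ref{lem:lipsch} upgrades this to the statement that the gauge $\|\cdot\|_K$ is $4$-Lipschitz on $\R^n$. Separately, $\gamma_n(K) = \Pr[\|X\|_K \le 1] \ge 1/2$ says that any median $m$ of $\|X\|_K$ is at most $1$.

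The remaining step is a standard median-to-mean conversion via Maurey--Pisier (Theorem~\ref{thm:gaussian-con}) applied with Lipschitz constant $L = 4$. Writing $\mu \eqdef \E[\|X\|_K]$, if $\mu > m + \varepsilon$ for some $\varepsilon > 0$, then
\[
\tfrac{1}{2} \le \Pr[\|X\|_K \le m] \le \Pr[\|X\|_K - \mu \le -\varepsilon] \le e^{-2(\varepsilon/4)^2/\pi^2},
\]
which forces $\varepsilon \le 4\pi\sqrt{(\ln 2)/2} = \pi\sqrt{8\ln 2}$. Hence $\mu \le m + \pi\sqrt{8\ln 2} \le 1 + \pi\sqrt{8\ln 2}$, as required. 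The only place that demands care is matching constants in this conversion so that the Lipschitz constant $L = 4$ combines with the Maurey--Pisier exponent to produce exactly $\pi\sqrt{8\ln 2}$; no other nontrivial obstacle arises, as both parts largely repackage lemmas already established in the section.
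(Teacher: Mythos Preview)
Your proof is correct and follows essentially the same approach as the paper. Part~\ref{lem:med-to-exp-1} matches exactly (the $\max$ identity, bounding $\max$ by the sum, and using $X \stackrel{d}{=} -X$), and part~\ref{lem:med-to-exp-2} is likewise identical in substance: the paper also invokes Lemma~\ref{lem:ball-cont}\ref{lem:ball-cont-bar} to get $\tfrac14 B_2^n \subseteq K$, then Lemma~\ref{lem:lipsch} for the $4$-Lipschitz bound, and finally Maurey--Pisier (Theorem~\ref{thm:gaussian-con}) to convert $\Pr[\|X\|_K \le 1] \ge 1/2$ into the mean bound, phrased there as a contradiction from assuming $\E[\|X\|_K] > 1 + \pi\sqrt{8\ln 2}$ rather than your median-to-mean inequality, but with the same constants and the same one-line computation.
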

\begin{proof}
We prove part (1). By symmetry of the Gaussian measure
\[
\E[\|X\|_{K \cap -K}] = \E[\max \set{\|X\|_K,\|-X\|_K}]
\leq \E[\|X\|_K + \|-X\|_K] = 2 \E[\|X\|_K] \text{ ,}
\]
as needed.

We prove part (2). Let $c = \pi \sqrt{8 \ln 2}$. First, by
Lemma~\ref{lem:ball-cont} part~\ref{lem:ball-cont-bar} and our assumptions on
$K$, we have that $(1/4) B_2^n \subseteq K$. Thus, by Lemma~\ref{lem:lipsch}
$\|\cdot\|_K$ is $4$-Lipschitz. Assume for the sake of contradiction that 
$\E[\|X\|_K] > 1+c$. Then, since 
\[
1/2 \leq \gamma_n(K) = \Pr[X \in K] = \Pr[\|X\|_K \leq 1] \text{ ,}
\]  
we must have that $\Pr[\|X\|_K-\E[\|X\|_K] \leq -c] > 1/2$. But by
Theorem~\ref{thm:gaussian-con} and the Lipschitz proporties of $\|\cdot\|_K$, 
\[
\Pr[\|X\|_K-\E[\|X\|_K] \leq -c] \leq e^{-\frac{2(c/4)^2}{\pi^2}} = 1/2 \text{ ,}
\]
a clear contradiction.
\end{proof}

\section{An $O(\sqrt{\log n})$-subgaussian Random Walk}
\label{sec:rand-walk}

The $O(\sqrt{\log n})$-subgaussian random walk algorithm is given as
Algorithm~\ref{alg:randwalk}. Step~\ref{alg-step:sdp} can be executed
in polynomial time by either calling an SDP solver, or executing the
algorithm from Section~\ref{sec:vector-komlos}. The feasibility of the
program is guaranteed by Theorem~\ref{thm:SDP-feas}, and also by the
results of~\cite{Nikolov13}. The matrix $U(t)$ in
step~\ref{alg-step:rand-step} can be computed by Cholesky decomposition.

\begin{algorithm}[h]
  \caption{$O(\sqrt{\log n})$-subgaussian Random Walk Algorithm}
  \label{alg:randwalk}
  \begin{algorithmic}
    \State \textbf{Input}: $v_1, \ldots, v_n \in \R^m$ such that
    $\max_{i = 1}^n\|v_i\|_2 \le 1$; a vector $y \in \sum_{i =
      1}^n[-v_i, v_i]$. 
    \State \textbf{Output}: random signs $\chi_1, \ldots, \chi_n \in
    \{-1, 1\}$ such that $\sum_{i = 1}^n{\chi_i v_i} - y$ is $O(\sqrt{\log
      n})$-subgaussian.
    \State Let $V = (v_i)_{i=1}^n$.
    \State Let $\gamma = \frac{2\log_2 2n}{n^{5/2}}$; $\delta =
    \frac{2\sqrt{2\ln 2 \log_2 2n}}{n}$; $T = \lceil 2/\gamma^2\rceil \cdot \lceil
    \log_2 2n \rceil$; 
    \State\label{alg-step:start}Let $\chi(0) \in [-1, 1]^n$ be such that $\sum_i \chi(0)_i
    v_i = y$.
    \State Let $A(1) = \{1, \ldots, n\}$
    \For{$t = 1, \ldots, T$}
    \State \label{alg-step:sdp} Compute $\Sigma(t) \in \R^{n\times n}$, $\Sigma(t) \succeq 0$, such that
    \begin{align*}
      \Sigma(t)_{ii} &= 1 \ \ \ \forall i \in A(t)\\
      \Sigma(t)_{ii} &= 0 \ \ \ \forall i \not \in A(t)\\
      V\Sigma(t)V^\T &\preceq I_m,
    \end{align*}
    \State Pick $r(t) \in \{-1, +1\}^{n}$ uniformly at random.
    \State \label {alg-step:rand-step} $\chi(t) = \chi(t-1) + \gamma U(t)r(t)$, where $U(t)U(t)^\transpose = \Sigma(t)$. 
    \State $A(t+1) = \{i: |\chi(t)_i| < 1-\delta\}$.
    \EndFor
    \State Set $\chi_i = \sign(\chi(T)_i)$ for all $i \in \{1, \ldots,
    n\}$. 
    \If{$A(T) = \emptyset$}
    \State Return $\chi$
    \Else
    \State Restart algorithm from line~\ref{alg-step:start}.
    \EndIf
  \end{algorithmic}
\end{algorithm}

Let us first make some observations that will be useful throughout the
analysis. Notice first that the random process $\chi(0), \ldots,
\chi(T)$ is Markovian. Let $u_i(t)$ be the $i$-th row of $U(t)$. By
the definition of $\Sigma(t)$ and $U(t)$, $\|u_i(t)\|_2 =
\Sigma(t)_{ii}$ equals $1$ if $i \in A(t)$, and $0$ otherwise. We have
$\chi(t)_i - \chi(t-1)_i = \gamma \langle u_i(t), r(t)\rangle$, and,
because $r(t)\in \{-1, 1\}^n$, by Cauchy-Schwarz we get
\begin{equation}
  \label{eq:step-bound}
  |\chi(t)_i - \chi(t-1)_i| \le 
  \begin{cases}
    \gamma \sqrt{n} &\text{ if }i \in A(t)\\
    0 &\text{ otherwise}
  \end{cases}.
\end{equation}

We first analyze the convergence of the
algorithm: we show that, with constant probability, the random walk
fixes all coordinates to have absolute value between $1-\delta$ and
$1$. First we prepare a lemma.

\begin{lemma}\label{lm:stopping-time}
  Let $X_0, X_1, X_2, \ldots$ form a martingale sequence adapted to
  the filtration $\{\mathcal{F}_t\}$ such that
  $X_0 \in [-1, 1]$, and for every $t \ge 1$ we have $\E[(X_t -
  X_{t-1})^2\mid \mathcal{F}_{t-1}] = \sigma^2$, and $|X_t - X_{t-1}|
  \le \delta$. Denote $\tau = \inf\{t: |X_t| \ge 1-\delta\}$. Then
  $\E[\tau] < \frac{1-X_0^2}{\sigma^2}$. 
\end{lemma}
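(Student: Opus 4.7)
The plan is to exploit the quadratic process: I would show that $M_t \eqdef X_t^2 - \sigma^2 t$ is itself a martingale with respect to $\{\mathcal{F}_t\}$, then apply the optional stopping theorem at $\tau$ to recover $\E[\tau]$.

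First I would compute
\[
\E[X_t^2 - X_{t-1}^2 \mid \mathcal{F}_{t-1}] = \E[(X_t - X_{t-1})^2 \mid \mathcal{F}_{t-1}] + 2 X_{t-1}\,\E[X_t - X_{t-1} \mid \mathcal{F}_{t-1}] = \sigma^2,
\]
where the martingale property of $X_t$ kills the cross term. This gives $\E[M_t \mid \mathcal{F}_{t-1}] = M_{t-1}$, so $M_t$ is indeed a martingale. Since $\tau \wedge t$ is a bounded stopping time, the optional stopping theorem yields $\E[X_{\tau \wedge t}^2] - \sigma^2\,\E[\tau \wedge t] = X_0^2$ for every $t \ge 0$.

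Next I would bound $X_{\tau \wedge t}$. By definition of $\tau$, for every $s < \tau$ we have $|X_s| < 1 - \delta$, and the step bound $|X_s - X_{s-1}| \le \delta$ gives $|X_\tau| \le |X_{\tau-1}| + \delta < 1$ almost surely on $\{\tau < \infty\}$ (and $|X_{\tau \wedge t}| < 1 - \delta$ on $\{\tau > t\}$). In either case $X_{\tau \wedge t}^2 \le 1$, so the optional stopping identity implies $\sigma^2\,\E[\tau \wedge t] \le 1 - X_0^2$ for all $t$. Monotone convergence then yields $\tau < \infty$ almost surely and $\E[\tau] \le (1 - X_0^2)/\sigma^2$; dominated convergence on $X_{\tau \wedge t}^2 \le 1$ upgrades the identity in the limit to $\E[X_\tau^2] - \sigma^2\,\E[\tau] = X_0^2$.

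The last step, which is the only slightly delicate point, is converting the weak inequality into a strict one. Since $|X_\tau| < 1$ almost surely (by the step-size argument above), $X_\tau^2$ is a random variable bounded above by $1$ with $\Pr[X_\tau^2 < 1] = 1$, hence $\E[X_\tau^2] < 1$. Substituting into the identity gives
\[
\E[\tau] = \frac{\E[X_\tau^2] - X_0^2}{\sigma^2} < \frac{1 - X_0^2}{\sigma^2},
\]
as desired. The main thing to be careful about is the passage to the limit $t \to \infty$ and the justification that strict inequality survives in expectation, which hinges precisely on the hypothesis $|X_t - X_{t-1}| \le \delta$ preventing $X_\tau$ from touching $\pm 1$.
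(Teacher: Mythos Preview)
Your proof is correct and follows essentially the same route as the paper: both use the quadratic-variation martingale (the paper expands $\E[Y_t^2]$ into orthogonal increments rather than naming $M_t = X_t^2 - \sigma^2 t$ explicitly, but this is the same computation), stop at $\tau\wedge t$, bound $|X_{\tau\wedge t}|<1$ via the step-size hypothesis, and pass to the limit by monotone convergence. If anything, your treatment of the strict inequality is more careful: the paper asserts that the strict bound $\E[Y_t^2]<1$ survives monotone convergence, whereas you correctly first obtain $\E[\tau]\le(1-X_0^2)/\sigma^2$, then use dominated convergence to recover the identity $\E[X_\tau^2]-\sigma^2\E[\tau]=X_0^2$ and extract strictness from $\E[X_\tau^2]<1$.
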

\begin{proof}
  Define $Y_0, Y_1, Y_2, \ldots$ to be a martingale with respect to
  $X_0, X_1, X_2, \ldots$ defined by $Y_t = X_{\min\{t,
    \tau\}}$. Because $|X_t - X_{t-1}|<\delta$, we easily see by
  induction that $|Y_t| < 1$ for all $t \geq 0$. Therefore, for any $t
  \geq 1$, we compute
  \begin{align*}
    1 > \E Y_t^2 &= \sum_{s = 1}^t{\E\E[(Y_t - Y_{t-1})^2\mid
      \mathcal{F}_{t-1}]} + X_0^2\\ 
    &= \sum_{s=1}^t{\sigma^2\Pr[\tau\ge s]} + X_0^2\\
    &= \sigma^2 \E[\min\{t, \tau\}] + X_0^2.
  \end{align*}
  By the monotone convergence theorem, we have that $\sigma^2\E[\tau]
  < 1- X_0^2$, which was to be proved.
\end{proof}

The next lemma gives our convergence analysis of the random walk.

\begin{lemma}\label{lm:convergence}
  With probability $1$, $|\chi(t)_i| \le 1$ for all $1 \leq t \leq
  T$ and all $1 \le i \le n$. With probability at least $1/2$,
  $|\chi(T)_i| \ge 1-\delta$ for all $1 \le i \le n$.
\end{lemma}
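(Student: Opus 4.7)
The first assertion follows by a pathwise induction on $t$. I would show that $|\chi(t)_i|\le 1$ for every $i$ and every $t$. Once $|\chi(t-1)_i| > 1-\delta$, we have $i\notin A(t)$, so $\Sigma(t)_{ii}=0$; since $\Sigma(t)\succeq 0$ and $U(t)U(t)^\transpose=\Sigma(t)$, the $i$-th row $u_i(t)$ of $U(t)$ vanishes, and hence $\chi(t)_i=\chi(t-1)_i$. As long as $|\chi(t-1)_i|\le 1-\delta$, the bound \eqref{eq:step-bound} combined with the choice of parameters gives $|\chi(t)_i-\chi(t-1)_i|\le \gamma\sqrt{n}=\frac{2\log_2 2n}{n^2}\le\delta$ (an elementary arithmetic check valid for every $n\ge 1$), so $|\chi(t)_i|\le 1-\delta+\gamma\sqrt n\le 1$. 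The initialization $\chi(0)\in[-1,1]^n$ seeds the induction.

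For the second assertion I would fix a coordinate $i$ and consider the stopping time $\tau_i=\inf\{t:|\chi(t)_i|\ge 1-\delta\}$. With the filtration $\mathcal{F}_t=\sigma(r(1),\ldots,r(t))$, the process $(\chi(t)_i)_{t\ge 0}$ is a martingale because $r(t)$ is a uniform $\pm 1$ vector independent of $\mathcal{F}_{t-1}$ while $U(t)$ is $\mathcal{F}_{t-1}$-measurable. For $t\le\tau_i$, part~1 gives $|\chi(t-1)_i|\le 1-\delta$, hence $i\in A(t)$, $\|u_i(t)\|_2=1$, the conditional variance of the step equals $\gamma^2\|u_i(t)\|_2^2=\gamma^2$, and the step magnitude is at most $\gamma\sqrt n\le\delta$. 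These are exactly the hypotheses needed to invoke Lemma~\ref{lm:stopping-time}.

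The bound is then amplified over phases. Split the $T$ steps into $N=\lceil\log_2 2n\rceil$ consecutive phases, each of length $L=\lceil 2/\gamma^2\rceil$. Condition on $\mathcal{F}$ at the start of a phase and on the event that $\tau_i$ has not yet occurred, so the coordinate starts the phase at some value $X_0\in[-(1-\delta),1-\delta]$. Applying Lemma~\ref{lm:stopping-time} to the phase-local process gives an expected stopping time strictly less than $(1-X_0^2)/\gamma^2\le 1/\gamma^2$; by Markov's inequality the conditional probability of non-stopping during the phase is at most $(1/\gamma^2)/L\le 1/2$. Iterating across the $N$ phases via the Markov property of the walk yields $\Pr[\tau_i>T]\le 2^{-N}\le 1/(2n)$, and a union bound over $i\in[n]$ gives $\Pr[\exists i:\tau_i>T]\le 1/2$, as required.

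The one delicate point is that Lemma~\ref{lm:stopping-time} is stated with constant conditional variance $\sigma^2$ at every step, whereas in our process the variance is $\gamma^2$ only while $i\in A(t)$ and drops to $0$ afterward. This is harmless: within each phase I would apply the lemma to the stopped coordinate process $\chi(t\wedge\tau_i)_i$, and a quick inspection of the proof of Lemma~\ref{lm:stopping-time} shows that only the variance and step-size identities for $t\le\tau$ are actually used. Equivalently, one may extend the coordinate past $\tau_i$ by fresh independent $\pm\gamma$ increments; this preserves the martingale and the hypotheses of the lemma literally, and changes neither $\tau_i$ nor the resulting bound.
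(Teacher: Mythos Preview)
Your proposal is correct and follows essentially the same argument as the paper: the first claim is the identical pathwise induction using $\gamma\sqrt{n}<\delta$, and the second claim uses the same phase decomposition into $\lceil\log_2 2n\rceil$ blocks of length $\lceil 2/\gamma^2\rceil$, applying Lemma~\ref{lm:stopping-time} and Markov's inequality within each phase, then a union bound over coordinates. Your explicit remark about why the constant-variance hypothesis of Lemma~\ref{lm:stopping-time} is not an obstacle (working with the stopped process, which the paper does implicitly via its definition of $\tau_j$) is a nice clarification.
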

\begin{proof}
  We prove the first claim by induction on $t$. It is clearly
  satisfied for $t = 0$; assume then that the claim holds up to $t-1$,
  and we will prove it for $t$. If $i \not \in A(t)$, then $\chi(t)_i
  = \chi(t-1)_i$ by \eqref{eq:step-bound}, and the claim follows by
  the inductive hypothesis. If $i \in A(t)$, then $|\chi_i(t)| <
  1-\delta$, and by \eqref{eq:step-bound} and the triangle inequality,
  we have $|\chi_i(t)| < 1-\delta + \gamma\sqrt{n} < 1$, where the
  final inequality follows because $\gamma\sqrt{n} < \delta$.

  To prove the second claim, we will show that $\Pr[|\chi(T)_i| <
  1-\delta] \le 1/2n$ holds for every $i$. The claim will then follow
  by a union bound. Let us fix an arbitrary $i \in \{1, \ldots,
  n\}$. Define $\Delta = \lceil 2/\gamma^2\rceil$, and let $E_j$, $0
  \le j \le (T-\Delta)/\Delta$ be the event that $|\chi(t)_i| <
  1-\delta$ for all $t \in [j\Delta + 1, (j+1)\Delta]$.  Observe that
  if $|\chi(t)_i| \ge 1-\delta$, then $|\chi(s)_i| = |\chi(t)_i| \ge
  1-\delta$ for all $t \le s \le T$, and, therefore $\chi(T)_i <
  1-\delta$ if and only if all the events $E_0, \ldots,
  E_{(T-\Delta)/\Delta}$ hold simultaneously. By this observation, and
  the Markov property, we have
  \begin{equation}
    \label{eq:phases}
    \Pr[|\chi(T)_i| <  1-\delta] = \prod_{j =
      0}^{T/\Delta - 1}{\Pr[E_j \mid \chi(j\Delta)]}.
  \end{equation}
  Let $\tau_j = (j+1)\Delta$ if $E_j$ holds and $\tau_j = \min\{t \ge
  j\Delta: |\chi(t)_i| \ge 1-\delta\}$, otherwise.
  The sequence $\chi(j\Delta)_i, \ldots, \chi(\tau_j)_i$, conditioned
  on $\chi(j\Delta)$, is a martingale. Moreover, since for any $t \in
  [j\Delta, \tau_j]$ we have $i \in A(t)$, for all such $t$ we get
  \begin{equation}
    \label{eq:variance}
    \E[(\chi(t)_i - \chi(t-1)_i)^2 \mid \chi(t-1)] =
    \gamma^2 \E |\langle u_i(t), r(t) \rangle|^2
    = \gamma^2\|u_i\|_2^2 =  \gamma^2 .
  \end{equation}
  By \eqref{eq:step-bound} and \eqref{eq:variance}, the sequence
  $\chi(j\Delta)_i, \ldots, \chi(\tau_j)_i$, conditioned on
  $\chi(j\Delta)$, satisfies the assumptions of
  Lemma~\ref{lm:stopping-time}. By the lemma, we have
  \[
  \E[\tau_j -  j\Delta \mid  \chi(j\Delta)] \le
  \frac{1-\chi(j\Delta)_i^2}{\gamma^2}\le \frac{1}{\gamma^2}.
  \] 
  Since the event $E_j$ holds only if $\tau_j \ge (j+1)\Delta$, by
  Markov's inequality we have \[\Pr[E_j \mid \chi(j\Delta)] \le
  \frac{1}{\gamma^{2}\Delta} \le \frac12.\] This bound and
  \eqref{eq:phases} imply that $\Pr[|\chi(T)_i| < 1-\delta] \le
  2^{-T/\Delta} \le 1/2n$, which was to be proved.
\end{proof}

To prove that the walk is subgaussian, we will need the following
martingale concentration inequality due to Freedman.

\begin{theorem}[\cite{Freedman71}]
\label{thm:freedman}
Let $Z_1,\ldots,Z_T$ be a martingale adapted to the filtration $\{\mathcal{F}_t\}$
$|Z_t - Z_{t-1}| \leq M$ for all $t$, and let $W_t = \sum_{j=1}^t
\E_{j-1}[(Z_j - Z_{j-1})^2\mid  \mathcal{F}_{j-1}] =
\sum_{j=1}^t\textrm{Var}[Z_j\mid  \mathcal{F}_{j-1}]$.
Then for all $\lambda \geq 0$  and $\sigma^2 \geq 0$, we have 
\[ \Pr[ \exists t \textrm{ s.t.}~|Z_t - Z_0| \geq \lambda \textrm{ and } W_t \leq \sigma^2]\le 2 \exp\left(-\frac{\lambda^2}{2 (\sigma^2 + M \lambda)} \right). \] 
\end{theorem}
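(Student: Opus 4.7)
The plan is to prove Freedman's inequality by the standard exponential supermartingale method combined with optional stopping at a carefully chosen stopping time, which allows us to use the quadratic variation constraint $W_t \leq \sigma^2$ as a hard cap rather than a probabilistic statement.

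First I would prove a one-sided version: for any $\theta \in (0, 1/M)$, define the auxiliary process
\[
L_t(\theta) = \exp\!\bigl(\theta (Z_t - Z_0) - \psi(\theta M)\, W_t / M^2\bigr),
\qquad \psi(u) = e^u - 1 - u .
\]
Using the pointwise inequality $e^x \leq 1 + x + \psi(a) x^2 / a^2$ valid for $x \leq a$ (which follows from the convexity of $\psi$ and $\psi(0)=\psi'(0)=0$), applied with $x = \theta(Z_t - Z_{t-1}) \leq \theta M$, I would bound the conditional moment generating function of the increment $\theta(Z_t - Z_{t-1})$ by $1 + \psi(\theta M)\operatorname{Var}[Z_t\mid \mathcal{F}_{t-1}]/M^2 \leq \exp(\psi(\theta M)\operatorname{Var}[Z_t\mid \mathcal{F}_{t-1}]/M^2)$, using that the martingale property kills the linear term. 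This shows $\E[L_t(\theta) \mid \mathcal{F}_{t-1}] \leq L_{t-1}(\theta)$, so $L_t(\theta)$ is a supermartingale with $L_0(\theta) = 1$.

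Next I would introduce the stopping time
\[
\tau = \inf\{t \geq 1 : Z_t - Z_0 \geq \lambda \text{ and } W_t \leq \sigma^2\},
\]
with the convention $\inf \emptyset = \infty$. Since $\tau \wedge T$ is a bounded stopping time, the optional stopping theorem gives $\E[L_{\tau \wedge T}(\theta)] \leq 1$. On the event $\{\tau \leq T\}$ we have $Z_\tau - Z_0 \geq \lambda$ and $W_\tau \leq \sigma^2$, hence $L_{\tau \wedge T}(\theta) \geq \exp(\theta \lambda - \psi(\theta M)\sigma^2/M^2)$ on that event, which by Markov's inequality yields
\[
\Pr[\tau \leq T] \leq \exp\!\bigl(-\theta \lambda + \psi(\theta M)\sigma^2/M^2\bigr).
\]
Choosing $\theta = \frac{1}{M} \log\bigl(1 + \tfrac{M\lambda}{\sigma^2}\bigr)$ and simplifying using $\psi(u) = e^u - 1 - u$ gives the classical Bennett-type bound, and then relaxing via the inequality $(1+x)\log(1+x) - x \geq \frac{x^2}{2(1 + x/3)} \geq \frac{x^2}{2(1+x)}$ with $x = M\lambda/\sigma^2$ produces the stated exponent $\lambda^2/(2(\sigma^2 + M\lambda))$.

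Finally, I would obtain the two-sided bound by applying the same argument to the martingale $-Z_t$ (which has the same quadratic variation and the same increment bound) and taking a union bound, giving the factor of $2$. The main technical point — and the reason one needs a stopping time rather than a plain Chernoff bound — is that $W_t$ is a random process, so one cannot simply plug in $\sigma^2$ in an unconditional MGF estimate; the stopping time $\tau$ freezes the process precisely when $W_t$ would first exceed $\sigma^2$ or $Z_t$ would first cross $\lambda$, which is the only delicate aspect of the argument. Everything else is the standard tilt-and-optimize recipe.
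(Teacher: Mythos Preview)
The paper does not prove this theorem; it is quoted from Freedman's original paper \cite{Freedman71} and used as a black box in the analysis of the random walk. Your argument is correct and is essentially Freedman's original exponential-supermartingale proof: build the process $L_t(\theta)$, check it is a supermartingale via the pointwise bound $e^x \le 1 + x + (\psi(a)/a^2)x^2$ for $x \le a$, apply optional stopping at the bounded stopping time $\tau \wedge T$, and optimize $\theta$ to obtain the Bennett exponent, which you then relax to the Bernstein form $\lambda^2/(2(\sigma^2 + M\lambda))$.

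One small correction: the restriction $\theta \in (0,1/M)$ is both unnecessary and potentially problematic, since the optimizer $\theta = \tfrac{1}{M}\log(1+M\lambda/\sigma^2)$ can exceed $1/M$. The supermartingale property in fact holds for every $\theta>0$ (the inequality $\psi(x)/x^2 \le \psi(a)/a^2$ for $x\le a$ is valid for any $a>0$, as $\psi(x)/x^2$ is nondecreasing on all of $\R$), so you should simply drop that restriction; with that change the optimization step goes through as written.
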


Next we state the main lemma, which, together with an estimate
on the error due to rounding, implies subgaussianity.

\begin{lemma}\label{lm:subg}
  The random variable $\sum_{i = 1}^n{\chi(T)_iv_i} - y$ is
  $(\gamma\sqrt{2T})$-subgaussian.
\end{lemma}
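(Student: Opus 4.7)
My plan is to fix an arbitrary unit vector $\theta \in S^{m-1}$ and show that the scalar $Z_T = \langle \theta, \sum_{i=1}^n \chi(T)_i v_i - y\rangle$ has $(\gamma\sqrt{2T})$-subgaussian tails; since $\theta$ is arbitrary, the vector statement follows by definition. Because $\chi(0)$ is chosen so that $V\chi(0) = y$ and the update rule is $\chi(t) = \chi(t-1) + \gamma U(t) r(t)$, the process satisfies $Z_0 = 0$ and
\[
Z_T = \gamma\sum_{t=1}^T \langle w(t), r(t)\rangle, \qquad w(t) = U(t)^\T V^\T \theta.
\]
Letting $\mathcal{F}_t$ be the filtration generated by $r(1), \ldots, r(t)$, the vector $w(t)$ is $\mathcal{F}_{t-1}$-measurable since it depends only on $\chi(t-1)$, and $r(t)$ has independent mean-zero $\pm 1$ entries, so $(Z_t)$ is a martingale adapted to $\{\mathcal{F}_t\}$.

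The key algebraic input is the SDP constraint $V\Sigma(t) V^\T \preceq I_m$, which gives
\[
\|w(t)\|_2^2 = \theta^\T V\Sigma(t) V^\T \theta \leq \|\theta\|_2^2 = 1.
\]
Hence the conditional variance is $\E[(Z_t - Z_{t-1})^2 \mid \mathcal{F}_{t-1}] = \gamma^2 \|w(t)\|_2^2 \leq \gamma^2$, so the accumulated predictable variation is \emph{deterministically} bounded by $W_T \leq \gamma^2 T$, and by Cauchy--Schwarz $|Z_t - Z_{t-1}| \leq \gamma\|w(t)\|_2\|r(t)\|_2 \leq \gamma\sqrt{n}$. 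Applying Freedman's inequality (Theorem~\ref{thm:freedman}) with $\sigma^2 = \gamma^2 T$ and $M = \gamma\sqrt{n}$ yields, for every $\lambda \geq 0$,
\[
\Pr[|Z_T| \geq \lambda] \leq 2\exp\!\left(-\frac{\lambda^2}{2(\gamma^2 T + \gamma\sqrt{n}\,\lambda)}\right).
\]

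To extract the desired subgaussian tail, I split on the size of $\lambda$. For $\lambda \leq \gamma T/\sqrt{n}$, the denominator is at most $4\gamma^2 T$, so the right-hand side is bounded by $2\exp\!\bigl(-\lambda^2/(2(\gamma\sqrt{2T})^2)\bigr)$, matching the definition of $(\gamma\sqrt{2T})$-subgaussianity. For $\lambda > \gamma T/\sqrt{n}$, I fall back on the deterministic estimate $|Z_T| \leq \|V(\chi(T) - \chi(0))\|_2 \leq 2n$, which follows from $\chi(0), \chi(T) \in [-1,1]^n$ (Lemma~\ref{lm:convergence}) and $\|v_i\|_2 \leq 1$. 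The algorithm's calibration gives $\gamma T/\sqrt{n} = \Omega(n^2) \geq 2n$, so this large-$\lambda$ regime is vacuous and the tail probability is zero. The main obstacle is the Bernstein-type correction $\gamma\sqrt{n}\lambda$ in Freedman's denominator, which would otherwise degrade the tail to subexponential at large $\lambda$; the resolution is not a sharper concentration inequality but the specific calibration of $\gamma$ and $T$, which pushes the crossover point past the trivial displacement bound of the walk.
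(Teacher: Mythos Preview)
Your proof is correct and follows essentially the same route as the paper: set up the martingale $Z_t$, bound the conditional variance by $\gamma^2$ via the SDP constraint, apply Freedman's inequality, and use the parameter calibration together with the deterministic displacement bound to dispose of the Bernstein correction term. The one small difference is that you bound the increments by $|Z_t-Z_{t-1}|\le \gamma\|w(t)\|_2\|r(t)\|_2\le \gamma\sqrt{n}$ directly from $\|w(t)\|_2^2=\theta^\T V\Sigma(t)V^\T\theta\le 1$, whereas the paper routes through $\|Y_t-Y_{t-1}\|_2$ and the per-coordinate step bound to get the looser $M=\gamma n^{3/2}$; your bound is sharper and a bit cleaner, but either suffices given how $\gamma$ and $T$ are set.
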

\begin{proof}
  Define $Y_t = \sum_{i = 1}^n{\chi(t)_iv_i}$ for all $t = 1, \ldots,
  T$. Notice that $Y_0 = y$. Let us fix a $\theta \in S^{n-1}$ once
  and for all, and let $Z_t = \langle \theta, Y_t \rangle$ for $t = 0,
  \ldots, T$. We need to show that for every $\lambda > 0$, $\Pr[|Z_T|
  \ge \lambda] \le 2e^{-\lambda^2/2\sigma^2}$. We first observe that
  $Z_t$ is bounded, so we only need to consider $\lambda$ in a finite
  range. Indeed, by Lemma~\ref{lm:convergence}, $Y_t \in \sum_{i =
    1}^n{[-v_i, v_i]}$ with probability $1$, so by the triangle
  inequality, $\|Y_t\|_2 \le \sum_{i = 1}^n{\|v_i\|_2} \le n$. Then,
  by Cauchy-Schwarz, $|Z_t| \le n$ as well, and, therefore, $\Pr[|Z_T|
  > n] = 0$. For the rest of the proof we will assume that $0 <
  \lambda \le n$.

  Observe that $Z_0, \ldots, Z_T$ is a martingale. First we prove that
  the increments are bounded: this follows from the boundedness of the
  increments of the coordinates of $\chi(t)$. Indeed, by the triangle
  inequality and \eqref{eq:step-bound},
  \[
  \|Y_t- Y_{t-1}\|_2 =  \left\|\sum_{i = 1}^n{(\chi(t)_i -  \chi(t-1)_i)v_i}\right\|_2 
  \le \sum_{i = 1}^n{|\chi(t)_i -  \chi(t-1)_i| \|v_i\|_2}
  \le \gamma n^{3/2}.
  \]
  Then, it follows from Cauchy-Schwarz that $|Z_t - Z_{t-1}| \le
  \gamma n^{3/2}$. 

  Next we bound the variance of the increments. By the Markov property
  of the random walk, $Z_t - Z_{t-1}$ is entirely determined by
  $\chi_{t-1}$. Denoting $V=(v_i)_{i = 1}^n$ as in the description of
  Algorithm~\ref{alg:randwalk}, we have
  \begin{align*}
    \E[(Z_t - Z_{t-1})^2 \mid \chi_{t-1}] &= \theta^\transpose \E[(Y_t - Y_{t-1})(Y_t - Y_{t-1})^\transpose]
    \theta\\
    &= \theta^\transpose V \E[(\chi(t) - \chi(t-1))(\chi(t) - \chi(t-1))^\transpose] V^\transpose  \theta\\
    &= \gamma^2 \theta^\transpose V U(t)\E[r(t)r(t)^\transpose]
    U(t)^\transpose V^\transpose  \theta\\
    &= \gamma^2\theta^\transpose V \Sigma(t) V^\transpose  \theta\\
    &\le \gamma^2.
  \end{align*}
  The penultimate equality follows because $\E[r(t)r(t)^\transpose] = I_n$ and
  $U(t)U(t)^\transpose = \Sigma(t)$, and the final inequality follows because
  $\Sigma(t)$ was chosen so that $V \Sigma(t) V^\transpose \preceq I_m$. 

  We are now ready to apply Theorem~\ref{thm:freedman}. Using the
  notation from the theorem, we have shown that $M \le \gamma
  n^{3/2}$, and that $W_t \le \gamma^2t$ for all $t$, and both bounds
  hold with probability $1$. Let $\sigma^2 = \gamma^2 T$. First we
  claim that for any $\lambda \le n$, $M\lambda \le \sigma^2$. Indeed,
  \[
  M\lambda \le \gamma n^{5/2} \le 2\log_2(2n) \le \gamma^2 T = \sigma^2.
  \]
  Now, Theorem~\ref{thm:freedman} and the above calculation imply that
  $\Pr[|Z_T - Z_0| \ge \lambda] \le 2e^{-\lambda^2/4\sigma^2}$ for all
  $0 < \lambda \le n$. This proves the lemma.
\end{proof}

Finally we state our main theorem.
\begin{theorem}[Restatement of Theorem~\ref{thm:rand-walk}]\label{thm:randwalk-subg}
  Algorithm~\ref{alg:randwalk} runs in expected polynomial time, and
  outputs a random vector $\chi$ such that the random variable
  $\sum_{i = 1}^n{\chi_iv_i} - y$ is $O(\sqrt{\log n})$-subgaussian.
\end{theorem}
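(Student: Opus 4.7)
The plan is to combine Lemma~\ref{lm:convergence} and Lemma~\ref{lm:subg} with two ingredients: a parameter calculation, a rounding error bound, and a conditioning argument that handles the restart.

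First I would verify the expected polynomial running time. Each pass through the outer loop performs one SDP / vector Koml\'os solve, $T = O(n^5 \log n)$ Cholesky decompositions, and $T$ random steps, which is polynomial. By Lemma~\ref{lm:convergence}, a single pass satisfies $A(T) = \emptyset$ with probability at least $1/2$, so the number of passes is stochastically dominated by a geometric random variable with mean at most $2$, giving expected polynomial runtime.

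Next I would bound the subgaussian parameter of the pre-rounding output. Plugging $\gamma = \frac{2\log_2 2n}{n^{5/2}}$ and $T = \lceil 2/\gamma^2\rceil \cdot \lceil \log_2 2n\rceil$ into Lemma~\ref{lm:subg} gives
\[
\gamma \sqrt{2T} \le \gamma \sqrt{2(2/\gamma^2+1)(\log_2(2n)+1)} = O(\sqrt{\log n}),
\]
so $\sum_i \chi(T)_i v_i - y$ is $O(\sqrt{\log n})$-subgaussian (unconditionally over the random walk).

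For the rounding step, note that conditioned on the event $E = \{A(T) = \emptyset\}$ (i.e.\ the pass on which the algorithm returns), each coordinate satisfies $|\chi(T)_i| \ge 1-\delta$, hence $|\chi_i - \chi(T)_i| \le \delta$ deterministically. By the triangle inequality and $\|v_i\|_2 \le 1$,
\[
\Bigl\| \sum_i \chi_i v_i - \sum_i \chi(T)_i v_i \Bigr\|_2 \le n\delta = 2\sqrt{2\ln 2\,\log_2(2n)} = O(\sqrt{\log n}),
\]
so for any $\theta \in S^{n-1}$, the two one-dimensional marginals differ (conditionally on $E$) by a deterministic quantity of magnitude $O(\sqrt{\log n})$.

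Finally I would handle the conditioning. Since $\Pr[E] \ge 1/2$, for any event $F$ we have $\Pr[F\mid E] \le 2\Pr[F]$, so if $X$ is $\sigma$-subgaussian unconditionally then $X$ conditioned on $E$ is $c\sigma$-subgaussian for an absolute constant $c$ (the factor $2$ in the tail bound is absorbed by a constant blow-up of $\sigma$, as the subgaussian bound is only non-trivial above a small multiple of $\sigma$). Applying this to the marginals $\langle \theta, \sum_i \chi(T)_i v_i - y\rangle$, combined with the deterministic $O(\sqrt{\log n})$ shift from rounding, shows $\sum_i \chi_i v_i - y$ is $O(\sqrt{\log n})$-subgaussian. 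The main subtlety to get right is precisely this conditioning-plus-shift step: one must be careful that the output distribution is the \emph{conditional} law given successful termination, and that adding a deterministic $O(\sqrt{\log n})$-magnitude shift to a marginal only inflates the subgaussian constant by an additive $O(1)$ inside the exponent.
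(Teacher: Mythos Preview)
Your proposal is correct and follows essentially the same approach as the paper: both combine Lemma~\ref{lm:convergence} (for the geometric-restart runtime bound and the event $E$), Lemma~\ref{lm:subg} (for the unconditional $\gamma\sqrt{2T}=O(\sqrt{\log n})$ subgaussianity of the walk output), the $n\delta=O(\sqrt{\log n})$ rounding bound, and the observation that conditioning on an event of probability $\ge 1/2$ at most doubles tail probabilities. One minor wording point: the rounding shift is not literally deterministic but is \emph{bounded} by the deterministic quantity $n\delta$ conditionally on $E$, which is all the argument needs; the paper makes this explicit by writing $|Z-Z_0|\ge\lambda \Rightarrow |Z_T-Z_0|\ge \lambda-n\delta \ge \lambda/2$ for $\lambda$ above the threshold $\lambda_0$.
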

\begin{proof}
  Let $E$ be the event that for all $i$, $|\chi(T)_i| \ge 1-\delta$
  (equivalently, that $A(T) = \emptyset$). The algorithm takes returns
  if $E$ holds, and otherwise it restarts. By
  Lemma~\ref{lm:convergence} this event occurs with probability at
  least $1/2$, so there will be a constant number of restarts in
  expectation. Since the random walk talks $T$ steps, where $T$ is
  polynomial in the input size, and each step can also be executed in
  polynomial time, it follows that the expected running time of the
  algorithm is polynomial.

  Because the algorithm returns an output exactly when $E$ holds, the
  output is distributed as the random vector $\chi$ conditioned on
  $E$.  Let us fix a vector $\theta \in S^{n-1}$ once and for all. Let
  $Y$ be the random variable $\sum_{i = 1}^n{\chi_iv_i} - y$ and let
  $Z = \langle \theta, Y\rangle$. Let $Y_t$ and $Z_t$ be defined as in
  the proof of Lemma~\ref{lm:subg}.  Let $s = \gamma\sqrt{2T}$ be the
  parameter with which we proved $Z_T-Z_0$ is subgaussian in
  Lemma~\ref{lm:subg}. We will show that $Z - Z_0$, conditioned on
  $E$, is $2s$-subgaussian, i.e.~we will prove that $\Pr[|Z-Z_0| \ge
  \lambda \mid E] \le 2e^{-\lambda^2/8s^2}$.  Observe that this inequality is
  trivially satisfied for $\lambda \le \lambda_0 = 2\sqrt{2\ln 2}s$,
  since the right hand side is at least $1$ in this range. For the
  rest of the proof we will assume that $\lambda > \lambda_0$.

  Conditional on $E$, and using the
  triangle inequality, we can bound the distance between $Y$ and
  $Y(T)$ by
  \begin{align*}
  \|Y - Y(T)\|_2 &= \left\|\sum_{i = 1}^n{(\sign(\chi(T)_i) - \chi(T)_i)v_i} \right\|_2\\
  &\le \sum_{i = 1}^n{(1- |\chi(T)_i|) \|v_i\|_2} \le \delta n.
  \end{align*}
  By Cauchy-Schwarz, we get that, conditional on $E$, $|Z - Z_T| \le
  \delta n$, which implies that $|Z - Z_0| \le |Z_T - Z_0| + \delta
  n$, by the triangle inequality. Then, we have that, conditional on
  $E$, $|Z - Z_0| \ge \lambda \Rightarrow |Z_T - Z_0| \ge \lambda -
  \delta n$. For every $\lambda > \lambda_0$, $\delta n \le 2\sqrt{2\ln
    2 \log_22n} \le \lambda_0/2 < \lambda/2$. Therefore, conditional
  on $E$ and for every $\lambda > \lambda_0$, $|Z - Z_0| \ge \lambda
  \Rightarrow |Z_T - Z_0| \ge \lambda/2$. By Lemma~\ref{lm:subg} we
  have
  \[
  \Pr[|Z - Z_0| \ge \lambda \mid E] 
  \le \Pr[|Z_T - Z_0| \ge \lambda/2 \mid E]
  \le 2\Pr[|Z_T - Z_0| \ge \lambda/2]
  \le 4e^{-\lambda^2/4s^2}.
  \]  
  Recall that for every $\lambda > \lambda_0$, $e^{-\lambda^2/8s^2} <
  1/2$, so the right hand side above is at most
  $2e^{-\lambda^2/8s^2}$, as claimed. Therefore, $Z-Z_0$, conditioned
  on $E$, is $2s$-subgaussian, and, since $s = O(\sqrt{\log n})$, this
  suffices to prove the theorem.
\end{proof}

\section{Recentering procedure}
\label{sec:recenter}
In this section we will give an algorithmic variant of the recentering
procedure in Theorem~\ref{thm:recenter}.

Given a convex body $K\subseteq \mathbb{R}^n$, let $b$ be its barycenter under the Gaussian distribution. The following lemma shows that if we have an estimate $b'$ of the barycenter, which is close to $b$ but farther from the origin, then shifting $K$ to $K-b'$, increases the Gaussian volume of $K$.

\begin{lemma}
\label{shiftbary}
Let $b$ be the barycenter of $K\subseteq \mathbb{R}^n$ and $b'$ a point in $\mathbb{R}^n$ satisfying $\|b-b'\|_2\le \delta/3$ and $\| b'\|_2\ge
\delta$. Then, $\gamma_n(K-b')\ge e^{\delta^2/6}\gamma_n(K)$ 
\end{lemma}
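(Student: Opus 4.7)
The plan is to express the Gaussian mass of a shifted body as an exponential moment on the original body, apply Jensen's inequality, and then conclude with a short computation.

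First, I would complete the square to write, for any $t\in\R^n$,
\[
\gamma_n(K-t) \;=\; \frac{1}{(2\pi)^{n/2}}\int_K e^{-\|x-t\|_2^2/2}\,dx \;=\; e^{-\|t\|_2^2/2}\,\gamma_n(K)\cdot\E\!\left[e^{\langle X,t\rangle}\right],
\]
where $X$ is the random vector on $K$ with density $(2\pi)^{-n/2}e^{-\|x\|_2^2/2}/\gamma_n(K)$. By Definition~\ref{def:barycenter}, $\E[X]=b$, so taking logarithms and applying Jensen's inequality ($\log\E[e^Y]\ge\E[Y]$) produces the core estimate
\[
\log\gamma_n(K-t) - \log\gamma_n(K) \;\geq\; \langle b,t\rangle \;-\; \tfrac{1}{2}\|t\|_2^2.
\]

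Next, I would specialize $t=b'$. Writing $\langle b,b'\rangle = \|b'\|_2^2 - \langle b'-b,\,b'\rangle$ and applying Cauchy-Schwarz together with the hypothesis $\|b-b'\|_2 \le \delta/3$, the right-hand side is at least
\[
\tfrac{1}{2}\|b'\|_2^2 \;-\; \tfrac{\delta}{3}\|b'\|_2.
\]
The map $u\mapsto u^2/2 - (\delta/3)u$ has derivative $u-\delta/3$, hence is nondecreasing on $u\ge\delta/3$ and takes value $\delta^2/6$ at $u=\delta$. Since $\|b'\|_2\ge\delta$ by assumption, the lower bound is at least $\delta^2/6$, and exponentiating yields the claimed $\gamma_n(K-b')\ge e^{\delta^2/6}\gamma_n(K)$.

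I do not foresee any real obstacle. The only nontrivial ingredient is Jensen's inequality applied to the canonical log-density rewriting; it produces exactly the $-\|t\|_2^2/2$ penalty needed to balance the linear term $\langle b,t\rangle$, and the gap between the two hypotheses $\|b'\|_2\ge\delta$ and $\|b-b'\|_2\le\delta/3$ is precisely what guarantees that this quadratic penalty is dominated, yielding the clean $\delta^2/6$ improvement. A pure Pr\'ekopa--Leindler/concavity argument using the tangent at $0$ would only give an \emph{upper} bound $\log\gamma_n(K-t)\le\log\gamma_n(K)+\langle b,t\rangle$, which points the wrong way, so Jensen is the right tool here.
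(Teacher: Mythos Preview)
Your proposal is correct and follows essentially the same approach as the paper: both express $\gamma_n(K-b')/\gamma_n(K)$ as $e^{-\|b'\|_2^2/2}\,\E[e^{\langle X,b'\rangle}]$ for $X\sim\gamma_K$, apply Jensen's inequality, and finish with an elementary computation. The only cosmetic difference is in that final step: the paper rewrites $\langle b,b'\rangle-\tfrac12\|b'\|_2^2$ as $\tfrac12\|b\|_2^2-\tfrac12\|b-b'\|_2^2$ and then uses the triangle inequality $\|b\|_2\ge\|b'\|_2-\|b-b'\|_2\ge 2\delta/3$, whereas you use Cauchy--Schwarz on $\langle b'-b,b'\rangle$ and minimize $u^2/2-(\delta/3)u$ over $u\ge\delta$; both routes yield the same $\delta^2/6$.
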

\begin{proof}
Let $\eta=b-b'$.
\begin{eqnarray*}
\frac{\gamma_n(K-b')}{\gamma_n(K)} &=& \frac{(\frac{1}{2\pi})^{n/2}\int_{x\in K-b'} e^{-\|x\|_2^2/2}dx}{(\frac{1}{2\pi})^{n/2}\int_{x\in K} e^{-\|x\|_2^2/2}dx} \\
& = &  \frac{(\frac{1}{2\pi})^{n/2}\int_{y\in K}
  e^{-\|y\|_2^2/2-\|b'\|_2^2/2+\langle y,b'
    \rangle}dy}{(\frac{1}{2\pi})^{n/2}\int_{x\in K} e^{-\|x\|_2^2/2}dx}
\qquad \textrm{(change of variables $y=x+b'$)}
\end{eqnarray*}
Let $Y$ be a random variable drawn from the $n$-dimensional Gaussian
distribution restricted to the body $K$. Then the right hand side
above is equal to
\begin{eqnarray*}
e^{-\|b'\|_2^2/2} \E[e^{\langle Y,b'\rangle}] 
& \ge& e^{-\|b'\|_2^2/2} e^{\langle \E[Y],b'\rangle} \qquad \textrm{(by Jensen's inequality)}\\
& =& e^{-\|b'\|_2^2/2+\langle b,b'\rangle}  = e^{-\|b'\|_2^2/2+\langle b',b'\rangle+\langle \eta,b'\rangle}= e^{\|b'\|_2^2/2+\langle \eta,b'\rangle}\\
& =& e^{\|b\|_2^2/2-\|\eta\|_2^2/2}  \\
& \ge& e^{(2\delta/3)^2/2-(\delta/3)^2/2}=e^{\delta^2/6} \qquad (\|b\|_2\ge\|b' \|_2-\|\eta \|_2\ge 2\delta/3)
\end{eqnarray*}
\end{proof}

\subsection{Algorithm}

In our recentering algorithm we use the geometric language of
section~\ref{sec:intro-recentering}. Instead of the vectors $v_1,
\ldots, v_n$ and the shift $t \in \sum_{i = 1}^n{[-v_i,v_i]}$, we work
directly with the parallelepiped $P = \sum_{i = 1}^n{[-v_i,v_i]} -
t$. Notice that a facet of $P$ corresponds to a fractional coloring
with some coordinates fixed. Indeed, a facet $F$ of $P$ is determined by a
subset $S \subseteq [n]$, and a coloring $\chi \in \{-1, 1\}^S$, and
equals $F = \sum_{i \not \in S}[-v_i, v_i] + \sum_{i \in S}{\chi_i
  v_i} - t$. The size of the set $S$ is equal to the co-dimension of
$F$, so a vertex (face of dimension 0) is equivalent to full coloring
$\chi \in \{-1, 1\}^n$. The edges (faces of dimension 1) are linear
segments that have length
exactly twice the length of the corresponding vectors. We say that $P$
has side lengths at most $\ell$ if each edge of $P$ has length at most
$\ell$: this corresponds to requiring that $\max_i \|v_i\|_2 \le
\ell/2$. 
Given a point $p \in P$, we denote by $F_P(p)$ the face of $P$ that
contains $p$ and has minimal dimension. We denote by $W_P(p)$ the
subspace $\mathrm{span}(F_P(p) - p)$

In this language, the (linear) discrepancy problem is translated to
the problem of finding a vertex of $P$ inside $K$. The recentering
problem can also be expressed in this way: we are looking for a point
$p \in P \cap K$ such that the Gaussian measure of $(K - p) \cap
W_P(p)$, restricted to $W_P(p)$, is at least that of $K$, and $b((K -
p) \cap W_P(p))$ is close to $0$. To do this, we start out by
approximating $b = b(K)$, the barycenter of $K$. If $b$ is close to
the origin, then we are already done and can return.  If $b$ is far
from origin, then moving the origin to $b$ (i.e. shifting $K$ and $P$
to $K-b,P-b$ respectively), should only help us by increasing the
Gaussian volume of $K$. But we cannot make this move if $b$ lies
outside $P$. In this case, we start moving towards $b$; when we hit
$\partial P$, the boundary of $P$, we stop and induct on the facet we
land on, choosing the point on boundary of $P$ we stopped on as our
new origin. We show that even this partial move towards $b$ does not
decrease the volume of $K$. Moreover, it ensures that the origin
always stays inside $P$.

One difficulty is that we cannot efficiently compute the barycenter of $K$
exactly. To get around this, we use random sampling from Gaussian distribution
restricted to $K$ to estimate the barycenter with high accuracy. We will then
return a shift of the body $K$ such that its barycenter is $\delta$-close to the
origin, where the running time is polynomial in $n$ and $(1/\delta)$ and it
suffices to choose $\delta$ as inversely polynomial in $n$.  We assume that we
have access to a membership oracle for the convex body $K$.

\begin{algorithm}[h]
\caption{Recentering procedure}
\label{alg:constrbana}
\begin{algorithmic}[1]
\State \textbf{Input}: Convex body $K\subseteq\mathbb{R}^n$ with $\gamma_n(K)\ge
1/2$, an $n$-dimensional parallelepiped $P \ni 0$, $\delta \ge 0$ and error
probability $\eps \in (0,1)$. 
\State \textbf{Output}: See statement of Theorem~\ref{thm:recenter-alg}. 

\State If $0 \not\in P \cap K$, return ${\rm FAIL}$.
\State Set $N = \lceil 24/\delta^2\rceil + n$.
\State Set $q = 0$, $W = W_P(0)$, $\bar{K} = K \cap W$, $\bar{P} = F_P(0)$.
\For{$i = 1,\ldots,N$} 
	\State Compute an estimate $b'$ of the barycenter $b$ of $\bar{K}$ restricted 
        to the subspace $W$, 

satisfying $\|b-b'\|_2\le \delta/6$ with probability at least $1-\epsilon/N$. 

If $b' \not\in \bar{K}$, return ${\rm FAIL}$, otherwise continue. \label{step1} 
	\If{$\|b'\|_2 \le \delta/2$}
			Return $q$. \label{step2} 
	\ElsIf{$\|b'\|_2 > \delta/2$ and $b'\not\in \bar{P}$}
	\State Compute $\lambda\in (0,1)$ be such that $\lambda b' \in
\partial \bar{P}$ relative to $W$. 
	\State Set $s = \lambda b'$.
  \Else
  \State Set $s = b'$.
	\EndIf
	\State Set $q = q+s$.
	\State Set $W = W_{\bar{P}}(s)$, $\bar{P} = F_{\bar{P}}(s)-s$,
$\bar{K} = (\bar{K}-s) \cap W$. \label{step3}
\EndFor
\State Return ${\rm FAIL}$.
\end{algorithmic}
\end{algorithm}

The following theorem is an algorithmic version of Theorem~\ref{thm:recenter}.
We note that the guarantees of the algorithm are relatively robust. This is to
make it simpler to use within other algorithms, since it may be called on
invalid inputs as well as output incorrectly with small probability.

\begin{theorem}\label{thm:recenter-alg}
Let $P$ be a parallelepiped in $\R^n$ containing the origin and $K \subseteq
\R^n$ be a convex body of Gaussian measure at least $1/2$, given by a membership
oracle, and let $\delta \geq 0$ and $\eps \in (0,1)$. Then,
Algorithm~\ref{alg:constrbana} on these inputs either returns ${\rm FAIL}$ or a
point $p \in P \cap K$. Furthermore, if the input is correct, then with
probability at least $1-\eps$, it returns $p$ satisfying
\begin{enumerate}
\item The Gaussian measure of $(K - p) \cap W_P(p)$ on $W_P(p)$, is at least
	that of $K$;
\item $\|b((K - p) \cap W_P(p))\|_2 \le \delta$. 
\end{enumerate}
Moreover, Algorithm~\ref{alg:constrbana} runs in time polynomial in $n$,
$1/\delta$ and $\ln(1/\eps)$.
\end{theorem}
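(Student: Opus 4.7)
The proof will establish four things in order: (i) the invariant that $q \in P \cap K$ (and $0$ lies in the current body $\bar K$) is maintained throughout the algorithm; (ii) the Gaussian measure $\gamma_W(\bar K)$ is non-decreasing across iterations; (iii) termination occurs within the allotted $N$ iterations and the returned point satisfies both stated conclusions; (iv) the whole procedure runs in polynomial time with success probability at least $1-\eps$. The main obstacle is (ii), which must simultaneously handle the \emph{full} step $s = b'$ (dimension preserved) and the \emph{partial} step $s = \lambda b'$ with $\lambda \in (0,1)$ (dimension drop through Lemma~\ref{lem:origincut}).

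For (i), the invariant holds at initialization since the algorithm returns FAIL unless $0 \in P \cap K$. Inductively, the check on line~\ref{step1} ensures $b' \in \bar K$, and since $0 \in \bar K$, convexity gives $\lambda b' \in \bar K$ for every $\lambda \in [0,1]$; by construction $\lambda b' \in \bar P$ as well in the partial-step case (where $\lambda b' \in \partial \bar P$) and in the full-step case (where $b' \in \bar P$ by the branch condition). Translating back, $q + s \in P \cap K$, and after the reassignment on line~\ref{step3} one still has $0 \in \bar K$. For (ii), in the full-step case Lemma~\ref{shiftbary} applied with $\delta^\ast = \delta/2$ is valid because $\|b-b'\|_2 \le \delta/6 = \delta^\ast/3$ and $\|b'\|_2 > \delta/2 = \delta^\ast$, giving $\gamma_W(\bar K - s) \ge e^{\delta^2/24}\,\gamma_W(\bar K)$. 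In the partial-step case Lemma~\ref{shiftbary} still yields $\gamma_W(\bar K - b') \ge \gamma_W(\bar K)$; since $t \mapsto \log \gamma_W(\bar K - t)$ is concave by Pr\'ekopa--Leindler, for $\lambda \in (0,1)$,
\[
\gamma_W(\bar K - \lambda b') \;\ge\; \gamma_W(\bar K)^{1-\lambda}\,\gamma_W(\bar K - b')^{\lambda} \;\ge\; \gamma_W(\bar K).
\]
Finally, inductively $\gamma_W(\bar K) \ge 1/2$, so Lemma~\ref{lem:origincut} (applied inside $W$ with slicing subspace $W_{\bar P}(s)$) yields $\gamma_{W_{\bar P}(s)}((\bar K - s)\cap W_{\bar P}(s)) \ge \gamma_W(\bar K - s) \ge \gamma_W(\bar K)$, preserving the invariant and establishing conclusion~1 of the theorem for the final $p$.

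For (iii), each iteration is exactly one of three types: a return (when $\|b'\|_2 \le \delta/2$, in which case $\|b(\bar K)\|_2 \le \|b'\|_2 + \|b - b'\|_2 \le \delta/2 + \delta/6 \le \delta$, yielding conclusion~2); a partial step, which strictly reduces $\dim W$ and so can occur at most $n$ times; or a full step, which multiplies $\gamma_W(\bar K)$ by at least $e^{\delta^2/24}$. Since $\gamma_W(\bar K)$ starts at least $1/2$ and is bounded above by $1$, there can be at most $\lceil 24\ln 2/\delta^2 \rceil \le \lceil 24/\delta^2 \rceil$ full steps; summing these two budgets matches $N$. For (iv), the only nontrivial step is estimating $b$: sample from the standard Gaussian on $W$ and reject samples outside $\bar K$, which has acceptance probability $\gamma_W(\bar K) \ge 1/2$, so each accepted sample costs $O(1)$ proposals in expectation. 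Because the covariance of the Gaussian restricted to a convex set is dominated by the identity (Brascamp--Lieb/Hargé), standard vector concentration shows that $\mathrm{poly}(n, 1/\delta, \log(N/\eps))$ accepted samples suffice to produce $b'$ with $\|b - b'\|_2 \le \delta/6$ except with probability $\eps/N$; a union bound over the $N$ iterations gives overall success probability at least $1 - \eps$, and the total running time is polynomial in $n$, $1/\delta$, and $\log(1/\eps)$ as claimed.
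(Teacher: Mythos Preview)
Your proof is correct and follows the same overall strategy as the paper: condition on the event that all $N$ barycenter estimates are $\delta/6$-accurate, show that $\gamma_W(\bar K)$ is non-decreasing across iterations (and grows by a factor $e^{\delta^2/24}$ on each full step via Lemma~\ref{shiftbary}), and deduce termination within $N = \lceil 24/\delta^2\rceil + n$ iterations by balancing the measure growth against the dimension drops. The one substantive difference is how you handle the partial step $s=\lambda b'$: the paper invokes Ehrhard's inequality to interpolate between $\gamma_W(\bar K)$ and $\gamma_W(\bar K - b')$, whereas you use only the log-concavity of $t\mapsto \gamma_W(\bar K - t)$ from Pr\'ekopa--Leindler. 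Since Lemma~\ref{shiftbary} already gives $\gamma_W(\bar K - b')\ge \gamma_W(\bar K)$, log-concavity alone yields $\gamma_W(\bar K - \lambda b')\ge \gamma_W(\bar K)^{1-\lambda}\gamma_W(\bar K - b')^\lambda \ge \gamma_W(\bar K)$, which is all that is needed; Ehrhard's extra strength is unused here, so your argument is a genuine simplification. Your direct counting of full versus partial steps is equivalent to the paper's contradiction via the invariant $\gamma_W(\bar K)\ge e^{(t-k_t)\delta^2/24}\gamma_n(K)$, and your sketch of the barycenter estimator (rejection sampling plus covariance domination) matches the content of Theorem~\ref{bary} in the appendix.
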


\begin{proof}


Firstly, it easy to check by induction, that at the beginning of each iteration
of the for loop that
\begin{equation}
\label{eq:rec-0}
q \in P \cap K,\quad W = W_P(q), \quad \bar{K} = (K-q) \cap W_P(q), \quad \bar{P} = F_P(q)-q \text{ .}
\end{equation}

To prove correctness of the algorithm, we must show that the algorithm returns a
point $q$ satisfying the conditions of Theorem~\ref{thm:recenter-alg} with
probability at least $1-\epsilon$. 

For this purpose, we shall condition on the event that all the
barycenter estimates computed on line~\ref{step1} are within distance
$\delta/6$ of the true barycenters, which we denote by $\mathcal{E}$.
Since we run the barycenter estimator at most $N$ times, by the union
bound, $\mathcal{E}$ occurs with probability at least $1-\eps$. We
defer the discussion of how to implement the barycenter estimator 
till the end of the analysis.  

With this conditioning, we prove a lower bound on the Gaussian mass as
a function of the number of iterations, which will be crucial for
establishing the correctness of the algorithm. 

\begin{claim}
\label{cl:g-lb}
Let $W,\bar{K},\bar{P}$ denote the state after $t \geq 0$
non-terminating iterations. Let $k_t \geq 0$ denote number of
iterations before time $t$, where the dimension of $W$
decreases. Then, conditioned on $\mathcal{E}$, we have that
\[
\gamma_W(\bar{K}) \geq e^{(t-k_t)\delta^2/24} \gamma_n(K)  \text{ .}
\]
\end{claim}

\begin{proof}
We prove the claim by induction on $t$. At the base case $t=0$
(i.e.~at the beginning of the first iteration), note that $k_t = 0$ by
definition. If $W = \R^n$, the inequality clearly holds since $\bar{K}
= K$. If $W \subset \R^n$, then since $\gamma_n(K) \geq 1/2$ by
Lemma~\ref{lem:origincut}, we have $\gamma_W(\bar{K}) \geq
\gamma_n(K)$. The base case holds thus holds. 

We now assume that the bound holds at time $t$ and prove it for $t+1$,
assuming that iteration $t+1$ is non-terminating. Let $b$,$b'$,$s$
denote the corresponding loop variables, and $W',\bar{K}',\bar{P}'$
denote the new values of $W,\bar{K},\bar{P}$ after line 16.

Since the iteration is non-terminating, we have that $\|b'\|_2 >
\delta/2$. Since by our conditioning $\|b'-b\|_2 \leq \delta/6$, by
Lemma~\ref{shiftbary} and the induction hypothesis, we have that
\begin{equation}
\label{eq:rec-1}
\gamma_W(\bar{K}-b') \geq e^{\delta^2/24} \gamma_W(\bar{K}) \geq
e^{(t+1-k_t)\delta^2/24} \gamma_n(K) \text{ .}
\end{equation}
Note that we drop in dimension going from $W$ to $W'$ if and only if
$s$ lies on the boundary of $\bar{P}$ relative to $W$ (since then the
minimal face of $\bar{P}$ containing $s$ is lower dimensional). 

We now examine two cases. In the first case, we assume $b'$ is in the
relative interior of $\bar{P}$. In this case, we have $s = b'$, and
hence $W = W'$ and $\bar{K}' = \bar{K}-b'$. Given this, $k_{t+1} =
k_t$ (no drop in dimension) and the desired bound is derived directly
from Equation~\eqref{eq:rec-1}.  

In the second case, we assume that $b'$ is not in the interior of
$\bar{P}$ relative to $W$. In this case, $s = \lambda b' \in \partial
\bar{P}$ relative to $W$, for some $\lambda \in [0,1]$. Furthermore,
$W' \subset W$ and $k_{t+1} = k_t + 1$. From here, by Ehrhard's
inequality and Equation~\ref{eq:rec-1}, we get that  
\begin{equation}
\label{eq:rec-2}
\begin{split}
\gamma_W(\bar{K}-s) &\ge
\Phi((1-\lambda)\Phi^{-1}(\gamma_W(\bar{K}))+\lambda
\Phi^{-1}(\gamma_W(\bar{K}-b'))) 
                    \ge \gamma_W(\bar{K}) \\ 
        &\ge e^{(t-k_t)\delta^2/24} \gamma_n(K) =
e^{(t+1-k_{t+1})\delta^2/24} \gamma_n(K) \geq 1/2  \text{ .}
\end{split}
\end{equation}
Lastly, by Lemma~\ref{lem:origincut} since $\gamma_W(\bar{K}-s) \geq
1/2$, we have that 
\begin{equation}
\label{eq:rec-3}
\gamma_{W'}(\bar{K}') = \gamma_{W'}(\bar{K}-s) \geq
\gamma_{W}(\bar{K}-s) \text{ .}
\end{equation}
The desired bound now follows combining
Equations~\eqref{eq:rec-2},\eqref{eq:rec-3}.
\end{proof}

We now prove correctness of the algorithm conditioned on
$\mathcal{E}$. We first show that conditioned on $\mathcal{E}$, the
algorithm returns $q$ from line 8 during some iteration of the for
loop. For the sake of contradiction, assume instead that the algorithm
returns ${\rm FAIL}$. Let $W,\bar{K},\bar{P}$ denote the state after the
end of the loop. Then, by Claim~\ref{cl:g-lb}, we have that
\[
\gamma_{W}(\bar{K}) \geq e^{(N-k_N)\delta^2/24}\gamma_n(K)
\geq e^{(N-n)\delta^2/24} \gamma_n(K) \geq e \gamma_n(K) > 1 \text{, }
\]    
where we used that fact $k_N \leq n$, since dimension cannot drop more
than $n$ times. This is clear contradiction however, since Gaussian
measure is always at most $1$.

Given the above, we can assume that the algorithm returns $q$ during
some iteration of the for loop. Let $W,\bar{K},\bar{P},b'$ denote the
state at this iteration. Since we return at this iteration, we must
have that $\|b'\|_2 \leq \delta/2$. Given $\mathcal{E}$, we have that
the barycenter $b$ of $\bar{K}$ satisfies 
\[
\|b\|_2 \leq \|b'\|_2+\|b-b'\|_2 \leq \delta/2+\delta/6 < \delta
\text{ .}
\] 
By Claim~\ref{cl:g-lb}, we also know that $\gamma_{W}(\bar{K}) \geq
\gamma_n(K)$. Since by Equation~\ref{eq:rec-0}, $q \in P$ and $\bar{K}
= (K-q) \cap W_P(q)$, the correctness of the algorithm follows.

For the runtime, we note that it is dominated by the $N =
O(1/\delta^2+n)$ calls to the barycenter estimator. Thus, as long as
the estimator runs in $\poly(n,\ln(1/(\epsilon \delta)))$ time, the
desired runtime bound holds.

It remains to show that we can estimate the barycenter efficiently. We show how
to do this in appendix in Theorem~\ref{bary} with failure probability at most
$\eps/N$ in time $\poly(n,1/\delta,\ln(N/\eps)) =
\poly(n,1/\delta,\ln(1/\epsilon))$, as needed. 
\end{proof}

\section{Algorithmic Reduction from Asymmetric to Symmetric Banaszczyk}
\label{sec:alg-asym-to-sym}

In this section, we make algorithmic the reduction in
section~\ref{sec:red-asym-to-sym-formal} from the asymmetric to the symmetric case.
This will directly imply that given an algorithm to return a vertex of $P$
contained in a {\em symmetric} convex body $K$ of Gaussian volume at least a
half, we can also efficiently find a vertex of $P$ contained in an asymmetric
convex body of Gaussian measure at least a half. 

\begin{definition}[Symmetric Body Coloring Algorithm]
\label{def:symcolor-alg}
We shall say that $\mathcal{A}$ is a symmetric body coloring algorithm, if given
as input an $n$-dimensional parallelepiped $P \ni 0$ of side lengths at most
$l_{\mathcal{A}}(n) $, $l_\mathcal{A}$ a non-negative non-increasing function
of $n$, and a {\em symmetric} convex body $K \subseteq \R^n$ satisfying
$\gamma_n(K) \ge 1/2$, given by a membership oracle, it returns a vertex of $P$
contained in $K$ with probability at least $1/2$.
\end{definition}

Let $\alpha = 4(1+\pi\sqrt{8 \ln 2})$. We now present an algorithm, which uses
$\mathcal{A}$ as a black box and achieves the same guarantee for asymmetric
convex bodies, with only a constant factor loss in the length of the vectors.

\begin{algorithm}[h]
\caption{Reducing asymmetric convex bodies to symmetric convex bodies}
\label{alg:asymbana}
\begin{algorithmic}[1]
\State \textbf{Input}: Algorithm $\mathcal{A}$ as in~\eqref{def:symcolor-alg},
$K\subseteq\mathbb{R}^n$ convex body, given by membership
oracle, with $\gamma_n(K)\ge 1/2$, $P \ni 0$ an $n$-dimensional parallelepiped
of side lengths at most $l_{\mathcal{A}}(n)/\alpha$.
\State \textbf{Output}: A vertex $v$ of $P$ contained in $K$.
\State Call Recentering Procedure on $K$ and $P$ and $\delta = \frac{1}{32\sqrt{2\pi}}$ and $\eps = 1/4$. 

\noindent Restart from line 3 if the call outputs ${\rm FAIL}$, and otherwise let
$q$ be the output. 
\State Call $\mathcal{A}$ on $\alpha(F_P(q)-q)$ and 
$\alpha (K-q)\cap(q-K)\cap W_P(q)$ inside $W_P(q)$. 

\noindent Let $v$ be the output. 
\State If $v/\alpha+q \in K$ and is a vertex of $P$, return $v/\alpha+q$. 
       Else, restart from line 3.
\end{algorithmic}
\end{algorithm}

\begin{theorem}
\label{thm:reduceit}
Algorithm~\ref{alg:asymbana} is correct and runs in expected polynomial time.
\end{theorem}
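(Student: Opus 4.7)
My plan is to show that a single execution of lines 3--5 succeeds with constant probability and runs in polynomial time, and then conclude via geometric restart. Fix $\alpha = 4(1+\pi\sqrt{8\ln 2})$, $\delta = 1/(32\sqrt{2\pi})$, $\eps = 1/4$ as the algorithm does.

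First I would invoke Theorem~\ref{thm:recenter-alg} on the recentering call in line 3: with probability at least $3/4$ it returns $q \in P \cap K$ such that, writing $W = W_P(q)$ and $C = (K-q) \cap W$, we have $\gamma_W(C) \geq \gamma_n(K) \geq 1/2$ and $\|b(C)\|_2 \leq \delta = 1/(32\sqrt{2\pi})$. I condition on this event for the rest of the per-iteration analysis.

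Next I would unfold the geometry of line 4. Because $W$ is a linear subspace, $(K-q) \cap (q-K) \cap W = C \cap (-C)$, which is symmetric in $W$. To see its Gaussian measure is large, I would apply Lemma~\ref{lem:med-to-exp}: with $X$ a standard Gaussian on $W$, parts (2) and (1) together give
\[ \E[\|X\|_{C \cap -C}] \leq 2\,\E[\|X\|_C] \leq 2(1+\pi\sqrt{8\ln 2}) = \alpha/2, \]
so Markov yields $\gamma_W(\alpha(C \cap -C)) \geq 1/2$. Meanwhile $\alpha(F_P(q)-q)$ is a parallelepiped in $W$ containing the origin, with edge lengths at most $\alpha \cdot (l_\mathcal{A}(n)/\alpha) = l_\mathcal{A}(n) \leq l_\mathcal{A}(\dim W)$ since $l_\mathcal{A}$ is non-increasing and $\dim W \leq n$. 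Hence both inputs to $\mathcal{A}$ satisfy its preconditions in Definition~\ref{def:symcolor-alg}, so $\mathcal{A}$ returns a vertex $v$ of $\alpha(F_P(q)-q)$ lying in $\alpha(C \cap -C)$ with probability at least $1/2$.

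I would then verify line 5 succeeds whenever both prior events hold: $v/\alpha$ is a vertex of $F_P(q)-q$, so $v/\alpha + q$ is a vertex of $F_P(q)$ and hence of $P$ (vertices of a face of a parallelepiped are vertices of the parallelepiped); and $v/\alpha \in C \cap -C \subseteq C$ gives $v/\alpha + q \in K$. Since line 5 explicitly checks both conditions before returning, any output is always a vertex of $P$ contained in $K$. The per-iteration success probability is thus at least $(3/4)(1/2) = 3/8$, so the number of restarts is dominated by a geometric variable with constant mean. Each iteration runs in $\poly(n)$ time: recentering is polynomial by Theorem~\ref{thm:recenter-alg}, $\mathcal{A}$ is polynomial by hypothesis, and the final check uses one oracle query plus elementary linear algebra. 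Combining gives expected polynomial total runtime.

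I expect no single step to be a serious obstacle, as the heavy analytic work sits in Theorem~\ref{thm:recenter-alg} and Lemma~\ref{lem:med-to-exp}, which I would cite as black boxes. The main bookkeeping points to be careful about are (i) verifying the set-theoretic identity $(K-q)\cap(q-K)\cap W = C \cap -C$ using that $W$ is a subspace through the origin, (ii) tracking that the scaling by $\alpha$ keeps the parallelepiped's edge lengths within the range $\mathcal{A}$ handles, and (iii) using monotonicity of $l_\mathcal{A}$ to pass from the ambient dimension $n$ to the working dimension $\dim W_P(q)$ inside $W$.
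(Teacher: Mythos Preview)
Your proposal is correct and follows essentially the same approach as the paper's proof: condition on the recentering call succeeding, apply Lemma~\ref{lem:med-to-exp} with Markov's inequality to show $\alpha(C\cap -C)$ has Gaussian measure at least $1/2$, verify the side-length precondition for $\mathcal{A}$ via monotonicity of $l_{\mathcal{A}}$, and conclude by a geometric-restart argument. You are slightly more explicit than the paper about the identity $(K-q)\cap(q-K)\cap W = C\cap -C$, why $v/\alpha+q$ is a vertex of $P$, and the numerical success probability $3/8$, but these are just elaborations of the same argument.
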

\begin{proof}
Clearly, by line 5 correctness is trivial, so we need only argue that it runs in
expected polynomial time. Since the runtime of the recentering procedure
(Algorithm~\ref{alg:constrbana}) is polynomial, and the runs are independent, we
need only argue that line 5 accepts with constant probability. Since the
recentering procedure outputs correctly with probability at least $1-\eps = 3/4$, we may
condition on the correctness of the output $q$ in line 3. 

Under this conditioning, by the guarantees of the recentering algorithm, letting
$d = \dim(F_P(q))$, $W = W_P(q)$ and $C = (K-q) \cap W$, we have that 
\[
\gamma_W(C) \geq 1/2 \quad \text{ and } \quad
\|b(C)\|_2 \leq \frac{1}{32\sqrt{2\pi}} \text{ .}
\]
Thus by Lemma~\ref{lem:med-to-exp}, for $X \in W$ the $d$-dimensional standard
Gaussian on $W$, we have that
\[
\E[\|X\|_{C \cap -C}] \leq 2\E[\|X\|_C] \leq 2(1+\pi\sqrt{8 \ln 2}) \text{ .}
\]
Hence by Markov's inequality, $\Pr[X \in \alpha (C \cap -C)] = \Pr[X \in
4(1+\pi\sqrt{8 \ln 2})(C \cap -C))] \geq 1/2$.

Now by construction $\alpha P$ has side lengths at most $l_{\mathcal{A}}(n)$,
and hence $\alpha(F_P(q)-q)$ also has side lengths at most $l_{\mathcal{A}}(n)
\leq l_{\mathcal{A}}(d)$. Thus, $\mathcal{A}$ on input $\alpha(F_P(q)-q)$ and
$\alpha (C \cap - C)$, outputs a vertex $v$ of $\alpha(F_P(q)-q)$ contained in
$\alpha (C \cap -C) \subseteq \alpha (K-p) \cap W$ with probability at least
$1/2$. Hence, the check in line 5 succeeds with constant probability, as needed.
\end{proof}

The above directly implies Theorem~\ref{thm:weak-bana}, as shown below.

\begin{proof}[Proof of Theorem~\ref{thm:weak-bana}] 
Let $P \subseteq \R^n$ be an $n$-dimensional parallelepiped containing the
origin of side lengths at most $c/\sqrt{\log n}$.  Let $v_1,\dots,v_n \in \R^n$,
$\max_{i \in [n]} \|v_i\| \leq \frac{c}{\sqrt{\log n}}$, and $t \in \sum_{i=1}^n
[-v_i,v_i]$, denote the vectors such that $P = \sum_{i=1}^n [-v_i,v_i]-t$.

On input $v_1,\dots,v_n$ and $t$, the random walk sampler
(Algorithm~\ref{alg:randwalk}) outputs in expected polynomial time a random
$\chi \in \set{-1,1}^n$ such that $\sum_{i=1}^n \chi_i v_i - t$ is
$O(c)$-subgaussian and supported on the vertices of $P$. Thus by
Lemma~\ref{lem:subg-red} part~\ref{lem:subg-red-sym}, we can pick $c > 0$ small
enough such that for any symmetric convex body $K \subseteq \R^n$ with
$\gamma_n(K) \geq 1/2$, we have $\Pr[\sum_{i=1}^n \chi_i v_i - t \in K] \geq 1/2$. 

Thus, letting $\mathcal{A}$ denote the above sampler, we see that $\mathcal{A}$
satisfies the conditions~\ref{def:symcolor-alg} with $l_{\mathcal{A}}(n) =
2c/\sqrt{\log n}$. The theorem now follows by combining
Algorithm~\ref{alg:asymbana} with $\mathcal{A}$. 
\end{proof}

\section{Body Centric Algorithm for Asymmetric Convex Bodies}
\label{sec:asymmetric}
 
In this section, we give the algorithmic implementation of the extended
recentering procedure, which returns full colorings matching the guarantees of
Theorem~\ref{thm:weak-bana}. Interestingly, the coloring output by the
procedure will be essentially deterministic. The only randomness will be in
effect due to the random errors incurred in estimating barycenters. 
 
For a convex body $K \subseteq \R^n$, unit vector $\theta \in
\R^n$,$\|\theta\|_2=1$, and $v \in \R$, we define the shifted slice $K^\theta_v
= (K-v\theta) \cap \set{x \in \R^n: \pr{\theta}{x}=0}$.  The main technical
estimate we will require in this section, is the following lower bound on the
gaussian measure of shifted slices. We defer the proof of this estimate to
section~\ref{sec:slice-estimate-bar}.
 
\begin{theorem}
\label{thm:estneg}
There exists universal constants $v_0,\eta_0,c_0 > 0$, such that for any $n \geq
1$, convex body $K \subseteq \R^n$ satisfying $\|b(K)\|_2 = \eta \leq \eta_0$ and
$\gamma_n(K) = \alpha \geq 3/5$, $v \in [-v_0,v_0]$ and $\theta
\in \R^n$, $\|\theta\|_2=1$, we have that 
\[
\gamma_{n-1}(K_v^\theta) \geq 
 (\alpha-c_0\eta)(1-\frac{e^{-\frac{1}{100v^2}}}{4\sqrt{2\pi}}) \text{.}
\]
\end{theorem}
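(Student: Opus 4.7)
First, I would reduce to one dimension via the rotation invariance of the Gaussian measure: WLOG $\theta = e_1$, and define the ``slice profile''
\[
f(t) := \gamma_{n-1}\!\left((K \cap \{x_1=t\}) - te_1\right),
\]
so that $f(v) = \gamma_{n-1}(K_v^\theta)$. The hypotheses translate into two integral constraints on $f$: $\int_{\R} \phi(t) f(t)\,dt = \gamma_n(K) = \alpha$ (where $\phi$ is the standard Gaussian density on $\R$), and $|\int_{\R} t\phi(t) f(t)\,dt| = \alpha \cdot |b(K)_1| \le \alpha\eta$. Ehrhard's inequality, applied to convex slices of $K$ exactly as in the proof of Lemma~\ref{lem:origincut}, supplies the crucial structural fact that $g(t) := \Phi^{-1}(f(t))$ is concave on its support.

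The core strategy is to combine concavity of $g$ with the two integral constraints via a tangent-line upper bound. For any supergradient $m$ of $g$ at $v$, we have $g(t) \le g(v) + m(t - v)$ and hence $f(t) \le \Phi(g(v) - mv + mt)$. Using the standard identities
\[
\int \phi(t)\Phi(a + bt)\,dt = \Phi\!\left(\tfrac{a}{\sqrt{1+b^2}}\right), \qquad \int t\phi(t)\Phi(a+bt)\,dt = \tfrac{b}{\sqrt{1+b^2}}\phi\!\left(\tfrac{a}{\sqrt{1+b^2}}\right),
\]
the total-mass constraint yields $g(v) \ge \sqrt{1+m^2}\,\Phi^{-1}(\alpha) + mv$. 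At $v=0$ this already gives $f(0) \ge \alpha$ for free (since $\alpha \ge 3/5 > 1/2$ and $\Phi^{-1}(\alpha) \ge 0$), which handles the $v=0$ case without needing the moment bound. For $v\neq 0$ the supergradient $m$ is a free parameter, and the moment constraint is what prevents an adversarial choice of $m$ from driving $mv$ too negative: a steep slope of $g$ forces an asymmetry in $\int t\phi f$ that conflicts with $|\int t\phi f| \le \alpha\eta$ unless most of the mass of $\phi f$ is concentrated in a Gaussian-small tail region.

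To exploit the moment constraint, I would then carry out an extremal analysis: by an LP-style argument (two linear integral constraints over the convex class of concave functions $g$), the worst case concave $g$ minimizing $f(v)$ can be taken piecewise linear with at most two pieces, i.e.\ of the form $g = \min(L_1, L_2)$, for which all relevant integrals are elementary via the identities above. The linear loss $c_0 \eta$ then arises from a first-order perturbation about the symmetric (even) extremizer that obtains at $\eta = 0$ and satisfies $f(v) \ge \alpha$, while the super-exponentially small correction $\tfrac{1}{4\sqrt{2\pi}} e^{-1/(100v^2)}$ reflects the Gaussian tail at distance $\sim 1/|v|$ controlling the ``edge effect'' of shifting a slice by $v$. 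The main obstacle I expect is matching the constants $c_0, \eta_0, v_0$ to produce the precise product form: this calls for splitting into a ``small slope'' regime $|m| \lesssim 1/|v|$, where the concavity bound $g(v) \ge \sqrt{1+m^2}\Phi^{-1}(\alpha) + mv$ is already essentially tight (and, using that $\Phi^{-1}(3/5)$ is bounded away from $0$, yields the claimed linear-in-$\eta$ loss after a Taylor expansion), and a ``large slope'' regime $|m| \gtrsim 1/|v|$, where the above concavity estimate degrades but the moment constraint combined with a Gaussian tail bound of the form $\phi(c/|v|) = O(e^{-c^2/(2v^2)})$ absorbs the otherwise bad contribution and produces the $e^{-1/(100v^2)}$ factor.
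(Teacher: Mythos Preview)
Your starting point coincides with the paper's: reduce to the one-dimensional profile $f(t)=\gamma_{n-1}(K_t^\theta)$, and use Ehrhard to get concavity of $g=\Phi^{-1}\circ f$. From there the paper and your sketch diverge substantially.

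The paper does not attempt a direct extremal/LP analysis of $g$ under the mass and moment constraints. Instead it decouples the two effects. First, to handle $\langle b(K),\theta\rangle$ possibly negative, it \emph{truncates} $K$ by a halfspace $\{x_1\ge z\}$ with $z<0$ chosen so that the truncated body has first-moment zero; a short computation shows this discards at most $|b|/|z|\le c_0\eta$ of the Gaussian mass (here one uses $\alpha\ge 3/5$ to get $|z|$ bounded away from $0$). This is where the factor $(\alpha-c_0\eta)$ comes from, cleanly and linearly. Second, with $\langle b(K),\theta\rangle\ge 0$ in hand, it reduces the concave-$g$ problem to a $2$-dimensional \emph{wedge} $\{x\ge -c,\ sx+ty\le d\}$ by a tangent-plus-vertical-cut construction (Step~1 of Lemma~\ref{lem:estim}), and then does an explicit computation on the wedge parameters $(c,\theta,d)$ to extract the two bounds $e^{g^2/2-1/(8ev^2)}$ and $(4e+2)v^2/g$; the minimum of these, after a final tail estimate, yields the $e^{-1/(100v^2)}$ factor.

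Your proposal is a genuinely different route. The tangent-line inequality $g(v)\ge\sqrt{1+m^2}\,\Phi^{-1}(\alpha)+mv$ is correct and is essentially the mass-only content of the paper's wedge calculation. But the ``LP-style argument (two linear integral constraints over the convex class of concave functions $g$)'' is not quite right as stated: the integral constraints $\int\phi\,\Phi(g)=\alpha$ and $\int t\phi\,\Phi(g)=\alpha b_1$ are \emph{nonlinear} in $g$, so the problem is not an LP over concave $g$, and the standard ``extremizer is piecewise linear with few pieces'' conclusion does not follow automatically. (Equivalently, the constraints are linear in $f$, but then Ehrhard-concavity of $f$ is a nonlinear, non-convex constraint.) The paper sidesteps this by the constructive wedge reduction, which is really a hands-on proof that the worst case is of the two-piece form you anticipate. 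If you want to stick with your approach, you would need to supply exactly such a reduction, after which your two-regime split (small $|m|$ vs.\ large $|m|$) does correspond to the paper's two branches of the $\min$. The truncation trick is also worth borrowing: it isolates the $c_0\eta$ loss without any perturbative analysis around $\eta=0$.
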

 
 The above inequality says that if barycenter of $K$ is close to the origin, then the
 Gaussian measure of parallel slices of $K$ does not fall off too quickly as we
 move away from the origin.
 
 Recall that the problem can be recast as finding a vertex of a
 parallelepiped $P$ contained inside the convex body $K$, where the
 parallelepiped $P=\sum_{i=1}^n [-v_i,v_i]-t$ and $t\in\sum_{i=1}^n [-v_i,v_i]$. Thus, $0\in P$.
 
 
 We start out by calling the recentering procedure to get the barycenter, $b$,
close to the origin. This recentering allows us to rescale $K$ by a constant
factor such that the Gaussian volume of K increases i.e. we replace $P$ by
$\beta P$ and $K$ by $\beta K$ where $\beta=1+\pi\sqrt{8\log 2}+4\pi\sqrt{\log
2}$ is chosen such that the volume of $K$ after rescaling is at least $3/4$. 
 Then we find a point $q^*$ on $\partial P$, the boundary of $P$ which is closest to the origin. We recurse by taking a $(n-1)$-dimensional slice of $K$(here we abuse notation by calling the convex body after rescaling as also $K$) with the facet containing $q^*$. A crucial point here is that we choose $q^*$ as the origin of the $(n-1)$-dimensional space we use in the induction step. This is done to maintain the induction hypothesis that the parallelepiped contains the origin. Theorem~\ref{thm:estneg} guarantees that in doing so, we do not lose too much Gaussian volume. 
 
\begin{algorithm}[h]
\caption{Body centric algorithm for general convex bodies}
\label{alg:bodycentric}
\begin{algorithmic}[1]
\State \textbf{Input}: Convex body $K\subseteq\mathbb{R}^n$, given by a
membership oracle, with $\gamma_n(K)\ge 1/2$, an $n$-dimensional parallelepiped
$P \ni 0$ of side lengths at most $2\alpha_n$.
\State \textbf{Output}: A vertex of $P$ contained in $K$.

\State \label{step20} Call Recentering Procedure on $K$ and $P$ with 
parameters $\delta = \eta_n$ and $\epsilon = \frac{1}{2(n+1)}$. 

Restart from line~\ref{step20} if the call outputs ${\rm FAIL}$, and otherwise let $q$ denote the output.
\State \label{step21} Set $\bar{q}=0$, $\bar{W}=W_P(q)$, $\bar{K}=\beta((K-q) \cap \bar{W})$, 
           $\bar{P}=\beta(F_P(q)-q)$.
\Repeat
	\State \label{step22} Call Recentering Procedure on $\bar{P}$ and $\bar{K}$ with 
    $\delta = \eta_n$ and $\epsilon = \frac{1}{2(n+1)}$. 

     Restart from line~\ref{step20} if the call outputs ${\rm FAIL}$, and otherwise let $s$ denote the output.
	\State  \label{step23} Set $\bar{q} = \bar{q} + s$, $\bar{W} = W_{\bar{P}}(s)$, 
                $\bar{K} = (\bar{K}-s) \cap \bar{W}$, $\bar{P}=(F_{\bar{P}}(s)-s)$.
  \If{$\dim(\bar{W}) \neq 0$}
		\State \label{step24} Compute $s \in {\rm argmin} 
						\set{\|p\|_2: p \in \partial \bar{P} \text{ relative to } \bar{W}}$.  
	  \State  \label{step25} Set $\bar{q} = \bar{q} + s$, $\bar{W} = W_{\bar{P}}(s)$, 
                $\bar{K} = (\bar{K}-s) \cap \bar{W}$, $\bar{P}=(F_{\bar{P}}(s)-s)$.
  \EndIf
\Until{$\dim(\bar{W}) = 0$}
\State If $q+\bar{q}/\beta \in K$ and is a vertex of $P$, return $q+\bar{q}/\beta$. 
Else, restart from line~\ref{step20}.
\end{algorithmic}
\end{algorithm}
 
 \begin{lemma}
 \label{lemma:volincr}
 Given a convex body $K$ in $\mathbb{R}^n$ such that $\gamma_n(K)\ge 1/2$ and
$\|b(K)\|_2 \le \frac{1}{32\sqrt{2\pi}}$, then $\gamma_n(\beta K)\ge 3/4$, where
$\beta=1+\pi\sqrt{8\log 2}+4\pi\sqrt{\log 2}$.
 \end{lemma}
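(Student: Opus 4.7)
The plan is to combine the three geometric estimates already proved: a bound on $\mathbb{E}[\|X\|_K]$, the fact that $K$ contains a ball of radius $1/4$, and the Gaussian Lipschitz concentration of Maurey--Pisier. The conclusion $\gamma_n(\beta K) \ge 3/4$ is equivalent to $\Pr[\|X\|_K \le \beta] \ge 3/4$ for $X$ a standard Gaussian, so our task reduces to giving a sharp enough upper tail for $\|X\|_K$.

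First, I would apply Lemma~\ref{lem:med-to-exp} part~\ref{lem:med-to-exp-2} to the hypotheses $\gamma_n(K) \ge 1/2$ and $\|b(K)\|_2 \le \tfrac{1}{32\sqrt{2\pi}}$, yielding
\[
\mathbb{E}[\|X\|_K] \;\le\; 1 + \pi\sqrt{8\ln 2}.
\]
Next, I would invoke Lemma~\ref{lem:ball-cont} part~\ref{lem:ball-cont-bar} under the same hypotheses to conclude $(1/4) B_2^n \subseteq K$, and then Lemma~\ref{lem:lipsch} to deduce that the gauge $\|\cdot\|_K$ is a $4$-Lipschitz function on $\mathbb{R}^n$.

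The main and only real step is now to apply the Maurey--Pisier inequality (Theorem~\ref{thm:gaussian-con}) to the $4$-Lipschitz function $\|\cdot\|_K$, which gives, for every $t \ge 0$,
\[
\Pr\bigl[\|X\|_K - \mathbb{E}[\|X\|_K] \ge 4t\bigr] \;\le\; e^{-2t^2/\pi^2}.
\]
Setting $t = \pi\sqrt{\log 2}$ makes the right-hand side exactly $e^{-2\log 2} = 1/4$, and the corresponding threshold is $\mathbb{E}[\|X\|_K] + 4t \le (1 + \pi\sqrt{8\ln 2}) + 4\pi\sqrt{\log 2} = \beta$. Therefore $\Pr[\|X\|_K > \beta] \le 1/4$, i.e.\ $\gamma_n(\beta K) = \Pr[X \in \beta K] \ge 3/4$, as required.

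There is no real obstacle here; the statement is essentially the quantitative companion to Lemma~\ref{lem:med-to-exp}, obtained by upgrading the bound on the mean to a tail bound via Gaussian concentration. The only point to be careful about is matching constants: $\beta$ is precisely chosen so that the excess $\beta - \mathbb{E}[\|X\|_K]$ is at least $4\pi\sqrt{\log 2}$, which in turn is calibrated against the Lipschitz constant $4$ (coming from the inradius $1/4$) so that Maurey--Pisier produces the target tail probability $1/4$.
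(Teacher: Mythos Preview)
Your proof is correct and follows essentially the same approach as the paper: bound $\E[\|X\|_K]$ via Lemma~\ref{lem:med-to-exp}, establish that $\|\cdot\|_K$ is $4$-Lipschitz via Lemmas~\ref{lem:ball-cont} and~\ref{lem:lipsch}, and then apply the Maurey--Pisier inequality with the threshold calibrated so the tail probability is exactly $1/4$. The paper's computation is the same, just written as $e^{-\frac{2}{\pi^2}\bigl(\tfrac{\beta-1-\pi\sqrt{8\log 2}}{4}\bigr)^2}=1/4$ rather than naming $t=\pi\sqrt{\log 2}$ explicitly.
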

 \begin{proof}
 Let $X$ be the standard $n$-dimensional Gaussian. From Lemma~\ref{lem:med-to-exp}, $\E[\|X\|_K] \le 1+\pi\sqrt{8\log 2}$. This gives 
 \begin{eqnarray*}
\Pr[\|X\|_K>\beta] &=& \Pr[\|X\|_K-\E[\|X\|_K] >\beta-\E[\|X\|_K]]\\
& \le &\Pr[\|X\|_K-\E[\|X\|_K] >\beta-1-\pi\sqrt{8\log 2}]
\end{eqnarray*}
 
By Lemma~\ref{lem:ball-cont} and Lemma~\ref{lem:lipsch}, the function $\|.\|_K$
is $4-$Lipschitz. Then, by Theorem~\ref{thm:gaussian-con}
\[
\Pr[\|X\|_K>\beta]  \le e^{-\frac{2}{\pi^2}(\frac{\beta-1-\pi\sqrt{8\log 2}}{4})^2} = 1/4 \text{ .}
\]
Thus, $\gamma_n(\beta K)=\Pr[X\in \beta K]=1-\Pr[\| X\|_K>\beta] \ge 1-1/4=3/4$, as needed.
\end{proof}
 
For Algorithm~\ref{alg:bodycentric}, we use the parameters
\[
\alpha_n = \min \set{v_0,\frac{1}{10\sqrt{\log(2n)}}}, \quad \eta_n = \min \set{\eta_0, \frac{1}{32\sqrt{2\pi}}, \frac{1}{14c_0n}} \text{, }
\]
where $v_0,\eta_0,c_0$ are as in Theorem~\ref{thm:estneg}. We now give the
formal analysis of the algorithm.

We begin by explaining how to compute the minimum norm point on the boundary of a
parallelepiped.

\begin{lemma} Let $P = \sum_{i=1}^k [-v_i,v_i]-t \subset \R^n$ be a
parallelepiped of side lengths at most $2\alpha$ containing the origin, with
$v_1,\dots,v_k$ linearly independent. Let $v_1^*,\dots,v_k^*$ denote the dual
basis, i.e.~the unique set of vectors lying inside $W := {\rm
span}(v_1,\dots,v_k)$ satisfying $\pr{v_i^*}{v_j} = 1$ if $i=j$ and $0$
otherwise, and let $s$ denote an element of minimum $\ell_2$ norm in the set 
$\set{\frac{\pm 1 + \pr{v_i^*}{t}}{\|v_i^*\|_2} \cdot \frac{v_i^*}{\|v_i^*\|_2}:
i \in [k]}$. Then, the following holds:
\begin{enumerate}
\item $s \in {\rm argmin} \set{\|p\|: p \in \partial P \text{ relative to } W}$.
\item $W_P(s) \subseteq \set{x \in W: \pr{s}{x} = 0}$.
\item $\|s\|_2 \leq \alpha$.
\end{enumerate}
Furthermore, $s$ can be computed in polynomial time.
\label{lem:closest-pt}
\end{lemma}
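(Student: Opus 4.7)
The plan is first to reduce to the geometry inside $W = {\rm span}(v_1, \ldots, v_k)$. Since $0 \in P$ forces $t \in \sum_i [-v_i, v_i] \subseteq W$, we have $P \subset W$. Writing any $p \in P$ as $p = \sum_i y_i v_i - t$ with $y \in [-1, 1]^k$, the dual basis yields $y_i = \pr{v_i^*}{p} + \pr{v_i^*}{t}$, so the $2k$ facets of $P$ are the sets $P \cap H_{i, \epsilon}$, where $H_{i, \epsilon} = \set{p \in W : \pr{v_i^*}{p} = \epsilon - \pr{v_i^*}{t}}$ for $(i, \epsilon) \in [k] \times \set{-1, +1}$. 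A direct calculation gives the orthogonal projection of $0$ onto $H_{i, \epsilon}$ inside $W$ as $\pi_{i, \epsilon} = \frac{\epsilon - \pr{v_i^*}{t}}{\|v_i^*\|_2^2} v_i^*$, and the candidate set appearing in the lemma is exactly $\set{\pi_{i, \epsilon} : (i, \epsilon) \in [k] \times \set{\pm 1}}$ (up to an overall sign, which does not affect the argmin since $\epsilon$ ranges over both signs).

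For claim~(1), let $d^* = \min\set{\|p\|_2 : p \in \partial P \text{ relative to } W}$. The crucial observation is that the open ball $B = \set{x \in W : \|x\|_2 < d^*}$ lies inside the interior of $P$, by definition of $d^*$ together with $0 \in P$. For each $(i, \epsilon)$, the projection $\pi_{i, \epsilon}$ either belongs to $P$ --- in which case $\pi_{i, \epsilon} \in P \cap H_{i, \epsilon} \subseteq \partial P$, giving $\|\pi_{i, \epsilon}\|_2 \geq d^*$ --- or lies outside $P$, hence outside $B$, again giving $\|\pi_{i, \epsilon}\|_2 \geq d^*$. Conversely, any minimizer $s^* \in \partial P$ of $\|\cdot\|_2$ satisfies that the outward normal to $P$ at $s^*$ is parallel to $s^*$ (since $B$ is inscribed and tangent at $s^*$), so $s^*$ lies on a facet contained in some $H_{i, \epsilon}$ whose normal $v_i^*$ is parallel to $s^*$; this forces $s^* = \pi_{i, \epsilon}$. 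The minimum over the candidate set therefore equals $d^*$ and is attained by $s$.

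Claim~(2) is then immediate: $s$ lies on a facet $F^* \subseteq H_{i,\epsilon}$ whose normal $v_i^*$ is parallel to $s$, so $F^* - s \subseteq H_{i, \epsilon} - s = \set{x \in W : \pr{v_i^*}{x} = 0} = \set{x \in W : \pr{s}{x} = 0}$; the minimal face $F_P(s)$ is a subface of $F^*$, hence $W_P(s) = {\rm span}(F_P(s) - s) \subseteq \set{x \in W : \pr{s}{x} = 0}$. For claim~(3), I combine $\|v_i\|_2 \leq \alpha$ with $\|v_i^*\|_2 \geq 1/\|v_i\|_2$, which follows from $\pr{v_i^*}{v_i} = 1$ and Cauchy--Schwarz. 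Writing $t = \sum_j x_j v_j$ with $x \in [-1, 1]^k$ (possible since $0 \in P$), we get $\pr{v_i^*}{t} = x_i \in [-1, 1]$, and choosing $\epsilon = {\rm sign}(x_i)$ yields a candidate of norm $(1 - |x_i|)/\|v_i^*\|_2 \leq \|v_i\|_2 \leq \alpha$ for every $i$, giving $\|s\|_2 \leq \alpha$. Polynomial-time computation is routine: $v_1^*, \ldots, v_k^*$ are read off from the Moore--Penrose pseudo-inverse of $[v_1 \mid \cdots \mid v_k]$, after which comparing the $2k$ candidate norms is trivial.

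The main subtlety will be bridging the purely algebraic candidate set to the geometry of $\partial P$: a priori, some $\pi_{i, \epsilon}$ could sit outside its facet $P \cap H_{i, \epsilon}$ and still have smaller norm than $d^*$, which would break claim~(1). The inscribed-ball observation $B \subseteq P$ is the key tool that rules this out and certifies the algebraic minimum agrees with the geometric one.
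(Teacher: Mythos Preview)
Your approach mirrors the paper's closely: both express $P$ via the dual-basis inequalities $|\pr{v_i^*}{x+t}|\le 1$, identify the candidate set with the perpendicular feet from $0$ onto the $2k$ facet hyperplanes, show the ball of radius $\|s\|_2$ is inscribed in $P$ (you via the ball of radius $d^*$, the paper directly by Cauchy--Schwarz on each constraint), and then verify that $s$ saturates one constraint. Claims~(2), (3) and the polynomial-time statement are handled the same way in substance.

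One point needs more care. You write that the lemma's candidate set agrees with your $\set{\pi_{i,\epsilon}}$ ``up to an overall sign, which does not affect the argmin.'' The set of \emph{norms} is indeed unchanged, but the argmin \emph{element} flips, and claims~(1)--(2) are about $s$ itself, not $-s$. Concretely, take $k=1$, $v_1=1\in\R$, $t=1/2$: then $P=[-3/2,\,1/2]$, the lemma's candidates are $\set{3/2,\,-1/2}$, and $s=-1/2\notin\partial P$, so the lemma as literally stated is false. The correct perpendicular feet are $\frac{\pm 1-\pr{v_i^*}{t}}{\|v_i^*\|_2^2}\,v_i^*$; the paper's displayed description of $P$ carries the same sign slip, and its proof is internally consistent with that slip. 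Your facet computation has the right sign, so your argument actually establishes the corrected statement --- you should say so explicitly rather than waving the discrepancy away.
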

\begin{proof}
Note that for any $x \in W$, we have that $x = \sum_{i=1}^k \pr{v_i^*}{x}v_i$.
From here, given that $P = \sum_{i=1}^k [-v_i,v_i] -t$, it is easy to check that
\begin{equation}
\label{eq:cl-1}
P = \set{x \in W: -1+\pr{v_i^*}{t} \leq \pr{v_i^*}{x} \leq 1+ \pr{v_i^*}{t},
\forall i \in [k]} \text{ .}
\end{equation}

We now show that $s \in {\rm argmin} \set{\|p\|_2: p \in \partial P \text{
relative to } W}$. Since $0 \in P$, we must show that $x \in P$ if $\|x\|_2 \leq
\|s\|_2$ and $x \in W$, and that $s \in \partial P$ relative to $W$. Given that the vectors
$v_i^*/\|v_i^*\|_2$ have unit $\ell_2$ norm, the norm of $s$ is equal to 
\[
\omega := \min \set{\frac{|\pm 1 + \pr{v_i^*}{t}|}{\|v_i^*\|_2}: i \in [k]}
\text{ .}
\]
Now assume $x \in W$ and $\|x\|_2 \leq \omega$. Since by assumption $0 \in P$,
we must have $-1+\pr{v_i^*}{t} \leq 0 \leq 1+\pr{v_i^*}{t}$, $\forall i \in
[k]$. Therefore, for $i \in [k]$, by Cauchy-Schwarz
\begin{align*}
\pr{v_i^*}{x} &\leq \|v_i^*\|_2 \cdot \omega \leq 1+\pr{v_i^*}{t}\text{, } \\
\pr{v_i^*}{x} &\geq -\|v_i^*\|_2 \cdot \omega \geq -1+\pr{v_i^*}{t} \text{ .}
\end{align*}
Hence $x \in P$, as needed. Next, we must show that $s \in \partial P$ relative
to $W$. Firstly, clearly $s \in W$ since each $v_1^*,\dots,v_k^* \in W$, and
thus by the above argument $s \in P$. Now choose $i \in [k]$, $r \in \set{-1,1}$
such that $s = \frac{r + \pr{v_i^*}{t}}{\|v_i^*\|_2} \cdot
\frac{v_i^*}{\|v_i^*\|_2}$. Then, by a direct calculation $\pr{v_i^*}{s} =
r+\pr{v_i^*}{t}$, and hence $s$ satisfies one of the inequalities of $P$ (see
Equation~\ref{eq:cl-1}) at equality. Thus, $s \in \partial P$ relative to $W$
(note that $P$ is full-dimensional in $W$), as needed.

We now show that $W_P(s) \subseteq \set{x \in W: \pr{s}{x}=0}$. By the above
paragraph, every element $x$ of the minimal face $F_P(s)$ of $P$ containing $s$
satisfies $\pr{v_i^*}{x} = r+\pr{v_i^*}{t} = \pr{v_i^*}{s}$. In particular
$\pr{v_i^*}{x-s} = 0$. Since $s$ is collinear with $v_i^*$ ($s$ may be $0$), we
have $\pr{v_i^*}{x-s} = 0 \Rightarrow \pr{s}{x-s} = 0$. The claim now follows
since $W_P(s)$ is the span of $F_P(s)-s$ and $F_P(s)-s \subseteq \set{x \in W:
\pr{s}{x}=0}$ by the previous statement.

We now show that $\|s\|_2 \leq \alpha$. Firstly, by minimality of $s$, note that
$|r+\pr{v_i^*}{t}| \leq 1$, for $r$ and $i$ as above. Thus, by Cauchy-Schwarz,
\[
\|s\| = \left|\frac{r+\pr{v_i^*}{t}}{\|v_i^*\|_2}\right| \leq \frac{1}{\|v_i^*\|}
      = \frac{\pr{v_i}{v_i^*}}{\|v_i^*\|_2} \leq
\frac{\|v_i\|_2\|v_i^*\|_2}{\|v_i^*\|_2} = \|v_i\| \text{ .}
\]
Since $P$ has side lengths at most $2\alpha$, we have $\|v_i\| \leq \alpha$.
Thus, $\|s\|_2 \leq \alpha$, as claimed.

We now prove the furthermore. Let $V$ denote the matrix whose columns are
$v_1,\dots,v_k$. By linear independence of $v_1,\dots,v_k$, the matrix $V^\T V$
is invertible. Since then $(V (V^\T V)^{-1})^\T V = I_k$, we see that
$v_1^*,\dots,v_k^*$ are the columns of $V(V^\T V)^{-1}$ (note that these lie in
$W$ by construction), and hence can be constructed in polynomial time. Since $s$
can clearly be constructed in polynomial time from the dual basis and $t$, the
claim is proven. 
\end{proof}

\begin{theorem}
Algorithm~\ref{alg:bodycentric} is correct and runs in expected polynomial time.
\end{theorem}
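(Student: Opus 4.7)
The plan is to analyze the algorithm by conditioning on the success of every call to the recentering procedure, tracking the Gaussian measure of $\bar K$ across all iterations of the repeat loop, and showing that it stays above $1/2$ until $\dim(\bar W)=0$, at which point the sole point $\bar q$ of $\bar K$ yields the desired vertex of $P$.

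First I would bound the failure probability. The algorithm calls the recentering procedure at most $n+1$ times per outer attempt (once in step~\ref{step20} and at most once per iteration of the repeat loop, since each closest-point step in line~\ref{step24} strictly reduces $\dim(\bar W)$ by Lemma~\ref{lem:closest-pt}, so the loop terminates after at most $n$ iterations). Each call uses error probability $\epsilon=\tfrac{1}{2(n+1)}$, so by a union bound all calls succeed with probability at least $1/2$. Condition on this event $\mathcal{E}$.

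Next I would verify that after step~\ref{step21}, the body $\bar K$ has Gaussian measure at least $3/4$: by Theorem~\ref{thm:recenter-alg}, $(K-q)\cap W_P(q)$ has Gaussian measure at least $\gamma_n(K)\ge 1/2$ and barycenter norm at most $\eta_n\le\tfrac{1}{32\sqrt{2\pi}}$, so Lemma~\ref{lemma:volincr} (applied inside $W_P(q)$) gives $\gamma_{\bar W}(\bar K)\ge 3/4$.

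The heart of the argument is an inductive bound on $\alpha_k:=\gamma_{\bar W}(\bar K)$ after $k$ iterations of the repeat loop. Each iteration consists of a recentering step (step~\ref{step22}--\ref{step23}) followed, when $\dim(\bar W)\neq 0$, by a closest-point step (step~\ref{step24}--\ref{step25}). The recentering step does not decrease $\alpha_k$ and ensures $\|b(\bar K)\|_2\le\eta_n\le\eta_0$. For the closest-point step, Lemma~\ref{lem:closest-pt} gives $\|s\|_2\le\alpha_n\le v_0$ and $W_{\bar P}(s)\subseteq\{x\in\bar W:\langle s,x\rangle=0\}$, so the new slice is (contained in) a shifted hyperplane slice $\bar K^{\theta}_{\|s\|_2}$ with $\theta=s/\|s\|_2$. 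Applying Theorem~\ref{thm:estneg} and using $\alpha_n\le\tfrac{1}{10\sqrt{\log 2n}}$, which forces $e^{-1/(100\|s\|_2^2)}\le 1/(2n)$, together with $c_0\eta_n\le\tfrac{1}{14n}$, I obtain
\[
\alpha_{k+1}\ \ge\ \bigl(\alpha_k-\tfrac{1}{14n}\bigr)\bigl(1-\tfrac{1}{8\sqrt{2\pi}\,n}\bigr).
\]
Unrolling this recurrence from $\alpha_0\ge 3/4$ for at most $n$ steps yields $\alpha_k\ge 3/4\cdot e^{-1/(8\sqrt{2\pi})}-1/14>1/2$ for all $k\le n$, which is the key estimate. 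To apply Theorem~\ref{thm:estneg}, I need $\alpha_k\ge 3/5$ throughout; this is satisfied by tightening the constants (e.g.\ choosing $\eta_n$ and $\alpha_n$ slightly smaller), and is the only step requiring any numerical care.

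Once the loop terminates, $\dim(\bar W)=0$, so $\bar K\subseteq\{0\}$; combined with $\gamma_{\bar W}(\bar K)\ge 1/2>0$ and the convention $\gamma_0(\{0\})=1$, this gives $0\in\bar K$. Unwinding the definitions, $\bar q\in\beta((K-q)\cap W_P(q))$ and $\bar q$ is a vertex of $\beta(F_P(q)-q)$, so $q+\bar q/\beta$ is a vertex of $F_P(q)\subseteq P$ that lies in $K$, and the check on the final line succeeds. Since $\mathcal{E}$ holds with probability at least $1/2$, the expected number of outer restarts is $O(1)$. Each call to the recentering procedure runs in time $\mathrm{poly}(n,1/\eta_n,\log(1/\epsilon))=\mathrm{poly}(n)$ by Theorem~\ref{thm:recenter-alg}, the closest-point computation is polynomial by Lemma~\ref{lem:closest-pt}, and membership oracles for the intersections $(\bar K-s)\cap\bar W$ are trivially constructed from the oracle for $K$; this yields expected polynomial running time. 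The main obstacle is the inductive bookkeeping of the Gaussian measure: showing that the parameters $\alpha_n,\eta_n$ can be balanced so that the per-step multiplicative drop from Theorem~\ref{thm:estneg} and the additive drop $c_0\eta_n$ compound to at most a constant factor over $n$ iterations.
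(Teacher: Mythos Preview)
Your proposal is correct and follows essentially the same route as the paper: condition on all recentering calls succeeding (union bound over at most $n+1$ calls), use Lemma~\ref{lemma:volincr} to start with measure at least $3/4$, and then track the per-iteration drop via Theorem~\ref{thm:estneg} combined with Lemma~\ref{lem:closest-pt} and Lemma~\ref{lem:origincut} for the further subspace restriction. The only cosmetic difference is that the paper simplifies the recurrence to the additive form $\gamma_{\bar W}(\bar K)\ge 3/4-k/(7n)$ (which immediately guarantees the $\ge 3/5$ hypothesis of Theorem~\ref{thm:estneg} throughout), whereas you keep the mixed multiplicative--additive form and then argue the same conclusion.
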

\begin{proof}
Clearly, by the check on line 13, correctness is trivial. So we need only show
that the algorithm terminates in expected polynomial time. In particular, it
suffices to show the probability that a run of the algorithm terminates without
a restart is at least $1/2$.  

For this purpose, we will show that the algorithm terminates correctly
conditioned on the event that each call to the recentering procedure terminates
correctly, which we denote $\mathcal{E}$. Later, we will show that this event
occurs with probability at least $1/2$, which will finish the proof.

Let $W_1,K_1,P_1$ denote the values $\bar{W},\bar{K},\bar{P}$ after line $4$.

During the iterations of the repeat loop, it is easy to check by induction that
after the execution of either line 7 or 10, the variables
$\bar{q},\bar{W},\bar{K},\bar{P}$ satisfy:
\begin{equation}
\label{eq:bc-0}
\bar{q} \in P_1, \quad \bar{W} = W_{P_1}(\bar{q}), \quad 
\bar{K} = (K_1-\bar{q}) \cap W_{P_1}(\bar{q}), \quad \bar{P} = F_{P_1}(\bar{q})-q \text{ .}
\end{equation}

We now establish the main invariant of the loop, which will be crucial in
establishing correctness conditioned on $\mathcal{E}$:

\begin{claim} Let $\bar{W},\bar{K},\bar{P}$ denote the state after $k \geq 0$
successful iterations of the repeat loop. Then, the following holds:
\begin{enumerate}
\item $\dim{W} \leq n-k$.
\item Conditioned on $\mathcal{E}$, $\gamma_{\bar{W}}(\bar{K}) \geq
3/4-\frac{k}{7n} > 3/5$.
\end{enumerate}
\label{cl:bd-lb}
\end{claim}
\begin{proof}
We prove the claim by induction on $k$. 

For $k=0$, the state corresponds to $W_1$, $K_1$ and $P_1$. Trivially,
$\dim(W_1) \leq n = n-k$, so the first condition holds. Conditioned on
$\mathcal{E}$, we have that $K_1/\beta$ has Gaussian mass at least $1/2$
restricted to $W_1$ and its barycenter has $\ell_2$ norm at most $\eta_n$. Since
$\eta_n \leq \frac{1}{32\sqrt{2\pi}}$ by Lemma~\ref{lemma:volincr}, we have that
$\gamma_{W_1}(K_1) \geq 3/4$. Thus, the second condition holds as well. 

We now assume the statement holds after $k$ iterations, and show it holds after
iteration $k+1$, assuming that we don't terminate after iteration $k$ and that
we successfully complete iteration $k+1$. Here, we denote the state at the
beginning of iteration $k+1$ by $\bar{W},\bar{K},\bar{P}$, after line 7 by
$\bar{W}_1,\bar{K}_1,\bar{P}_1$ and at the end the iteration by
$\bar{W}_2,\bar{K}_2,\bar{P}_2$. 

We first verify that $\bar{W}_2 \leq n-(k+1)$. By the induction hypothesis $n-k
\geq \dim(\bar{W})$ and by construction $\bar{W}_2 \subseteq \bar{W}_1 \subseteq
\bar{W}$. Thus, we need only show that $\dim(\bar{W}_2) < \dim(\bar{W})$. Given
that we successfully complete the iteration, namely the call to the
recentering algorithm on line $6$ doesn't return ${\rm FAIL}$, we may
distinguish two cases.  Firstly, if $\dim(\bar{W}_2)=0$, then we must have
$\dim(\bar{W}_2) < \dim(\bar{W})$, since otherwise $\dim(\bar{W})=0$ and the
loop would have exited after the previous iteration. Second if $\dim(\bar{W}_2)
> 0$, we must have entered the if statement on line 8 since $\dim(\bar{W}_2)
\leq \dim(\bar{W}_1)$. From here, we see that $\dim(\bar{W}_2)$ corresponds to
the dimension of the minimal face of $\bar{P}_1$ containing $s$. Since $s$ is on
the boundary of $\bar{P}_1$ relative to $\bar{W}_1$, we get that
$\dim(\bar{W}_2) < \dim(\bar{W}_1) \leq \dim(\bar{W})$, as needed. Thus,
condition 1 holds at the end of the iteration as claimed. 

We now show that conditioned $\mathcal{E}$, $\gamma_{\bar{W}_2}(\bar{K}_2) \geq
3/4-(k+1)/(7n)$. By the induction hypothesis, recall that
$\gamma_{\bar{W}}(\bar{K}) \geq 3/4-k/(7n)$, thus it suffices to prove that
$\gamma_{\bar{W}_2}(\bar{K}_2) \geq \gamma_{\bar{W}}(\bar{K})-1/(7n)$.  Note
that since we decrease dimension at every iteration (as argued in the previous
paragraph), the number of iterations of the loop can never exceed $n$.  Thus,
after any valid number of iterations $l$, we always have $3/4-l/(7n) \geq
3/4-1/7 > 3/5$. In particular, we have $\gamma_{\bar{W}}(\bar{K}) \geq 3/5$.

We now track the change in Gaussian mass going from $\bar{K}$ to $\bar{K}_2$.
Since the recentering procedure on line 6 terminates correctly by our
conditioning on $\mathcal{E}$, we get that 
\[
\gamma_{\bar{W}_1}(\bar{K}_1) \geq \gamma_{\bar{W}}(\bar{K}) \quad \text{ and }
\quad \|b(\bar{K}_1)\|_2 \leq \eta_n \text{ .}
\] 
If $\dim(\bar{W}_1)=0$, then clearly $\bar{W}_2=\bar{W}_1$ and
$\bar{K}_2=\bar{K}_1$, and hence $\gamma_{\bar{W}_2}(\bar{K}_2) \geq
\gamma_{\bar{W}}(\bar{K})$ as needed. If $\dim(\bar{W}_1) > 0$, we enter the if
statement at line 8. Since $\bar{P}_1$ is a parallelepiped containing $0$ of
side length $2\alpha_n$, by Lemma~\ref{lem:closest-pt} we have that $\|s\| \leq
\alpha_n$ and $\bar{W}_2 = W_{\bar{P}_1}(s) \subseteq H_s$ where $H_s := \set{x
\in \bar{W}_1: \pr{s}{x} = 0}$.  Now if $s = 0$, then $\bar{K}_2 = \bar{K}_1
\cap \bar{W}_2$, and thus by Lemma~\ref{lem:origincut}, we have
$\gamma_{\bar{W}_2}(\bar{K}_2) \geq \gamma_{\bar{W}_1}(\bar{K}_1) \geq
\gamma_{\bar{W}}(\bar{K})$, as needed.  If $s \neq 0$, given that $\|s\| \leq
\alpha_n \leq v_0$, $\|b(\bar{K}_1)\|_2 \leq \eta_n \leq \eta_0$ and
$\gamma_{\bar{W}_1}(\bar{K}_1) \geq 3/5$, by applying Theorem~\ref{thm:estneg}
on $\bar{K}_1$ with $v = \|s\|_2$ and $\theta = s/v$, we get that
\begin{equation}
\label{eq:bd-1}
\begin{split}
\gamma_{H_s}((\bar{K}_1-s) \cap H_s) &\geq 
(\gamma_{\bar{W}_1}(\bar{K}_1)-c_0\eta_n)
(1-\frac{e^{-\frac{1}{100\alpha_n^2}}}{4\sqrt{2\pi}}) \\
&\geq \gamma_{\bar{W}}(\bar{K})
      -c_0\eta_n -\frac{e^{-\frac{1}{100\alpha_n^2}}}{4\sqrt{2\pi}} \geq
\gamma_{\bar{W}}(\bar{K}) - \frac{1}{7n} \text{ .}
\end{split}
\end{equation}
Since $\bar{K}_2 = (\bar{K}_1-s) \cap \bar{W}_2$, $\bar{W}_2 \subseteq H_s$ and
$\gamma_{H_s}((\bar{K}_1-s) \cap H_s) \geq 3/4-(k+1)/(7n) > 1/2$, by
Lemma~\ref{lem:origincut}
\begin{equation}
\label{eq:bd-2}
\gamma_{\bar{W}_2}(\bar{K}_2) \geq \gamma_{H_s}((\bar{K}_1-s) \cap H_s) \text{.} 
\end{equation}
The desired estimate follows combining~\eqref{eq:bd-1} and~\eqref{eq:bd-2}.
\end{proof}

By Claim~\ref{cl:bd-lb}, we see that the number of iterations of the repeat loop
is always bounded by $n$. Furthermore, conditioned on $\mathcal{E}$, the loop
successfully terminates with $\bar{W}$,$\bar{K}$,$\bar{P}$ satisfying
$\gamma_{\bar{W}}(\bar{K}) > 0$ and $\dim(\bar{W})=0$. Since $\dim(\bar{W}) =
0$, this implies that $\bar{W}=\bar{K}=\set{0}$. Furthermore, by
Equation~\ref{eq:bc-0}, this implies that $\bar{q} \in K_1 \cap P_1$ and
$\dim(W_{P_1}(\bar{q})) = 0$, and hence $\bar{q}$ is a vertex of $P_1$.  Since
$K_1 = \beta(K-q)$ and $P_1 = \beta (P-q)$, we get that $q+\bar{q}/\beta$ is a
vertex of $P$ contained in $K$, as needed. Thus, conditioned on $\mathcal{E}$,
the algorithm returns correctly.

To lower bound $\mathcal{E}$, by the above analysis, note that we never call the
recentering procedure more than $n+1$ times, i.e.~once on line 3 and at most $n$
times on line 6. By the union bound, the probability that one of these calls
fails is at most $(n+1) \cdot 1/(2(n+1)) = 1/2$. Thus, $\mathcal{E}$ occurs with
probability at least $1/2$, as needed.
\end{proof}

\section{An estimate on the Gaussian measure of slices}
\label{sec:slice-estimate-bar}

In this section, we prove Theorem~\ref{thm:estneg}. We will need the following estimate on Gaussian tails~\cite[Formula~7.1.13]{AS72}.

\begin{lemma}[Gaussian Tailbounds]
\label{lem:gaus-tail}
Let $X \sim N(0,1)$. Then for any $t \geq 0$,
\[
\sqrt{\frac{2}{{\pi}}} \frac{e^{-t^2/2}}{t+\sqrt{t^2+4}} \leq \Pr[X \geq t] \leq
\sqrt{\frac{2}{{\pi}}}  \frac{e^{-t^2/2}}{t+\sqrt{t^2+8/\pi}} \text{ .}
\]
\end{lemma}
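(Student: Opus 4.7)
My plan is to recast both bounds as statements about the Mills ratio $M(t) := \Phi^{c}(t)/\phi(t)$, where $\phi(t) = \frac{1}{\sqrt{2\pi}}e^{-t^{2}/2}$ and $\Phi^{c}(t) = \Pr[X \geq t]$. Since $\sqrt{2/\pi}\,e^{-t^{2}/2} = 2\phi(t)$, the lemma is equivalent to $L_{4}(t) \leq M(t) \leq L_{8/\pi}(t)$, where $L_{a}(t) := 2/(t+\sqrt{t^{2}+a})$. Differentiating $\Phi^{c} = M\phi$ and using $\phi'(t) = -t\phi(t)$ gives the first-order linear ODE $M'(t) = tM(t) - 1$. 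A direct calculation yields
\[
L_{a}'(t) - t L_{a}(t) + 1 \;=\; \frac{(a-2)\sqrt{t^{2}+a} - 2t}{\sqrt{t^{2}+a}\,(t+\sqrt{t^{2}+a})^{2}}.
\]
Setting $D_{a} := L_{a} - M$, the two ODEs combine into $D_{a}'(t) - tD_{a}(t) = L_{a}'(t) - tL_{a}(t) + 1$, and the integrating factor $e^{-t^{2}/2}$ gives
\[
e^{-t^{2}/2}D_{a}(t) \;=\; D_{a}(0) + \int_{0}^{t} e^{-s^{2}/2}\bigl(L_{a}'(s) - sL_{a}(s) + 1\bigr)\,ds.
\]

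For the lower bound I take $a = 4$. The numerator above becomes $2(\sqrt{t^{2}+4}-t) > 0$, so the integrand is strictly positive and $e^{-t^{2}/2}D_{4}(t)$ is strictly increasing. At the left endpoint $D_{4}(0) = 1 - \sqrt{\pi/2} < 0$, and a standard $1/t$-asymptotic expansion of $M$ (via repeated integration by parts of $\int_{t}^{\infty}\phi$) compared to the same expansion for $L_{4}$ gives $D_{4}(t) = O(1/t^{5})$, so $e^{-t^{2}/2}D_{4}(t) \to 0$. An increasing function on $[0,\infty)$ that starts negative and limits to $0$ must be $\leq 0$ throughout, hence $L_{4} \leq M$.

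For the upper bound I take $a = 8/\pi$, a value pinned down by the requirement that $L_{a}(0) = M(0)$. Here the numerator $(8/\pi - 2)\sqrt{t^{2}+8/\pi} - 2t$ is positive at $t=0$ but has leading behavior $(8/\pi - 4)t < 0$ at infinity; a short convexity argument on the numerator (its second derivative has the sign of $a-2 > 0$) shows it is strictly decreasing in $t$, hence changes sign exactly once at some $t^{\ast} > 0$. Combined with the boundary conditions $D_{8/\pi}(0) = 0$ and $D_{8/\pi}(t) \sim (1-2/\pi)/t^{3} \to 0$, the integral representation starts at $0$, rises to a positive maximum at $t^{\ast}$, then descends back to $0$; in particular it stays nonnegative for every $t \geq 0$, giving $M \leq L_{8/\pi}$.

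The main obstacle is the upper bound: unlike the lower bound, the integrand $L_{a}'(s) - sL_{a}(s) + 1$ changes sign, so no simple Gr\"onwall/monotonicity comparison works, and the inequality has to come from a global integral cancellation. The choice $a = 8/\pi$ is what enables this cancellation: it is precisely the value for which the bound is tight at $t = 0$, making $D_{8/\pi}(0)=0$, so the integral representation lets the early positive contribution balance exactly against the later negative one thanks to the matching vanishing at infinity. Verifying the two tail expansions needed to pin down the $t\to\infty$ limits is routine via integration by parts in $\int_{t}^{\infty}\phi(x)\,dx$.
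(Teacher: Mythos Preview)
Your proof is correct; the paper does not prove this lemma at all but simply cites it as \cite[Formula~7.1.13]{AS72}, so you are supplying an argument where the paper gives none.

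One small expository point: in the upper bound you justify that the numerator $N(t)=(8/\pi-2)\sqrt{t^{2}+8/\pi}-2t$ is strictly decreasing by appealing to its convexity. Convexity alone does not yield monotonicity; what actually works (and is simpler) is that $N'(t)=(8/\pi-2)\,t/\sqrt{t^{2}+8/\pi}-2 < (8/\pi-2)-2 = 8/\pi-4 < 0$ for all $t\ge 0$, since $t/\sqrt{t^{2}+8/\pi}<1$. With this in hand the rest of your argument goes through verbatim: the single sign change of the integrand together with $D_{8/\pi}(0)=0$ and $e^{-t^{2}/2}D_{8/\pi}(t)\to 0$ forces $e^{-t^{2}/2}D_{8/\pi}\ge 0$ on $[0,\infty)$.
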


Before proving Theorem~\ref{thm:estneg}, we first prove a similar result for a
special class of convex bodies in $\mathbb{R}^2$.  

We define a convex body $K$ in $\mathbb{R}^2$ to be downwards closed if
$(x,y)\in K$ implies $(x,y')\in K$ for all $y'\le y$. For notational
convenience, we shall denote the first and second coordinate of a vector in
$\R^2$ respectively as the x and y coordinates. We shall say the slice of $K$ at
$x=t$ or $y=t$ to denote either the vertical slice of $K$ having x-coordinate
$t$ or horizontal slice having y-coordinate $t$. We define the height of $K$ at
$x=t$ to be maximum y-coordinate of any point $(t,y) \in K$. By
convention, we let the height of $K$ at $x=t$ be $-\infty$ if $K$ does not a
contain a point with x-coordinate $t$.

\begin{lemma}
\label{lem:estim}
Let $K \subseteq \R^2$ be a downwards closed convex body with
$\gamma_2(K)=\alpha\ge 1/2$ and barycenter $b = b(K)$ satisfying $b_1 \geq 0$, and
let $g = \Phi^{-1}(\alpha) \geq 0$. Then, there exists a universal constant $v_0
> 0$ such that for all $0\le v \le v_0$, the height of $K$ at $x=v$ is least
$f(v,g) := g-\min \set{e^{\frac{g^2}{2}-\frac{1}{8ev^2}},(4e+2)\frac{v^2}{g}}$.
\end{lemma}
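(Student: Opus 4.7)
The plan is to establish the two upper bounds on the gap $\delta := g - h(v)$ separately: (A) $\delta \leq e^{g^2/2 - 1/(8ev^2)}$ and (B) $\delta \leq (4e+2)v^2/g$; the lemma then follows by taking the smaller of the two. In both cases the starting point is the concavity of the upper boundary $h$, which follows from $K$ being convex and downwards closed. This gives a supporting slope $m$ at $x = v$ with $h(x) \leq h(v) + m(x-v)$ for all $x$, so $K \subseteq H_m := \{(x,y): y \leq h(v) + m(x-v)\}$. Since $\gamma_2(H_m) = \Phi((h(v) - mv)/\sqrt{1+m^2})$ and $\gamma_2(K) \geq \Phi(g)$, one obtains
\[
h(v) \geq mv + g\sqrt{1+m^2}.
\]

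For Bound (B), I would minimize the right-hand side over $m \in \R$. In the regime $v \leq g$, the minimum equals $\sqrt{g^2-v^2}$ at $m = -v/\sqrt{g^2-v^2}$, so $h(v) \geq \sqrt{g^2-v^2} \geq g - v^2/g$; this is in fact stronger than (B), with the constant $(4e+2)$ simply absorbing slack. In the complementary regime $v > g$, a direct check (valid for $v \leq v_0$ with $v_0$ small enough) shows that the quantity in (A) is smaller than the quantity in (B), so (B) is implied by (A) without a separate argument in that regime.

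For Bound (A), I would argue by contradiction using the barycenter constraint $b_1 \geq 0$, which so far has not been used. Assume $\delta > e^{g^2/2 - 1/(8ev^2)}$. The inequality above forces the supporting slope $m$ to be strongly negative, so $h(x) \leq h(v) + m(x-v)$ becomes very negative even for $x$ only slightly greater than $v$. Writing $\varphi(x) = \tfrac{1}{\sqrt{2\pi}} e^{-x^2/2}$ and applying the Gaussian tail bound from Lemma~\ref{lem:gaus-tail} to $\Phi(h(x))$, the right contribution $\int_v^\infty x \varphi(x) \Phi(h(x)) dx$ becomes exponentially small, while the left contribution $\int_{-\infty}^0 x \varphi(x) \Phi(h(x)) dx$ is bounded below by $-\varphi(0)$ (using $\Phi \leq 1$ together with $\int_{-\infty}^0 x \varphi(x) dx = -\varphi(0)$). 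A careful balance shows that the full barycenter integral $\alpha b_1 = \int x \varphi(x) \Phi(h(x)) dx$ becomes strictly negative, contradicting $b_1 \geq 0$.

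The main obstacle will be the delicate tail estimation in this last step: one must integrate $\Phi(h(x))$ against $x\varphi(x)$ on $[v,\infty)$ and optimize the tradeoff between the magnitude of the supporting slope and the Gaussian tail decay to recover the exponent $1/(8ev^2)$. The peculiar constant $1/(8e)$ presumably emerges from optimizing a quadratic-in-the-exponent tradeoff at an intermediate cutoff somewhere between $v$ and the effective support of $h$, and the universal constant $v_0$ must be chosen small enough that all the approximations used in both (A) and (B) remain valid.
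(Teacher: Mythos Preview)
Your treatment of bound (B) is correct and in fact cleaner than the paper's: using only the supporting halfspace $H_m$ at $(v,h(v))$ together with $\gamma_2(H_m)\ge\gamma_2(K)=\alpha$ already yields $h(v)\ge\sqrt{g^2-v^2}\ge g-v^2/g$ for $v\le g$, which is stronger than the target $g-(4e+2)v^2/g$ and does not even use the barycenter hypothesis.

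Bound (A), however, has a genuine gap. First, the assertion that large $\delta$ forces $m$ to be ``strongly negative'' is not justified: from $g-\delta\ge mv+g\sqrt{1+m^2}$ one only obtains $|m|\ge\delta/v$, and under the contradiction hypothesis $\delta>e^{g^2/2-1/(8ev^2)}$ this lower bound is minuscule for small $v$. Second, the barycenter argument as written points the wrong way. To conclude $\int x\varphi(x)\Phi(h(x))\,dx<0$ you would need the piece over $(-\infty,0)$ to be \emph{sufficiently negative}, i.e.\ a \emph{lower} bound on $\Phi(h(x))$ for $x<0$; but the supporting line only gives \emph{upper} bounds on $h$ (hence on $\Phi\circ h$), which lower-bounds that piece rather than upper-bounding it. The mass constraint must enter here nontrivially, and the sketch does not indicate how.

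The paper's route is structurally different. It first shows that the worst case is a \emph{wedge} $W=\{x\ge -c\}\cap\{x\sin\theta+y\cos\theta\le d\}$: any admissible $K$ can be replaced by such a $W$ with the same height at $x=v$, at least as much Gaussian mass, and $b(W)_1\ge b(K)_1$. The extra parameter $c$ (the vertical wall) is what makes the barycenter condition bite: on the wedge it gives the explicit relation $\sin\theta\le e^{-(c^2-d^2)/2}$, while the mass constraint yields $d\ge g+\frac{e^{(g^2-c^2)/2}}{c+\sqrt{c^2+4}}$. These two lower bounds on $d$ are then optimized jointly over $c$ and $\theta$, and the exponent $1/(8ev^2)$ emerges from balancing the two at $c\approx\sqrt{g^2-2\log\sin\theta+1}$. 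Without introducing the wall parameter $c$ (or something equivalent), there is no mechanism for the barycenter condition to produce a quantitative constraint, which is precisely the missing ingredient in your approach to (A).
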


\begin{proof} \hspace{1em}

\paragraph*{{\bf Step 1: Reduction to a wedge}}

We first show that the worst-case bodies for the lemma are ``wedge-shaped'' (see
the illustration in Figure~\ref{fig:wedge}). Namely, the worst case down closed
convex bodies are of form 
\begin{equation}
\label{eq:wedge-form}
\set{(x,y) \in \R^2: x \geq -c, s x + t y \leq d} \quad \text{ where } \quad
d,s,t \geq 0, \quad s^2+t^2 = 1, c \in \R_{\geq 0} \cup \set{\infty} \text{ .}
\end{equation}
More precisely, we will show that given any body $K$ satisfying the conditions
of the theorem, there exists a wedge $W$ satisfying the conditions of the
theorem whose height at $x=v$ is at most that of $K$.

Let $K \subseteq \R^2$ satisfy the conditions of the theorem.  We
first show that $K$ contains a point on the line at $x=v$. If not, we
claim that $K$ has Gaussian mass at most $\gamma_1(-v,v) \leq
\gamma_1(-v_0,v_0) < 1/2$ by choosing $v_0$ small enough, a clear
contradiction.  To see this, note that by pushing the mass of $K$ to
the right as much as possible towards the line $x=v$, we can replace
$K$ by a band $[a,v] \times \R$ with the same Gaussian mass, and
barycenter to the right of $b(K)$. Clearly, such a band has barycenter
to the right of the y-axis iff $a \geq -v$, and hence $K$ has Gaussian
mass at most $\gamma_2([-v,v] \times \R) = \gamma_1(-v,v)$, as needed.

Now assume that $K$ has height at least $g$ at $x=v$, where we recall that $g =
\Phi^{-1}(\alpha)$. Note now that the band $W = \R \times (-\infty,g]$ (corresponding
to $s=0,t=1,d=g,c=\infty$) has height at $x=v$ at most that of $K$ and satisfies
the conditions of the theorem. Thus, we may assume that the height of $K$ at
$x=v$ is $f$, where $-\infty < f < g$. Note that $(v,f)$ is now a point on the
boundary of $K$.  

Let $g'$ denote the height of $K$ at $x=0$. Since $\gamma_2(K) \geq
1/2$, by Lemma~\ref{lem:origincut} we have that $\gamma_1(\infty,g')
\geq \gamma_2(K)$, and hence $g' \geq g \geq 0$. Thus $g' \geq g > f$,
and hence $v > 0$ (since otherwise we would have $f = g'$). By
convexity of $K$, we may choose a line $\ell$ tangent to $K$ passing
through $(v,f)$. We may now choose $t \geq 0, s, d\in \R$, such that
$s^2+t^2=1$ and $\ell = \set{(x,y) \in \R^2: s x + t y = d}$.  Since
$K$ is downwards-closed, $t \geq 0$ and $\ell$ is tangent to $K$, we
must have that $K \subseteq H_{\ell} := \set{(x,y): sx+ty \leq
  d}$. Since $0$ is below $(0,g') \in K$, we have that $0 \in
H_{\ell}$, and hence $d \geq 0$.  Given the $(0,g') \subseteq
H_{\ell}$, we have that $tg' \le d$, and, because $\ell$ is tangent at
$(v,f)$, also $sv + tf = d$; using that $v > 0$ and $g' > f$, we conclude
that $s > 0$. 

We will now show that the wedge $W = H_\ell \cap \set{(x,y) \in \R^2: x \geq
-c}$ satisfies our requirements for an appropriate choice of $c$ (note the
conditions for $s,t,d$ are already satisfied by the above paragraph). Let
$B^-_v = \set{(x,y) \in \R^2: x \leq v}$ and $B^+_v = \set{(x,y) \in
\R^2: x \geq v}$. 
Choose $c \geq -v$ such that 
$\gamma_2(W \cap B^-_v) = \gamma_2(K \cap B^-_v)$. 
Note that such a $c$ must
exist since $K \subseteq H_\ell$. Now by construction, note that $W$ has the
same height as $K$ at $x=v$, so it remains to check that $c \geq 0$,
$\gamma_2(W) \geq 1/2$ and $b(W)_1 \geq 0$. To bound the Gaussian mass, again by
construction, we have that  
\begin{align*}
\gamma_2(W) &= \gamma_2(W \cap B^-_v) + 
               \gamma_2(W \cap B^+_v)
             = \gamma_2(K \cap B^-_v) +
               \gamma_2(W \cap B^+_v) \\
            &\geq \gamma_2(K \cap B^-_v) + \gamma_2(K \cap B^+_v) 
             = \gamma_2(K) \geq 1/2 \text{ .}
\end{align*}
Given that $\gamma_2(W) \geq 1/2$, we must have that $0 \in W$ and hence $c \geq
0$, as needed. It remains to check that $b(W)_1 \geq 0$. For this purpose, note
first that we can transform $K \cap B^-_v$ into $W \cap B^-_v$ by only
pushing mass to the right, and hence
\begin{equation}
\label{eq:estim-01}
\int_{K \cap B^-_v} x e^{-(x+y)^2/2}{\rm dy\,dx} \leq \int_{W \cap
B^-_v} x e^{-(x+y)^2/2}{\rm dy\,dx} \text{ .}
\end{equation}
Since $v \geq 0$ and $K \cap B^+_v \subseteq W \cap B^+_v$, we also immediately
get that 
\begin{equation}
\label{eq:estim-02}
\int_{K \cap B^+_v} x e^{-(x+y)^2/2} {\rm dy\,dx} \leq \int_{W \cap
B^+_v} x e^{-(x+y)^2/2}{\rm dy\,dx} \text{ .}
\end{equation}
We derive $b(W)_1 \geq 0$ by combining~\eqref{eq:estim-01},~\eqref{eq:estim-02}
and our assumption that $b(K)_1 \geq 0$.


\begin{figure}[!t]
\centering
\includegraphics[width=4in]{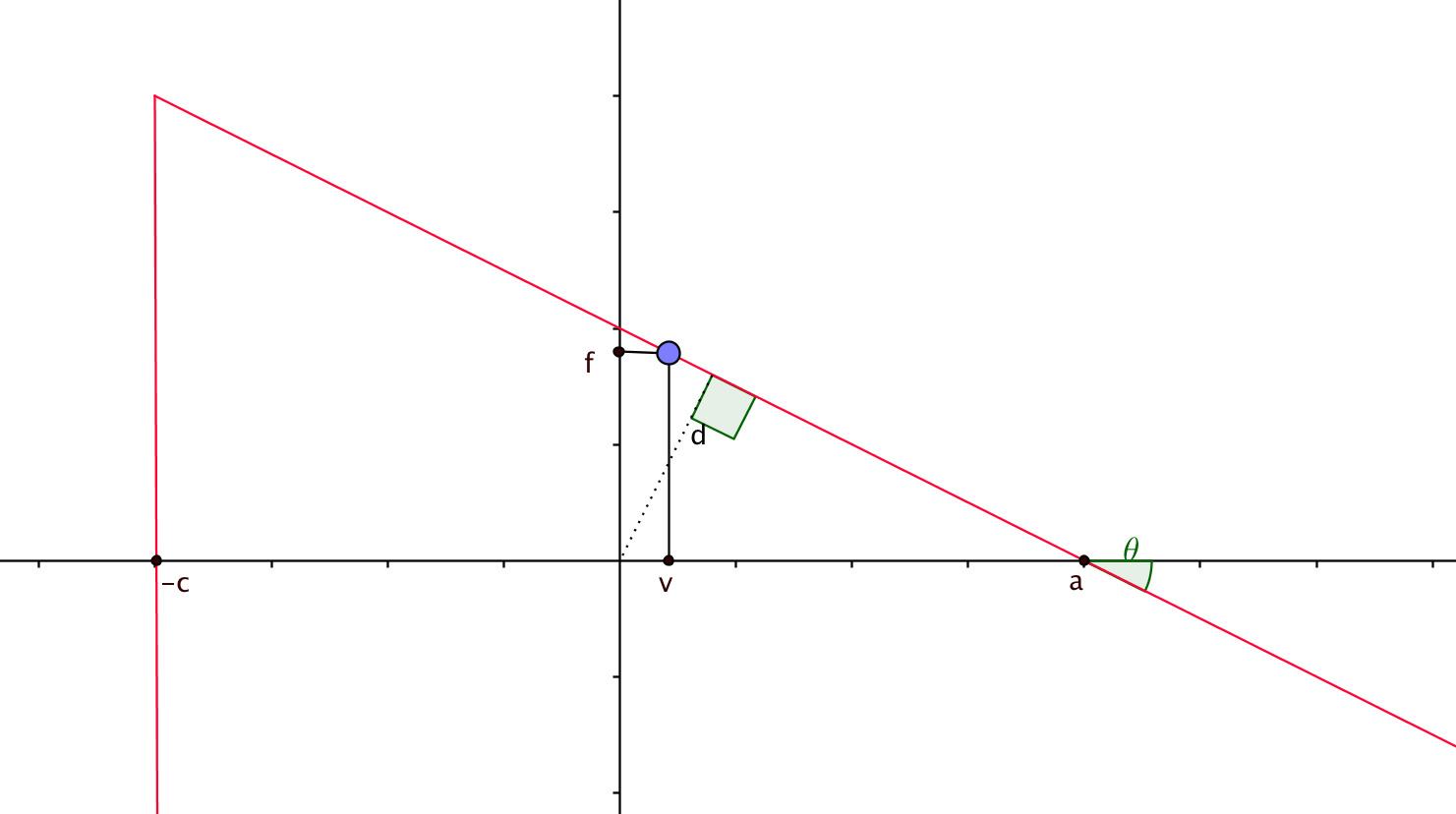}
\caption{$B$ is the triangular region beneath the red line}
\label{fig:wedge}
\end{figure}

Given the above reduction, we may now assume that $K$ is a wedge of the form \\
$\set{(x,y) \in \R^2: x \geq -c, sx + ty \leq d}$, $d,s,t \ge 0$, $s^2
+ t^2 = 1$, as in equation~\eqref{eq:wedge-form}. We first take care
of some trivial cases. Firstly, if $s = 0,t=1$, we have $K =
[-c,\infty) \times (-\infty,d]$. Then the height at $x=v$ is clearly
$d$, and since $\gamma_1(-\infty,d) \geq \gamma_2(K)$, we get $d \geq
g$ as is needed. Now assume that $s=1,t=0$, then $K = [-c,d] \times
\R$, and hence the height at $x=v$ is infinite (note that $K$ always
intersects the line at $x=v$ by the first part), and thus the desired
bound trivially holds. We may thus assume that both $s,t > 0$.

In this setting, the line $\ell = \set{(x,y) \in \R^2: sx + ty=d}$ intersects
the x-axis at $a = d/s$ and forms an angle $\theta \in (0,\pi/2)$ with the
x-axis as in Figure~\ref{fig:wedge}. Given the normalization $s^2+t^2=1$,
note that $d$ is the perpendicular distance from $0$ to the edge $K \cap \ell$ of
$K$. In what follows, we maintain the parametrization of the wedge $K$ in terms
of $a$,$\theta$,$c$, using the relations $d = a \sin \theta$, $s = \sin \theta$
and $t = \cos \theta$ to recover $d,s,t$ when needed. 

Recall that $\gamma_2(K)\ge\alpha=\gamma_1(-\infty, g)$ and $b(K)_1\ge 0$. Let
$f^* = f(v,g)$ and let $f$ be the height of $K$ at $x=v$. We want to prove that
$f \ge f^*$. If $f \ge g$, we are already done since $f^* \le g$. Note that by
Lemma~\ref{lem:origincut}, $f \geq g$ if $v = 0$ and hence we may assume $v >
0$. Our goal is now to show that $g-f \leq \min
\set{e^{\frac{g^2}{2}-\frac{1}{8ev^2}},(4e+2)\frac{v^2}{g}}$. 

\paragraph*{{\bf Step 2: Using the barycenter condition}}

We now derive a bound on how large $d$ must be, given $c$ and $\theta$, such
that the x-coordinate of barycenter is non-negative.

Let $H$ denote the halfspace $\set{(x,y) \in \R^2: s x + t y \leq d}$ and let $E=H\setminus K$.
A simple computation shows $\int_{H}x\frac{1}{2\pi}e^{-1/2(x^2+y^2)}{\rm dy\,dx} =
-\frac{e^{-d^2/2}}{\sqrt{2\pi}}\sin\theta$. Now from the fact that $b(K)_1 \ge
0$, and $E=H\setminus K$, we get 
\begin{eqnarray}
& -\frac{e^{-d^2/2}}{\sqrt{2\pi}}\sin\theta\ge
\int_{E}x\frac{1}{2\pi}e^{-1/2(x^2+y^2)}{\rm dy\,dx} \ge \int_{x\le
-c}\frac{1}{\sqrt{2\pi}}xe^{-1/2x^2}{\rm dx} = -\frac{e^{-c^2/2}}{\sqrt{2\pi}}.
\label{eqn28a}
\end{eqnarray}

It follows that $\sin \theta  \le  e^{-\frac{1}{2}(c^2-d^2)}$.
When $c^2\ge -2\log\sin\theta$, we get the following bound on $d$:
\[ 
d \ge  \sqrt{c^2+2\log\sin\theta} \text{ .}
\]

When  $c^2< -2\log\sin\theta$, this gives no useful bound on $d$ since in
that case the barycenter is non-negative even for $d=0$. But $d\ge 0$ always as
the Gaussian measure of $K$ is at least half. Thus, 
\begin{equation}
d \ge  \sqrt{\max\{0,c^2+2\log\sin\theta\}} \text{ .}
\label{eqn28}
\end{equation}

\paragraph*{{\bf Step 3: Getting a bound on $f$}}

By construction of $K$, the point $(v,f)$ lies on the boundary of $H$, and hence
\begin{equation}
v s + f t = v\sin \theta+f\cos\theta=a\sin\theta=d \text{ .}
\label{eqn13}
\end{equation}

Now, 
\[
\frac{1}{\sqrt{2\pi}}\frac{e^{-\frac{1}{2}c^2}}{c+\sqrt{c^2+4}} \le
\frac{1}{2}\gamma_1(c,\infty)\le \gamma_2(E) \qquad \textrm{(using
lemma~\ref{lem:gaus-tail})} \text{ .} 
\]
Also, 
\[
 \gamma_2(E)=\gamma_2(H)-\gamma_2(K)=\gamma_1(-\infty,d)-\gamma_1(-\infty,g)=\gamma_1(g,d)\le
\frac{1}{\sqrt{2\pi}}(d-g)e^{-\frac{1}{2}g^2}  \qquad \textrm{(since $d\ge g\ge 0$)}
\text{ .} 
\]

Combining the above two, we get
\begin{eqnarray}
\frac{e^{\frac{1}{2}(g^2-c^2)}}{c+\sqrt{c^2+4}} \le d-g \text{ .}
\label{eqn17}
\end{eqnarray}

From (\ref{eqn28}) and (\ref{eqn17}),
\[
d\ge \max\{ \sqrt{\max\{0,c^2+2\log\sin\theta\}}
,g+\frac{e^{\frac{1}{2}(g^2-c^2)}}{c+\sqrt{c^2+4}} \} \text{ .} 
\]

Putting the above in (\ref{eqn13}),
\[
f\ge \max \set{ \frac{\sqrt{\max\{0,c^2+2\log\sin\theta\}}-v\sin\theta}{\cos\theta}
,\frac{g+\frac{e^{\frac{1}{2}(g^2-c^2)}}{c+\sqrt{c^2+4}}-v\sin\theta}{\cos\theta}
} \text{ .} \]

Observe that 
\[
\gamma_1(-\infty,g)=\gamma_2(K)\le \gamma_1(-c,\infty)=\gamma_1(-\infty,c) \text{, }
\]
giving $c\ge g$. Also $\theta\in[0,\pi/2]$. Thus, the above lower bound on $f$ holds if we minimize over all $c\ge g$ and $\theta\in[0,\pi/2]$.
\[
f\ge \min_{c\ge g, \theta\in[0,\pi/2]} \max \set{
\frac{\sqrt{\max\{0,c^2+2\log\sin\theta\}}-v\sin\theta}{\cos\theta}
,\frac{g+\frac{e^{\frac{1}{2}(g^2-c^2)}}{c+\sqrt{c^2+4}}-v\sin\theta}{\cos\theta}
} \text{ .} \]

We will first minimize with respect to $c$. For this, we make the following observations:
\begin{itemize}
\item for a fixed $\theta$, the first term inside the maximum is a non-decreasing function of $c$ while the second is a decreasing function of $c$. 
\item for $c=\sqrt{g^2-2\log\sin\theta} \ge g$, the first term is smaller than the second term
\item  for $c=\sqrt{g^2-2\log\sin\theta+1} \ge g$, the first term is greater than the second term.
\end{itemize}
Thus, the two terms must become equal somewhere in the range 
\[c\in[\sqrt{g^2-2\log\sin\theta},\sqrt{g^2-2\log\sin\theta+1}] \text{ .} \]
In particular, substituting $c=\sqrt{g^2-2\log\sin\theta+1}$ in the second term provides a lower bound for $f$:
\begin{eqnarray}
f &\ge &  \min_{\theta\in[0,\pi/2]}\frac{g+\frac{\sin\theta}{\sqrt{e}(\sqrt{g^2-2\log\sin\theta+1}+\sqrt{g^2-2\log\sin\theta+5})}-v\sin\theta}
{\cos\theta} \nonumber\\
& \ge &
\min_{\theta\in[0,\pi/2]}\frac{g+\frac{\sin\theta}{2\sqrt{e}\sqrt{g^2-2\log\sin\theta+5}}-v\sin\theta}{\cos\theta}\nonumber
\text{ .}
\label{eqn20}
\end{eqnarray}

This expression goes to $g$ as $\theta\rightarrow 0$ and to $\infty$ as $\theta\rightarrow \pi/2$. If it is increasing in this whole interval, we are already done. Else, it achieves its minimum somewhere in $(0,\pi/2)$. Let this be at $\theta^*$. Setting the derivative to zero, we get
\begin{eqnarray}
f & \ge & g\cos\theta^*-\frac{\sin2\theta^*}{4\sqrt{e}\sqrt{g^2-2\log\sin\theta^*+5}^3} \nonumber \\
& = &
g-2g\sin^2(\theta^*/2)-\frac{\sin2\theta^*}{4\sqrt{e}\sqrt{g^2-2\log\sin\theta^*+5}^3}
\text{ ,}
\label{eqn21}
\end{eqnarray}

where $\theta^*$ satisfies
\[
v= g\sin\theta^*+\frac{1}{2\sqrt{e}\sqrt{g^2-2\log\sin\theta^*+5}} \text{ .}
\]

From the two terms above, we can get two upper bounds on $\sin\theta^*$:
\begin{eqnarray}
\sin\theta^*& \le &v/g  \nonumber \text{ ,}\\
\sin\theta^* &\le& \frac{e^{g^2/2+5/2}}{e^{\frac{1}{8ev^2}}} \nonumber \text{ .}
\label{eqn23}
\end{eqnarray}

Using these, we can simplify (\ref{eqn21}) as
\begin{eqnarray}
f &\ge & g-2g\sin^2(\theta^*/2)-2e\sin2\theta^*(v-g\sin\theta^*)^3 \nonumber \\
& \ge & g-2g\sin^2(\theta^*/2)-2ev\sin2\theta^*  \nonumber \\
& \ge & g-2g\sin^2(\theta^*)-4ev\sin\theta^*  \nonumber \text{ .}
\label{eqn24}
\end{eqnarray}

We derive two bounds on the above expression, one which will be useful when $g$
is small and other when $g$ is large. For the small $g$ bound, using that
$v<v_0$ for $v_0$ small enough, 
\[
2g\sin^2(\theta^*)+4ev\sin\theta^* \le
(2g\frac{v}{g}+4ev)\frac{e^{g^2/2+5/2}}{e^{\frac{1}{8ev^2}}}\le\frac{e^{g^2/2}}{e^{\frac{1}{8ev^2}}}
\text{ .} \]

For the large $g$ bound, 
\[
2g\sin^2(\theta^*)+4ev\sin\theta^* \le 2g\frac{v^2}{g^2}+\frac{4ev^2}{g} =
(4e+2)\frac{v^2}{g} \text{ .} 
\]

Thus,
\[
f\ge g-\min\{e^{\frac{g^2}{2}-\frac{1}{8ev^2}},(4e+2)\frac{v^2}{g}\}=f^* \text{ , as needed.}
\]
\end{proof}

We now prove Theorem~\ref{thm:estneg} in the special case where the barycenter lies
to the right of the hyperplane $\theta^\perp$. We show later how to reduce
Theorem~\ref{thm:estneg} to this case.

\begin{lemma}
\label{lem:est}
There exists universal constants $v_0,c_0 > 0$, such that for any $n \geq 1$, $v
\in [0,v_0]$ and $\theta \in \R^n$, $\|\theta\|_2=1$, convex body $K \subseteq \R^n$ satisfying
$\gamma_n(K) = \alpha \geq 1/2$ and $\pr{b(K)}{\theta} \geq 0$, we have that 
\[
\gamma_{n-1}(K_v^\theta) \geq \alpha(1-\frac{e^{-\frac{1}{100v^2}}}{4\sqrt{2\pi}}) \text{.}
\]
\end{lemma}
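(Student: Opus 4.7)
The plan is to reduce to the two-dimensional setting of Lemma~\ref{lem:estim} by Ehrhard's inequality, and then translate the height bound into a Gaussian-measure bound.

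By the rotational invariance of Gaussian measure, I would assume $\theta = e_1$. Writing $\pi(t) = \{y \in \R^{n-1} : (t,y) \in K\}$ for the slice of $K$ at $x_1 = t$, set
\[
f(t) = \Phi^{-1}\bigl(\gamma_{n-1}(\pi(t))\bigr), \qquad W = \{(t,s) \in \R^2 : s \le f(t)\},
\]
with $f(t) = -\infty$ off the support. Exactly as in the proof of Lemma~\ref{lem:origincut}, Ehrhard's inequality applied to the family $\{\pi(t) + t e_1\}$ shows that $f$ is concave on its support, so $W$ is a closed, convex, downward-closed subset of $\R^2$. Fubini gives $\gamma_2(W) = \int_\R \phi(t)\Phi(f(t))\,dt = \gamma_n(K) = \alpha$, and the same computation with an extra factor of $t$ shows that $b(W)_1 = \langle b(K), e_1\rangle \ge 0$ by hypothesis. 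Thus $W$ satisfies the hypotheses of Lemma~\ref{lem:estim} with $g = \Phi^{-1}(\alpha) \ge 0$.

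Since the height of $W$ at $x = v$ is precisely $f(v)$, Lemma~\ref{lem:estim} yields, for $v \in [0, v_0]$,
\[
\gamma_{n-1}(K_v^\theta) = \Phi(f(v)) \ge \Phi(g - \epsilon), \qquad \epsilon := \min\bigl\{e^{g^2/2 - 1/(8ev^2)},\, (4e+2)v^2/g\bigr\}.
\]
It remains to prove $\Phi(g) - \Phi(g-\epsilon) \le \alpha \cdot \frac{e^{-1/(100v^2)}}{4\sqrt{2\pi}}$. Since $\alpha \ge 1/2$, by the mean value theorem it suffices to show $\epsilon \cdot \sup_{t \in [g-\epsilon, g]} \phi(t) \le \frac{e^{-1/(100v^2)}}{8\sqrt{2\pi}}$. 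I would split into two regimes separated by a threshold $g \lessgtr c_1/v$ (concretely $c_1 = 1/5$). In the small-$g$ regime, use $\epsilon \le e^{g^2/2 - 1/(8ev^2)} \le e^{-(1/(8e) - c_1^2/2)/v^2}$ together with $\phi \le 1/\sqrt{2\pi}$; since $1/(8e) - c_1^2/2 > 1/100$ this gives the required bound after choosing $v_0$ small enough. In the large-$g$ regime, use $\epsilon \le (4e+2)v^2/g$, the elementary inequality $\phi(g - \epsilon) \le e^{g\epsilon}\phi(g) \le 2\phi(g)$ (valid once $v_0$ is small enough so that $g\epsilon \le (4e+2)v_0^2 \le \ln 2$), and then $\phi(g) \le \frac{1}{\sqrt{2\pi}} e^{-c_1^2/(2v^2)}$; since $c_1^2/2 > 1/100$, the resulting bound again absorbs the polynomial factor $v^2/g$ for small $v_0$.

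The main obstacle is the last paragraph, the careful constant-chasing needed to interpolate between the two regimes of Lemma~\ref{lem:estim} and produce the single clean exponential rate $e^{-1/(100v^2)}$. The constants $1/(8e)$ and $(4e+2)$ supplied by that lemma are precisely what makes both regimes beat $1/(100v^2)$ simultaneously, so the only real work is selecting a threshold $c_1$ and a universal $v_0$ such that the two estimates match up.
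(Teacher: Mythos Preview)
Your proposal is correct and follows essentially the same approach as the paper: the Ehrhard reduction to a downward-closed two-dimensional body, invocation of Lemma~\ref{lem:estim}, and a case split at $g = 1/(5v)$ using respectively the first and second branches of the $\min$ in $f(v,g)$ to bound $\Phi(g) - \Phi(g-\epsilon)$. The paper's Step~2a/2b are exactly your small-$g$ and large-$g$ regimes, with the same arithmetic ($1/(8e) - 1/50 > 1/100$ for the first case, and $e^{-g^2/2 + g\epsilon}$ together with $g > 1/(5v)$ for the second).
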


\begin{proof}
We split the proof into two steps.  In step one, we reduce to a 2-dimensional
problem and show that it suffices to prove our theorem for a downwards closed
convex body $K^\prime \subseteq \R^2$.  This reduction will guaranteee that
$K^\prime$ has barycenter on the y-axis and that the Gaussian measure of slices
of $K^\prime$ parallel to the y-axis will correspond in the natural way to that
of slices of $K$ parallel to the hyperplane $\theta^\perp = \set{x \in \R^n:
\pr{x}{\theta}=0}$. We then invoke Lemma \ref{lem:estim} to get a lower bound on
the height of $K^\prime$ at $x=v$. Lastly, in step 2, we show that implies the
required lower bound on the slice measure.

Let $g$ be s.t. $\gamma_1(-\infty,g)=\alpha$, i.e. $g=\Phi^{-1}(\alpha)$. Note
that $g\ge 0$ since $\alpha\ge1/2$.

\paragraph*{{\bf Step 1: reduction to a 2-dimensional case}}

We will reduce our problem to one for a $2$-dimensional downwards closed convex
body $K^\prime$. To specify $K^\prime$, we need only specify the height of the
boundary at each x-coordinate. At x-coordinate $t$, we define the height of
$K^\prime$ to be the $y_t$ satisfying $\gamma_1(-\infty, y_t) =
\gamma_{n-1}(K^{\theta}_t)$. From Ehrhard's inequality, we see that $K^\prime$
is in fact convex. Furthermore, it is easy to check that
$\gamma_2(K^\prime)=\gamma_n(K)$ and $b(K^\prime)_1 = \pr{b(K)}{\theta} \geq 0$. 

Thus, $K^\prime$ is a downwards closed convex body in $\R^2$ with
$\gamma_2(K^\prime)=\alpha, b(K^\prime)_1 \geq 0$. From here, we may invoke
Lemma~\ref{lem:estim} to conclude that the height of $K^\prime$ at $x=v$ is at
least $f^* := f(v,g)$. We now have that
\[
\gamma_{n-1}(K^\theta_t) = \gamma_1(-\infty,y_t) \geq \gamma_1(-\infty,f^*) \text{ .}
\]
From the above, it suffices to give a lower bound on $\gamma_1(-\infty,f^*)$ in order to derive the theorem.

\paragraph*{\bf Step 2: Bounding $\gamma_1(-\infty, f^*)$}

Our goal is to show that $\gamma_1(-\infty, f^*) \ge
\alpha(1-\frac{e^{-\frac{1}{100v^2}}}{4\sqrt{2\pi}})$. Clearly, it suffices to
show $\gamma_1(f^*,g) \le \alpha\frac{e^{-\frac{1}{100v^2}}}{4\sqrt{2\pi}}$. Let
$\epsilon_g=g-f^*=\min
\set{e^{\frac{g^2}{2}-\frac{1}{8ev^2}},(4e+2)\frac{v^2}{g}}$. We split the
analysis in two cases depending on whether $g$ is small or big.

\paragraph*{\bf Step 2a: $g\le\frac{1}{5v}$}
\[
\gamma_1(f^*,g) \le \frac{\epsilon_g}{\sqrt{2\pi}}
                \le \frac{e^{\frac{g^2}{2}-\frac{1}{8ev^2}}  }{\sqrt{2\pi}} 
                \le \frac{e^{\frac{1}{50v^2}-\frac{1}{8ev^2}}  }{\sqrt{2\pi}} 
                \le \frac{e^{-\frac{1}{100v^2}}}{8\sqrt{2\pi}} 
                \le\alpha \frac{e^{-\frac{1}{100v^2}}}{4\sqrt{2\pi}} \text{ .} 
\]
The penultimate inequality holds for an appropriate choice of $v_0$, and the last inequality uses $\alpha\ge 1/2$. 

\paragraph*{\bf Step 2b: $g>\frac{1}{5v}$}

Here we will use the other bound for $\epsilon_g$. 
\begin{eqnarray}
\gamma_1(f^*,g) &\le& \frac{\epsilon_g}{\sqrt{2\pi}}e^{-(f^*)^2/2} 
=  \frac{\epsilon_g}{\sqrt{2\pi}}e^{-g^2/2}e^{g\epsilon_g-\epsilon_g^2/2} \nonumber \\ 
&\le & \frac{\epsilon_g}{\sqrt{2\pi}}e^{-g^2/2+g\epsilon_g}   
 \le  \frac{(4e+2)v^2}{g\sqrt{2\pi}}e^{-g^2/2+(4e+2)v^2}   \nonumber \\
& \le & \frac{5(4e+2)v^3}{\sqrt{2\pi}}e^{-\frac{1}{50v^2}+(4e+2)v^2}  \le
\frac{e^{-\frac{1}{100v^2}}}{8\sqrt{2\pi}} \le\alpha
\frac{e^{-\frac{1}{100v^2}}}{4\sqrt{2\pi}} \nonumber  \text{ .}
\label{eqn71}
\end{eqnarray}
The penultimate inequality holds for an appropriate choice of $v_0$, and the last
inequality uses $\alpha\ge 1/2$.
\end{proof}

We now come to the proof of Theorem~\ref{thm:estneg}.\\

\noindent{\bf Theorem \ref{thm:estneg} (restated):} {\em
There exist universal constants $v_0,\eta_0,c_0 > 0$, such that for any $n \geq
1$, convex body $K \subseteq \R^n$ satisfying $\|b(K)\|_2 = \eta \leq \eta_0$
and $\gamma_n(K) = \alpha \geq 3/5$, $v \in [-v_0,v_0]$ and $\theta \in \R^n$,
$\|\theta\|_2=1$, we have that 
\[
\gamma_{n-1}(K_v^\theta) \geq (\alpha-c_0\eta)(1-\frac{e^{-\frac{1}{100v^2}}}{4\sqrt{2\pi}}) \text{.}
\]
}
\begin{proof}
By rotational invariance, we may assume that $\theta = e_1$, the first standard
basis vector. By possibly replacing $K$ by $-K$, we may also assume that $v \geq 0$.

If $b(K)_1 \geq 0$, the desired lower bound follows directly from
Lemma~\ref{lem:est}. Given this, we may assume that $-\eta \leq b(K)_1
< 0$. To deal with this second case, the main idea is to remove some
portion of $K$ lying to the left of the hyperplane $e_1^\perp = \set{x
  \in \R^n: x_1=0}$ so that the barycenter of the remaining body
lies on $e_1^\perp$. After this, we apply Lemma~\ref{lem:est} again on
the truncated body.

Define
\[
b := b(K)_1 \gamma_n(K) 
           = \int_K x_1 \frac{e^{-\frac{1}{2}\|x\|^2_2}}{\sqrt{2\pi}^n} {\rm dx} < 0 \text{ .}
\]
Let $H^-_t = \set{x \in \R^n: x_1 \leq t}$ and $H^+_t = \set{x \in \R^n: x_1
\geq t}$ for $t \in \R$. Let $z < 0$ be defined as the smallest negative number satisfying 
\begin{equation}
\label{eq:neg-1}
\int_{K \cap H^+_z} x_1 \frac{e^{-\frac{1}{2}\|x\|_2^2}}{\sqrt{2\pi}^n} {\rm dx} = 0 \text{ .}
\end{equation}
By continuity, such a $z$ must exists, since as $z \rightarrow -\infty$ the left
hand side tends to $b < 0$ and at $z=0$ it is positive. Given the above, note
that $b(K \cap H^+_z)_1 = 0$. We will now show that $\gamma_n(K \cap H^+_z) \geq
\alpha-c_0\eta$ if $\eta \leq \eta_0$, for $c_0,\eta_0$ appropriately chosen
constants.

By our choice of $z$, we have the equality
\[
\int_{K \cap H^-_z} x_1 \frac{e^{-\frac{1}{2}\|x\|_2^2}}{\sqrt{2\pi}^n} {\rm dx} = b \text{ .}
\]
From here, we see that
\[
\gamma_n(K \cap H^-_z) = \int_{K \cap H^-_z} \frac{e^{-\frac{1}{2}\|x\|^2_2}}{\sqrt{2\pi}^n} {\rm dx} \le
\int_{K \cap H^-_z} \frac{x_1}{z}
\frac{e^{-\frac{1}{2}\|x\|^2_2}}{\sqrt{2\pi}^n} {\rm dx} = \frac{b}{z} =
\left|\frac{b}{z}\right| \text{ .}
\]
Given this, we get that
\[
\gamma_n(K \cap H^+_z)=\gamma_n(K)-\gamma_n(K \cap H^-_z) = \alpha-\gamma_n(K
\cap H^-_z) \ge
\alpha-\left|\frac{b}{z}\right| \geq \alpha-\left|\frac{\eta}{z}\right| \text{ .}
\]

We now show that there exists a constant $c_0$ s.t. $1/|z| \leq c_0$. Let $\beta =
\gamma_n(K \cap H^+_0)$, and note that
\[
\beta = \gamma_n(K) - \gamma_n(K \cap H^-_0) \geq \alpha-1/2 \geq 3/5-1/2 = 1/10 \text{ .}
\]
Let $\tau > 0$ be positive number satisfying $\gamma_1(0,\tau) = \beta$,
i.e.~$\tau = \Phi^{-1}(1/2+\beta)$. By pushing the mass of $K \cap
H^+_0$ to the left towards $e_1^\perp$ as much as possible, we see that
\begin{equation}
\label{eq:neg-2}
\int_{K \cap H^+_0} x_1 \frac{e^{-\frac{1}{2}\|x\|_2^2}}{\sqrt{2\pi}^n} {\rm dx}
\geq \int_{\set{x \in \R^n: 0 \leq x_1 \leq \tau}} x_1
\frac{e^{-\frac{1}{2}\|x\|_2^2}}{\sqrt{2\pi}^n} {\rm dx} =
\frac{1}{\sqrt{2\pi}}(1-e^{-\tau^2/2}) \text{ .}
\end{equation}
Next, by inclusion
\begin{equation}
\label{eq:neg-3}
\int_{K \cap \set{x \in \R^n: z \leq x \leq 0}} x_1
\frac{e^{-\frac{1}{2}\|x\|_2^2}}{\sqrt{2\pi}^n} {\rm dx} \geq 
\int_{\set{x \in \R^n: z \leq x \leq 0}} x_1
\frac{e^{-\frac{1}{2}\|x\|_2^2}}{\sqrt{2\pi}^n} {\rm dx} =
\frac{1}{\sqrt{2\pi}}(e^{-z^2/2}-1) \text{ .}
\end{equation}
Given that $z$ satisfies \eqref{eq:neg-1}, combining
equations \eqref{eq:neg-2},~\eqref{eq:neg-3}, we must have that
\[
0 \geq e^{-z^2/2}-e^{-\tau^2/2} \Rightarrow |z| \geq \tau \geq \Phi^{-1}(6/10) > 0 \text{ .}
\]
Thus, we may set $c_0 = 1/\Phi^{-1}(6/10)$. Set $\eta_0 = \frac{1}{10c_0}$.
Since $\eta \leq \eta_0$, we have that
\[
\gamma_n(K \cap H^+_z) \geq \alpha - c_0 \eta \geq 3/5 - 1/10 = 1/2 \text{ .}
\] 
Lastly, using Lemma~\ref{lem:est} on $K \cap H^+_z$, we now get that
\[
\gamma_{n-1}(K^{e_1}_v) = \gamma_{n-1}((K \cap H^+_z)^{e_1}_v) \geq 
(\alpha-c_0\eta)(1-\frac{e^{-\frac{1}{100v^2}}}{4\sqrt{2\pi}}) \text{ ,}
\]
as needed.
\end{proof}

\section{Constructive Vector Koml{\'o}s}
\label{sec:vector-komlos}

In this section we give a new proof of the main result of
\cite{Nikolov13} that the natural SDP for the Koml\'os problem has
value at most $1$. While the proof in \cite{Nikolov13}
used duality, our proof is direct and immediately yields an algorithm
to compute an SDP solution which only uses basic linear algebraic
operations, and does not need a general SDP solver. We state the main
theorem next. 

\begin{theorem}\label{thm:SDP-feas}
  Let $v_1, \ldots, v_n \in \R^m$ be vectors of Euclidean length at
  most $1$, and let $\alpha_1, \ldots, \alpha_n \in [0,1]$. There
  exists an $n \times n$ PSD matrix $X$ such that
  \begin{align*}
    X_{ii} &= \alpha_i \ \ \ \forall 1 \le i \le n\\
    VXV^\T &\preceq I_m,
  \end{align*}
  where $V = (v_1, \ldots, v_m)$ is the $n \times m$ matrix whose
  columns are the vectors $v_i$.
\end{theorem}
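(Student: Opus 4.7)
My plan is to first reduce the theorem to the case where all $\alpha_i = 1$, and then establish this uniform case by an iterative construction that uses only basic linear algebra. For the reduction, set $w_i = \sqrt{\alpha_i}\,v_i$ and $W = VD^{1/2}$ with $D = \mathrm{diag}(\alpha_1,\dots,\alpha_n)$; each $w_i$ still has $\ell_2$ norm at most $1$. If one can produce an $n\times n$ PSD matrix $X'$ with unit diagonal and $WX'W^\T \preceq I_m$, then $X := D^{1/2}X'D^{1/2}$ is PSD with $X_{ii}=\alpha_i$ and $VXV^\T = WX'W^\T \preceq I_m$, so it suffices to handle the uniform case.

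For $\alpha_i = 1$, I would use the fact that a PSD matrix with unit diagonal is a Gram matrix of unit vectors, so it is enough to find $\ell_1,\dots,\ell_n \in \R^k$ (for some $k$) with $\|\ell_i\|_2 = 1$ such that $M := \sum_i v_i \ell_i^\T$ satisfies $\|M\|_{\mathrm{op}} \le 1$; setting $X_{ij} := \langle \ell_i, \ell_j\rangle$ then produces the desired SDP solution since $VXV^\T = MM^\T$. I plan to build the $\ell_i$'s iteratively, processing the vectors one at a time. Given a partial sum $M_t := \sum_{i \le t} v_i \ell_i^\T$ with $\|M_t\|_{\mathrm{op}} \le 1$, the identity
\[
I_m - M_{t+1}M_{t+1}^\T \;=\; (I_m - M_tM_t^\T) \;-\; (v_{t+1} + M_t\ell_{t+1})(v_{t+1} + M_t\ell_{t+1})^\T \;+\; M_t\ell_{t+1}\ell_{t+1}^\T M_t^\T
\]
suggests choosing $\ell_{t+1}$ so that $M_t\ell_{t+1}$ cancels as much of $v_{t+1}$ as possible, i.e.\ a unit vector roughly aligned with a preimage of $-v_{t+1}$ under $M_t$, measured in the metric $I_m - M_tM_t^\T$. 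Each such step requires only solving a linear system in $I_m - M_tM_t^\T$ and a matrix square root, so no SDP solver is invoked.

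The main obstacle is maintaining the invariant $\|M_t\|_{\mathrm{op}} \le 1$ subject to $\|\ell_{t+1}\|_2 = 1$: since $\|v_{t+1}\|_2$ could be as large as $1$, a naive rank-one update could inflate the operator norm by nearly $1$, which after $n$ steps would be catastrophic. To overcome this I expect to exploit the slack $I_m - M_tM_t^\T$, choosing $\ell_{t+1}$ in a direction where $M_t\ell_{t+1}$ sits inside the column space of $M_t$ and cancels the component of $v_{t+1}$ that would otherwise escape; the crucial use of $\|v_{t+1}\|_2 \le 1$ is to guarantee that such a choice exists while keeping $\|\ell_{t+1}\|_2 = 1$. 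A monotone potential of the form $-\log\det(I_m - M_tM_t^\T + \varepsilon I_m)$ (for a small regulariser $\varepsilon > 0$) should track the cumulative loss, showing that the operator-norm invariant survives all $n$ iterations and yielding the final bound $\|M\|_{\mathrm{op}} \le 1$.
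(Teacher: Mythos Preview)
Your reduction to the case $\alpha_i=1$ via $X=D^{1/2}X'D^{1/2}$ is correct, as is the Gram-matrix reformulation: finding unit vectors $\ell_i$ with $\|\sum_i v_i\ell_i^\T\|_{\mathrm{op}}\le 1$ is exactly equivalent to the uniform case of the theorem, and your rank-one update identity is right. The problem is everything after that. You never actually specify a choice of $\ell_{t+1}$, and the claim that a log-determinant barrier ``should track the cumulative loss'' is not an argument. Concretely: at step $t$ your update changes $M_tM_t^\T$ by $(v+M_t\ell)(v+M_t\ell)^\T - (M_t\ell)(M_t\ell)^\T$, a difference of two rank-one PSD terms, so the potential moves in both directions and there is no evident monotonicity. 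Moreover, for the very first step $M_0=0$, so no choice of $\ell_1$ can make $M_0\ell_1$ cancel any part of $v_1$; the ``slack'' $I-M_tM_t^\T$ is the full identity and gives you nothing to work with. Whether a greedy one-vector-at-a-time scheme can maintain $\|M_t\|_{\mathrm{op}}\le 1$ for all $t$ is exactly the heart of the problem, and you have not supplied the mechanism. As it stands, this is a plan with the key lemma missing.

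The paper's proof is structurally different and avoids this difficulty. It inducts on $n$ but in the opposite direction: rather than adding vectors, it peels one off. When $A=V^\T V$ is invertible, the key observation (a Schur-complement computation) is that $B=A^{-1}$ satisfies $B_{ii}=1/\|(I-\Pi_{-i})v_i\|_2^2\ge 1/\|v_i\|_2^2\ge 1$. Since $VBV^\T$ is an orthogonal projection, $\beta B$ is a ``free'' contribution to $X$ that costs only $\beta$ in the operator-norm budget while contributing $\beta B_{ii}\ge\beta$ to each diagonal entry. One takes $\beta=\min_i\alpha_i/B_{ii}$, which zeroes out one diagonal deficit and lets the induction proceed on the remaining $n-1$ vectors with rescaled targets; the singular case is handled by a kernel vector. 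This argument is short, uses only the explicit inverse $(V^\T V)^{-1}$, and yields a direct recursive algorithm. If you want to repair your approach, you would need either a precise rule for $\ell_{t+1}$ together with a provable invariant, or to abandon the online scheme; the paper's use of the global inverse $B$ is exactly what furnishes the missing structure.
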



To prove Theorem~\ref{thm:SDP-feas} we make use of a basic identity
about inverses of block matrices. This is a standard use of the Schur
complement and we will not prove it here.
\begin{lemma}\label{lm:schur}
  Let
  \[
  A = \left(
    \begin{array}{cc}
      A_{11} &A_{12}\\
      A_{21} &A_{22}\\
    \end{array}
    \right)
  \]
  be a $(k + \ell) \times (k + \ell)$ block matrix, where $A_{11}$ is a $k\times
  k$ matrix, $A_{12}$ is a $k\times \ell$ matrix, $A_{21}$ is a $\ell \times k$
  matrix, and $A_{22}$ is a $\ell \times \ell$ matrix. Assume $A$ and $A_{22}$
 are invertible, and write $B = A^{-1}$ in block form as
 \[
  B = \left(
    \begin{array}{cc}
      B_{11} &B_{12}\\
      B_{21} &B_{22}\\
    \end{array}
    \right),
 \]
 where $B_{ij}$ has the same dimensions as $A_{ij}$. Then $B_{11} =
 (A_{11} - A_{12}A_{22}^{-1}A_{21})^{-1}$ (i.e.~the inverse of the
 Schur complement of $A_{11}$ in $A$).
\end{lemma}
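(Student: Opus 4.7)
The plan is to establish the Schur complement identity through a direct block-matrix factorization of $A$, from which the formula for $B_{11}$ falls out by inverting each factor. Writing $S := A_{11} - A_{12}A_{22}^{-1}A_{21}$ for the Schur complement, the key algebraic identity I will verify is
\[
A = \begin{pmatrix} I_k & A_{12}A_{22}^{-1} \\ 0 & I_\ell \end{pmatrix}
    \begin{pmatrix} S & 0 \\ 0 & A_{22} \end{pmatrix}
    \begin{pmatrix} I_k & 0 \\ A_{22}^{-1}A_{21} & I_\ell \end{pmatrix}.
\]
This is a routine block multiplication: the rightmost factor acts on $\mathrm{diag}(S,A_{22})$ to produce the correct second block row of $A$, and then the leftmost factor restores the $A_{12}$ entry in the upper right while leaving the second block row unchanged.

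From this factorization, I would first deduce that $S$ is invertible. Since the two outer unit-triangular factors have determinant $1$, the identity gives $\det A = \det S \cdot \det A_{22}$, and invertibility of both $A$ and $A_{22}$ forces $S$ to be invertible. I would then invert the factorization term by term, using that the inverse of a block unit-triangular matrix is obtained by negating the off-diagonal block:
\[
A^{-1} = \begin{pmatrix} I_k & 0 \\ -A_{22}^{-1}A_{21} & I_\ell \end{pmatrix}
    \begin{pmatrix} S^{-1} & 0 \\ 0 & A_{22}^{-1} \end{pmatrix}
    \begin{pmatrix} I_k & -A_{12}A_{22}^{-1} \\ 0 & I_\ell \end{pmatrix}.
\]
Multiplying the right two factors first yields a matrix whose $(1,1)$ block is $S^{-1}$ and whose $(2,1)$ block is zero; left-multiplying by the first factor then preserves the $(1,1)$ block. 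Reading off the top-left block gives $B_{11} = S^{-1}$, which is exactly the claimed identity.

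An alternative, equation-pushing route would be to expand $AB = I_{k+\ell}$ into its four block equations, in particular $A_{11}B_{11}+A_{12}B_{21}=I_k$ and $A_{21}B_{11}+A_{22}B_{21}=0$. Using invertibility of $A_{22}$ to solve the second equation for $B_{21} = -A_{22}^{-1}A_{21}B_{11}$ and substituting into the first gives $SB_{11} = I_k$, and hence $B_{11}=S^{-1}$ once one knows $S$ is invertible (again via the determinant identity). The only mild subtlety in either approach is the invertibility of $S$; beyond that, the proof is purely bookkeeping in block matrix algebra, so I do not foresee any real obstacle.
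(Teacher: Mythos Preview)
Your proof is correct; both the block-LDU factorization and the alternative via the four block equations of $AB=I$ are standard and complete arguments for this identity. The paper itself does not prove this lemma at all---it explicitly states that ``This is a standard use of the Schur complement and we will not prove it here''---so there is nothing to compare against.
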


From Lemma~\ref{lm:schur} we derive the main technical claim used in
the proof of Theorem~\ref{thm:SDP-feas}.

\begin{lemma}\label{lm:inverse}
  Let $A = V^\T V$ be an $n\times n$ positive definite matrix, and let
  $v_1, \ldots, v_n$ be the columns of $V$. Let $B =
  A^{-1}$. Then, for each $1 \leq i \leq n$ 
  \[
  B_{ii} = \frac{1}{\|(I - \Pi_{-i})v_i\|_2^2} \ge \frac{1}{\|v_i\|_2^2},
  \]
  where $\Pi_{-i}$ is the orthogonal projection matrix onto
  $\mathrm{span}\{v_j: j \neq i\}$.
\end{lemma}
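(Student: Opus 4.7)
\textbf{Proof plan for Lemma \ref{lm:inverse}.}

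By symmetry under simultaneous permutation of rows and columns of $A$ (which permutes the columns of $V$ and the corresponding diagonal entries of $B$), it suffices to treat the case $i = n$. The plan is to apply the Schur complement formula of Lemma \ref{lm:schur} with the ``small'' block $A_{11}$ chosen to be the $1 \times 1$ scalar entry $A_{nn}$, read off $B_{nn}$, and then recognize the resulting expression as $\|(I-\Pi_{-n})v_n\|_2^2$ via the standard formula for orthogonal projection onto a column span.

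More concretely, I would write $V = (V', v_n)$, where $V'$ is the $m \times (n-1)$ matrix with columns $v_1, \ldots, v_{n-1}$. Then $A = V^\T V$ decomposes in block form as
\[
A = \begin{pmatrix} (V')^\T V' & (V')^\T v_n \\ v_n^\T V' & v_n^\T v_n \end{pmatrix}.
\]
Since $A$ is positive definite, $V$ has full column rank, so $V'$ also has full column rank and $(V')^\T V'$ is invertible. This lets me invoke Lemma \ref{lm:schur} (after permuting the last row/column block to the first, which is just relabeling) to conclude
\[
B_{nn} = \left(v_n^\T v_n - v_n^\T V' \bigl((V')^\T V'\bigr)^{-1} (V')^\T v_n\right)^{-1}.
\]

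The next step is to identify $\Pi_{-n} = V' \bigl((V')^\T V'\bigr)^{-1} (V')^\T$ as the orthogonal projection onto $\mathrm{span}\{v_j : j \neq n\}$, which is the standard least-squares formula. Using that $I - \Pi_{-n}$ is a symmetric idempotent, I would rewrite
\[
v_n^\T v_n - v_n^\T \Pi_{-n} v_n = v_n^\T (I - \Pi_{-n}) v_n = v_n^\T (I - \Pi_{-n})^\T(I-\Pi_{-n}) v_n = \|(I - \Pi_{-n})v_n\|_2^2,
\]
yielding $B_{nn} = 1/\|(I - \Pi_{-n})v_n\|_2^2$. The inequality $B_{nn} \ge 1/\|v_n\|_2^2$ then follows immediately, since the orthogonal projection $I - \Pi_{-n}$ has operator norm at most $1$, so $\|(I - \Pi_{-n})v_n\|_2 \le \|v_n\|_2$.

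There is no real obstacle here; the only thing to be careful about is the bookkeeping needed to reduce the general index $i$ to the case $i=n$, and to justify that $(V')^\T V'$ is invertible (which is where the positive definiteness of $A$ gets used). Everything else is a direct computation via the Schur complement identity and the formula for orthogonal projection onto a column span.
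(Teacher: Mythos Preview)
Your proposal is correct and essentially identical to the paper's own proof: both reduce to a single index by symmetry, apply the Schur complement identity of Lemma~\ref{lm:schur} to the $1\times 1$ block corresponding to that index, and then recognize the resulting expression via the projection formula $\Pi_{-i} = V'((V')^\T V')^{-1}(V')^\T$. The only cosmetic differences are that the paper works with $i=1$ rather than $i=n$, and it justifies invertibility of $(V')^\T V'$ by noting it is a principal minor of the positive definite matrix $A$ rather than via full column rank of $V$.
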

\begin{proof}
  It is sufficient to prove the lemma for $i = 1$. Let $U$ be the
  matrix with columns $v_2, \ldots, v_n$. Since $A$ is positive
  definite, the principal minor $U^\T U$ is positive definite as
  well, and, therefore, invertible. By Lemma~\ref{lm:schur},
  \[
  B_{11} = \frac{1}{\|v_1\|_2^2 - v_1^\T U (U^\T U)^{-1} U^\T v_1}.
  \]
  Let $\Pi = U (U^\T U)^{-1} U^\T$. Since $\Pi$ is symmetric and
  idempotent (i.e.~$\Pi^2 = \Pi$), it is an orthogonal projection
  matrix. Moreover $\Pi U = U$ and $\Pi$ has the same rank as $U$, so
  $\Pi$ is the orthogonal projection matrix onto the column span of
  $U$, i.e.~$U (U^\T U)^{-1} U^\T = \Pi_{-1}$ and the lemma follows.
\end{proof}

\begin{proof}[Proof of Theorem~\ref{thm:SDP-feas}]
  We prove the theorem by induction on $n$.

  In the base case $m=1$, we have a single vector $v \in \R^m$,
  $\|v\|_2 \le 1$, and an $\alpha \in [0,1]$. We set $x = \alpha$, and
  we clearly have $vxv^T \preceq \alpha I \preceq I$.

  We now proceed with the inductive step. 
  Consider first the case that $V^\T V$ is singular. Then there exists a
  vector $x \neq 0$ such that $Vx = 0$. Scale $x$ so that
 $x_i^2 \le \alpha_i$ for all $i$, and there exists $k$ such that $x_k^2 =
  \alpha_k^2$. Apply the inductive hypothesis to the vectors $(v_i: i \neq k)$ and
  the reals $(\alpha'_i = \alpha_i - x_i^2: i \neq k)$ to get a matrix
  $Y \in \R^{([m] \setminus \{k\}) \times ([n] \setminus \{k\})
  }$. Extend $Y$ to a matrix $\tilde{Y} \in \R^{n \times n}$ by padding
  with $0$'s,
  i.e.~$\tilde{Y}_{ij} = Y_{ij}$ if $i,j \neq k$ and $\tilde{Y}_{ij} =
	0$, otherwise. Define $X = xx^\T + \tilde{Y}$: it is easy to
verify that both conditions of the theorem are satisfied.

  Finally, assume that $V^\T V$ is invertible, and let $B =
  (V^\T V)^{-1}$. Define
  \begin{align*}
    \beta &= \min_i \alpha_i/B_{ii},\\
    k &= \arg \min_i \alpha_i/B_{ii},\\
    \gamma &= \max_i \alpha_i - \beta B_{ii}.
  \end{align*}
  Apply the inductive hypothesis to the vectors $(v_i: i \neq k)$ and
  the reals $(\alpha'_i = (\alpha_i - \beta b_{ii})/\gamma: i \neq k)$
  to get a matrix $Y \in \R^{([n] \setminus \{k\}) \times ([n]
    \setminus \{k\}) }$, which we then pad with $0$'s to an $n\times
  n$ matrix $\tilde{Y}$, as we did in the first case above. Define $X$
  as $X = \beta B + \gamma \tilde{Y}$. It is easy to verify that
  $X_{ii} = \alpha_i$ for all $i$. We have
  \[
  VXV^\T = \beta VBV^\T + \gamma V^\T \tilde{Y}V = \beta V(V^\T V)^{-1}V^\T +
\gamma U^\T YU,
  \]
  where $U$ is the submatrix of $V$ consisting of all columns of $V$ except
  $v_k$. $U^\T YU \preceq I$ by the induction hypothesis. Since
  $V(V^\T V)^{-1}V^\T$ is symmetric and idempotent, it is an orthogonal
  projection matrix, and therefore $V(V^\T V)^{-1}V^\T \preceq
  I$. Because $B_{ii} \geq \|v_i\|_2^{-2} \geq 1$ by
  Lemma~\ref{lm:inverse}, we have $\gamma \leq \max_i \alpha_i -
  \beta$. Therefore,
  \[
  VXV^\T = \beta V(V^\T V)^{-1}V^\T + \gamma U^\T YU \preceq (\beta +
  \gamma)I_n \preceq (\max_i \alpha_i)I_n \preceq I_n.
  \]
  This completes the proof.
\end{proof}

Observe that the proof of Theorem~\ref{thm:SDP-feas} can be easily
turned into an efficient recursive algorithm. 


\bibliographystyle{plainurl}
\bibliography{weak-bana}

\section{Appendix}

\subsection{Estimating the Barycenter}
In this section we show how to efficiently estimate the barycenter of $K$ up to a small accuracy in $\ell_2$-norm. For a convex body $K \subseteq \R^n$, we
let $\gamma_K$ denote the Gaussian measure restricted to $K$.
For a
random variable $X$ in $\mathbb{R}^n$, we denote the covariance of $X$
by $cov[X] = \E[(X - \E[X])(X - \E[X])^\T]$.

The following lemma shows that the covariance of a Gaussian random vector
shrinks when restricted to a convex body. We include a short proof for
completeness. 

\begin{lemma}
\label{covarshrink}
Given a convex body $K$ in $\mathbb{R}^n$, let $\gamma_K$ be the
Gaussian distribution restricted to $K$, and let $X$ be a random
variable distributed according to $\gamma_K$. Then, $cov[X] \preceq I_n$.
\end{lemma}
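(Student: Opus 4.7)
The plan is to reduce the statement to a one-dimensional claim via direction testing and then invoke the Poincar\'e/Brascamp--Lieb inequality for strongly log-concave measures. Concretely, fix an arbitrary unit vector $\theta \in S^{n-1}$; since $\text{cov}[X] \preceq I_n$ is equivalent to $\theta^{\T} \text{cov}[X] \theta \leq 1$ for every such $\theta$, it suffices to show $\text{Var}[\langle \theta, X \rangle] \leq 1$. Let $Y = \langle \theta, X \rangle$.

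Next I would compute the density of $Y$ by disintegrating $\gamma_K$ along the $\theta$-direction. Writing $K_t = \{x' \in \theta^{\perp} : t\theta + x' \in K\}$ and using $\|t\theta + x'\|_2^2 = t^2 + \|x'\|_2^2$ for $x' \in \theta^{\perp}$, the density of $Y$ is proportional to $e^{-t^2/2} h(t)$, where
\[
h(t) = \int_{K_t} e^{-\|x'\|_2^2/2}\, dx' \text{ .}
\]
Because $K$ is convex, the family $\{K_t\}$ consists of parallel sections of a convex body, so by the Pr\'ekopa--Leindler inequality the function $h$ is log-concave on its support (which is a possibly unbounded interval). Hence $Y$ has a density of the form $e^{-g(t)}$, with $g(t) = t^2/2 - \log h(t) + \text{const}$ convex and satisfying $g''(t) \geq 1$ in the distributional sense on the interior of the support.

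To finish, I would apply the one-dimensional Brascamp--Lieb variance (Poincar\'e) inequality, which states that if a probability density on $\R$ has the form $e^{-g}$ with $g'' \geq \kappa > 0$, then the variance of any locally Lipschitz function $f$ satisfies $\text{Var}(f) \leq \kappa^{-1}\,\E[(f')^2]$. Taking $f(t) = t$ and $\kappa = 1$ yields $\text{Var}[Y] \leq 1$. Since $\theta$ was arbitrary, this establishes $\text{cov}[X] \preceq I_n$.

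The main conceptual step is the log-concavity of $h$, which is the standard consequence of Pr\'ekopa--Leindler already invoked earlier in the paper; the rest is a direct application of well-known log-concave functional inequalities. If one prefers a self-contained derivation instead of citing the Poincar\'e inequality, one can alternatively appeal to the multidimensional Brascamp--Lieb variance inequality applied directly to the log-concave density on $K$ with Hessian of the potential bounded below by $I_n$; the one-dimensional reduction above is just the cleanest way to isolate what is needed.
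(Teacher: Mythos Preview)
Your argument is correct, but it takes a different route from the paper. The paper's proof is a two-line Hessian computation: it considers $f(t)=\ln\gamma_n(K+t)$, which is concave in $t$ by Pr\'ekopa--Leindler, and directly computes that the Hessian at $t=0$ equals $\mathrm{cov}[X]-I_n$; negative semidefiniteness of the Hessian then gives the result immediately. Your approach instead reduces to one dimension, establishes that the marginal density is $1$-strongly log-concave (again via Pr\'ekopa--Leindler), and then invokes the Brascamp--Lieb/Poincar\'e inequality as a black box. Both arguments ultimately rest on the same log-concavity input, but the paper's version is more self-contained and elementary (no named functional inequality needed), while yours makes explicit the connection to the spectral gap of strongly log-concave measures. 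Your final parenthetical remark about applying multidimensional Brascamp--Lieb directly to the potential $\|x\|_2^2/2+\iota_K(x)$ is in fact the cleanest version of your approach and avoids the one-dimensional disintegration entirely.
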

\begin{proof}
  Consider $f(t)=\ln\,\gamma_n(K+t)$. $f$ is concave in $t$. This
  follows from log-concavity of $\gamma_n$, an easy consequence of the
  Prekopa-Leindler inequality. Hence, the Hessian of $f$, $H(f)$, is
  negative semi-definite. It can be calculated that
\[H(f)=H(\ln\,\gamma_n(K+t))=cov[X + t] -I_n,\] where $X \sim \gamma_K$.
Setting $t=0$ completes the proof.
\end{proof}

We will also need to use Paouris' inequality~\cite{Paouris06}, which we restate
slightly:
\begin{theorem}
\label{paris}
If $X\subseteq \mathbb{R}^n$ is a log-concave random vector with mean $0$ and positive-definite covariance matrix $C$, then for every $t\ge 1$,
\[ \Pr[\sqrt{X^\T C^{-1}X}\ge \beta t\sqrt{n}] \le e^{-t\sqrt{n}} \]
where $\beta>0$ is an absolute constant.
\end{theorem}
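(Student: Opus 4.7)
This is Paouris' inequality, which the paper states for use (citing~\cite{Paouris06}) rather than reproving; I will sketch the high-level strategy of Paouris' original proof as the approach I would take.

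The plan is to reduce to the isotropic case and then control all moments of $\|X\|_2$ simultaneously up to the critical scale $p \sim \sqrt{n}$. First I would set $Y = C^{-1/2}X$, so that $Y$ is isotropic log-concave and the statement becomes $\Pr[\|Y\|_2 \geq \beta t \sqrt{n}] \leq e^{-t\sqrt{n}}$ for all $t \geq 1$. Note that in the isotropic case $\E\|Y\|_2^2 = n$, so the bound is a sharp concentration statement: the norm is tightly concentrated around $\sqrt{n}$ at the subexponential scale $\sqrt{n}$, and only deteriorates at deviations above $\beta \sqrt{n}$.

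The core of the argument is to work with the $L_p$-centroid bodies $Z_p(Y)$ defined by their support function $h_{Z_p(Y)}(\theta) = (\E |\langle Y, \theta \rangle|^p)^{1/p}$, which are convex bodies whose geometry encodes the marginals of $Y$. For isotropic $Y$, $Z_2(Y)$ equals the Euclidean ball (up to normalization), and Borell's lemma (reverse H\"older for seminorms of log-concave vectors) gives $Z_p(Y) \subseteq C\frac{p}{q} Z_q(Y)$ for $q \le p$. I would then aim to prove the key estimate that the moment functional $I_p(Y) = (\E \|Y\|_2^p)^{1/p}$ satisfies $I_p(Y) \leq c\sqrt{n}$ uniformly for all $p \leq c_0 \sqrt{n}$. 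This requires relating $I_p$ to the mean width / volume of $Z_p(Y)$ and running an inductive bootstrap: one shows that, up to $p = c_0\sqrt{n}$, the $Z_p$-bodies do not grow faster than a constant factor relative to $Z_2 = B_2^n$, which in turn controls $I_p$. Combining volume estimates on $Z_p(Y)$ (using Lutwak--Yang--Zhang-type inequalities and properties of log-concave densities) with the $Z_p \subseteq (Cp/q) Z_q$ growth bound is what closes the induction.

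Once $I_{c_0 \sqrt{n}}(Y) \leq c\sqrt{n}$ is in hand, Markov's inequality at the critical moment $p = c_0 \sqrt{n}$ finishes the argument:
\[
\Pr[\|Y\|_2 \geq \beta t \sqrt{n}]
\;\leq\; \frac{\E \|Y\|_2^{c_0 \sqrt{n}}}{(\beta t \sqrt{n})^{c_0 \sqrt{n}}}
\;\leq\; \left(\frac{c}{\beta t}\right)^{c_0 \sqrt{n}}
\;\leq\; e^{-t \sqrt{n}},
\]
where the final inequality holds once $\beta$ is chosen so that $c/\beta < 1/e$ and using $t \geq 1$. Pushing to larger $p$ past $\sqrt{n}$ does not help (this is the threshold where the $L_p$-norms of $\|Y\|_2$ begin to grow nontrivially), which is precisely why the exponent $e^{-t\sqrt{n}}$, rather than something Gaussian-like $e^{-t^2 n/2}$, is the natural and sharp tail in this regime.

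The main obstacle, and what makes Paouris' proof genuinely difficult, is the uniform moment control step: proving $I_p(Y) \lesssim \sqrt{n}$ for \emph{all} $p \le c_0 \sqrt{n}$ at once. A naive chaining or $\varepsilon$-net argument loses logarithmic factors and gives only subgaussian decay at scale $\sqrt{n \log n}$; removing that logarithm requires the full $L_p$-centroid body machinery and careful interplay between log-concavity, isotropy, and convex-geometric volume inequalities. Everything else (isotropic reduction, Markov at the critical moment, and the final tail) is comparatively routine.
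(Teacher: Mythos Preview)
You correctly identify that the paper does not prove this theorem: it is simply stated with a citation to~\cite{Paouris06} and then used as a black box in the proof of Theorem~\ref{bary}. So there is no ``paper's own proof'' to compare against, and your decision to sketch Paouris' original argument is the right call.

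Your outline of that argument is accurate in its main thrust (reduce to isotropic, control moments via $L_p$-centroid bodies, then Markov), but the final Markov step as written has a gap. Applying Markov only at the fixed moment $p = c_0\sqrt{n}$ gives $\Pr[\|Y\|_2 \geq \beta t\sqrt{n}] \leq (c/(\beta t))^{c_0\sqrt{n}}$, and for this to be at most $e^{-t\sqrt{n}}$ you would need $c_0 \log(\beta t/c) \geq t$, which fails once $t$ is large. The correct endgame uses the full moment bound $I_p(Y) \leq C\max(\sqrt{n}, p)$ for \emph{all} $p \geq 1$ (not just $p \leq c_0\sqrt{n}$), and then applies Markov at $p = t\sqrt{n}$: this gives $(C t\sqrt{n}/(\beta t\sqrt{n}))^{t\sqrt{n}} = (C/\beta)^{t\sqrt{n}} \leq e^{-t\sqrt{n}}$ for $\beta \geq eC$, uniformly in $t \geq 1$. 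The hard part of Paouris is exactly what you said it is; the fix here is just to let the moment index scale with $t$.
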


\begin{theorem}
\label{bary}
Let $K$ be a convex body in $\mathbb{R}^n$, given by a membership oracle, with
$\gamma_n(K)\ge 1/2$. For any $\delta>0$ and $\epsilon\in(0,1)$, there is an algorithm which computes the barycenter
of $K$ within accuracy $\delta$ in $\ell_2$-norm with
probability at least $1-\epsilon$ in time polynomial in $n,1/\delta$ and $\log(1/\epsilon)$.
\end{theorem}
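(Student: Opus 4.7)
The plan is to estimate $b(K)$ by the empirical mean of i.i.d.\ samples drawn from $\gamma_K$ via rejection sampling, using the covariance bound of Lemma~\ref{covarshrink} together with Paouris' inequality (Theorem~\ref{paris}) to obtain concentration with the desired $\log(1/\epsilon)$ dependence.

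First I would describe the sampler: since $\gamma_n(K)\ge 1/2$, we can sample $X\sim\gamma_K$ by drawing $Y$ from the standard $n$-dimensional Gaussian and accepting iff $Y\in K$ (tested with the membership oracle). The expected number of trials per accepted sample is $1/\gamma_n(K)\le 2$, so generating one sample takes $O(1)$ oracle queries and arithmetic operations in expectation; a Chernoff bound lets us get $N$ samples with high probability in time polynomial in $N$. Let $X_1,\dots,X_N$ be independent samples from $\gamma_K$ and set $\hat b=\frac1N\sum_{i=1}^N X_i$. Clearly $\mathbb E[\hat b]=b(K)$, and by independence together with Lemma~\ref{covarshrink},
\[
C \eqdef \mathrm{cov}[\hat b]=\frac{1}{N}\mathrm{cov}[X_1]\preceq \frac{1}{N}I_n,
\]
so every eigenvalue of $C$ is at most $1/N$.

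Next I would invoke Paouris' inequality. Because $\gamma_K$ has a log-concave density on $\R^n$, and log-concavity of densities is preserved under convolution (Pr\'ekopa--Leindler), the random vector $\hat b - b(K)$ is a centered log-concave random vector on $\R^n$ (full-dimensional since $K$ has nonempty interior, which is the nontrivial case). Since $C\preceq I_n/N$, we have $\|\hat b-b(K)\|_2\le \sqrt{(\hat b-b(K))^\T C^{-1}(\hat b-b(K))}/\sqrt N$, so Theorem~\ref{paris} gives, for any $t\ge 1$,
\[
\Pr\left[\|\hat b-b(K)\|_2\ge \frac{\beta t\sqrt n}{\sqrt N}\right]\le e^{-t\sqrt n}.
\]
Choosing $N=\Theta\!\bigl((n+\log^2(1/\epsilon))/\delta^2\bigr)$ and the corresponding $t$, the right-hand side becomes $\le \epsilon$ while the left-hand threshold becomes $\le \delta$; the algorithm outputs $\hat b$, and the total runtime is polynomial in $n$, $1/\delta$, and $\log(1/\epsilon)$.

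The main technical point (and only real subtlety) is justifying the application of Paouris: we need $\hat b-b(K)$ to be log-concave and have a positive-definite covariance. Log-concavity of the average follows from the standard fact that convolutions and scalings preserve log-concavity of densities. Positive definiteness of $C$ follows because $K$ has nonempty interior, so $\gamma_K$ is not supported in any proper affine subspace. If one prefers to avoid Paouris, the same conclusion can be reached by first using Chebyshev's inequality (with $\mathbb E\|\hat b-b(K)\|_2^2\le n/N$, from the covariance bound) to get constant success probability with $N=O(n/\delta^2)$ samples, and then boosting via the geometric-median-of-means estimator, which converts constant confidence into $1-\epsilon$ confidence at a cost of $O(\log(1/\epsilon))$ repetitions; the runtime remains polynomial in $n,1/\delta,\log(1/\epsilon)$.
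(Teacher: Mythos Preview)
Your proposal is correct and follows essentially the same approach as the paper: rejection sampling from $\gamma_K$, empirical mean, the covariance bound from Lemma~\ref{covarshrink}, log-concavity of averages via Pr\'ekopa--Leindler, and Paouris' inequality for the concentration step. The paper carries out exactly this argument (with the slightly cruder choice $N=\Theta(n\log^2(1/\epsilon)/\delta^2)$ and an explicit union bound for the rejection sampling), so your write-up matches the intended proof.
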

\begin{proof}
Let $b$ be the barycenter of $K$ and $X_i$ for $1\le i\le N$ be i.i.d generated from $\gamma_K$, where $N=\lceil(\beta/\delta)^2\log^2(e/\epsilon)n\rceil$. Here $\beta$ is the constant from Theorem~\ref{paris}. Defining the following quantities 
\begin{eqnarray*}
 b'&=&\frac{1}{N}\sum_{i=1}^N X_i \\ 
Y_i&=&X_i-b\\
 Y&=&\frac{1}{N}\sum_{i=1}^N Y_i\\
 C&=&\E_{X\sim\gamma_K}[(X-b)(X-b)^\T]
\end{eqnarray*}
we can see that $b'$ is an estimate of the barycenter, generated by averaging random samples from the distribution $\gamma_K$ and $Y$ is the difference vector between the true barycenter and $b'$. Thus it suffices to bound the probability that $Y$ is large and then show how to efficiently generate random samples from the distribution $\gamma_K$. It holds that
\[ \E[Y_i]=\E[X_i-b]=b-b=0 \]
Also, using Lemma~\ref{covarshrink},
\[  \E[Y_iY_i^\T]= cov[X_i]=C \preceq I_n \]
Thus, 
\[\E[Y]=0 \textrm{ and } cov[Y]=C/N \preceq I_n/N\]
Since $\gamma_K$ is a log-concave distribution, $X_i$ and hence $Y_i$
are log-concave random vectors. It is easily checked (using the
Prekopa-Leindler inequality) that the average of log-concave random variables is also log-concave and hence $Y$ is a log-concave random vector. Now, 
\begin{eqnarray*}
 \Pr[\| Y\|_2\ge\delta] &=& \Pr\left[\sqrt{Y^\T\left(\frac{I_n}{N}\right)^{-1}Y}\ge \delta\sqrt{N}\right] \\
 & \le & \Pr\left[\sqrt{Y^\T\left(\frac{C}{N}\right)^{-1}Y}\ge \delta\sqrt{N}\right] \qquad \textrm{(using $C/N \preceq I_n/N$)}
\end{eqnarray*}
Putting $N=\lceil(\beta/\delta)^2\log^2(e/\epsilon)n\rceil$ and using Theorem~\ref{paris} with $t=\log(e/\epsilon)$, we get
\begin{equation}
 \Pr[\| Y\|_2 \ge \delta] \le e^{-\sqrt{n}\log(e/\epsilon)}\le \epsilon/e \le \epsilon/2. \label{probabbary}
\end{equation}

We can generate the random points $X_i$ using rejection sampling. For
each $i$, we generate a sequence of i.i.d.~standard Gaussian random variables
$X_i^{(1)}, \ldots, X_i^{(k)} \in \R^n$, $k = \lceil
\log_2(2N/\epsilon)\rceil$. We set $X_i$ to the first $X_i^{(j)}$ in
the sequence that belongs to $K$; if no such $X_i^{(j)}$ exists, we set $X_i$
arbitrarily. Clearly, conditional on the existence of a $j \le k$ such
that $X_i^{(j)} \in K$, $X_i \sim \gamma_K$. Furthermore, because $K$
has Gaussian measure at least $1/2$, for every $j$ we have
$\Pr[X_i^{(j)} \not \in K] \le 1/2$, so \[\Pr[\forall j: X_i^{(j)} \not
\in K] \le 2^{-k} \le \epsilon/2N.\] By a union bound, with
probability at least $1 - \epsilon/2$, all $X_i$ are distributed
according to $\gamma_K$; let us call this event $E$. Conditional on $E$, inequality
\eqref{probabbary} holds, and
\begin{align*}
\Pr[\|b - b'\|_2 \ge \delta] &= \Pr[\|Y\|_2 \ge \delta] \\
&= \Pr[\|Y\|_2 \ge \delta \mid E]\cdot \Pr[E] + \Pr[\|Y\|_2 \ge \delta \mid
E^c] (1-\Pr[E]) \\
&\le \epsilon/2 + \epsilon/2 = \epsilon.
\end{align*}
The algorithm needs to generate $O(N\log(N/\epsilon))$ $d$-dimensional
Gaussian random variables, check membership in $K$ for each of them,
and compute the average of $N$  points. Since each of
these operations takes polynomial time, and $N$ is polynomial in $n$,
$\delta$, and $\log(1/\epsilon)$, the running time of the 
algorithm is polynomial.

\end{proof}

\subsection{Proof of Lemma~\ref{lem:mom-to-subg}}

\begin{proof}
We first prove subgaussianity. Let $\theta \in S^{n-1}$, $t \geq 0$. By
assumption on $X$,
\begin{align*}
\Pr[|\pr{X}{\theta}| \geq t] 
  &= \min_{\lambda > 0} \Pr[\cosh(\lambda \pr{X}{\theta}) \geq \cosh(\lambda t)] 
  \leq \min_{\lambda > 0} \beta \cdot \frac{e^{\sigma^2 \lambda^2/2}}{\cosh(\lambda t)} \\
  &\leq \min_{\lambda > 0} 2\beta \cdot e^{\sigma^2 \lambda^2/2-\lambda t} \leq
2 \beta \cdot e^{-\frac{1}{2}(t/\sigma)^2} \text{, }
\end{align*}
where the last inequality follows by setting $\lambda = t/\sigma^2$.

Let $\alpha = \sqrt{\log_2 \beta + 1}$. To prove that $X$ is $\alpha
\sigma$-subgaussian, since probabilities are always at most one, it suffices to
prove that
\[
\min \set{1, 2\beta \cdot e^{-\frac{1}{2}(t/\sigma)^2}} \leq
2 e^{-\frac{1}{2}(t/(\alpha \sigma))^2}, \forall t \geq 0 \text{.}
\]
Replacing $t \leftarrow \sqrt{2} \sigma t$, the above simplifies to showing
\begin{equation}
\label{eq:simp-subg-trans1}
\min \set{1, 2\beta \cdot e^{-t^2}} \leq 2 e^{-(t/\alpha)^2}, \forall t \geq 0 \text{.}
\end{equation}
From here, we see that
\begin{equation}
\label{eq:simp-subg-trans2}
\beta \cdot e^{-t^2} \leq e^{-(t/\alpha)^2} \Leftrightarrow
\beta \leq e^{(1-1/\alpha^2)t^2} \Leftrightarrow t \geq \sqrt{\ln \beta
\cdot \frac{\alpha^2}{\alpha^2-1}} = \sqrt{\ln(2\beta) }  \text{.}
\end{equation}
Let $r = \sqrt{\ln(2\beta)}$, noting that $1 = 2\beta \cdot e^{-r^2} = 2
e^{-(r/\alpha)^2}$, we have that for $t \leq r$, the LHS
of~\ref{eq:simp-subg-trans1} is $1$ and the RHS is at least $1$, for $t > r$,
the LHS is equal to $2\beta \cdot e^{-t^2}$ and the RHS is larger
by~\ref{eq:simp-subg-trans2}. Thus, $X$ is $\alpha \sigma$-subgaussian as needed. 

We now prove the furthermore. For $X$ an $n$-dimensional standard Gaussian, note
that $\pr{X}{w}$ is distributed like $\sigma Y$, where $Y \sim N(0,1)$ and
$\sigma = \|\vec{w}\|_2$. Hence,
\begin{align*}
\E[e^{\pr{X}{w}}] &= \E[e^{\sigma Y}] 
 = \frac{1}{\sqrt{2\pi}} \int_{-\infty}^\infty e^{\sigma x} e^{-x^2/2} dx \\
&= e^{\sigma^2/2} \left(\frac{1}{\sqrt{2\pi}} \int_{-\infty}^\infty
e^{-(x-\sigma)^2/2} dx\right) = e^{\sigma^2/2} \text{, }
\end{align*}
as needed.
\end{proof}

\end{document}